\numberwithin{equation}{section}   
\def\verbatim@font{\linespread{1}\normalfont\ttfamily}
\pgfplotsset{compat=1.9}
\tikzset{every label/.style={font=\footnotesize,inner sep=1.5pt}}
\def\jf#1{\relax}
\newcommand{\Tbb}{\mathbb{T}{}}
\newcommand{\Zbb}{\mathbb{Z}{}}
\newcommand{\Rbb}{\mathbb{R}{}}
\newcommand*{\dd}{\mathrm{d}}
\newcommand*{\del}{\partial}
\newcommand*{\eps}{\epsilon}
\newcommand{\frd}{\mathfrak{d}}
\newcommand{\frv}{\mathfrak{v}}
\newcommand{\frg}{\mathfrak{u}}
\newcommand{\fru}{\mathfrak{u}}
\newcommand{\ut}{\tilde{u}}
\newcommand{\Ht}{\tilde{H}}
\newcommand{\Kt}{\tilde{K}}
\newcommand{\Fc}{\mathcal{F}}
\newcommand{\nuh}{\hat{\nu}}
\newcommand{\wh}{\hat{w}}
\newcommand{\uh}{\hat{u}}
\newcommand{\Hct}{\tilde{\mathcal{H}}}
\newcommand{\alphat}{\tilde{\alpha}}
\newcommand{\alphah}{\hat{\alpha}}
\newcommand{\afrak}{\mathfrak{a}}
\newcommand{\Dt}{\tilde{D}}
\newcommand{\tDLd}{\tilde{\Delta}_{L\delta}}
\newcommand{\et}{\tilde{e}}
\newcommand{\sigmat}{\tilde{\sigma}}
\newcommand{\thetat}{\tilde{\theta}}
\newcommand{\gt}{\tilde{g}}
\newcommand{\vt}{\tilde{v}}
\newcommand{\htl}{\tilde{h}}
\theoremstyle{plain}
\newtheorem{thm}{Theorem}[section]
\newtheorem{prop}[thm]{Proposition}
\theoremstyle{definition}
\theoremstyle{remark}
\newtheorem{rem}[thm]{Remark}
\newcommand*{\R}[1]{\mbox{\sffamily\ifcase#1\relax\or I\or II\or III\or IV\fi}\xspace}
\begin{document}




\title{Future Stability of Tilted Two-Fluid Bianchi I Spacetimes} 


\author[1,2]{Grigorios Fournodavlos}
\author[2,3,a]{Elliot Marshall}
\author[3]{Todd A. Oliynyk}
\affil[1]{Department of Mathematics and Applied Mathematics, University of Crete, 70013 Heraklion, Greece.}
\affil[2]{Institute of Applied and Computational Mathematics, FORTH, 70013 Heraklion, Greece.}
\affil[3]{School of Mathematics, 9 Rainforest Walk, Monash University, VIC 3800, Australia.}
\affil[a]{Corresponding author: elliot.marshall@iacm.forth.gr}

\maketitle



\maketitle

\begin{abstract} 
We establish the nonlinear stability to the future of tilted two-fluid Bianchi I solutions to the Einstein-Euler equations with positive cosmological constant and linear equations of state $p_{(\afrak)}=K_{(\afrak)}\rho_{(\afrak)}$, $\afrak\in\{1,2\}$, where $\frac{1}{3}<K_{(\afrak)}<\frac{5}{7}$. 
\end{abstract}

\maketitle

\section{Introduction}

Exponentially expanding, fluid-filled cosmological models governed by linear equations of state of the form $p = K\rho$, where $0 \leq K \leq 1$, play a fundamental role in modern cosmology. Owing to their physical significance, considerable effort has been devoted to rigorously analyzing their future stability. Foundational results in this direction were obtained by Rodnianski and Speck \cites{RodnianskiSpeck:2013,Speck:2012}, who established the nonlinear future stability of FLRW (spatially homogeneous and isotropic) solutions to the Einstein-Euler equations with a positive cosmological constant, for the parameter range $0 < K < \frac{1}{3}$. The endpoint cases $K = 0$ and $K = \frac{1}{3}$ were subsequently treated in \cite{HadzicSpeck:2015} and \cite{LubbeKroon:2013}, respectively. Related studies have explored models with hyperbolic spatial slices \cites{Mondal:2021,Mondal:2024}, alternative equations of state \cites{LeFlochWei:2021,LiuWei:2021}, and other classes of expanding spacetimes \cites{Speck:2013,Ringstrom:2009,FOW:2021,Wei:2018,FOOW:2023,Fajman_et_al:2025}. \newline \par

Until recently, the stability of cosmological models with super-radiative equations of state ($K > \frac{1}{3}$) remained an open question. Influential work by Rendall \cite{Rendall:2004} suggested that super-radiative FLRW models are \emph{unstable}- a phenomenon now known as the ``Rendall instability''. This instability has since been observed numerically \cites{BMO:2023,BMO:2024} and rigorously demonstrated in the restricted setting of $\mathbb{T}^2$-symmetric fluids evolving on a fixed de Sitter background \cite{Oliynyk:2024}. In contrast, the existence of a homogeneous but \textit{anisotropic} family of solutions to the relativistic Euler equations, that is future stable for $\frac{1}{3}<K<\frac{1}{2}$ was established by the third author in \cite{Oliynyk:2021}. This was subsequently extended to the full parameter range $\frac{1}{3}<K<1$ by the second and third authors in \cite{MarshallOliynyk:2022}. \newline \par

The anisotropic family of solutions studied in \cites{Oliynyk:2021,MarshallOliynyk:2022} is referred to as \textit{tilted fluids}. The concept of tilt is defined relative to a fixed timelike vector field $u = u^\mu \partial_\mu$. A fluid with four-velocity $v = v^\mu \partial_\mu$ is said to be \textit{orthogonal} with respect to $u$ if $v$ is aligned with $u$; otherwise, the fluid is classified as \textit{tilted} with respect to $u$.\newline \par

For spacetimes without symmetries, there is no preferred choice for $u$. However, in spatially homogeneous cosmologies, there exists a preferred spacelike foliation by hypersurfaces of homogeneity. In this setting, it is natural to take $u$ to be the timelike normal vector to the foliation. \newline \par

The tilt classification is introduced because homogeneous tilted and orthogonal fluids can exhibit markedly different asymptotic behaviour. This difference has been extensively studied in the dynamical systems literature; see, for example, \cites{EllisWainwright:1997,Sandin:2009,SandinUggla:2008,Hervik_et_al:2010,Hewitt_et_al:2001}. Notably, these works show that $K = \frac{1}{3}$ is a \emph{bifurcation point} in the dynamics of homogeneous cosmological models. For $K < \frac{1}{3}$, solutions asymptotically approach orthogonal fluid configurations, whereas for $K > \frac{1}{3}$, the fluid exhibits \emph{extreme tilt} asymptotically, that is, the fluid velocity becomes null at future timelike infinity. \newline \par

A natural extension of the stability results from \cites{Oliynyk:2021,MarshallOliynyk:2022} is to determine whether tilted fluid solutions to the Einstein-Euler system with a positive cosmological constant remain future stable over the parameter range $\frac{1}{3} < K < 1$. This question was partially addressed by the authors in \cite{Fournodavlos_et_al:2024}, where it was shown that tilted fluid solutions with spatial topology $\mathbb{S}^3$ are future stable for $\frac{1}{3} < K < \frac{3}{7}$. The main aim of this article is to extend these results to tilted Bianchi I spacetimes and increase the parameter range over which stability holds. Due to the momentum constraint, Bianchi I models can exhibit tilt only in the presence of two or more fluids \cite{SandinUggla:2008}. Consequently, we restrict our attention to analysing the stability of \emph{two fluid} models where each fluid satisfies a linear equation of state $p_{(\afrak)}=K_{(\afrak)}\rho_{(\afrak)}$, $\afrak = 1,2$. The principal result of this article is the proof that tilted two-fluid Bianchi I solutions are future stable for $\tfrac{1}{3} < K_{(1)} \leq K_{(2)} < \tfrac{5}{7}$, extending the parameter range established in \cite{Fournodavlos_et_al:2024}.

\subsection{Einstein-Euler Equations}
The Einstein-Euler equations, in (3+1) dimensions, for two non-interacting barotropic perfect fluids with a positive cosmological constant $\Lambda>0$ are given by
\begin{gather}
\label{eqn:Einstein_Physical}
\tilde{R}_{\mu\nu} -\Lambda \tilde{g}_{\mu\nu} = \tilde{T}_{\mu\nu} - \frac{1}{2}\tilde{T} \tilde{g}_{\mu\nu}, \\
\label{eqn:Euler_Physical}
\tilde{\nabla}^{\mu}\tilde{T}^{(1)}_{\mu\nu} = 0, \quad
\tilde{\nabla}^{\mu}\tilde{T}^{(2)}_{\mu\nu} = 0,
\end{gather}
where the fluid stress energy tensors are defined by
\begin{align*}
\tilde{T}^{(\afrak)}_{\mu\nu} = (\rho_{(\afrak)}+p_{(\afrak)})\tilde{v}^{(\afrak)}_{\mu}\tilde{v}^{(\afrak)}_{\nu} + p_{(\afrak)}\tilde{g}_{\mu\nu}, \quad \afrak=1,2.
\end{align*}
The fluid four-velocities $\tilde{v}^{(\afrak)}_{\mu}$  are normalised by $\tilde{g}^{\mu\nu}\tilde{v}^{(\afrak)}_{\mu}\tilde{v}^{(\afrak)}_{\nu} = -1$, 
and $\rho_{(\afrak)}$ and $p_{(\afrak)}$ are the fluid densities and pressures, respectively.  The total stress-energy tensor of the fluid matter is
\begin{align} \label{Tt-def}
\tilde{T}_{\mu\nu} = \tilde{T}^{(1)}_{\mu\nu} + \tilde{T}^{(2)}_{\mu\nu},
\end{align}
which, by \eqref{eqn:Euler_Physical}, satisfies $\tilde{\nabla}_{\mu}\tilde{T}^{\mu\nu} = 0$.
We assume linear equations of state
\begin{align} \label{eos}
p_{(\afrak)} &= K_{(\afrak)}\rho_{(\afrak)}
\end{align}
for each fluid where
\begin{equation} \label{K-range}
   \frac{1}{3}< K_{(1)} \leq K_{(2)} <1.
\end{equation}

\begin{rem}
The following combinations of the sound speed parameters $K_{(\afrak)}$ will play a critical role in the analysis:
\begin{align} \label{mu-def}
    \mu_{(\afrak)} = \frac{3K_{(\afrak)}-1}{1-K_{(\afrak)}}.
\end{align} 
By \eqref{K-range}, we note that
    \begin{align}
    \label{eqn:Mu1<Mu2}
        0 <\mu_{(1)} \leq \mu_{(2)} \quad \text{and}\quad 
        \frac{4K_{(\afrak)}}{1-K_{(\afrak)}} > 2.
    \end{align}
\end{rem}

Instead of working directly with the physical metric $\tilde{g}_{\mu\nu}$, we find it advantageous to employ a conformal metric $g_{\mu\nu}$ defined by
\begin{align}
\label{eqn:conformal_metric_def}
    g_{\mu\nu} = e^{-2\Phi}\tilde{g}_{\mu\nu},
\end{align}
where 
\begin{align} \label{Phi-def}
\Phi = -\ln(t). 
\end{align}
This transformation compactifies the time domain to a finite interval of the form $(0,T_{1}]$ where $t=0$ corresponds to future timelike infinity. That is to say, the future is now located in the direction of \textit{decreasing} $t$. \newline \par

In terms of the conformal metric, the Einstein equations \eqref{eqn:Einstein_Physical} become\footnote{This follows from the identities in \cite{Wald:1984}*{Appendix D}.}
\begin{align}
\label{eqn:ConformalEinstein}
R_{\mu\nu} &= e^{2\Phi}T_{\mu\nu} - \frac{1}{2}Te^{2\Phi}g_{\mu\nu} +\Lambda e^{2\Phi}g_{\mu\nu} + 2\nabla_{\mu}\nabla_{\nu}\Phi \nonumber \\
&+g_{\mu\nu}(\Box_g\Phi +2|\nabla\Phi|^{2}_{g}) -2\nabla_{\mu}\Phi\nabla_{\nu}\Phi,
\end{align}
where $\nabla$ and $R_{\mu\nu}$ are the Levi-Civita connection and Ricci tensor, respectively, of the conformal metric $g$, $\Box_g\Phi = \nabla_{\mu}\nabla^{\mu}\Phi$ is the wave operator, $|\nabla\Phi|^{2}_{g} = g^{\mu\nu}\nabla_{\mu}\Phi\nabla_{\nu}\Phi$, and we have introduced the conformal stress-energy tensor $T_{\mu\nu}$ defined by
\begin{equation}
\begin{aligned}
\label{eqn:Conformal_Tmunu_def}
T_{\mu\nu} &= T^{(1)}_{\mu\nu} + T^{(2)}_{\mu\nu}, \\
\tilde{T}^{(\afrak)}_{\mu\nu} &= e^{2\Phi}T^{(\afrak)}_{\mu\nu} = e^{2\Phi}\Big[(\rho^{(\afrak)}+p^{(\afrak)})v^{(\afrak)}_{\mu}v^{(\afrak)}_{\nu}+p^{(\afrak)}g_{\mu\nu}\Big],
\end{aligned}
\end{equation}
where $v_{(\afrak)}^{\mu} = e^{\Phi}\tilde{v}_{(\afrak)}^{\mu}$ are the conformal fluid four-velocities.

\subsection{Tilted Two-Fluid Bianchi I Spacetimes}
Two-fluid Bianchi I spacetimes are spatially homogeneous solutions of the Einstein-Euler equations \eqref{eqn:Einstein_Physical}-\eqref{eqn:Euler_Physical} with flat spatial geometry. These cosmological models with linear equations of states have been extensively studied in the literature \cites{Sandin:2009,SandinUggla:2008,EllisWainwright:1997}. In particular, it was shown in \cite{SandinUggla:2008} that, in the presence of a positive cosmological constant, such models isotropise towards the future for the range of sound speeds specified in \eqref{K-range}. As we demonstrate, this isotropisation property enables a reduction of the stability analysis to a simpler class of solutions \textemdash \;  namely, tilted two-fluid solutions of the relativistic Euler equations on a fixed de Sitter background. \newline \par 

Further details on two-fluid Bianchi I spacetimes are provided in Appendix \ref{Appendix:two-fluid}. There, the future asymptotic behaviour is rigorously analysed, with quantitative estimates that make precise the sense in which these spacetimes isotropise and can be approximated by solutions of the relativistic Euler equations on a prescribed de Sitter spacetime.

\subsection{Informal Statement of the Main Theorem}
The main results of this article are a proof that tilted two-fluid Bianchi I spacetimes are nonlinearly stable to the future for 
\begin{equation*}
    \frac{1}{3} < K_{(1)} \leq K_{(2)} < \frac{5}{7}\quad \Longleftrightarrow \quad 0<\mu_{(1)}\leq \mu_{(2)} < 4,
\end{equation*} and  moreover, that the fluid velocities of the perturbed solutions develop extreme tilts near future timelike infinity. An informal version of our main result is provided below; see Theorem \ref{mainthm} for a precise statement.

\begin{thm}[Future Stability of Tilted Two-Fluid Bianchi I Spacetimes\label{maintheorem:informal}] 
     Solutions to the conformal Einstein-Euler equations \eqref{eqn:Einstein_Physical}-\eqref{eqn:Euler_Physical} with positive cosmological constant that are generated from sufficiently differentiable initial data imposed on $\Sigma_{T_{1}} = \{T_{1}\}\times \mathbb{T}^{3}$, $T_1>0$,  exist globally to the future on the spacetime region $M \cong (0,T_{1}] \times \mathbb{T}^{3}$ provided  $\frac{1}{3} < K_{(1)} \leq K_{(2)} < \frac{5}{7}$ and the initial data is sufficiently close to homogeneous two-fluid Bianchi I initial data. Moreover, the fluid velocities exhibit extreme tilt asymptotically at future timelike infinity.   
\end{thm}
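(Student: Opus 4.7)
My plan is to reduce the proof to a bootstrap / continuation argument for a Fuchsian-type symmetric hyperbolic system on the compactified time interval $(0,T_1]$, following the general template developed for the $\mathbb{S}^3$ case in \cite{Fournodavlos_et_al:2024} but with three essential modifications: the background is Bianchi I rather than exactly de Sitter, two non-interacting fluids are present with possibly distinct $K_{(\afrak)}$, and the enlarged range $K_{(\afrak)}<5/7$ (equivalently $\mu_{(\afrak)}<4$ via \eqref{mu-def}) demands a more careful handling of the singular source terms near $t=0$.

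\emph{Step 1: background reduction.} I would first invoke the analysis of Appendix \ref{Appendix:two-fluid} to quantify the future isotropisation of the homogeneous two-fluid Bianchi I background, and decompose the full conformal metric as an explicit de Sitter reference plus a decaying anisotropic background correction plus the nonlinear perturbation to be controlled. This reduces the stability problem to a perturbation problem around the de Sitter reference, with the anisotropic Bianchi I deviations relegated to forcing terms whose decay rates in $t$ are supplied by the appendix.

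\emph{Step 2: gauge choice and renormalisation.} For the geometric sector I would impose a wave gauge adapted to the de Sitter reference, recasting \eqref{eqn:ConformalEinstein} as a quasilinear wave system in $g_{\mu\nu}$ with singular $t$-dependent coefficients. Because $K_{(\afrak)}>1/3$ forces extreme tilt, the raw fluid variables $(v^{(\afrak)}_\mu,\rho_{(\afrak)})$ degenerate at $t=0$, so I would rescale each fluid by $t$-weights tied to $\mu_{(\afrak)}$, chosen so that the renormalised unknowns remain $O(1)$ at $t=0$ on the tilted background of \cite{MarshallOliynyk:2022}. In these variables, \eqref{eqn:Euler_Physical} should take the Fuchsian form
\begin{equation*}
B^0\partial_t U + B^i\partial_i U = \tfrac{1}{t}\,\mathcal{B}\,U + F(t,U,\partial U),
\end{equation*}
with a lower-order matrix $\mathcal{B}$ whose block structure reflects the two fluids and whose spectrum is controlled by the $\mu_{(\afrak)}$.

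\emph{Step 3: coupled energy estimates.} I would construct a weighted high-order energy $\mathcal{E}_N(t)$ combining the geometric wave energy with the renormalised fluid energies, arrange the weights so that the worst singular term $t^{-1}\langle\mathcal{B}U,U\rangle$ contributes with a favourable sign, and propagate smallness down to $t=0$ by continuous induction. The extreme-tilt statement then follows by evaluating the rescaled fluid four-velocities at $t=0$ using the limit equations of the Fuchsian system.

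\emph{Main obstacle.} The delicate point, and the origin of the ceiling $K_{(\afrak)}<5/7$, is the positivity of the symmetriser $B^0\mathcal{B}+\mathcal{B}^\top B^0$ combined with the spatial-derivative commutators coming from $B^i\partial_i$ after commuting the system with high-order derivatives. The $3/7$ threshold of \cite{Fournodavlos_et_al:2024} corresponds to a fairly crude choice of renormalisation and symmetriser; pushing the threshold up to $5/7$ will require a refined weight, designed so that the diagonal fluid blocks remain strictly positive throughout the range $\mu_{(\afrak)}\in(0,4)$. In the two-fluid setting, one must additionally ensure that the off-diagonal coupling induced through the common metric perturbation, and the mismatch between the two fluid rescalings when $\mu_{(1)}<\mu_{(2)}$, does not overwhelm the diagonal positivity; this is precisely the point at which \eqref{eqn:Mu1<Mu2} should enter essentially. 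Closing top-order derivative estimates for both fluids simultaneously, uniformly down to $t=0$, is the technical heart of the argument.
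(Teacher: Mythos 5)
Your overall architecture (reduce to a perturbation of a tilted fluid on de Sitter using Appendix \ref{Appendix:two-fluid}, renormalise the fluid variables by $t$-weights tied to $\mu_{(\afrak)}$, and close singular energy estimates on a Fuchsian symmetric hyperbolic system) is consistent with what the paper does, modulo two presentational differences: the paper works in an orthonormal-frame (R\"ohr--Uggla) formulation with zero shift, Fermi--Walker transported spatial frame and a generalised harmonic condition on the lapse only, rather than a full wave gauge for $g_{\mu\nu}$; and it outsources the bootstrap to the black-box global existence theorem of \cite{BOOS:2021} rather than running the continuous induction by hand. Neither of these is a problem in principle.

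The genuine gap is in your ``main obstacle'' paragraph: you locate the origin of the ceiling $K_{(\afrak)}<\tfrac{5}{7}$ in the positivity of the symmetriser $B^0\mathcal{B}+\mathcal{B}^{\top}B^0$ for the singular $\tfrac{1}{t}\mathcal{B}U$ term, and propose to overcome it by a ``refined weight'' restoring diagonal positivity. That is not where the obstruction lives, and no choice of symmetriser for the undifferentiated system will deliver the range $\mu_{(\afrak)}\in(0,4)$. In the paper the sign of the $\tfrac{1}{t}$-term is unproblematic ($\mathscr{D}\gtrsim\mathbbm{1}$ throughout); the binding constraint is the \emph{integrability in $t$} of the remaining singular source terms, which contain powers such as $t^{2\mu-1}$, $t^{\mu-1}$ and, crucially, negative powers generated when derivatives of the fluid variables are expressed through the renormalised unknowns. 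The device that unlocks $K<\tfrac{5}{7}$ is to augment the system with its first and second spatial derivatives and to renormalise only the \emph{second} derivatives, by $t^{\psi}$ for the gravitational variables and $t^{\mu_{(\afrak)}}$ for the fluid variables (see \eqref{eqn:V_derivative_Variable_defns}, \eqref{eqn:Fuchsian_Solution_Vector_Defns} and the system \eqref{eqn:EinsteinEuler_NoDeriv}--\eqref{eqn:EinsteinEuler_2ndDeriv}); renormalising already at first order, as in \cite{Oliynyk:2021}, reimposes unnecessary restrictions on $K_{(\afrak)}$. The competing singular powers $t^{-\psi}$, $t^{\mu_{(\afrak)}-\psi}$ and $t^{\psi-\mu_{(\afrak)}+2}$ in the source $\mathbf{G}$ must all exceed $-1$, which forces $0<\psi<1$ and $\mu_{(2)}<3+\psi<4$, i.e.\ exactly $K_{(2)}<\tfrac{5}{7}$ (see \eqref{eqn:Mu_integrable_ineqs}--\eqref{eqn:Psi_integrable_ineqs}). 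Your proposal contains no differentiated system and no parameter $\psi$, so as written it cannot reproduce the claimed range; you would also need to supply the constraint-propagation argument (Proposition \ref{prop:Local_ExistenceUniqueness_Prop}) and the construction of a nontrivial set of constraint-satisfying perturbed data (Section \ref{sec:PerturbedInitialData}) to make the stability statement non-vacuous.
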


\begin{rem}
    Recall that $t=0$ corresponds to future timelike infinity in our conventions. In particular, the interval $(0,T_{1}]$ in the theorem statement corresponds to a time domain of $[\tilde{T}_{1},\infty)$ with respect to the physical time coordinate.
\end{rem}

\subsection{Future Directions}
A natural next step is to investigate the stability of tilted Bianchi solutions over the entire parameter range $\frac{1}{3} < K < 1$. Although the present analysis provides an upper bound of $K=\frac{5}{7}$, this restriction appears to be technical rather than fundamental. Indeed, we expect that tilted solutions remain stable throughout the full range $\frac{1}{3} < K < 1$. This expectation is supported by numerical simulations in $\mathbb{T}^2$-symmetry, which demonstrate that small perturbations of tilted two-fluid Bianchi I solutions exist globally towards the future \cite{Marshall_Thesis:2025}. Furthermore, in Gowdy-symmetry,  tilted two-fluid Bianchi I solutions have been shown to be future stable over the entire range $\frac{1}{3} < K_{(1)} \leq K_{(2)} < 1$ \cite{Marshall_Thesis:2025}. \newline \par

Additionally, it is interesting to note that the bifurcation at $K=\frac{1}{3}$ between tilted and orthogonal homogeneous solutions also occurs in the past towards a big bang singularity \cite{BMO:2024}. However, in this setting, the situation is more complicated since the bifurcation value for $K$ depends on the Kasner exponents. Recently, Beyer and Oliynyk \cite{BeyerOliynyk:2024} have established the nonlinear stability of the FLRW solution to the Einstein-Euler-scalar field system for $\frac{1}{3}<K<1$. On the other hand, the Rendall instability has been numerically observed \cite{BMO:2024} for perturbations of the FLRW solution when $K<\frac{1}{3}$. Based on the similarities between the asymptotic regimes towards the past and future, we expect an analogous past stability result could be established for tilted Bianchi models. We plan to investigate this further in future work.

\subsection{Proof Summary}

\subsubsection{Orthonormal Frame Formulation of Einstein-Euler Equations}
The main result of this article, namely, the future nonlinear stability of the tilted two-fluid Bianchi spacetimes, is presented in Theorem \ref{mainthm}. To establish this result, we employ a tetrad formulation of the Einstein-Euler equations, which follows from a straightforward adaptation of the formalism developed by R\"{o}hr and Uggla \cite{RöhrUggla:2005} (see also \cites{EllisWainwright:1997,ElstUggla:1997}). The specific system employed in our analysis is introduced in Section \ref{sec:Frame_Formulation}; see equations \eqref{eqn:frame_component_evo_intro}-\eqref{eqn:Hamiltonian_Constraint_intro}. \newline \par 

To fix the gauge, we set the shift vector to zero, propagate the spatial frame via Fermi-Walker transport, and determine the lapse by imposing the generalised harmonic slicing condition
\begin{align*}
\Box_g t = f
\end{align*}
relative to the conformal metric $g$. The gauge source function $f$ is chosen to drive the conformal lapse $\alpha$ toward the de Sitter background value $\sqrt{\frac{\Lambda}{3}}$; see \eqref{eqn:gauge_source_f} for its precise form. A similar harmonic time slicing condition was previously employed in \cite{Oliynyk:CMP_2016} to establish the future stability of FLRW solutions with $0 < K \leq \frac{1}{3}$.

\subsubsection{Symmetric Hyperbolic Formulation of the Einstein-Euler Equations}
The next step in the stability proof is to establish local-in-time existence for solutions to the tetrad formulation of the Einstein-Euler equations. The gauge-reduced evolution equations arising from the Röhr-Uggla formulation of the Einstein equations \cite{RöhrUggla:2005} are presented in Section \ref{sec:Frame_Formulation}; see equations \eqref{eqn:frame_form1}-\eqref{eqn:alpha_evo_2}. In their initial form, these equations are not manifestly well-posed. \newline \par 

To address this, we derive a symmetric hyperbolic formulation in Section \ref{sec:Sym_Hyp_Einstein}. The derivation proceeds in four steps: \textbf{(i)} First, in Section \ref{sec:Sec3_3:Step1}, we specify the gauge source function $ f$ and introduce new variables $\Hct$ and $\alphat$, which improve the singular structure of the evolution equations. \textbf{(ii)} Next, in Section \ref{sec:Sec3_3:Step2}, we modify the evolution equations for $ \alphat$, $A_{A}$, and $\dot{U}_{A}$ by adding suitable multiples of the constraint equations. The evolution equations for  $N_{AB}$ and $\Sigma_{AB}$ are also reformulated in a form amenable to later analysis. \textbf{(iii)} Then, in Section \ref{sec:Sec3_3:Step3}, we show that the full system of evolution equations can be written in symmetric hyperbolic form, with a singular structure compatible with the global existence theory developed in \cite{BOOS:2021}. \textbf{(iv)} Finally, in Section \ref{sec:Subtract_BG_Einstein_Sec3}, we subtract the homogeneous background solution being perturbed, which yields the symmetric hyperbolic formulation of the (gauge-reduced) Einstein equations  given by \eqref{eqn:SH_Einstein_Sec3}.\newline \par

In Section \ref{sec:Sym_Hyp_Euler}, we derive a symmetric hyperbolic formulation of the relativistic Euler equations \eqref{eqn:Euler_Physical}. This section involves a \textit{significant} amount of computations. In an attempt to simplify the derivation, we break it into five steps:
\textbf{(i)} First, in Section \ref{sec:Conformal_Euler_deriv}, we express the physical Euler equations \eqref{eqn:Euler_Physical} with respect to the conformal stress-energy tensor \eqref{eqn:Conformal_Tmunu_def}. 
\textbf{(ii)} Next, in Section \ref{sec:Euler_3+1_decomp}, we derive the 3+1 decomposition of the Euler equations and introduce the modified density variable $\zeta$, which will ultimately yield a better singular structure for the equations.
\textbf{(iii)} We then perform, in Section \ref{sec:Fluid_2+1_Decomp}, a radial decomposition of the spatial fluid velocity. This generalises the change of variables used in \cites{Oliynyk:2021,MarshallOliynyk:2022} and is necessary to obtain the correct Fuchsian structure for the equations. \textbf{(iv)} New variables are introduced in Section \ref{sec:SH_Conformal_Euler_Derivation} that allow us to express the conformal Euler equations in symmetric hyperbolic form. \textbf{(v)} Finally, in Section \ref{sec:Euler_Subtract_Background}, we subtract the homogeneous background to yield the final symmetric hyperbolic formulation of the relativistic Euler equations given by \eqref{eqn:Euler_coord_deriv1}. \newline \par

Together, equations \eqref{eqn:SH_Einstein_Sec3} and \eqref{eqn:Euler_coord_deriv1} yield a symmetric hyperbolic formulation of the gauge-reduced Einstein-Euler equations. The local-in-time existence and uniqueness of solutions to this system, along with a continuation principle and constraint propagation, are established in Section \ref{sec:Local_Existence_ConstraintProp}. A precise statement of these results is provided in Proposition \ref{prop:Local_ExistenceUniqueness_Prop}.

\subsubsection{Global Existence via Fuchsian Techniques}
\label{sec:Proof_Outline_SpatialDerivatives}

Our proof of global existence relies on expressing the Einstein-Euler system as a singular, symmetric hyperbolic system of the form
\begin{align}
\label{eqn:FuchsianPDE_Example}
    B^{0}(t,U)\del_{t}U + B^{\Omega}(t,U)\del_{\Omega}U = \frac{1}{t}\mathcal{B}(t,U)\mathbb{P}U + F(t,U).
\end{align}
We say \eqref{eqn:FuchsianPDE_Example} is \textit{Fuchsian} if the right-hand side is formally singular at $t=0$. In recent years, a global existence theory for Fuchsian systems has been developed and applied to a wide range of stability problems in general relativity.  \newline \par 

Historically, analysis of Fuchsian systems has been focused on the singular initial value problem (SIVP) in which one prescribes asymptotic data at the singular time $t=0$. Using \eqref{eqn:FuchsianPDE_Example}, this data is then evolved \textit{away} from the singular time. For our purposes, however, it is the standard initial value problem (IVP) which is more relevant. Here, the goal is to prescribe initial data at some finite time $T_{1}>0$ and construct a solution up to the singular time $t=0$. That is to say, a solution on the whole interval $(0,T_{1}]$.  \newline \par

In \cite{Oliynyk:CMP_2016}, it was shown by Oliynyk that for a certain class of Fuchsian systems one can always construct solutions on the interval $(0,1]$ provided the initial data at $t=1$ is taken to be suitably small. Oliynyk then applied this result to a Fuchsian formulation of the Einstein-Euler equations with positive cosmological constant in order to prove the future stability of FLRW solutions for $0<K\leq\frac{1}{3}$. More recently, Beyer, Oliynyk, and Olvera-Santamaría  have generalised the small data global existence theory of \cite{Oliynyk:CMP_2016} to a wider class of Fuchsian systems in \cite{BOOS:2021}. This extended theory has subsequently been used to establish stability results for a variety of systems, see for example \cites{BeyerOliynyk:2023,BeyerOliynyk:2024,BOZ:2025, FOW:2021, FOOW:2023}. \newline \par

Now, while the symmetric hyperbolic formulation of the gauge-reduced Einstein-Euler equations derived in Sections \ref{sec:Sym_Hyp_Einstein} and \ref{sec:Sym_Hyp_Euler} is locally well-posed, it does not have the appropriate Fuchsian form for analysing the behaviour of solutions near future timelike infinity using Fuchsian techniques. The difficulty arises from certain singular terms in the Euler equations, whose structure is not manifestly favourable. Specifically, the symmetrised formulation of the Euler equations derived in Section \ref{sec:Sym_Hyp_Euler} has the schematic structure
\begin{align}
\label{eqn:euler_schematic_form_intro}
   \del_{t}W + \mathbf{C}^{\Sigma}\del_{\Sigma}W = \frac{-\mu}{t}\Pi W + t^{2\mu-1}\mathbf{S} + \mathbf{H}
\end{align}
where $\mu$ is a positive constant. Already at the ODE level\footnote{That is, by setting spatial derivatives to zero in \eqref{eqn:euler_schematic_form_intro}.}, one can see that the singular term $\frac{-\mu}{t}\Pi W$ results in polynomial growth instead of decay! Terms of this form were first observed in \cite{Oliynyk:2021} for the relativistic Euler equations on fixed de Sitter backgrounds. In that work, it was shown that the structure of the singular terms could be improved by spatially differentiating the equations and renormalising the spatial derivatives by appropriate powers of $t$, while using the unsymmetrised equations for the undifferentiated variables. In particular,  \cite{Oliynyk:2021} demonstrates that one can trade regularity for improved singular structure. \newline \par

However, in the coupled Einstein-Euler system, this approach is not optimal because renormalising the first derivatives leads to unnecessary restrictions on the fluid sound speeds $K_{(\afrak)}$. This is because we now have to differentiate both the fluid and gravitational variables. In particular, upon differentiating in space, certain terms involving products of gravitational and fluid variables become \textit{quadratic} in the re-scaled first derivatives resulting in terms that are too singular to apply the Fuchsian theory from \cites{Oliynyk:CMP_2016,BOOS:2021,BeyerOliynyk:2024b}. To overcome this, we instead differentiate twice in space. The second derivatives, which appear \textit{linearly}, are then renormalised. This leads to the system \eqref{eqn:EinsteinEuler_NoDeriv}-\eqref{eqn:EinsteinEuler_2ndDeriv}, which is better suited for establishing global existence over a larger range of sound speeds. \newline \par

In Section \ref{sec:Coefficient_Assumptions}, we verify that the twice differentiated system \eqref{eqn:EinsteinEuler_NoDeriv}-\eqref{eqn:EinsteinEuler_2ndDeriv} possesses a favourable singular structure by demonstrating that it satisfies the conditions outlined in \cite{BOOS:2021}*{\S 3.4} for the range $\frac{1}{3} < K_{(1)} \leq K_{(2)} < \frac{5}{7}$. This verification is a crucial step, as it enables the application of the Fuchsian global existence theory developed in \cite{BOOS:2021}*{\S 3.4} to our system. As a result, we are able to establish the main stability result of this article stated in Theorem \ref{mainthm}, whose proof is given in Section \ref{sec:MainTheoremProof}. \newline \par

\begin{rem}
    It is important to emphasise that the original \textit{unsymmetrised} Einstein-Euler system \eqref{eqn:frame_form1}, \eqref{eqn:Hct_evo}, \eqref{eqn:alphat_constraintmodevo}, \eqref{eqn:Udot_constraintmodevo}, \eqref{eqn:A_PDE3}, \eqref{eqn:N_evo_2}, \eqref{eqn:Sigma_evo_2}, and \eqref{eqn:Euler2+1_1}-\eqref{eqn:Euler2+1_3} has good ODE structure for global existence. However, in order to apply the Fuchsian theorem, we need a symmetric hyperbolic form of the equations. In particular, symmetrising the Euler equations requires us to rescale certain fluid variables which, in turn, is directly responsible for the singular term $\frac{-\mu}{t}\Pi W$. As mentioned above, we ultimately differentiate the Euler equations in space to resolve this issue. Specifically, as discussed in Section \ref{sec:Differentiated_Euler_deriv}, we treat the zeroth order equations \eqref{eqn:Fuchsian_noderiv} as ODEs and `remove' the re-scaling which was necessary for symmetrisation. It is unclear whether there exist other symmetrisations of the equations that impose less restrictive conditions on the sound speeds and, in particular, whether there exists a symmetrisation that leads to a good Fuchsian formulation over the range $\frac{1}{3} < K_{(1)} \leq K_{(2)} < 1$. We are currently investigating this question.
\end{rem}

\subsubsection{Existence of Perturbed Initial Data}
We conclude the article by establishing the existence of an open set of initial data around tilted and homogeneous two fluid solutions of the constraint equations in Section \ref{sec:PerturbedInitialData}.

\subsection{Indexing Conventions}
\label{sec:indexing}
In this article, we restrict our attention to spacetimes of the form $M = (0,T_{1}]\times \mathbb{T}^{3}$. On $\mathbb{T}^3$, we use $(x^{\Omega})$, $(\Omega = 1,2,3)$ to denote periodic spatial coordinates, while $t = x^{0}$ will be used to denote a Cartesian time coordinate on the interval $(0,T_{1}]$ where $T_{1}>0$. 
Lowercase Greek letters e.g. $\mu$, $\nu$, $\gamma$, run from $0$ to $3$ and label spacetime coordinate indices, while uppercase Greek indices e.g. $\Sigma$, $\Omega$, $\Gamma$, run from $1$ to $3$ and label spatial coordinate indices. 
Moreover, the future lies in the direction of \textit{decreasing} $t$ and future timelike infinity is located at $t=0$. \newline \par 

We also employ frames $e_{a} = e^{\mu}_{a}\del_{\mu}$, and  lower case Latin letters, e.g. $a$, $b$, $c$, that run from $0$ to $3$ will label spacetime frame indices while spatial frame indices will be labelled by upper case
Latin letters, e.g. $A$, $B$, $C$, and run from $1$ to $3$. Fraktur font indices, e.g. $\afrak, \mathfrak{b}, \mathfrak{c}$, are used to label the fluids and run from $1$ to $2$. 

\subsubsection{Index Operations}
The \textit{symmetrisation}, \textit{anti-symmetrisation}, and \textit{symmetric trace free} operations on pairs of spatial frame indices are defined by
\begin{gather*}
    \Sigma_{(AB)} = \frac{1}{2}(\Sigma_{AB} + \Sigma_{BA}), \quad \Sigma_{[AB]} = \frac{1}{2}(\Sigma_{AB}-\Sigma_{BA}),\intertext{and}
    \Sigma_{\langle AB \rangle} = \Sigma_{(AB)}- \frac{1}{3}\delta^{CD}\Sigma_{CD}\delta_{AB},
\end{gather*}
respectively.

\subsection*{Ethics Declarations}
The authors declare no conflict of interests.

\subsection*{Acknowledgements}
G.F. and E.M. gratefully acknowledge the support of the ERC starting grant 101078061 SINGinGR, under the European
Union’s Horizon Europe program for research and innovation.

\section{Frame Formulation}
\label{sec:Frame_Formulation}
As discussed in the introduction, our global existence result relies on the initial value formulation obtained by expressing the conformal Einstein-Euler equations in terms of an orthonormal frame. These equations, presented below, are based on the formulation derived in 
\cite{RöhrUggla:2005}, see also the closely related works \cites{UEWE:2003,ElstUggla:1997}. \newline \par 

To begin, we introduce an orthonormal frame $e_{a}$ with respect to the conformal metric $g$, defined above by \eqref{eqn:conformal_metric_def}, via
\begin{align} \label{e-def}
    e_{0} = \alpha^{-1}\del_{t}, \quad e_{A} = e^{\Sigma}_{A}\del_{\Sigma},
\end{align}
where $e_{0}$ is assumed to be hypersurface orthogonal, $\alpha$ is the lapse associated with the conformal metric, the shift vector vanishes, and the spatial frame $e_{A}$ is tangent to the $t=\textrm{constant}$ hypersurfaces.  Expressed in this frame, the frame components of the conformal metric are given by
\begin{align*}
    g_{ab}:= g(e_{a},e_{b}) = \eta_{ab}
\end{align*}
where $\eta_{ab}$ is the Minkowski metric\footnote{Frame indices will always be raised and lowered using the conformal frame metric $g_{ab} = \eta_{ab}$.}, and we can express the conformal metric as
\begin{equation*}
g= -\alpha^2 dt\otimes dt + \delta_{AB}\theta^A\otimes\theta^B 
\end{equation*}
where $\theta^B=\theta^B_\Lambda dx^\Lambda$ is the dual spatial frame, i.e.~$\theta^B_\Lambda e^\Lambda_A=\delta^B_A$.
We always assume that the spatial frame $e_{A}$ is evolved via Fermi-Walker transport, that is,
\begin{align*}
\nabla_{e_{0}}e_{A} = \frac{-g(\nabla_{e_{0}}e_{0},e_{A})}{g(e_{0},e_{0})}e_{0}
\end{align*}
where the spatial frame $e_{A}$ is chosen initially to satisfy 
\begin{align*}
g(e_{0},e_{A})|_{t_{0}} = 0 \quad \text{and} \quad  g(e_{A},e_{B})|_{t_{0}} = \delta_{AB}.
\end{align*}

\subsection{Connection and Commutation Coefficients}
The connection coefficients $\tensor{\omega}{_a^b_c}$ of the conformal metric associated to the orthonormal frame $e_a$ are defined by
\begin{equation*}
\begin{aligned}
\nabla_{e_{a}}e_{b} = \tensor{\omega}{_a^c_b}e_{c},
\end{aligned}
\end{equation*}
and we set
\begin{equation*}
\omega_{acb} = \eta_{cd}\tensor{\omega}{_a^d_b}.
\end{equation*}
Similarly, the Lie brackets $[e_a,e_b]$ determine the commutator coefficients $\tensor{c}{_a^c_b}$ by
\begin{align*}
    [e_{a},e_{b}] = \tensor{c}{_a^c_b}e_{c}.
\end{align*}
Since the frame $e_{i}$ is orthonormal, the connection coefficients satisfy
\begin{equation*}
\omega_{abc}=-\omega_{acb}.
\end{equation*}
Moreover, the connection and commutator coefficients uniquely determine each other via the relations
\begin{align*}
    \tensor{c}{_a^c_b} = \tensor{\omega}{_a^c_b} -\tensor{\omega}{_b^c_a}, \quad \tensor{\omega}{_{abc}} = \frac{1}{2}(c_{bac}-c_{cba}-c_{acb}).
\end{align*}

In the notation of \cite{RöhrUggla:2005}, our gauge choices (hypersurface orthogonal normal vector $e_{0}$ with a zero-shift coordinate gauge and Fermi-Walker transported spatial frame) correspond to setting 
 \begin{align}
 \label{eqn:RöhrUggla_GaugeSimplifications}
     M_{i}=\mathcal{M}_{\alpha}=W_{\alpha}=R_{\alpha}=0.
 \end{align}
 With these choices, it follows from \cite{RöhrUggla:2005}*{Eqns.~(25)-(26)} that the commutators $[e_{0},e_{A}]$ and $[e_{A},e_{B}]$ can be expressed as 
\begin{equation*}
\begin{aligned}
    [e_{0},e_{A}] =  \dot{U}_{A}e_{0} - (\mathcal{H}\delta^{B}_{A} + \tensor{\Sigma}{_A^B})e_{B}, \\
    [e_{A},e_{B}] = (2A_{[A}\delta^{C}_{B]}+\eps_{ABD}N^{DC})e_{C},
\end{aligned}
\end{equation*}
where $N_{AB}$ is symmetric and $\Sigma_{AB}$ is symmetric and trace-free, that is,
\begin{align*}
    N_{(AB)} = N_{AB} \quad \text{and} \quad \Sigma_{\langle AB \rangle} = \Sigma_{AB}.
\end{align*}
Similarly, from \cite{RöhrUggla:2005}*{Eqns.~(35)-(36)}, the connection coefficients are given by
\begin{equation}
\begin{gathered}
\label{eqn:connection_identities}
\omega_{0A0} = -\omega_{00A} = \dot{U}_{A}, \quad \omega_{A0B} = -\omega_{AB0} = -(\mathcal{H}\delta_{AB} + \Sigma_{AB}), \\
\omega_{ABC} = 2A_{[B}\eta_{C]A} + \epsilon_{BCE}\tensor{\tilde{N}}{^{E}_{A}}, \quad \tensor{\tilde{N}}{^E_A} = \tensor{N}{^E_A} -\frac{1}{2}\tensor{N}{^D_D}\delta^{E}_{A}, \\
\omega_{000} = \omega_{A00} = \omega_{0AB} = 0.
\end{gathered}
\end{equation}
For use below, we set
\begin{align} \label{ra-def}
   r_{a} := e_{a}(\Phi) = -\frac{1}{\alpha t}\delta_a^0,
\end{align}
where the second equality follows from \eqref{Phi-def} and \eqref{e-def}.

\subsection{Frame Equations}
In \cite{RöhrUggla:2005}, it is shown that the conformal Einstein equations, frame commutators, and the Jacobi identities lead to a system of evolution and constraint equations. For our gauge choices \eqref{eqn:RöhrUggla_GaugeSimplifications}, these equations, see \cite{RöhrUggla:2005}*{Eqns.~(41)-(53)}\footnote{Note, there is a sign error in the term $\epsilon^{\gamma\delta}{}_{\langle \alpha}(2\Sigma_{\beta\rangle \gamma} R_\delta - N_{\beta\rangle \gamma}\dot{U}_\delta)$ from \cite{RöhrUggla:2005}*{Eqn.~(45)}. The `$-$' sign in front of this term should be replaced with a `$+$' sign; cf. \cite{ElstUggla:1997}*{Eqn.~(34)}.}, reduce to the evolution equations 
\begin{align}
\label{eqn:frame_component_evo_intro}
e_{0}(e_{A}^{\Sigma}) &= -(\mathcal{H}\delta_{A}^{B} + \Sigma_{A}^{B})e_{B}^{\Sigma}, \\
e_{0}(A_{A}) &= -e_{A}(\mathcal{H}) + \frac{1}{2}e_{B}(\tensor{\Sigma}{_{A}^{B}}) - \mathcal{H}(\dot{U}_{A}+A_{A}) + \tensor{\Sigma}{_A^B}(\frac{1}{2}\dot{U}_{B}-A_{B}), \\
e_{0}(N^{AB}) &= -\mathcal{H}N^{AB} + 2\tensor{N}{^{(A}_{C}}\Sigma^{B)C} - \epsilon^{CD(A}\Big[e_{C}(\tensor{\Sigma}{_D^{B)}})+\dot{U}_{C}\tensor{\Sigma}{_D^{B)}}\Big], \\
\label{eqn:Hcal_Evo_Intro}
e_{0}(\mathcal{H}) &= -\mathcal{H}^{2}+\frac{1}{3}e_{A}(\dot{U}^{A}) + \frac{1}{3}\dot{U}_{A}(\dot{U}^{A}-2A^{A})-\frac{1}{3}\Sigma_{AB}\Sigma^{AB}\nonumber \\
&-\frac{e^{2\Phi}}{6}(T_{00}+\tensor{T}{_A^A})+\frac{e^{2\Phi}\Lambda}{3} -\frac{1}{3}U_{00}, \\
\label{eqn:Sigma_evo_intro}
e_{0}(\Sigma_{AB}) &= -3\mathcal{H}\Sigma_{AB} +e_{\langle A}(\dot{U}_{B \rangle}) +\dot{U}_{\langle A}\dot{U}_{B \rangle} -e_{\langle A}(A_{B \rangle})+A_{\langle A}\dot{U}_{B \rangle} \nonumber \\
&+\epsilon_{CD(A}\Big[e^{C}(\tensor{N}{_{B)}^D})+\tensor{N}{_{B)}^D}\dot{U}^{C}-2\tensor{N}{_{B)}^D}A^{C}\Big] -2\tensor{N}{_{\langle A}^C}\tensor{N}{_{B \rangle}_{C}} +NN_{\langle AB \rangle} \nonumber \\
&+e^{2\Phi}T_{\langle AB \rangle} + U_{\langle AB \rangle}, 
\end{align}
and constraint equations
\begin{align}
0 &= \mathcal{C}_{1} := 2e_{[A}(e_{B]}^{\Sigma}) - 2A_{[A}e^{\Sigma}_{B]} - \epsilon_{ABD}N^{DC}e^{\Sigma}_{C}, \\
0 &= \mathcal{C}_{2} := e_{A}(\tensor{N}{^{AD}}) + \tensor{\eps}{^{ABD}}e_{A}(A_{B}) -2A_{A}N^{AD}, \\
\label{eqn:C3_constraint_intro}
0 &= \mathcal{C}_{3} := \dot{U}_{A} - \alpha^{-1}e_{A}(\alpha), \\
\label{eqn:C4_constraint_intro}
0 &= \mathcal{C}_{4} :=2e_{[B}(\dot{U}_{A]}) + 2A_{[A}\dot{U}_{B]} + \epsilon_{ABC}N^{CD}\dot{U}_{D}, \\
0 &= \mathcal{C}_{M} := e_{B}(\tensor{\Sigma}{_A^B}) -2e_{A}(\mathcal{H}) -3\tensor{\Sigma}{_A^B}A_{B} - \epsilon_{ABC}N^{BD}\tensor{\Sigma}{_D^C} - e^{2\Phi}T_{0A} -U_{0A}, \label{eqn:Momentum_Constraint_intro} \\
\label{eqn:Hamiltonian_Constraint_intro}
0 &= \mathcal{C}_{H} := 4e_{A}(A^{A}) +6\mathcal{H}^{2} - 6A^{A}A_{A} -N^{AB}N_{AB} +\frac{1}{2}N^{2} -\Sigma_{AB}\Sigma^{AB} \nonumber \\
&-2e^{2\Phi}T_{00}-2e^{2\Phi}\Lambda -\big(U_{00}+\tensor{U}{_A^A}\big), 
\end{align}
where $\mathcal{C}_{M}$ and $\mathcal{C}_{H}$ are the momentum and Hamiltonian constraints, respectively, and $U_{ab}$ is defined by\footnote{Note that our definition of $U_{ab}$ is the negative of the definition given in \cite{RöhrUggla:2005}.}
\begin{align}
\label{eqn:U_conf_def}
U_{ab} := 2\nabla_{a}\nabla_{b}\Phi +g_{ab}(\Box\Phi +2|\nabla\Phi|^{2}_{g}) -2\nabla_{a}\Phi\nabla_{b}\Phi.
\end{align}
\begin{rem}
    For our gauge choices \eqref{eqn:RöhrUggla_GaugeSimplifications}, equations (41), (42), (46), and (53) from \cite{RöhrUggla:2005} simplify in the following manner: (i) The evolution equation (41) for the shift $\mathcal{M}_{\alpha}$ becomes the constraint  \eqref{eqn:C3_constraint_intro}. (ii)  Equations (42) and (53) identically vanish.
    (iii) The evolution equation (46) for the vorticity  $W_{\alpha}$ becomes the constraint \eqref{eqn:C4_constraint_intro}.
\end{rem}

\subsection{Fixing the Lapse}
\label{sec:lapse_fixing}
The choice of lapse $\alpha$ is the last remaining gauge freedom. We fix it by requiring our time coordinate $t$ to satisfy the generalised harmonic time slicing condition
\begin{align}
\label{eqn:Harmonic_TimeSlicing}
    \Box_g t = f,
\end{align}
where, for now, $f$ is an arbitrary function. This leads to the following evolution equation for $\alpha$:
\begin{align}
\label{eqn:alpha_evo}
e_{0}(\alpha) &= 3\mathcal{H}\alpha +\alpha^{2}f.
\end{align}
Since $\dot{U}_{A}$ is determined by the lapse via the constraint \eqref{eqn:C4_constraint_intro}, the above choice of lapse leads to an evolution equation for $\dot{U}_{A}$. In particular, by allowing the frame commutator $[e_0,e_A]$ to act on $\alpha$, we obtain
\begin{align}
\label{eqn:Udot_evo}
e_{0}(\dot{U}_{A}) &= 3e_{A}(\mathcal{H}) + \alpha e_{A}(f) + 2\mathcal{H}\dot{U}_{A} +2\alpha\dot{U}_{A}f -\tensor{\Sigma}{^B_A}\dot{U}_{B}. 
\end{align}

\subsection{Gauge-Reduced Einstein Field Equations}
Combining \eqref{eqn:frame_component_evo_intro}-\eqref{eqn:Hamiltonian_Constraint_intro} and \eqref{eqn:alpha_evo}-\eqref{eqn:Udot_evo} yields a complete gauge-reduced system of evolution and constraint equations. Specialising to the conformal factor \eqref{Phi-def}, the evolution equations are given by 
\begin{align}
\label{eqn:frame_form1}
e_{0}(e_{A}^{\Sigma}) &= -(\mathcal{H}\delta_{A}^{B} + \Sigma_{A}^{B})e_{B}^{\Sigma}, \\
\label{eqn:A_evo}
e_{0}(A_{A}) &= -e_{A}(\mathcal{H}) + \frac{1}{2}e_{B}(\tensor{\Sigma}{_{A}^{B}}) - \mathcal{H}(\dot{U}_{A}+A_{A}) + \tensor{\Sigma}{_A^B}(\frac{1}{2}\dot{U}_{B}-A_{B}), \\
\label{eqn:N_evo}
e_{0}(N^{AB}) &= -\mathcal{H}N^{AB} + 2\tensor{N}{^{(A}_{C}}\Sigma^{B)C} - \epsilon^{CD(A}\Big[e_{C}(\tensor{\Sigma}{_D^{B)}})+\dot{U}_{C}\tensor{\Sigma}{_D^{B)}}\Big], \\
\label{eqn:H_evo}
e_{0}(\mathcal{H}) &= -\mathcal{H}^{2}+\frac{1}{3}e_{A}(\dot{U}^{A}) + \frac{1}{3}\dot{U}_{A}(\dot{U}^{A}-2A^{A})-\frac{1}{3}\Sigma_{AB}\Sigma^{AB}\nonumber \\
&-\frac{1}{6t^{2}}(T_{00}+\tensor{T}{_A^A})+\frac{\Lambda}{3t^{2}} -\frac{2\mathcal{H}}{\alpha t}-\frac{1}{\alpha^{2}t^{2}}-\frac{f}{t}, \\
\label{eqn:Sigma_evo}
e_{0}(\Sigma_{AB}) &= -3\mathcal{H}\Sigma_{AB} +e_{\langle A}(\dot{U}_{B \rangle}) +\dot{U}_{\langle A}\dot{U}_{B \rangle} -e_{\langle A}(A_{B \rangle})+A_{\langle A}\dot{U}_{B \rangle} \nonumber \\
&+\epsilon_{CD(A}\Big[e^{C}(\tensor{N}{_{B)}^D})+\tensor{N}{_{B)}^D}\dot{U}^{C}-2\tensor{N}{_{B)}^D}A^{C}\Big] -2\tensor{N}{_{\langle A}^C}\tensor{N}{_{B \rangle}_{C}} +NN_{\langle AB \rangle} \nonumber \\
&+\frac{1}{t^{2}}T_{\langle AB \rangle} + \frac{2}{\alpha t}\Sigma_{AB}, \\
\label{eqn:Udot_evo_f}
e_{0}(\dot{U}_{A}) &= 3e_{A}(\mathcal{H}) + \alpha e_{A}(f) + 2\mathcal{H}\dot{U}_{A} +2\alpha\dot{U}_{A}f -\tensor{\Sigma}{^B_A}\dot{U}_{B}, \\
\label{eqn:alpha_evo_2}
e_{0}(\alpha) &= 3\mathcal{H}\alpha +\alpha^{2}f,
\end{align}
where explicit expressions for the components of $U_{ab}$ in terms of the connection coefficients are given in Appendix \ref{sec:ConformalFactor_Appendix}.
Similarly, the constraint equations are given by 
\begin{align}
\label{eqn:C1_constraint}
0 &= \mathcal{C}_{1} :=  2e_{[A}(e_{B]}^{\Sigma}) - 2A_{[A}e^{\Sigma}_{B]} - \epsilon_{ABD}N^{DC}e^{\Sigma}_{C}, \\
\label{eqn:C2_constraint}
0 &= \mathcal{C}_{2} :=e_{A}(N^{AD}) - \epsilon^{ADB}e_{A}(A_{B}) - 2A_{A}N^{AD}, \\
\label{eqn:C3_constraint}
0 &= \mathcal{C}_{3} :=\dot{U}_{A} - \alpha^{-1}e_{A}(\alpha), \\
\label{eqn:Udot_commutatorconstraint}
0 &= \mathcal{C}_{4} :=2e_{[B}(\dot{U}_{A]}) + 2A_{[A}\dot{U}_{B]} + \epsilon_{ABC}N^{CD}\dot{U}_{D}, \\
\label{eqn:MomentumConstraint}
0 &= \mathcal{C}_{M} := e_{B}(\tensor{\Sigma}{_A^B}) -2e_{A}(\mathcal{H}) -3\tensor{\Sigma}{_A^B}A_{B} - \epsilon_{ABC}N^{BD}\tensor{\Sigma}{_D^C} - \frac{1}{t^{2}}T_{0A} - \frac{2}{\alpha t}\dot{U}_{A}, \\
\label{eqn:Hamiltonian_Constraint}
0 &= \mathcal{C}_{H} :=4e_{A}(A^{A}) +6\mathcal{H}^{2} - 6A^{A}A_{A} -N^{AB}N_{AB} +\frac{1}{2}N^{2} -\Sigma_{AB}\Sigma^{AB} \nonumber \\
&\hspace{1.5cm}-\frac{2}{t^{2}}T_{00}-\frac{2}{t^{2}}\Lambda -\frac{12\mathcal{H}}{\alpha t} + \frac{6}{\alpha^{2}t^{2}}.
\end{align}

\section{A Symmetric Hyperbolic Formulation of The Einstein Equations}
\label{sec:Sym_Hyp_Einstein}

\subsection{Gauge Source}
\label{sec:Sec3_3:Step1}
Motivated by the stability results from \cite{Oliynyk:CMP_2016}, we fix the gauge source function by setting
\begin{align}
\label{eqn:gauge_source_f}
    f = \frac{2}{t}\Bigl(-\frac{1}{\alpha^{2}}+\frac{\Lambda}{3}\Bigr).
\end{align}
\begin{rem}
The role of the harmonic slicing condition \eqref{eqn:Harmonic_TimeSlicing} and gauge source \eqref{eqn:gauge_source_f} is to drive the lapse towards the de Sitter background lapse. In exponentially expanding spacetimes, we expect to approach de Sitter space at late times (cf. Appendix \ref{Appendix:two-fluid}), so it is natural to use such a gauge in the context of stability results. 
\end{rem}

Next, we introduce new variables $\Hct$ and $\alphat$ via  
\begin{equation}
\begin{aligned}
\label{eqn:Hct_Alphat_defs}
    \Hct &= \mathcal{H} - \frac{1}{3}\alphah, \quad \alphat = 2\mathcal{H} - \alphah,
\end{aligned}
\end{equation}
where
\begin{align*}
    \alphah = \frac{1}{t}\Big(\frac{1}{\alpha}-\frac{\alpha\Lambda}{3}\Big).
\end{align*}
\begin{rem}
   This choice of new variables is motivated by the fact that working with $\alphat$ and $\Hct$, as opposed to $\alpha$ and $\mathcal{H}$, results in the correct singular structure necessary to apply the Fuchsian theory from \cite{BOOS:2021}. 
\end{rem}
\noindent We observe  that $\mathcal{H}$ and $\alphah$ can be recovered from $\Hct$ and $\alphat$ using
\begin{equation}
\begin{aligned}
\label{eqn:Hct_identities}
    \mathcal{H}= -\alphat +3\Hct \quad \text{and} \quad  \alphah=-3\alphat +6\Hct,
\end{aligned}
\end{equation}
while $\alpha$ can be recovered via
\begin{align}
\label{eqn:lapse_Hct_identity}
    \alpha &= \frac{\Big(3\alphat -6\Hct\Big) t + \sqrt{\Big(\big(-3\alphat +6\Hct\big)  t\Big)^{2}+\frac{4\Lambda}{3}}}{\frac{2\Lambda}{3}}.
\end{align}
After a lengthy but straightforward calculation, we obtain the following evolution equations for $\Hct$ and $\alphat$:
\begin{align}
\label{eqn:Hct_evo}
e_{0}(\Hct) &=\frac{1}{3}\delta^{AB}e_{A}(\dot{U}_{B})+ \frac{1}{3}\dot{U}_{A}(\dot{U}^{A}-2A^{A})-\frac{1}{3}\Sigma_{AB}\Sigma^{AB}-\Hct^{2} -\frac{1}{6t^{2}}(T_{00}+\tensor{T}{_A^A}) \nonumber \\
&-\frac{5}{3}\alphah\Hct +\frac{2}{9}\alphah^{2}, \\
\label{eqn:alphat_evo}
    e_{0}(\alphat) &= \frac{2}{3}\delta^{AB}e_{A}(\dot{U}_{B})+\frac{2}{3}\dot{U}_{A}\Big(\dot{U}_{A}-2A^{A}\Big)-\frac{2}{3}\Sigma_{AB}\Sigma^{AB}-\frac{\alphat^{2}}{2}-\frac{1}{3t^{2}}\Big(T_{00}+\tensor{T}{_A^A}\Big) \nonumber \\
    & +\frac{\alpha\Lambda}{3t}\alphat-\frac{3}{2}\alphah\alphat.
\end{align}

\subsection{Addition of Constraints}
\label{sec:Sec3_3:Step2}
In order to bring the evolution equations \eqref{eqn:frame_form1}-\eqref{eqn:N_evo}, \eqref{eqn:Sigma_evo}-\eqref{eqn:Udot_evo_f} and \eqref{eqn:Hct_evo}-\eqref{eqn:alphat_evo} into a symmetric hyperbolic form, we will need to add suitable multiples of the constraints to the evolution equations \eqref{eqn:A_evo}, \eqref{eqn:Udot_evo_f} and \eqref{eqn:alphat_evo}. To this end, we express the Hamiltonian constraint quantity $\mathcal{C}_{H}$, see \eqref{eqn:Hamiltonian_Constraint}, in terms of $\alphat$ and $\Hct$
as follows 
\begin{align}
    \label{eqn:Hamiltonian_Constraint2}
     \mathcal{C}_{H} &= 4e_{A}(A^{A}) +6(-\alphat+3\Hct)^{2} - 6A^{A}A_{A} -N^{AB}N_{AB} +\frac{1}{2}N^{2} -\Sigma_{AB}\Sigma^{AB} \nonumber \\
&-\frac{2}{t^{2}}T_{00}-\frac{6\alphat}{\alpha t}.
\end{align}
Similarly, using the identity 
\begin{align*}
    \alpha^{-1}e_{A}(\alpha) = -\frac{\alpha^{-1}e_{A}(\alphah)}{\frac{1}{t\alpha^{2}}+\frac{\Lambda}{3t}}
\end{align*}
and \eqref{eqn:Hct_identities}, the constraint quantity $\mathcal{C}_3$ (defined by \eqref{eqn:C3_constraint}) can be expressed as
\begin{align*}
   0 =  \tilde{\mathcal{C}}_{3} := \Big(\frac{1}{t\alpha}+\frac{\Lambda\alpha}{3t}\Big)\dot{U}_{A}+3e_{A}(\alphat)-6e_{A}(\Hct).
\end{align*}
\newline \par
Now, by adding $\theta\mathcal{C}_{H}$ to \eqref{eqn:alphat_evo} we obtain the equation
\begin{align}
\label{eqn:alphat_constraintmodevo}
    e_{0}(\alphat) &= \frac{2}{3}\delta^{AB}e_{A}(\dot{U}_{B})+ 4\theta e_{A}(A^{A})+\frac{2}{3}\dot{U}_{A}\Big(\dot{U}_{A}-2A^{A}\Big)-(\frac{2}{3}+\theta)\Sigma_{AB}\Sigma^{AB}\nonumber \\
    &-\frac{\alphat^{2}}{2}-\frac{1}{3t^{2}}\Big(T_{00}+\tensor{T}{_A^A}\Big) +\Big(\frac{\alpha\Lambda}{3t}-\frac{6\theta}{\alpha t}\Big)\alphat-\frac{3}{2}\alphah\alphat  \nonumber \\
    &+6\theta(-\alphat+3\Hct)^{2} - 6\theta A^{A}A_{A} -\theta N^{AB}N_{AB} +\frac{\theta}{2}N^{2} -\frac{2\theta}{t^{2}}T_{00},
\end{align} 
where $\theta$ is a constant to be determined. Similarly, by adding  $\lambda\mathcal{C}_{H}$ and $\beta\tilde{\mathcal{C}}_{3}$ to \eqref{eqn:Udot_evo} and $\gamma\mathcal{C}_{H}$ and $\kappa \tilde{\mathcal{C}_{3}}$ to \eqref{eqn:A_evo}, we obtain the evolution equations
\begin{align}
    \label{eqn:Udot_constraintmodevo}
    e_{0}(\dot{U}_{A}) &= \Big(-(3-2\lambda)+3\beta\Big)e_{A}(\alphat) +\Big(3(3-2\lambda)-6\beta\Big)e_{A}(\Hct) +\lambda \delta^{BC}\pi^{EF}_{AC}e_{B}(\tensor{\Sigma}{_{EF}}) \nonumber \\
    &+ 2(-\alphat+3\Hct)\dot{U}_{A} - \tensor{\Sigma}{^B_A}\dot{U}_{B} +\Big(\frac{\beta-2\lambda}{t\alpha}+\frac{(4+\beta)\alpha\Lambda}{3t}\Big)\dot{U}_{A} \nonumber \\
    &+\lambda\Big(-3\tensor{\Sigma}{_A^B}A_{B} - \epsilon_{ABC}N^{BD}\tensor{\Sigma}{_D^C} - \frac{1}{t^{2}}T_{0A} \Big), \\
    \label{eqn:A_PDE3}
    e_{0}(A_{A}) &= \Big(-(-1-2\gamma)+3\kappa\Big)e_{A}(\alphat) + \Big(3(-1-2\gamma)-6\kappa\Big)e_{A}(\Hct) -3\gamma\tensor{\Sigma}{_A^B}A_{B} \nonumber \\
    &+ (\frac{1}{2}+\gamma)\delta^{BC}\pi^{EF}_{AC}e_{B}(\tensor{\Sigma}{_{EF}}) -(-\alphat+3\Hct)(\dot{U}_{A}+A_{A}) + \tensor{\Sigma}{_A^B}(\frac{1}{2}\dot{U}_{B}-A_{B})  \nonumber \\
    &-\gamma\epsilon_{ABC}N^{BD}\tensor{\Sigma}{_D^C}-\gamma \frac{1}{t^{2}}T_{0A} + \Big(\frac{\kappa-2\gamma}{t\alpha}+\frac{\kappa\Lambda\alpha}{3t}\Big)\dot{U}_{A}, 
\end{align}
respectively, where $\lambda$, $\beta$, $\gamma$, and $\kappa$ are constants to be determined. \newline \par

Finally, we turn to expressing the evolution equations for 
$N_{AB}$ and $\Sigma_{AB}$ in a form that clearly reveals their symmetric hyperbolic structure. To this end, we define the symmetric trace-free projection operator
\begin{equation}
\label{eqn:pi_projection_tracefree}
    \pi^{CD}_{AB} = \delta^{C}_{(A}\delta^{D}_{B)} - \frac{1}{3}\delta^{CD}\delta_{AB}, 
\end{equation}
i.e.~$\pi^{CD}_{AB}X_{CD} = X_{\langle AB \rangle}$.
Using $\pi^{DE}_{AB}$, we can express \eqref{eqn:N_evo} and \eqref{eqn:Sigma_evo} as 
\begin{align*}
    e_{0}(N_{AB}) &= -\tensor{\epsilon}{^{C(D}_{(A}}\delta^{E)}_{B)}e_{C}\big(\Sigma_{DE}\big)-\tensor{\epsilon}{^{C(D}_{(A}}\delta^{E)}_{B)}\dot{U}_{C}\Sigma_{DE} -\mathcal{H}N_{AB}+2N_{C(A}\tensor{\Sigma}{_{B)}^C}, \\
    e_{0}(\Sigma_{AB}) &= \tensor{\epsilon}{^{C(D}_{(A}}\delta^{E)}_{B)}e_{C}(N_{DE}) +\pi^{DE}_{AB}e_{D}(\dot{U}_{E}) - \pi^{DE}_{AB}e_{D}(A_{E}) \nonumber \\
    &\quad -3\mathcal{H}\Sigma_{AB} +\epsilon_{CD(A}\tensor{N}{_{B)}^D}\dot{U}^{C}-2\tensor{\epsilon}{_{CD(A}}\tensor{N}{_{B)}^D}A^{C}\nonumber \\ &\quad +\pi^{DE}_{AB}\Big(\dot{U}_{D}\dot{U}_{E}+A_{D}\dot{U}_{E}-2\tensor{N}{_D^C}N_{EC}+NN_{DE}+\frac{1}{t^{2}}T_{DE}+U_{DE}\Big).
\end{align*}
Setting
\begin{align}
\label{eqn:Scal_operator_defn}
    \tensor{\mathcal{S}}{^C_{AB}^{DE}} &:= -\tensor{\epsilon}{^{C(D}_{(A}}\delta^{E)}_{B)} = -\frac{1}{4}\Big(\tensor{\epsilon}{^{CD}_{B}}\delta^{E}_{A}+\tensor{\epsilon}{^{CD}_{A}}\delta^{E}_{B}+\tensor{\epsilon}{^{CE}_{A}}\delta^{D}_{B}+\tensor{\epsilon}{^{CE}_{B}}\delta^{D}_{A}\Big),
\end{align}
a straightforward calculation shows
\begin{align*}
    \tensor{\mathcal{S}}{^C_{ABDE}} = -\frac{1}{4}\Big(\tensor{\epsilon}{^{C}_{DB}}\delta_{EA}+\tensor{\epsilon}{^{C}_{DA}}\delta_{EB}+\tensor{\epsilon}{^{C}_{EA}}\delta_{DB}+\tensor{\epsilon}{^{C}_{EB}}\delta_{DA}\Big).
\end{align*}
Swapping the index pairs $AB$ and $DE$
\begin{align*}
    \tensor{\mathcal{S}}{^C_{DEAB}} = \frac{1}{4}\Big(\tensor{\epsilon}{^{C}_{EA}}\delta_{BD}+\tensor{\epsilon}{^{C}_{DA}}\delta_{BE}+\tensor{\epsilon}{^{C}_{DB}}\delta_{AE}+\tensor{\epsilon}{^{C}_{EB}}\delta_{AD}\Big)
\end{align*}
then implies 
\begin{align*}
    \tensor{\mathcal{S}}{^{CDE}_{AB}} &= \frac{1}{2}\Big(\tensor{\epsilon}{^{CD}_{(A}}\delta^{E}_{B)} + \tensor{\epsilon}{^{CE}_{(A}}\delta^{D}_{B)}\Big)=\tensor{\epsilon}{^{C(D}_{(A}}\delta^{E)}_{B)}.
\end{align*}
Now, using $\mathcal{S}$, we can express the evolution equations for $N_{AB}$ and $\Sigma_{AB}$ as
\begin{align}
    \label{eqn:N_evo_2}
    e_{0}(N_{AB}) &= \tensor{\mathcal{S}}{^C_{AB}^{DE}}e_{C}(\Sigma_{DE})+\tensor{\mathcal{S}}{^C_{AB}^{DE}}\dot{U}_{C}\Sigma_{DE} -\mathcal{H}N_{AB}+2N_{C(A}\tensor{\Sigma}{_{B)}^C}, \\
    \label{eqn:Sigma_evo_2}
    e_{0}(\Sigma_{AB}) &= \tensor{\mathcal{S}}{^{CDE}_{AB}}e_{C}(N_{DE}) +\pi^{DE}_{AB}e_{D}(\dot{U}_{E}) - \pi^{DE}_{AB}e_{D}(A_{E}) \nonumber \\
    &-3\mathcal{H}\Sigma_{AB} +\epsilon_{CD(A}\tensor{N}{_{B)}^D}\dot{U}^{C}-2\tensor{\epsilon}{_{CD(A}}\tensor{N}{_{B)}^D}A^{C}\nonumber \\ &+\pi^{DE}_{AB}\Big(\dot{U}_{D}\dot{U}_{E}+A_{D}\dot{U}_{E}-2\tensor{N}{_D^C}N_{EC}+NN_{DE}+e^{2\Phi}T_{DE}+U_{DE}\Big). 
\end{align}

\subsection{Symmetric Hyperbolic System}
\label{sec:Sec3_3:Step3}
 The full set of evolution equations is given by \eqref{eqn:frame_form1}, \eqref{eqn:Hct_evo}, \eqref{eqn:alphat_constraintmodevo}, \eqref{eqn:Udot_constraintmodevo}, \eqref{eqn:A_PDE3}, \eqref{eqn:N_evo_2}, and \eqref{eqn:Sigma_evo_2}. This system can be written in matrix form as follows 
\begin{align}
\label{eqn:Einstein_MatrixForm}
    B^0 \del_{t}\tilde{V} &= \alpha B^{\Sigma}\del_{\Sigma}\tilde{V} + \alpha\frac{1}{t}\tilde{\mathcal{B}}\tilde{V} + \alpha\frac{1}{t}\tilde{\mathcal{C}}\tilde{V}+ \alpha\mathcal{F},
\end{align}
where 
\begin{align}
\label{eqn:V_vector}
\tilde{V} &= (\Hct, \alphat, \dot{U}_{P}, A_{P}, N_{PI}, \Sigma_{PI}, e_{P}^{\Sigma})^{\text{T}}, \\
\label{eqn:B0_Einstein}
B^0 &= \begin{pmatrix}
a & c & 0 & 0 & 0 & 0 & 0 \\
c & b & 0 & 0 & 0 & 0 & 0 \\
0 & 0 &d\delta^{P}_{A} & g\delta_{A}^{P} & 0 & 0 & 0 \\ 
0 & 0 & g\delta_{A}^{P} & \delta^{P}_{A} & 0 & 0 & 0\\
0 & 0 & 0 & 0 & \delta^{P}_{A}\delta^{I}_{B} & 0 & 0\\ 
0 & 0 & 0 & 0 & 0 & \delta^{P}_{A}\delta^{I}_{B} & 0 \\
0 & 0 & 0 & 0 & 0 & 0 & \delta^{P}_{A} \end{pmatrix}, \\
\label{eqn:BSigma_Einstein}
B^{\Sigma} &= \begin{pmatrix} 0 & 0 & B_{13}\delta^{KP} & B_{14}\delta^{KP} & 0 & 0 & 0 \\
0 & 0 & B_{23}\delta^{KP} & B_{24}\delta^{KP}  & 0 & 0 & 0 \\
B_{31}\delta^{K}_{A} &  B_{32}\delta^{K}_{A}& 0 & 0 & 0 & B_{36} \delta^{KC}\pi^{PI}_{AC} & 0 \\ 
B_{41}\delta^{K}_{A} &  B_{42}\delta^{K}_{A}& 0 & 0 & 0 & B_{46}\delta^{KC}\pi^{PI}_{AC} & 0\\
0 & 0 & 0 & 0 & 0 & \tensor{\mathcal{S}}{^K_{AB}^{PI}} & 0\\ 
0 & 0 & \pi^{KP}_{AB} & -\pi^{KP}_{AB} & \tensor{\mathcal{S}}{^{KPI}_{AB}} & 0 & 0\\
0 & 0 & 0 & 0 & 0 & 0 & 0\end{pmatrix}e_{K}^{\Sigma}, \\
\label{tildeBcal-cmpts}
\tilde{\mathcal{B}} &= \text{diag}\bigl(0, 0, 0, 0, 0, \frac{2}{\alpha}\delta^{P}_{A}\delta^{I}_{B}, 0\bigr), \\
\label{tildeCcal-cmpts}
\tilde{\mathcal{C}} &= \begin{pmatrix} 0 & \tilde{\mathcal{C}}_{12} & 0 & 0 & 0 & 0 & 0\\
0 & \tilde{\mathcal{C}}_{22} & 0 & 0 & 0 & 0 & 0\\
0 & 0 & \tilde{\mathcal{C}}_{33}\delta^{P}_{A} & 0 & 0 & 0 & 0\\ 
0 & 0 & \tilde{\mathcal{C}}_{43}\delta^{P}_{A} & 0 & 0 & 0 & 0\\
0 & 0 & 0 & 0 & 0 & 0 & 0\\ 
0 & 0 & 0 & 0 & 0 & 0 & 0\\
0 & 0 & 0 & 0 & 0 & 0 & 0 \end{pmatrix},
\intertext{and}
\label{eqn:Fcal_SourceTerm}
\mathcal{F} &=\bigl( \mathcal{F}_{1} , \mathcal{F}_{2} , \mathcal{F}_{3} , \mathcal{F}_{4} , \mathcal{F}_{5} , \mathcal{F}_{6} , \mathcal{F}_{7})^T.
\end{align}
The coefficients of $B^{\Sigma}$ and $\tilde{\mathcal{C}}$ are given by
\begin{equation} \label{eqn:BSigma_Components}
\begin{aligned}
    B_{13} &= \frac{a+2c}{3}, \quad B_{14} = 4c\theta,\quad B_{23} = \frac{2b+c}{3}, \;\; B_{24} = 4b\theta, \\
    B_{31} &= d\big(3(3-2\lambda)-6\beta\big) + g\big(3(-1-2\gamma)-6\kappa\big), \\
    B_{32} &= d\big(-(3-2\lambda)+3\beta\big)+g\big(-(-1-2\gamma)+3\kappa\big), \\
    B_{36} &= d\lambda + g(\frac{1}{2}+\gamma) ,  \quad B_{46} = (\frac{1}{2}+\gamma) +g\lambda,\\
    B_{41} &= \big(3(-1-2\gamma)-6\kappa\big) +g\big(3(3-2\lambda)-6\beta\big), \\
    B_{42} &= \big(-(-1-2\gamma)+3\kappa\big) + g\big(-(3-2\lambda)+3\beta),
\end{aligned}
\end{equation}
and
\begin{align}
    \tilde{\mathcal{C}}_{12} &= c\big(\frac{\alpha\Lambda}{3}-\frac{6\theta}{\alpha}),\quad \tilde{\mathcal{C}}_{22} = b\big(\frac{\alpha\Lambda}{3}-\frac{6\theta}{\alpha}), \nonumber \\
    \label{tildeCcal-def}
    \tilde{\mathcal{C}}_{33} &= d\Big(\frac{\beta-2\lambda}{\alpha}+\frac{(4+\beta)\alpha\Lambda}{3}\Big) + g\Big(\frac{\kappa-2\gamma}{\alpha}+\frac{\kappa\alpha\Lambda}{3}\Big) ,  \\
    \tilde{\mathcal{C}}_{43} &=\Big(\frac{\kappa-2\gamma}{\alpha}+\frac{\kappa\alpha\Lambda}{3}\Big) +g\Big(\frac{\beta-2\lambda}{\alpha}+\frac{(4+\beta)\alpha\Lambda}{3}\Big), \nonumber
\end{align}
respectively, while
the components of $\mathcal{F}$ are given by 
\begin{align}
    \mathcal{F}_{1} &= c\Big[\frac{2}{3}\dot{U}_{A}\Big(\dot{U}_{A}-2A^{A}\Big)-(\frac{2}{3}+\theta)\Sigma_{AB}\Sigma^{AB} -\frac{\alphat^{2}}{2}-\frac{1}{3t^{2}}\Big(T_{00}+\tensor{T}{_A^A}\Big) -\frac{3}{2}\alphah\alphat   \nonumber\\
    &+6\theta(-\alphat+3\Hct)^{2} - 6\theta A^{A}A_{A} -\theta N^{AB}N_{AB} +\frac{\theta}{2}N^{2} -\frac{2\theta}{t^{2}}T_{00}\Big]  \nonumber\\
    &+ a\Big[\frac{1}{3}\dot{U}_{A}(\dot{U}^{A}-2A^{A})-\frac{1}{3}\Sigma_{AB}\Sigma^{AB}-\Hct^{2} -\frac{1}{6t^{2}}(T_{00}+\tensor{T}{_A^A})  -\frac{5}{3}\alphah\Hct +\frac{2}{9}\alphah^{2}\Big], \nonumber\\
    \mathcal{F}_{2} &= b\Big[\frac{2}{3}\dot{U}_{A}\Big(\dot{U}_{A}-2A^{A}\Big)-(\frac{2}{3}+\theta)\Sigma_{AB}\Sigma^{AB} -\frac{\alphat^{2}}{2}-\frac{1}{3t^{2}}\Big(T_{00}+\tensor{T}{_A^A}\Big) -\frac{3}{2}\alphah\alphat   \nonumber\\
    &+6\theta(-\alphat+3\Hct)^{2} - 6\theta A^{A}A_{A} -\theta N^{AB}N_{AB} +\frac{\theta}{2}N^{2} -\frac{2\theta}{t^{2}}T_{00}\Big]  \nonumber\\
    &+ c\Big[\frac{1}{3}\dot{U}_{A}(\dot{U}^{A}-2A^{A})-\frac{1}{3}\Sigma_{AB}\Sigma^{AB}-\Hct^{2} -\frac{1}{6t^{2}}(T_{00}+\tensor{T}{_A^A})-\frac{5}{3}\alphah\Hct +\frac{2}{9}\alphah^{2}\Big], \nonumber \\
\label{eqn:Fcal_Components}
    \mathcal{F}_{3} &= d\Big[2(-\alphat+3\Hct)\dot{U}_{A} - \tensor{\Sigma}{^B_A}\dot{U}_{B}  +\lambda\big(-3\tensor{\Sigma}{_A^B}A_{B} - \epsilon_{ABC}N^{BD}\tensor{\Sigma}{_D^C} - \frac{1}{t^{2}}T_{0A} \big)\Big] \\
    &+ g\Big[(\frac{1}{2}+\gamma)\delta^{BC}\pi^{EF}_{AC}e_{B}(\tensor{\Sigma}{_{EF}}) -(-\alphat+3\Hct)(\dot{U}_{A}+A_{A}) + \tensor{\Sigma}{_A^B}(\frac{1}{2}\dot{U}_{B}-A_{B}) \nonumber \\
    &-3\gamma\tensor{\Sigma}{_A^B}A_{B} -\gamma\epsilon_{ABC}N^{BD}\tensor{\Sigma}{_D^C}-\gamma \frac{1}{t^{2}}T_{0A}\Big], \nonumber \\
    \mathcal{F}_{4} &= g\Big[2(-\alphat+3\Hct)\dot{U}_{A} - \tensor{\Sigma}{^B_A}\dot{U}_{B}  +\lambda\big(-3\tensor{\Sigma}{_A^B}A_{B} - \epsilon_{ABC}N^{BD}\tensor{\Sigma}{_D^C} - \frac{1}{t^{2}}T_{0A} \big)\Big] \nonumber \\
    &+ \Big[(\frac{1}{2}+\gamma)\delta^{BC}\pi^{EF}_{AC}e_{B}(\tensor{\Sigma}{_{EF}}) -(-\alphat+3\Hct)(\dot{U}_{A}+A_{A}) + \tensor{\Sigma}{_A^B}(\frac{1}{2}\dot{U}_{B}-A_{B})  \nonumber \\
    &-3\gamma\tensor{\Sigma}{_A^B}A_{B} -\gamma\epsilon_{ABC}N^{BD}\tensor{\Sigma}{_D^C}-\gamma \frac{1}{t^{2}}T_{0A}\Big], \nonumber \\
    \mathcal{F}_{5} &= \tensor{\mathcal{S}}{^C_{AB}^{DE}}\dot{U}_{C}\Sigma_{DE} -\mathcal{H}N_{AB}+2N_{C(A}\tensor{\Sigma}{_{B)}^C} ,\nonumber  \\
    \mathcal{F}_{6} &= -3\mathcal{H}\Sigma_{AB} +\epsilon_{CD(A}\tensor{N}{_{B)}^D}\dot{U}^{C}-2\tensor{\epsilon}{_{CD(A}}\tensor{N}{_{B)}^D}A^{C} +\pi^{DE}_{AB}\Big(\dot{U}_{D}\dot{U}_{E}+A_{D}\dot{U}_{E} \nonumber \\
    &-2\tensor{N}{_D^C}N_{EC}+NN_{DE}+e^{2\Phi}T_{DE}+U_{DE}\Big), \nonumber\\
    \mathcal{F}_{7} &= -(\mathcal{H}\delta_{A}^{B} + \Sigma_{A}^{B})e_{B}^{\Sigma}. \nonumber
\end{align}
From \eqref{eqn:B0_Einstein}, \eqref{eqn:BSigma_Einstein}, and \eqref{eqn:BSigma_Components}, it is clear that for the system \eqref{eqn:Einstein_MatrixForm} to be symmetric hyperbolic, the following conditions must be satisfied:
\begin{gather}
    \frac{a+2c}{3} = d\big(3(3-2\lambda)-6\beta\big) + g\big(3(-1-2\gamma)-6\kappa\big), \nonumber \\
    \frac{2b+c}{3} =d\big(-(3-2\lambda)+3\beta\big)+g\big(-(-1-2\gamma)+3\kappa\big), \nonumber \\
    \label{eqn:symmetrisation_conditions_einstein}
    4c\theta = \big(3(-1-2\gamma)-6\kappa\big) +g\big(3(3-2\lambda)-6\beta\big), \\
    4b\theta = \big(-(-1-2\gamma)+3\kappa\big) + g\big(-(3-2\lambda)+3\beta), \nonumber \\
    1 = d\lambda + g(\frac{1}{2}+\gamma),\quad  -1 = (\frac{1}{2}+\gamma) +g\lambda. \nonumber 
\end{gather}
We further observe, using \eqref{eqn:Hct_identities} and \eqref{eqn:lapse_Hct_identity}, that the ostensibly singular terms $\frac{1}{t}\tilde{\mathcal{C}}$, see \eqref{tildeCcal-cmpts} and\eqref{tildeCcal-def}, will, in fact, be regular in $t$ and given by
\begin{equation}
\begin{gathered}
\label{eqn:Analysis_Ccal_tilde}
    \frac{1}{t}\tilde{\mathcal{C}}_{12} = -c\alphah, \quad 
    \frac{1}{t}\tilde{\mathcal{C}}_{22} = -b\alphah, \quad 
    \frac{1}{t}\tilde{\mathcal{C}}_{33} = \bigl(-d(\lambda+2) -g\gamma\bigr)\alphah, \\ 
    \frac{1}{t}\tilde{\mathcal{C}}_{43} = \bigl(-\gamma - g(\lambda+2)\bigr)\alphah, 
\end{gathered}
\end{equation}
for the parameter choices $\theta = \frac{1}{6}$, $\beta = \lambda-2$, and $\kappa=\gamma$. Consequently, if we can solve \eqref{eqn:symmetrisation_conditions_einstein} with $\theta = \frac{1}{6}$, $\beta = \lambda-2$, and $\kappa=\gamma$, we will obtain a symmetric system and the singular term $\frac{\alpha}{t}\tilde{\mathcal{C}}$ will be absorbed into the source term $\mathcal{F}$. It is simple to verify that
\begin{align}
    a &= -28+25\sqrt{2}, \quad b= \frac{3}{4}(-7+6\sqrt{2}), \quad c= 12-\frac{21}{\sqrt{2}} , \quad d= \frac{1}{9}(5+\sqrt{2}),\quad  \nonumber\\
    \label{eqn:Symmetrisation_parameter_values}
    g &= \frac{1}{3}(-1-\sqrt{2}) ,\quad \gamma = -\frac{1}{2}+\sqrt{2} , \quad \theta = \frac{1}{6}, \quad \kappa = -\frac{1}{2}+\sqrt{2}, \quad \lambda = 3, \quad \beta = 1,
\end{align}
is such a solution.
Furthermore, the matrix $B^{0}$ is positive definite for these values as can be seen from its eigenvalues
\begin{equation*}
\begin{aligned}
    \lambda_{1} &\approx 8.46039, \quad \lambda_{2} \approx 1.6738, \quad \lambda_{3} = 1, \quad \lambda_{4} = 1, \\
    \lambda_{5} &= 1, \quad \lambda_{6} \approx 0.0388859, \quad \lambda_{7} \approx 0.00890961,
\end{aligned}
\end{equation*}
which are all positive. Thus, for these parameter choices, \eqref{eqn:Einstein_MatrixForm} provides us with a symmetric hyperbolic Fuchsian formulation of the Einstein equations. Additionally, we observe that the singular matrix $\tilde{\mathcal{B}}$ can be expressed as 
\begin{align*}
\tilde{\mathcal{B}} = \mathcal{B}\mathcal{P}
\end{align*}
where 
\begin{align}
\label{eqn:Bcal_Einstein}
    \mathcal{B} &= \text{diag}(1,1,\delta^{P}_{A},\delta^{P}_{A},\delta^{P}_{A}\delta^{I}_{B},\frac{2}{\alpha}\delta^{P}_{A}\delta^{I}_{B}, \delta^{P}_{A}), \\
\label{eqn:P_projector_def}
    \mathcal{P} &= \text{diag}(0,0,0,0,0,\delta_{A}^P\delta_{B}^I,0),
\end{align}
and the projection operator $\mathcal{P}$ satisfies
\begin{align}
\label{eqn:Pcal_perp_defn}
\mathcal{P}^{\perp} = \mathbbm{1}-\mathcal{P}, \;\; \mathcal{P}^{2} = \mathcal{P}, \;\; (\mathcal{P}^{\perp})^{2} = \mathcal{P}^{\perp}, \;\; \mathcal{P}\mathcal{P}^{\perp} = \mathcal{P}^{\perp}\mathcal{P} = 0, \;\; \mathcal{P}+\mathcal{P}^{\perp} = \mathbbm{1}.
\end{align}
Finally, absorbing the now regular terms \eqref{eqn:Analysis_Ccal_tilde} into the source term $\mathcal{F}$ changes its components as follows:
\begin{align}
    \mathcal{F}_{1} &= c\Big[\frac{2}{3}\dot{U}_{A}\Big(\dot{U}_{A}-2A^{A}\Big)-(\frac{2}{3}+\theta)\Sigma_{AB}\Sigma^{AB} -\frac{\alphat^{2}}{2}-\frac{1}{3t^{2}}\Big(T_{00}+\tensor{T}{_A^A}\Big) -\frac{3}{2}\alphah\alphat   \nonumber \\
    &+6\theta(-\alphat+3\Hct)^{2} - 6\theta A^{A}A_{A} -\theta N^{AB}N_{AB} +\frac{\theta}{2}N^{2} -\frac{2\theta}{t^{2}}T_{00}\Big]  \nonumber \\
    &+ a\Big[\frac{1}{3}\dot{U}_{A}(\dot{U}^{A}-2A^{A})-\frac{1}{3}\Sigma_{AB}\Sigma^{AB}-\Hct^{2} -\frac{1}{6t^{2}}(T_{00}+\tensor{T}{_A^A})  \nonumber \\
&-\frac{5}{3}\alphah\Hct +\frac{2}{9}\alphah^{2}\Big] -c\alphah\alphat\alpha, \nonumber \\
    \mathcal{F}_{2} &= b\Big[\frac{2}{3}\dot{U}_{A}\Big(\dot{U}_{A}-2A^{A}\Big)-(\frac{2}{3}+\theta)\Sigma_{AB}\Sigma^{AB} -\frac{\alphat^{2}}{2}-\frac{1}{3t^{2}}\Big(T_{00}+\tensor{T}{_A^A}\Big) -\frac{3}{2}\alphah\alphat   \nonumber \\
    &+6\theta(-\alphat+3\Hct)^{2} - 6\theta A^{A}A_{A} -\theta N^{AB}N_{AB} +\frac{\theta}{2}N^{2} -\frac{2\theta}{t^{2}}T_{00}\Big]  \nonumber\\
    &+ c\Big[\frac{1}{3}\dot{U}_{A}(\dot{U}^{A}-2A^{A})-\frac{1}{3}\Sigma_{AB}\Sigma^{AB}-\Hct^{2} -\frac{1}{6t^{2}}(T_{00}+\tensor{T}{_A^A}) \nonumber \\
&-\frac{5}{3}\alphah\Hct +\frac{2}{9}\alphah^{2}\Big] -b\alphah\alphat \alpha, \nonumber \\
\label{eqn:Fcal_components_Final}
    \mathcal{F}_{3} &= d\Big[2(-\alphat+3\Hct)\dot{U}_{A} - \tensor{\Sigma}{^B_A}\dot{U}_{B}  +\lambda\big(-3\tensor{\Sigma}{_A^B}A_{B} - \epsilon_{ABC}N^{BD}\tensor{\Sigma}{_D^C} - \frac{1}{t^{2}}T_{0A} \big)\Big] \\
    &+ g\Big[(\frac{1}{2}+\gamma)\delta^{BC}\pi^{EF}_{AC}e_{B}(\tensor{\Sigma}{_{EF}}) -(-\alphat+3\Hct)(\dot{U}_{A}+A_{A}) + \tensor{\Sigma}{_A^B}(\frac{1}{2}\dot{U}_{B}-A_{B})  \nonumber \\
    &-3\gamma\tensor{\Sigma}{_A^B}A_{B} -\gamma\epsilon_{ABC}N^{BD}\tensor{\Sigma}{_D^C}-\gamma \frac{1}{t^{2}}T_{0A}\Big] +\big(-d(\lambda+2)-g\gamma\big)\alpha\alphah\dot{U}_{A}, \nonumber \\
    \mathcal{F}_{4} &= g\Big[2(-\alphat+3\Hct)\dot{U}_{A} - \tensor{\Sigma}{^B_A}\dot{U}_{B}  +\lambda\big(-3\tensor{\Sigma}{_A^B}A_{B} - \epsilon_{ABC}N^{BD}\tensor{\Sigma}{_D^C} - \frac{1}{t^{2}}T_{0A} \big)\Big] \nonumber \\
    &+ \Big[(\frac{1}{2}+\gamma)\delta^{BC}\pi^{EF}_{AC}e_{B}(\tensor{\Sigma}{_{EF}}) -(-\alphat+3\Hct)(\dot{U}_{A}+A_{A}) + \tensor{\Sigma}{_A^B}(\frac{1}{2}\dot{U}_{B}-A_{B})  \nonumber \\
    &-3\gamma\tensor{\Sigma}{_A^B}A_{B} -\gamma\epsilon_{ABC}N^{BD}\tensor{\Sigma}{_D^C}-\gamma \frac{1}{t^{2}}T_{0A}\Big] +\big(-\gamma -g(\lambda+2)\big)\alpha\alphah\dot{U}_{A},\nonumber \\
    \mathcal{F}_{5} &= \tensor{\mathcal{S}}{^C_{AB}^{DE}}\dot{U}_{C}\Sigma_{DE} -\mathcal{H}N_{AB}+2N_{C(A}\tensor{\Sigma}{_{B)}^C} , \nonumber \\
    \mathcal{F}_{6} &= -3\mathcal{H}\Sigma_{AB} +\epsilon_{CD(A}\tensor{N}{_{B)}^D}\dot{U}^{C}-2\tensor{\epsilon}{_{CD(A}}\tensor{N}{_{B)}^D}A^{C} +\pi^{DE}_{AB}\Big(\dot{U}_{D}\dot{U}_{E}+A_{D}\dot{U}_{E} \nonumber \\
    &-2\tensor{N}{_D^C}N_{EC}+NN_{DE}+\frac{1}{t^{2}}T_{DE}+U_{DE}\Big), \nonumber \\
    \mathcal{F}_{7} &= -(\mathcal{H}\delta_{A}^{B} + \Sigma_{A}^{B})e_{B}^{\Sigma}. \nonumber
\end{align}
With these changes, we then drop the term $\frac{\alpha}{t}\tilde{\mathcal{C}}\tilde{V}$ from \eqref{eqn:Einstein_MatrixForm}.

\subsection{Subtracting the Homogeneous Gravitational Background} 
\label{sec:Subtract_BG_Einstein_Sec3}
We recall that our approximate background metric \eqref{eqn:Einstein_Euler_BG_Solution1} is given by the conformal de Sitter metric 
\begin{align*}
    g_{\text{bg}} = -\frac{3}{\Lambda}dt^{2} +\delta_{\Omega\Gamma}dx^{\Omega}dx^{\Gamma}.
\end{align*}
In particular, for this conformal metric and the orthonormal spatial frame
$e_P^{\text{bg}}=\delta^{\Sigma}_{P}\del_{\Sigma}$,
the associated lapse and connection coefficient variables all vanish, that is,
\begin{align*}
    \Hct_{\text{bg}} = \alphat_{\text{bg}}  = \dot{U}^{\text{bg}}_{P} = A^{\text{bg}}_{P} = N^{\text{bg}}_{PI} = \Sigma^{\text{bg}}_{PI} = 0.
\end{align*}
Next, for perturbed solutions, we define 
\begin{align}
\label{eqn:f_modified_frame_defn}
    f^{\Sigma}_{P} = e^{\Sigma}_{P}-\delta^{\Sigma}_{P},
\end{align}
which measures the spatial frame deviation from the background solution.  It is straightforward to verify that $f^{\Sigma}_{P}$ satisfies the evolution equation
\begin{align*}
    e_{0}(f_{A}^{\Sigma}) &= -(\mathcal{H}\delta_{A}^{B} + \Sigma_{A}^{B})(f_{B}^{\Sigma}+\delta^{\Sigma}_{B}).
\end{align*}
Setting
\begin{align*}
    V &= (\Hct, \alphat, \dot{U}_{P}, A_{P}, N_{PI}, \Sigma_{PI}, f_{P}^{\Sigma})^{\text{T}}, \\
    \mathcal{F}_{7} &= -(\mathcal{H}\delta_{A}^{B} + \Sigma_{A}^{B})(f_{B}^{\Sigma}+\delta^{\Sigma}_{B}),
\end{align*}
we can express \eqref{eqn:Einstein_MatrixForm} as
\begin{align}
\label{eqn:SH_Einstein_Sec3}
    B^0 \del_{t}V &= \alpha B^{\Omega}\del_{\Omega}V + \frac{1}{t}\alpha\mathcal{B}\mathcal{P}V + \alpha\mathcal{F}.
\end{align}

\begin{rem}
\label{rem:errorterms} 
As discussed in the introduction, the de Sitter approximate background, defined by $V = 0$, does not satisfy the gravitational equations \eqref{eqn:SH_Einstein_Sec3} due to the presence of source terms involving the stress-energy tensor. However, as we show, these stress-energy contributions vanish asymptotically as $t \searrow 0$ for sufficiently small nonlinear perturbations, which is consistent with the expectation that cosmological solutions with a positive cosmological constant isotropise at late times.
\end{rem}

\section{A Symmetric Hyperbolic Formulation of the Euler Equations}
\label{sec:Sym_Hyp_Euler}

\subsection{Euler Equations}
\label{sec:Conformal_Euler_deriv}
We now turn to deriving a symmetric hyperbolic formulation of the relativistic Euler equations \eqref{eqn:Euler_Physical}. For ease of presentation, we will drop the Fraktur superscripts that label the two fluids for this derivation. Ultimately, the final Einstein-Euler system will contain two copies of the fluid equations derived here. \newline \par

To begin, we recall that the conservation of fluid stress-energy is given by
\begin{align*}
    \tilde{\nabla}_{a}\tilde{T}^{ab} = 0
\end{align*}
where 
\begin{align*}
    \tilde{T}^{ab} = \tilde{g}^{ac}\tilde{g}^{bd}\tilde{T}_{cd} = e^{-2\Phi}g^{ac}g^{bd}T_{cd} = e^{-2\Phi}T^{ab}.
\end{align*}
In terms of the covariant derivative $\nabla$ associated with the conformal metric $g$, this becomes
\begin{align*}
\nabla_{a}\tilde{T}^{ab} = -6\tilde{T}^{ab}\nabla_{a}\Phi + \eta_{ac}\eta^{bd}\tilde{T}^{ac}\nabla_{d}\Phi,
\end{align*}
where the conformal factor $\Phi$ is determined by \eqref{Phi-def}. Expressing the matter equations in terms of the conformal stress-energy tensor $T^{ab}$ (cf. \eqref{eqn:Conformal_Tmunu_def}) then gives
\begin{align}
\label{eqn:conformal_euler_frame1}
e_{a}(T^{ab}) + \tensor{\omega}{_a^a_c}T^{cb} + \tensor{\omega}{_a^b_c}T^{ac} +4T^{ab}r_{a} - \eta_{ac}T^{ac}r^{b} = 0
\end{align}
where $r_a$ is defined above by \eqref{ra-def}.
Lowering the indices of $T^{ab}$ and splitting up time and spatial derivatives in \eqref{eqn:conformal_euler_frame1} leads to
\begin{align}
\label{eqn:conformal_euler_frame2}
&\eta^{0c}\eta^{bd}e_{0}(T_{cd}) + \eta^{Ac}\eta^{bd}e_{A}(T_{cd}) +\eta^{ac}\eta^{bd}\tensor{\omega}{_a_c_e}\tensor{T}{^e_d} \nonumber \\
&+ \eta^{bd}\tensor{\omega}{_a_d_e}T^{ae} +4\eta^{ac}\eta^{bd}T_{cd}r_{a} - \eta^{bd}\eta_{ac}T^{ac}r_{d} = 0.
\end{align}

Setting $b=0$ in \eqref{eqn:conformal_euler_frame2} and replacing the connection coefficients using \eqref{eqn:connection_identities} yields
\begin{align*}
(e_{0}+3\mathcal{H})(T_{00}) - (e_{F} +2\dot{U}_{F} - 2A_{F})(\tensor{T}{^F_0}) +\mathcal{H}\tensor{T}{^F_F}+\Sigma_{IF}\tensor{T}{^{IF}} + C_{0} = 0,
\end{align*}
where $C_{0} = (3T_{00}+\tensor{T}{_B^B})r_{0} - 4\tensor{T}{^B_0}r_{B}$. Similarly, for $b=J$, we obtain
\begin{align*}
&-\eta^{JL}e_{0}(T_{0L})  + \eta^{JL}e_{I}(\tensor{T}{^I_L}) +4\mathcal{H}T^{J0} - 3A_{K}T^{KJ} +\dot{U}_{F}T^{FJ} + T^{00}\dot{U}^{J} + A^{J}\tensor{T}{^F_F} +T^{I0}\tensor{\Sigma}{_I^J} \nonumber \\
&+\eta^{JL}T^{IF}\tensor{N}{^E_I}\epsilon_{LFE} +\bigl(-4\eta^{JL}T_{0L}r_{0} + 4\eta^{JL}\tensor{T}{^I_L}r_{I} + \eta^{JL}T_{00}r_{L}-\eta^{JL}\tensor{T}{^I_I}r_{L}\big) = 0.
\end{align*}
Multiplying the above by $-\delta_{JP}$ then gives
\begin{align*}
&(e_{0}+4\mathcal{H})(T_{0P})  - \big(e_{I} +\dot{U}_{I} -3A_{I}\big)(\tensor{T}{^I_P})  - T_{00}\dot{U}_{P} - A_{P}\tensor{T}{^F_F} +T_{I0}\tensor{\Sigma}{^I_P} \nonumber \\
&+T_{IF}\tensor{N}{_E^I}\tensor{\epsilon}{_P^{EF}} +C_{P} = 0,
\end{align*}
where $C_{P} = 4T_{0P}r_{0} -(T_{00}-\tensor{T}{^I_I})r_{P} - 4\tensor{T}{^I_P}r_{I}$. Collecting the fluid equations together, we have
\begin{align}
\label{eqn:T_00_evo}
(e_{0}+3\mathcal{H})(T_{00}) - (e_{F} +2\dot{U}_{F} - 2A_{F})(\tensor{T}{^F_0}) +\mathcal{H}\tensor{T}{^F_F}+\Sigma_{IF}\tensor{T}{^{IF}} + C_{0} &= 0, \\
\label{eqn:T_0P_evo}
(e_{0}+4\mathcal{H})(T_{0P})  - \big(e_{I} +\dot{U}_{I} -3A_{I}\big)(\tensor{T}{^I_P})  - T_{00}\dot{U}_{P} - A_{P}\tensor{T}{^F_F} &\nonumber\\+T_{I0}\tensor{\Sigma}{^I_P} 
+T_{IF}\tensor{N}{_E^I}\tensor{\epsilon}{_P^{EF}} +C_{P} &= 0.
\end{align}

\subsection{3+1 Decomposition of the Euler Equations}
\label{sec:Euler_3+1_decomp}
The 3+1 decomposition of the fluid stress-energy tensor is given by
\begin{equation}
\begin{aligned}
\label{eqn:Tab_3+1_lorentz}
T_{00} &= (\Gamma^{2}(K+1)-K)\rho, \\
T_{0A} &= \Gamma^{2}(K+1)\rho\nu_{A} = (T_{00}+K\rho)\nu_{A} , \\
T_{AB} &= \Gamma^{2}(K+1)\rho\nu_{A}\nu_{B} + K\rho\eta_{AB} = (T_{00}+K\rho)\nu_{A}\nu_{B} + K\rho\eta_{AB}, 
\end{aligned}
\end{equation}
where the spatial fluid velocity $\nu_{A}$ and the Lorentz factor $\Gamma$ are defined by 
\begin{align}
\label{eqn:velocity_split}
v_{a} &= \Gamma(n_{a}+\nu_{a}), \\
\label{eqn:nu_magnitude}
|\nu|^{2} &= \delta^{AB}\nu_{A}\nu_{B}, \\
\label{eqn:Gamma_def}
\Gamma &= \frac{1}{(1-|\nu|^{2})^{\frac{1}{2}}},
\end{align}
where $n=e_{0}$ is the normal vector of our foliation and $n^{a}\nu_{a} = \nu_0=0$. Substituting the above definitions into \eqref{eqn:T_00_evo} and \eqref{eqn:T_0P_evo} yields
\begin{align}
\label{eqn:T_00_evo_expanded}
&(\Gamma^{2}(K+1)-K)e_{0}(\rho) + (K+1)\rho e_{0}(\Gamma^{2}) -e_{F}(\Gamma^{2}(K+1)\rho\nu^{F}) \nonumber \\ 
&- (2\dot{U}_{F} - 2A_{F})\tensor{T}{^F_0} +3\mathcal{H}T_{00} +\mathcal{H}\tensor{T}{^F_F}+\Sigma_{IF}\tensor{T}{^{IF}} + C_{0} = 0
\end{align}
and
\begin{align}
\label{eqn:T_0P_evo_expanded}
&e_{0}\big((T_{00}+K\rho)\nu_{P}\big) - e_{I}\big((T_{00}+K\rho)\nu^{I}\nu_{P}+K\rho\delta^{I}_{P}\big) - \big(\dot{U}_{I} -3A_{I}\big)\tensor{T}{^I_P}\nonumber \\ 
&+4\mathcal{H}T_{0P} - T_{00}\dot{U}_{P} - A_{P}\tensor{T}{^F_F} +T_{I0}\tensor{\Sigma}{^I_P} +T_{IF}\tensor{N}{_E^I}\tensor{\epsilon}{_P^{EF}} +C_{P} = 0,
\end{align}
respectively. Replacing $T_{00}$ in \eqref{eqn:T_0P_evo_expanded} using \eqref{eqn:T_00_evo} gives
\begin{align}
\label{eqn:T_0P_evo_expanded2}
& K\nu_{P}e_{0}(\rho) + (K\rho+T_{00})e_{0}(\nu_{P}) - Ke_{I}(\rho\delta^{I}_{P}) - (T_{00}+K\rho)\nu^{F}e_{F}(\nu_{P})  \nonumber \\
&+ 4\mathcal{H}T_{0P} +(3A_{I}-\dot{U}_{I})\tensor{T}{^I_P} -T_{00}\dot{U}_{P} -A_{P}\tensor{T}{^F_F} +T_{I0}\tensor{\Sigma}{^I_P} + T_{IF}\tensor{N}{_E^I}\tensor{\epsilon}{_P^{EF}} +C_{P}\nonumber \\
& -3\nu_{P}\mathcal{H}T_{00} +2\nu_{P}(\dot{U}_{F}-A_{F})\tensor{T}{^F_0} -\nu_{P}\mathcal{H}\tensor{T}{^F_F} -\nu_{P}\Sigma_{IF}T^{IF} -\nu_{P}C_{0} = 0.
\end{align}
Similarly, using \eqref{eqn:Gamma_def}, we can express \eqref{eqn:T_00_evo_expanded} as
\begin{align}
\label{eqn:T_00_evo_expanded2}
&\Big(\Gamma^{2}(K+1)-K\Big)e_{0}(\rho) + \frac{2(K+1)\rho }{(1-|\nu|^{2})^{2}}\nu^{C}e_{0}(\nu_{C}) -\Gamma^{2}(K+1)\nu^{F}e_{F}(\rho) \nonumber \\
&- (K+1)\rho\Big(\Gamma^{2}\eta^{FC}+ \frac{2}{(1-|\nu|^{2})^{2}}\nu^{C}\nu^{F}\Big)e_{F}(\nu_{C}) \nonumber \\
& -2(\dot{U}_{F}- A_{F})\tensor{T}{^F_0} + 3\mathcal{H}T_{00} +\mathcal{H}\tensor{T}{^F_F}+\Sigma_{IF}\tensor{T}{^{IF}} + C_{0} = 0.
\end{align}
Motivated by the results of \cite{Oliynyk:CMP_2016}, we introduce the modified density $\hat{\zeta}$ by 
\begin{align}
\label{eqn:zeta_hat_defn}
    \hat{\zeta} = \frac{1}{K+1}\ln\left(\frac{\rho}{\rho_{0}}\right) +3\Phi.
\end{align}
 Expanding the stress-energy terms in \eqref{eqn:T_0P_evo_expanded2}-\eqref{eqn:T_00_evo_expanded2} using \eqref{eqn:Tab_3+1_lorentz}, replacing $\Gamma$ using \eqref{eqn:Gamma_def}, and expressing the density in terms of $\hat{\zeta}$, we then obtain the following form of the Euler equations
 \begin{align}
\label{eqn:Vector_Euler2}
& K\nu_{P}e_{0}(\hat{\zeta}) + \frac{1}{1-|\nu|^{2}}e_{0}(\nu_{P}) - Ke_{P}(\hat{\zeta}) - \frac{1}{1-|\nu|^{2}}\nu^{F}e_{F}(\nu_{P})  +(1-3K)\nu_{P}r_{0} \nonumber \\
&+  \Bigg(\mathcal{H}\nu_{P} +\frac{1}{1-|\nu|^{2}}(A_{I}+\dot{U}_{I})\nu^{I}\nu_{P} -\frac{1}{1-|\nu|^{2}}\dot{U}_{P} -\frac{|\nu|^{2}}{1-|\nu|^{2}}A_{P} +\frac{1}{1-|\nu|^{2}}\tensor{\Sigma}{^I_P}\nu_{I} \nonumber \\
&+ \frac{1}{1-|\nu|^{2}}\tensor{N}{_E^I}\tensor{\epsilon}{_P^{EF}}\nu_{I}\nu_{F} -\frac{1}{1-|\nu|^{2}}\Sigma_{IF}\nu^{I}\nu^{F}\nu_{P}\Bigg) = 0,
\end{align}
and 
\begin{align}
\label{eqn:Scalar_Euler2}
&\frac{1+K|\nu|^{2}}{1-|\nu|^{2}}e_{0}(\hat{\zeta}) + \frac{2}{(1-|\nu|^{2})^{2}} \nu^{C}e_{0}(\nu_{C}) -\frac{K+1}{1-|\nu|^{2}}\nu^{F}e_{F}(\hat{\zeta}) \nonumber \\
&- \Big(\frac{1}{1-|\nu|^{2}}\eta^{FC}+ \frac{2}{(1-|\nu|^{2})^{2}}\nu^{C}\nu^{F}\Big)e_{F}(\nu_{C})  + (1-3K )\frac{|\nu|^{2}}{1-|\nu|^{2}}r_{0} \nonumber \\
&+ \bigg[-2(\dot{U}_{F}- A_{F})\frac{1}{1-|\nu|^{2}}\nu^{F} + \frac{\mathcal{H}(3+|\nu|^{2})}{1-|\nu|^{2}}+\frac{1}{1-|\nu|^{2}}\Sigma_{IF}\nu^{I}\nu^{F}\bigg] = 0.
\end{align}

\subsection{Radial Decomposition of the Spatial Fluid Velocity}
\label{sec:Fluid_2+1_Decomp}
Motivated by \cite{Oliynyk:2021}, we further decompose the spatial fluid velocity $\nu_{C}$ as follows
\begin{align}
\label{eqn:nuh_defn}
\nuh_{C} = \frac{\nu_{C}}{|\nu|}, \quad |\nu|^{2} = \delta_{AB}\nu^{A}\nu^{B}.
\end{align}
This decomposition results in the constraint
\begin{align}
\label{eqn:C5_nu_normalisation_constraint}
   \mathcal{C}_{5} := \nuh^{A}\nuh_{A} - 1=0,
\end{align}
which we will assume holds for the derivations carried out in this section and the following one. 
\begin{rem}
    The radial decomposition \eqref{eqn:nuh_defn} generalises the decomposition of the fluid velocity used by the third author in \cite{Oliynyk:2021}. The essential reason for this split is that the norm and the unit vector decay at different rates and must be scaled differently to obtain an appropriate Fuchsian system.
\end{rem}

Letting the frame vector $e_{i}$ act on $\nu_{C}$ gives
\begin{align*}
e_{i}(\nu_{C}) = e_{i}(|\nu|\nuh_{C}) = e_{i}(|\nu|)\nuh_{C} + |\nu|e_{i}(\nuh_{C}).
\end{align*}
Introducing the operator
\begin{align}
\label{eqn:P_def}
P^{A}_{B} = \delta^{A}_{B} - \nuh^{A}\nuh_{B}
\end{align}
that projects into the subspace orthogonal to $\nuh_A$,
we find that 
\begin{align*}
e_{i}(\nuh_{C}) = \frac{1}{|\nu|}P^{B}_{C}e_{i}(\nu_{B}).
\end{align*}
From this, we obtain
\begin{align*}
\nuh^{C}e_{i}(\nuh_{C}) = \frac{1}{|\nu|}\nuh^{C}P^{B}_{C}e_{i}(\nu_{B}) = 0,
\end{align*}
since it is clear from \eqref{eqn:P_def} that
\begin{align}
\label{eqn:P_contract_nu}
\nuh^{B}P^{A}_{B} = P^{A}_{B}\nuh_{A} = 0.
\end{align}
In terms of $|\nu|$ and $\nuh_{A}$, the Euler equations  \eqref{eqn:Vector_Euler2}-\eqref{eqn:Scalar_Euler2} can be expressed as 
\begin{align}
\label{eqn:Vector_Euler_2+1}
    & K\nu_{P}e_{0}(\hat{\zeta}) + \frac{1}{1-|\nu|^{2}}\nuh_{P}e_{0}(|\nu|) + \frac{|\nu|}{1-|\nu|^{2}}e_{0}(\nuh_{P}) \nonumber \\
    &- Ke_{P}(\hat{\zeta}) - \frac{|\nu|^{2}}{1-|\nu|^{2}}\nuh^{F}e_{F}(\nuh_{P}) - \frac{|\nu|}{1-|\nu|^{2}}\nuh^{F}\nuh_{P}e_{F}(|\nu|) +(1-3K)\nu_{P}r_{0} \nonumber \\
&+  \Bigg[\mathcal{H}\nu_{P} +\frac{1}{1-|\nu|^{2}}(A_{I}+\dot{U}_{I})\nu^{I}\nu_{P} -\frac{1}{1-|\nu|^{2}}\dot{U}_{P} -\frac{|\nu|^{2}}{1-|\nu|^{2}}A_{P} +\frac{1}{1-|\nu|^{2}}\tensor{\Sigma}{^I_P}\nu_{I} \nonumber \\
&+ \frac{1}{1-|\nu|^{2}}\tensor{N}{_E^I}\tensor{\epsilon}{_P^{EF}}\nu_{I}\nu_{F} -\frac{1}{1-|\nu|^{2}}\Sigma_{IF}\nu^{I}\nu^{F}\nu_{P}\Bigg] = 0,
\end{align}
and
\begin{align}
\label{eqn:Scalar_Euler_2+1}
    &\frac{1+K|\nu|^{2}}{1-|\nu|^{2}}e_{0}(\hat{\zeta}) + \frac{2|\nu|}{(1-|\nu|^{2})^{2}} e_{0}(|\nu|) -\frac{K+1}{1-|\nu|^{2}}\nu^{F}e_{F}(\hat{\zeta}) \nonumber \\
    &- \Big(\frac{1}{1-|\nu|^{2}}+ \frac{2|\nu|^{2}}{(1-|\nu|^{2})^{2}}\Big)\nuh^{F}e_{F}(|\nu|) - \frac{1}{1-|\nu|^{2}}P^{FC}e_{F}(\nuh_{C}) + \Big(1-3K \Big)\frac{|\nu|^{2}}{1-|\nu|^{2}}r_{0}\nonumber \\
    &+ \Bigg[-2(\dot{U}_{F}- A_{F})\frac{1}{1-|\nu|^{2}}\nu^{F} + \frac{\mathcal{H}(3+|\nu|^{2})}{1-|\nu|^{2}}+\frac{1}{1-|\nu|^{2}}\Sigma_{IF}\nu^{I}\nu^{F}\Bigg] = 0.
\end{align}
Applying $P^{P}_{Q}$ and $\nuh^{P}$ to the vector equation \eqref{eqn:Vector_Euler_2+1}  yields\footnote{In deriving these equations we have used \eqref{eqn:P_contract_nu} and the fact $P^{P}_{Q}e_{i}(\nuh_{P}) = e_{i}(\nuh_{Q})$.} 
\begin{align}
\label{eqn:Euler_Vector_Pcontract}
    & \frac{|\nu|}{1-|\nu|^{2}}e_{0}(\nuh_{Q}) - KP^{P}_{Q}e_{P}(\hat{\zeta}) - \frac{|\nu|^{2}}{1-|\nu|^{2}}\nuh^{F}e_{F}(\nuh_{Q}) -\frac{1}{1-|\nu|^{2}}P^{P}_{Q}\dot{U}_{P} \nonumber \\
    &-\frac{|\nu|^{2}}{1-|\nu|^{2}}P^{P}_{Q}A_{P} +\frac{|\nu|}{1-|\nu|^{2}}P^{P}_{Q}\tensor{\Sigma}{^I_P}\nuh_{I} + \frac{|\nu|^{2}}{1-|\nu|^{2}}P^{P}_{Q}\tensor{N}{_E^I}\tensor{\epsilon}{_P^{EF}}\nuh_{I}\nuh_{F}  = 0.
\end{align}
and
\begin{align}
\label{eqn:Vector_Euler_nucontract}
    & K|\nu|e_{0}(\hat{\zeta}) + \frac{1}{1-|\nu|^{2}}e_{0}(|\nu|) - K\nuh^{P}e_{P}(\hat{\zeta}) - \frac{|\nu|}{1-|\nu|^{2}}\nuh^{F}e_{F}(|\nu|) +(1-3K)|\nu|r_{0} \nonumber \\
&+  \Bigg[\mathcal{H}|\nu| +\frac{|\nu|^{2}}{1-|\nu|^{2}}(A_{I}+\dot{U}_{I})\nuh^{I} -\frac{1}{1-|\nu|^{2}}\nuh^{P}\dot{U}_{P} -\frac{|\nu|^{2}}{1-|\nu|^{2}}\nuh^{P}A_{P} +\frac{|\nu|}{1-|\nu|^{2}}\tensor{\Sigma}{^I_P}\nuh_{I}\nuh^{P} \nonumber \\
&+ \frac{|\nu|^{2}}{1-|\nu|^{2}}\tensor{N}{_E^I}\tensor{\epsilon}{_P^{EF}}\nuh_{I}\nuh_{F}\nuh^{P} -\frac{|\nu|^{3}}{1-|\nu|^{2}}\Sigma_{IF}\nuh^{I}\nuh^{F}\Bigg] = 0,
\end{align}
respectively.  \newline \par

Next, we introduce the variables $u$ and $\zeta$ by
\begin{align}
\label{eqn:u_zeta_defns}
    u = \frac{|\nu|}{\sqrt{1-|\nu|^{2}}} \quad \text{and} \quad
    \zeta = \hat{\zeta} + \ln(\sqrt{1+u^{2}}),
\end{align}
and note that
\begin{align*}
    |\nu|  = \frac{u}{\sqrt{1+u^{2}}}, \quad e_{i}(\hat{\zeta}) = e_{i}(\zeta) -\frac{u}{1+u^{2}}e_{i}(u).
\end{align*}
\begin{rem}
    The variables \eqref{eqn:u_zeta_defns} generalise the variable change in equation (2.12) from \cite{Oliynyk:2021}. The purpose of this change is that it will ultimately block diagonalise the coefficient matrix in front the time derivatives for the Euler equations.
\end{rem}
When expressed in terms of $u$ and $\zeta$, the Euler equations \eqref{eqn:Scalar_Euler_2+1}, \eqref{eqn:Euler_Vector_Pcontract}, and \eqref{eqn:Vector_Euler_nucontract} become
\begin{align}
\label{eqn:Euler2+1_1}
     &(1+(1+K)u^{2})e_{0}(\zeta) + \frac{u+(1-K)u^{3}}{1+u^{2}} e_{0}(u) -\big((1+K)u\sqrt{1+u^{2}}\big)\nuh^{F}e_{F}(\zeta) \nonumber \\
    &+ \frac{-1+(K-1)u^{2}}{\sqrt{1+u^{2}}}\nuh^{F}e_{F}(u) - (1+u^{2})P^{FC}e_{F}(\nuh_{C}) + \frac{(3K-1)}{\alpha t}u^{2} + \texttt{F} = 0,
\end{align}
\begin{align}
\label{eqn:Euler2+1_2}
   & u\sqrt{1+u^{2}}e_{0}(\nuh_{Q}) - KP^{P}_{Q}e_{P}(\zeta) +\frac{Ku}{1+u^{2}}P^{P}_{Q}e_{P}(u) - u^{2}\nuh^{F}e_{F}(\nuh_{Q}) +\texttt{G} = 0, 
\end{align}
and 
\begin{align}
\label{eqn:Euler2+1_3}
    & \frac{Ku}{\sqrt{1+u^{2}}}e_{0}(\zeta) + \frac{1+(1-K)u^{2}}{(1+u^{2})^{\frac{3}{2}}}e_{0}(u) - K\nuh^{P}e_{P}(\zeta) + \frac{u(K-1)}{1+u^{2}}\nuh^{F}e_{F}(u) \nonumber \\
    &+\frac{(3K-1)}{\alpha t}\frac{u}{\sqrt{1+u^{2}}} +\texttt{R} = 0,
\end{align}
where 
\begin{align*}
\texttt{F} &:= \Bigg[-2(\dot{U}_{F}- A_{F})u\sqrt{1+u^{2}}\nuh^{F} + \mathcal{H}(3+4u^{2})+u^{2}\Sigma_{IF}\nuh^{I}\nuh^{F}\Bigg], \\
\texttt{G} &:= -(1+u^{2})P^{P}_{Q}\dot{U}_{P} -u^{2}P^{P}_{Q}A_{P} +u\sqrt{1+u^{2}}P^{P}_{Q}\tensor{\Sigma}{^I_P}\nuh_{I} + u^{2}P^{P}_{Q}\tensor{N}{_E^I}\tensor{\epsilon}{_P^{EF}}\nuh_{I}\nuh_{F}, \\
    \texttt{R} &:= \Bigg[\mathcal{H}\frac{u}{\sqrt{1+u^{2}}}  +u^{2}(A_{I}+\dot{U}_{I})\nuh^{I} -(1+u^{2})\nuh^{P}\dot{U}_{P} -u^{2}\nuh^{P}A_{P} +u\sqrt{1+u^{2}}\tensor{\Sigma}{^I_P}\nuh_{I}\nuh^{P} \nonumber \\
&\quad + u^{2}\tensor{N}{_E^I}\tensor{\epsilon}{_P^{EF}}\nuh_{I}\nuh_{F}\nuh^{P} -\frac{u^{3}}{\sqrt{1+u^{2}}}\Sigma_{IF}\nuh^{I}\nuh^{F}\Bigg].
\end{align*}

\subsection{Symmetric Hyperbolic Form of the Euler Equations}
\label{sec:SH_Conformal_Euler_Derivation}
It remains to express the relativistic Euler equations as a symmetric hyperbolic system.  First, by multiplying \eqref{eqn:Euler2+1_3} by $u\sqrt{1+u^{2}}$ and subtracting the resulting expression from \eqref{eqn:Euler2+1_1}, we obtain an evolution equation for $\zeta$ given by 
\begin{align*}
    &(1+u^{2})e_{0}(\zeta)  -\big(u\sqrt{1+u^{2}}\big)\nuh^{F}e_{F}(\zeta) - \frac{1}{\sqrt{1+u^{2}}}\nuh^{F}e_{F}(u) \nonumber \\
    &- (1+u^{2})P^{FC}e_{F}(\nuh_{C}) + \texttt{F} -u\sqrt{1+u^{2}}\texttt{R} = 0.
\end{align*}
Dividing the above equation by $(1+u^{2})$, we then find
\begin{align}
\label{eqn:Euler2+1_a}
    &e_{0}(\zeta) - \frac{u}{\sqrt{1+u^{2}}}\nuh^{F}e_{F}(\zeta) - \frac{1}{(1+u^{2})^{\frac{3}{2}}}\nuh^{F}e_{F}(u) \nonumber \\
    &-P^{FC}e_{F}(\nuh_{C}) + \frac{1}{1+u^{2}}\texttt{F} - \frac{u}{\sqrt{1+u^{2}}}\texttt{R} = 0.
\end{align}
Similarly, multiplying \eqref{eqn:Euler2+1_1} by $\frac{Ku}{\sqrt{1+u^{2}}(1+(1+K)u^{2})}$ and subtracting the result from \eqref{eqn:Euler2+1_3}, we get  
\begin{align}
\label{eqn:Euler2+1_1_b}
    & \frac{1-(K-1)u^{2}}{\sqrt{1+u^{2}}(1+(1+K)u^{2})}e_{0}(u) - \frac{K}{1+(1+K)u^{2}}\nuh^{P}e_{P}(\zeta) \nonumber \\
    &+ u\Big(-\frac{1}{1+u^{2}}+\frac{2K}{1+(1+K)u^{2}}\Big)\nuh^{F}e_{F}(u) +\frac{Ku\sqrt{1+u^{2}}}{1+(1+K)u^{2}}P^{FQ}e_{F}(\nuh_{Q})\nonumber \\
    &+\frac{(3K-1)}{\alpha t}\frac{u\sqrt{1+u^{2}}}{1+(1+K)u^{2}} +\texttt{R}-\frac{Ku}{\sqrt{1+u^{2}}(1+(1+K)u^{2})}\texttt{F} = 0.
\end{align}
Multiplying \eqref{eqn:Euler2+1_1_b} by $\frac{\sqrt{1+u^{2}}(1+(1+K)u^{2})}{1-(K-1)u^{2}}$ then yields 
\begin{align}
\label{eqn:Euler2+1_b}
    e_{0}(u) &+\frac{K\sqrt{1+u^{2}}}{-1+(K-1)u^{2}}\nuh^{F}e_{F}(\zeta) + \frac{u(1-2K-(K-1)u^{2})}{\sqrt{1+u^{2}}(-1+(K-1)u^{2})}\nuh^{F}e_{F}(u) \nonumber \\
    &+\frac{Ku(1+u^{2})}{1-(-1+K)u^{2}}P^{FQ}e_{F}(\nuh_{Q}) + \frac{(3K-1)}{\alpha t}\frac{u(1+u^{2})}{1-(K-1)u^{2}} \nonumber \\
    &+\frac{\sqrt{1+u^{2}}(1+(1+K)u^{2})}{1-(K-1)u^{2}}\texttt{R}-\frac{Ku}{1-(K-1)u^{2}}\texttt{F} = 0.
\end{align}
Finally, dividing \eqref{eqn:Euler2+1_2} by $u\sqrt{1+u^{2}}$ gives
\begin{align}
\label{eqn:Euler2+1_c}
    e_{0}(\nuh_{Q}) &- \frac{K}{u\sqrt{1+u^{2}}}P^{F}_{Q}e_{F}(\zeta) + \frac{K}{(1+u^{2})^{\frac{3}{2}}}P^{F}_{Q}e_{F}(u) \nonumber \\
    &- \frac{u}{\sqrt{1+u^{2}}}\nuh^{F}e_{F}(\nuh_{Q}) + \frac{1}{u\sqrt{1+u^{2}}}\texttt{G} = 0.
\end{align}

Next, following \cites{Oliynyk:2021,MarshallOliynyk:2022}, we introduce the re-scaled variables $\ut$ and $\wh_{Q}$ defined by 
\begin{align}
\label{eqn:Rescaled_FluidVars}
    u = t^{-\mu}\ut, \;\;\; \nuh_{Q} = t^{\mu}\wh_{Q},
\end{align}
where $\mu \in \mathbb{R}$ is a constant to be fixed below. It is straightforward to verify that $\ut$ and $\wh_{Q}$ satisfies the following equations
\begin{equation*}
\begin{aligned}
    e_{0}(u) &= \frac{1}{t^{\mu}}\Big(e_{0}(\ut) - \frac{\mu \ut}{\alpha t}\Big), \quad e_{F}(u) = \frac{1}{t^{\mu}}e_{F}(\ut), \\
    e_{0}(\nuh_{Q}) &= t^{\mu}\Big(e_{0}(\wh_{Q}) +\frac{\mu}{\alpha t}\wh_{Q}\Big), \quad e_{F}(\nuh_{Q}) = t^{\mu}e_{F}(\wh_{Q}).
\end{aligned}
\end{equation*}
Then expressing the Euler equations \eqref{eqn:Euler2+1_a}, \eqref{eqn:Euler2+1_b} and \eqref{eqn:Euler2+1_c} in terms of $\ut$ and $\wh_{Q}$ and multiplying on the left by the matrix Q
\begin{align*}
    Q := \begin{pmatrix} \frac{K}{\sqrt{t^{2\mu}+\ut^{2}}} & 0 & 0 \\ 0 & \frac{t^{2\mu}+(1-K)\ut^{2}}{(t^{2\mu}+\ut^{2})^{\frac{5}{2}}} & 0 \\ 0 & 0 & t^{-\mu}\ut
    \end{pmatrix},
\end{align*}
we obtain the following symmetric hyperbolic formulation of the Euler equations
\begin{align}
\label{eqn:MatrixForm_Euler2}
\hat{C}^{0}\del_{t}\begin{pmatrix}\zeta \\ \ut \\ \wh_{P}\end{pmatrix} +\hat{C}^{\Omega}\del_{\Omega}\begin{pmatrix}\zeta \\ \ut \\ \wh_{P}\end{pmatrix} +\frac{1}{t}\hat{\texttt{C}}\begin{pmatrix}\zeta \\ \ut \\ \wh_{P}\end{pmatrix} +\hat{\texttt{H}} = 0
\end{align}
where
\begin{align}
\label{eqn:Ahat_0}
    \hat{C}^{0} &= \alpha^{-1}\begin{pmatrix}\frac{K}{\sqrt{t^{2\mu}+\ut^{2}}} &0 &0 \\
    0 & \frac{t^{2\mu}+(1-K)\ut^{2}}{(t^{2\mu}+\ut^{2})^{\frac{5}{2}}} & 0 \\ 0 & 0 & \ut\delta^{P}_{Q} \end{pmatrix},  \\
\label{eqn:Ahat_F}
    \hat{C}^{\Omega} &= \begin{pmatrix} \frac{-K\ut}{t^{2\mu}+\ut^{2}}\nuh^{F} & \frac{-Kt^{2\mu}}{(t^{2\mu}+\ut^{2})^{2}}\nuh^{F} & \frac{-t^{\mu}K}{\sqrt{t^{2\mu}+\ut^{2}}}P^{FP}\\
    \frac{-Kt^{2\mu}}{(t^{2\mu}+\ut^{2})^{2}}\nuh^{F} & \frac{(-1+2K)t^{2\mu}\ut+(K-1)\ut^{3}}{(t^{2\mu}+\ut^{2})^{3}}\nuh^{F} & \frac{Kt^{\mu}\ut}{(t^{2\mu}+\ut^{2})^{\frac{3}{2}}}P^{FP} \\ \frac{-Kt^{\mu}}{\sqrt{t^{2\mu}+\ut^{2}}}P^{F}_{Q} & \frac{Kt^{\mu}\ut}{(t^{2\mu}+\ut^{2})^{\frac{3}{2}}}P^{F}_{Q} & -\frac{\ut^{2}}{\sqrt{t^{2\mu}+\ut^{2}}}\nuh^{F}\delta^{P}_{Q}
    \end{pmatrix}e^{\Omega}_{F}, \\
    \label{eqn:Acal_def}
    \hat{\texttt{C}} &= \begin{pmatrix}0&0&0 \\ 0&\frac{-t^{2\mu}(1-3K+\mu)+\ut^{2}(-1-\mu+K(3+\mu))}{(t^{2\mu}+\ut^{2})^{\frac{5}{2}}\alpha}& 0 \\ 0&0& \frac{\mu \ut}{\alpha}\delta^{P}_{Q} \end{pmatrix},\\
    \label{eqn:textttH_def}
    \hat{\texttt{H}} &=  \begin{pmatrix} \frac{\texttt{F}Kt^{2\mu} - K\texttt{R}\ut\sqrt{t^{2\mu}+\ut^{2}}}{(t^{2\mu}+\ut^{2})^{\frac{3}{2}}} \\ \frac{\texttt{F}Kt^{2\mu}\ut + \texttt{R}(t^{2\mu}+(1+K)\ut^{2})\sqrt{t^{2\mu}+\ut^{2}}}{(t^{2\mu}+\ut^{2})^{\frac{5}{2}}} \\ \frac{t^{\mu}}{\sqrt{t^{2\mu}+\ut^{2}}}\texttt{G} \end{pmatrix}.
\end{align}
In terms of the variables $\ut$ and $\wh_{Q}$, the source terms $\texttt{F}$, $\texttt{R}$, and $\texttt{G}$ are given by
\begin{equation}
\begin{aligned}
\label{eqn:Source_ttt_defns}
    \texttt{F} &= -2(\dot{U}_{F}-A_{F})t^{-\mu}\ut\sqrt{t^{2\mu}+\ut^{2}}\wh^{F} + \mathcal{H}t^{-2\mu}(3t^{2\mu}+4\ut^{2}) + \ut^{2}\Sigma_{IF}\wh^{I}\wh^{F}, \\
    \texttt{R} &= \mathcal{H}\frac{\ut}{\sqrt{t^{2\mu}+\ut^{2}}} - t^{\mu}\wh^{P}\dot{U}_{P} + \frac{t^{2\mu}\ut}{\sqrt{t^{2\mu}+\ut^{2}}}\Sigma_{IF}\wh^{I}\wh^{F}, \\
    \texttt{G} &= -t^{-2\mu}(t^{2\mu}+\ut^{2})P^{P}_{Q}\dot{U}_{P} - t^{-2\mu}\ut^{2}P^{P}_{Q}A_{P} + t^{-\mu}\ut\sqrt{t^{2\mu}+\ut^{2}}P^{P}_{Q}\tensor{\Sigma}{^I_P}\wh_{I}  \\
    &+ \ut^{2}P^{P}_{Q}\tensor{N}{_E^I}\tensor{\epsilon}{_P^{EF}}\wh_{I}\wh_{F},
\end{aligned}
\end{equation}
respectively.

\subsection{Subtracting the Homogeneous Fluid Background}
\label{sec:Euler_Subtract_Background}
Let us now introduce the new variables $\uh$, $\frg(t)$, and $w_{Q}$ by
\begin{align}
\label{eqn:uh_w_defns}
    \ut = e^{\uh+\frg(t)}, \;\; \wh_{Q} = w_{Q} + t^{-\mu}\xi_{Q},
\end{align}
where $\xi_{Q}$ is a constant vector satisfying $|\xi|=1$. Below, we will choose $\frg(t)$ and $\xi_{Q}$ so that they define a homogeneous solution to the Euler equations on a fixed de Sitter background. That is to say, by switching to the variables $\uh$ and $w_{Q}$, we are subtracting a homogeneous solution to Euler equations on a de Sitter background. \newline \par 

Now, after a short calculation, we find
\begin{equation*}
\begin{aligned}
    e_{0}(\ut) &= e^{\uh+\frg(t)}\big(e_{0}(\uh) + e_{0}(\frg(t))\big), \quad e_{F}(\ut) = e^{\uh+\frg(t)}e_{F}(\uh), \\
    e_{0}(\wh_{Q}) &= e_{0}(w_{Q}) - \frac{\mu t^{-\mu}}{\alpha t}\xi_{Q}, \quad e_{F}(\wh_{Q}) = e_{F}(w_{Q}).
\end{aligned}
\end{equation*}
Then expressing \eqref{eqn:MatrixForm_Euler2} in terms of $\uh$ and multiplying on the left by the matrix M
\begin{align*}
   M := \begin{pmatrix} 1 & 0 & 0 \\ 0 & e^{\uh+\frg} & 0 \\ 0 & 0 & 1 \end{pmatrix},
\end{align*}
we arrive at the following form of the Euler equations
\begin{align} 
\label{eqn:MatrixForm_Euler3}
{C}^{0}\del_{t}W +C^{\Omega}\del_{\Omega}W = -\frac{\mu \ut}{t\alpha}\Pi W  + \texttt{M} + \texttt{H}
\end{align}
where
\begin{align}
    W &= ( \zeta , \uh ,w_{P})^T, \\
    \label{eqn:A0_final}
    C^{0} &= \alpha^{-1}\begin{pmatrix}\frac{K}{\sqrt{t^{2\mu}+\ut^{2}}} &0 &0 \\
    0 & \frac{t^{2\mu}+(1-K)\ut^{2}}{(t^{2\mu}+\ut^{2})^{\frac{5}{2}}}\ut^{2} & 0 \\ 0 & 0 & \ut\delta^{P}_{Q} \end{pmatrix},  \\
    \label{eqn:AF_final}
    C^{\Omega} &= \begin{pmatrix} \frac{-K\ut}{t^{2\mu}+\ut^{2}}\nuh^{F} & \frac{-Kt^{2\mu}\ut }{(t^{2\mu}+\ut^{2})^{2}}\nuh^{F} & \frac{-t^{\mu}K}{\sqrt{t^{2\mu}+\ut^{2}}}P^{FP}\\
    \frac{-Kt^{2\mu}\ut}{(t^{2\mu}+\ut^{2})^{2}}\nuh^{F} & \frac{(-1+2K)t^{2\mu}\ut+(K-1)\ut^{3}}{(t^{2\mu}+\ut^{2})^{3}}\ut^{2}\nuh^{F} & \frac{Kt^{\mu}\ut^{2}}{(t^{2\mu}+\ut^{2})^{\frac{3}{2}}}P^{FP} \\ \frac{-Kt^{\mu}}{\sqrt{t^{2\mu}+\ut^{2}}}P^{F}_{Q} & \frac{Kt^{\mu}\ut^{2}}{(t^{2\mu}+\ut^{2})^{\frac{3}{2}}}P^{F}_{Q} & -\frac{\ut^{2}}{\sqrt{t^{2\mu}+\ut^{2}}}\nuh^{F}\delta^{P}_{Q}
    \end{pmatrix}e^{\Omega}_{F}, \\
    \label{eqn:Pi_def}
    \Pi &= \begin{pmatrix}0&0&0\\0&0&0\\0&0&\delta^{P}_{Q}\end{pmatrix}, \\
    \texttt{M} &= \begin{pmatrix}0 \\  \Big(\frac{t^{2\mu}(1-3K+\mu)-\ut^{2}(-1-\mu+K(3+\mu)) -t\frg^{\prime}(t^{2\mu}+(1-K)\ut^{2})}{t(t^{2\mu}+\ut^{2})^{\frac{5}{2}}\alpha}\Big)\ut^{2}  \\ 0 \end{pmatrix}, \\
    \label{eqn:Httt_def}
    \texttt{H} &=  \begin{pmatrix} \frac{\texttt{F}Kt^{2\mu} -K\texttt{R}\ut\sqrt{t^{2\mu}+\ut^{2}}}{(t^{2\mu}+\ut^{2})^{\frac{3}{2}}} \\ \frac{\texttt{F}Kt^{2\mu}\ut^{2} + \texttt{R}\ut(t^{2\mu}+(1+K)\ut^{2})\sqrt{t^{2\mu}+\ut^{2}}}{(t^{2\mu}+\ut^{2})^{\frac{5}{2}}} \\ \frac{t^{\mu}}{\sqrt{t^{2\mu}+\ut^{2}}}\texttt{G} \end{pmatrix}.
\end{align}
Additionally, we observe that the projection operator $\Pi$ \eqref{eqn:Pi_def} satisfies the following conditions
\begin{align}
\label{eqn:Pi_perp_defn}
\Pi^{\perp} = \mathbbm{1}-\Pi, \;\; \Pi^{2} = \Pi, \;\; (\Pi^{\perp})^{2} = \Pi^{\perp}, \;\; \Pi\Pi^{\perp} = \Pi^{\perp}\Pi = 0, \;\; \Pi+\Pi^{\perp} = \mathbbm{1}.
\end{align}

Multiplying \eqref{eqn:MatrixForm_Euler3} on the left by $(C^{0})^{-1}$ we then obtain
\begin{align}
\label{eqn:MatrixForm_Euler4}
    \del_{t}W + \mathbf{C}^{\Omega}\del_{\Omega}W = -\frac{\mu}{t}\Pi W + \mathbf{M} + \mathbf{H}
\end{align}
where
\begin{align}  
\label{eqn:Abf_def}
\mathbf{C}^{\Omega} &:= (C^{0})^{-1}C^{\Omega} \nonumber \\
&= \alpha\begin{pmatrix} \frac{-\ut}{\sqrt{t^{2\mu}+\ut^{2}}}\nuh^{F} & \frac{-t^{2\mu}\ut}{(t^{2\mu}+\ut^{2})^{\frac{3}{2}}}\nuh^{F} & -t^{\mu}P^{FP}\\ \frac{-Kt^{2\mu}\ut^{-1}\sqrt{t^{2\mu}+\ut^{2}}}{t^{2\mu}+(1-K)\ut^{2}}\nuh^{F} & \frac{(-1+2K)t^{2\mu}\ut+(K-1)\ut^{3}}{\sqrt{t^{2\mu}+\ut^{2}}(t^{2\mu}+(1-K)\ut^{2})}\nuh^{F} & \frac{Kt^{\mu}(t^{2\mu}+\ut^{2}) }{t^{2\mu}+(1-K)\ut^{2}}P^{FP} \\ \frac{-Kt^{\mu}}{\ut\sqrt{t^{2\mu}+\ut^{2}}}P^{F}_{Q} & \frac{Kt^{\mu} \ut}{(t^{2\mu}+\ut^{2})^{\frac{3}{2}}}P^{F}_{Q} & -\frac{\ut}{\sqrt{t^{2\mu}+\ut^{2}}}\nuh^{F}\delta^{P}_{Q}
    \end{pmatrix}e^{\Omega}_{F}, \\
\mathbf{M} &:= (C^{0})^{-1}\texttt{M} = \alpha\begin{pmatrix}0 \\  \Big(\frac{t^{2\mu}(1-3K+\mu)-\ut^{2}(-1-\mu+K(3+\mu)) -t\frg^{\prime}(t^{2\mu}+(1-K)\ut^{2})}{t(t^{2\mu}+(1-K)\ut^{2})\alpha}\Big)  \\ 0 \end{pmatrix}, \\
\label{eqn:Hbf_def}
\mathbf{H} &:=(C^{0})^{-1}\texttt{H} =  \alpha\begin{pmatrix} \frac{\texttt{F}t^{2\mu} -\texttt{R}\ut\sqrt{t^{2\mu}+\ut^{2}}}{t^{2\mu}+\ut^{2}} \\ \frac{\texttt{F}Kt^{2\mu}\ut^{2} + \texttt{R}\ut(t^{2\mu}+(1+K)\ut^{2})\sqrt{t^{2\mu}+\ut^{2}}}{t^{2\mu}\ut^{2}+(1-K)\ut^{4}} \\ \frac{t^{\mu}}{\ut\sqrt{t^{2\mu}+\ut^{2}}}\texttt{G} \end{pmatrix}.
\end{align}

The source term $\mathbf{M}$ can be decomposed as 
\begin{align}
\label{eqn:M_decomp_Euler}
    \mathbf{M} = t^{2\mu-1}\mathbf{S} + \mathbf{M}_{0}
\end{align}
where
\begin{align}
   \mathbf{M}_{0} := \mathbf{M}|_{W=0} &= \alpha \begin{pmatrix}0 \\  \Big(\frac{t^{2\mu}(1-3K+\mu)-e^{2\frg}(-1-\mu+K(3+\mu)) -t\frg^{\prime}(t)(t^{2\mu}+(1-K)e^{2\frg})}{t(t^{2\mu}+(1-K)e^{2\frg}))\alpha}\Big)  \\ 0 \end{pmatrix}, \\
   \label{eqn:Sbf_def}
   \mathbf{S} &= \alpha\begin{pmatrix} 0 \\ -\frac{e^{2\frg}(e^{2\uh}-1)K(3K-1)}{\big(e^{2\frg}(K-1)-t^{2\mu}\big)\big(e^{2(\uh+\frg)}(K-1)-t^{2\mu}\big)\alpha} \\ 0 \end{pmatrix}.
\end{align}
 Observe that when $V=W=0$, which corresponds to a homogeneous reduction of the Euler equations on a de Sitter background, the Euler equations become 
\begin{align*}
   t^{2\mu}(1-3K+\mu)-e^{2\frg}(-1-\mu+K(3+\mu)) -t\frg^{\prime}(t^{2\mu}+(1-K)e^{2\frg}) = 0.
\end{align*} 
Hence, for the following choice of $\mu$
\begin{align}
\label{eqn:mu_def}
    \mu = \frac{3K-1}{1-K}, 
\end{align}
homogeneous solutions to the Euler equations are determined by the (singular) initial value problem
\begin{align*}
    \frg^{\prime}(t) &= \frac{t^{2\mu-1}(\mu-3K+1)}{t^{2\mu}+(1-K)e^{2\frg}}, \quad 0 < t \leq 1, \\
    \frg(0) &= \frg_{0}.
\end{align*}
Note that the existence of solutions to this IVP is guaranteed by Proposition \ref{prop:Asymptotic_ODE} from Appendix \ref{appendix:approximate_background}. \newline \par

For the remainder of this article, we take $\mu$ to be defined by \eqref{eqn:mu_def} and $\frg(t)$ to be a  solution of the above initial value problem. 
In this case, $\mathbf{M}_{0}$ vanishes by construction and the Euler equations \eqref{eqn:MatrixForm_Euler4} reduce to
\begin{align}
\label{eqn:Euler_coord_deriv1}
    \del_{t}W + \mathbf{C}^{\Sigma}\del_{\Sigma}W = -\frac{\mu}{t}\Pi W + t^{2\mu-1}\mathbf{S} +  \mathbf{H}.
\end{align}

\section{Local-in-time Existence and Constraint Propagation}
\label{sec:Local_Existence_ConstraintProp}
In this section, we use standard existence and uniqueness results for systems of symmetric hyperbolic equations to establish the local-in-time existence and uniqueness of solutions to the reduced Einstein-Euler system, which is the system obtained by combining \eqref{eqn:Einstein_MatrixForm} and \eqref{eqn:MatrixForm_Euler3}, that is,
\begin{align}
\label{eqn:EinsteinEuler_NotFuchsian}
    \begin{pmatrix}B^{0} & 0 \\ 0 & \alpha C_{(\afrak)}^{0} \end{pmatrix}\del_{t}\mathcal{U} + \begin{pmatrix}-\alpha B^{K}e^{\Sigma}_{K} & 0 \\ 0 & \alpha C^{K}_{(\afrak)}e^{\Sigma}_{K} \end{pmatrix}\del_{\Sigma}\mathcal{U} &= \frac{1}{t}\begin{pmatrix}\alpha\mathcal{B} & 0 \\ 0 & -\mu_{(\afrak)}\ut_{(\afrak)} \end{pmatrix}\mathscr{P}\mathcal{U}\notag \\
    &\quad + \begin{pmatrix} \alpha \mathcal{F} \\ \alpha (\texttt{M}_{(\afrak)}+\texttt{H}_{(\afrak)}) \end{pmatrix},
\end{align}
where
\begin{align}
\label{eqn:ufrak_defn}
    \mathcal{U} &= \begin{pmatrix} V \\ W_{(\afrak)} \end{pmatrix}, \;\; W_{(\afrak)} = \begin{pmatrix}\zeta_{(\afrak)} \\ \uh_{(\afrak)} \\ w^{(\afrak)}_{P}\end{pmatrix} \;\; \text{and} \;\; \mathscr{P} = \begin{pmatrix}\mathcal{P} & 0 \\ 0 & \Pi \end{pmatrix}.
\end{align}
In addition, we establish a continuation principle for these solutions and show that if the initial data is chosen so that the constraints \eqref{eqn:C1_constraint}-\eqref{eqn:Hamiltonian_Constraint} and \eqref{eqn:C5_nu_normalisation_constraint} vanish on the initial hypersurface, then they will vanish on the entire region where the solution is defined. The precise statement of these results is given in the following proposition.

\begin{prop}
\label{prop:Local_ExistenceUniqueness_Prop}
    Suppose $T_{1}>0$, $k \in \mathbb{Z}_{>\frac{3}{2}+1}$, $\frg_{(\afrak)}(t)$ is the unique solution to the IVP \eqref{eqn:Homogeneous_ODE_1}-\eqref{eqn:Homogeneous_ODE_2}, and $\mathcal{U}_{0} \in H^{k}(\mathbb{T}^{3},\mathbb{R}^{45})$ . Then there exists $T_{0} \in (0,T_{1})$ and a unique solution 
    \begin{equation*}
        \mathcal{U}\in C^{0}\big((T_{0},T_{1}],H^{k}(\mathbb{T}^{3},\mathbb{R}^{45})\big)\cap C^{1}\big((T_{0},T_{1}],H^{k-1})(\mathbb{T}^{3},\mathbb{R}^{45})\big)
    \end{equation*} 
   of the reduced Einstein-Euler equations \eqref{eqn:EinsteinEuler_NotFuchsian} 
satisfying the initial condition $\mathcal{U}|_{t=T_1}=\mathcal{U}_0$. Moreover, the following hold:
    \begin{enumerate}[(a)]
        \item If $\sup_{T_0<t\leq T_1}\|\mathcal{U}(t)\|_{W^{1,\infty}} < \infty $, then there exists a time $T_{*}\in [0,T_{0})$ such the solution $\mathcal{U}$ can be uniquely continued to the region $(T_{*},T_{1}] \times \mathbb{T}^{3}$.
        \item If the initial data $\mathcal{U}_{0}$ satisfies the constraint equations \eqref{eqn:C1_constraint}-\eqref{eqn:Hamiltonian_Constraint} and \eqref{eqn:C5_nu_normalisation_constraint}, then the solution $\mathcal{U}$ satisfies these constraints everywhere on $(T_{0},T_{1}] \times \mathbb{T}^{3}$ and determines a solution of the Einstein-Euler equations \eqref{eqn:Einstein_Physical}-\eqref{eqn:Euler_Physical} on $(T_{0},T_{1}] \times \mathbb{T}^{3}$.
    \end{enumerate}
\end{prop}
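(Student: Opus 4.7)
The plan is to build on the symmetric hyperbolic structure already assembled in Sections \ref{sec:Sym_Hyp_Einstein} and \ref{sec:Sym_Hyp_Euler}. Away from $t=0$, the combined matrix $\text{diag}(B^{0},\alpha C^{0}_{(\afrak)})$ is smooth and positive definite (positive definiteness of $B^{0}$ was verified via eigenvalue computation at \eqref{eqn:Symmetrisation_parameter_values}, and $C^{0}$ is manifestly diagonal with positive diagonal entries provided $\ut_{(\afrak)}>0$ and $\alpha>0$), and all remaining coefficient matrices are smooth functions of $(t,\mathcal{U})$ on any set on which $\ut_{(\afrak)}$ stays bounded away from $0$ and $\alpha$ stays bounded away from $0$ and $\infty$. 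On such a set, and since $T_{1}>0$, the singular coefficient $1/t$ is harmless and the system \eqref{eqn:EinsteinEuler_NotFuchsian} falls squarely within the scope of the standard local-in-time existence and uniqueness theory for quasilinear symmetric hyperbolic systems on $\mathbb{T}^{3}$ at Sobolev level $H^{k}$, $k>5/2$. Applying that theory (e.g.\ Kato–Majda) backwards in $t$ starting from $t=T_{1}$ yields a maximal time $T_{0}\in[0,T_{1})$ and a solution with the claimed regularity. The initialisation of $\frg_{(\afrak)}(t)$ as the solution of the IVP \eqref{eqn:Homogeneous_ODE_1}--\eqref{eqn:Homogeneous_ODE_2} guarantees that $\frg_{(\afrak)}$ and its first derivative appearing as coefficients in $\textnormal{\texttt{M}}_{(\afrak)}$ are smooth on $(0,T_{1}]$, so they do not obstruct the application of the standard theory.

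The continuation statement (a) is then an immediate consequence of the standard blow-up criterion for quasilinear symmetric hyperbolic systems: a solution can be continued past $T_{0}$ as long as the $L^{\infty}$ norm of $\mathcal{U}$ and its first spatial derivatives remain bounded and the coefficient matrices remain smoothly defined at the limiting values of $\mathcal{U}$. A $W^{1,\infty}$ bound on $\mathcal{U}$ trivially controls the smoothness thresholds for $B^{0}$, $C^{0}_{(\afrak)}$, the lapse $\alpha$ (recovered from $\Hct,\alphat$ via \eqref{eqn:lapse_Hct_identity}), and $\ut_{(\afrak)}=e^{\uh_{(\afrak)}+\frg_{(\afrak)}}$, so iterating the local existence argument to the maximal forward (i.e.\ decreasing $t$) extension gives (a).

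For the constraint propagation claim (b), the plan is to derive a homogeneous linear symmetric hyperbolic system for the vector of constraint quantities $\mathcal{C}:=(\mathcal{C}_{1},\mathcal{C}_{2},\mathcal{C}_{3},\mathcal{C}_{4},\mathcal{C}_{M},\mathcal{C}_{H},\mathcal{C}_{5})$ of the schematic form $\mathcal{A}^{0}\partial_{t}\mathcal{C}+\mathcal{A}^{\Sigma}\partial_{\Sigma}\mathcal{C}=\mathcal{L}(\mathcal{U},\partial\mathcal{U})\mathcal{C}$, with $\mathcal{A}^{0}$ positive definite and $\mathcal{L}$ bounded on the solution's existence interval, and then invoke uniqueness. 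The two ingredients that drive this are the twice-contracted Bianchi identity $\nabla^{\mu}G_{\mu\nu}=0$ (which holds identically for the metric reconstructed from the evolved frame) together with the Jacobi identities for the frame commutators. These, combined with the evolved Euler equations \eqref{eqn:Euler_coord_deriv1} (which by construction encode $\nabla^{\mu}\tilde{T}^{(\afrak)}_{\mu\nu}=0$ modulo $\mathcal{C}_{5}$), produce evolution equations for $\mathcal{C}_{M},\mathcal{C}_{H}$ whose right-hand sides are linear in the full constraint vector; the Jacobi identities similarly yield linear evolution of $\mathcal{C}_{1},\mathcal{C}_{2},\mathcal{C}_{4}$; the harmonic slicing condition \eqref{eqn:Harmonic_TimeSlicing} together with the $\dot{U}$-evolution gives linear propagation of $\mathcal{C}_{3}$; and contraction of the $\hat{\nu}_{Q}$-evolution \eqref{eqn:Euler2+1_c} with $\hat{\nu}^{Q}$ (using that $P^{F}_{Q}\hat{\nu}^{Q}=0$ and that $\texttt{G}$ is $P$-projected) yields $e_{0}(\mathcal{C}_{5})=h(\mathcal{U})\mathcal{C}_{5}$ for some smooth $h$. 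A subtlety is that in assembling the symmetric hyperbolic system we added multiples of $\mathcal{C}_{H}$ and $\tilde{\mathcal{C}}_{3}$ to several equations (with parameters \eqref{eqn:Symmetrisation_parameter_values}); these additions contribute further terms to the constraint evolution system but preserve its homogeneity in $\mathcal{C}$. Assuming $\mathcal{C}|_{t=T_{1}}=0$, standard energy estimates for linear symmetric hyperbolic systems force $\mathcal{C}\equiv 0$ on $(T_{0},T_{1}]\times\mathbb{T}^{3}$, and with all constraints vanishing the reduced system reduces to the full Einstein--Euler system, proving (b).

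The main obstacle will be the constraint propagation step, specifically the book-keeping required to verify that the constraint modifications \eqref{eqn:alphat_constraintmodevo}, \eqref{eqn:Udot_constraintmodevo}, \eqref{eqn:A_PDE3}, with the coefficients \eqref{eqn:Symmetrisation_parameter_values}, do not spoil the closed homogeneous structure of the constraint evolution subsystem; this amounts to a careful but mechanical derivation of $\partial_{t}\mathcal{C}_{M}$ and $\partial_{t}\mathcal{C}_{H}$ using the Bianchi identities and the fluid conservation law, and then matching the constraint-modified terms against the principal part so that the principal symbol acting on $\mathcal{C}$ remains symmetric.
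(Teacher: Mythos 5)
Your treatment of local existence and of part (a) coincides with the paper's: both invoke the standard local existence, uniqueness and continuation theory for quasilinear symmetric hyperbolic systems (the paper cites \cite{TaylorIII:1996}*{Ch.~16}), applied backwards in $t$ from $t=T_1$, and your observation that the $1/t$ factors and the ODE coefficients $\frg_{(\afrak)}$, $\frg_{(\afrak)}'$ are harmless away from $t=0$ is exactly the point. Your handling of $\mathcal{C}_5$ also matches: the paper derives the transport equation $e_{0}(\mathcal{C}_{5}) + \frac{t^{2\mu}\ut}{\sqrt{t^{2\mu}+\ut^{2}}}\nuh^{F}e_{F}(\mathcal{C}_{5}) = \Xi\,\mathcal{C}_{5}$ by contracting the $w_Q$-evolution with $w^Q+t^{-\mu}\xi^Q$, precisely as you propose.

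Where you genuinely diverge is in the propagation of the geometric constraints \eqref{eqn:C1_constraint}--\eqref{eqn:Hamiltonian_Constraint}. You propose to derive, directly in the frame formulation, a closed homogeneous linear symmetric hyperbolic system for the full constraint vector via the Bianchi and Jacobi identities, and you correctly identify the obstacle: one must verify that the constraint additions \eqref{eqn:alphat_constraintmodevo}, \eqref{eqn:Udot_constraintmodevo}, \eqref{eqn:A_PDE3} with the parameters \eqref{eqn:Symmetrisation_parameter_values} preserve both the homogeneity and a well-posed (symmetrisable) principal part of the constraint subsystem. That verification is exactly what your proposal leaves undone, and it is not a formality -- the specific irrational parameter values were chosen to symmetrise the \emph{evolution} system, and there is no a priori guarantee that the induced constraint-evolution system is again symmetric hyperbolic without further manipulation. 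The paper sidesteps this entirely: it rewrites the conformal Einstein--Euler system as $G_{\mu\nu}=\bar{T}_{\mu\nu}$, $\nabla_\mu\bar{T}^{\mu\nu}=0$, invokes the known BSSN constraint-propagation result of \cite{BeyerSarbach:2004} in the same gauge to produce a constraint-satisfying solution $\mathfrak{J}$ from the same physical initial data, reconstructs from $\mathfrak{J}$ (via Fermi--Walker transport of the frame and the computations of Sections \ref{sec:Frame_Formulation}--\ref{sec:Sym_Hyp_Euler}) a second solution $\mathcal{U}_*$ of the reduced system \eqref{eqn:EinsteinEuler_NotFuchsian} that satisfies all constraints by construction, and concludes $\mathcal{U}=\mathcal{U}_*$ by uniqueness for symmetric hyperbolic systems. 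Your route, if completed, would be self-contained within the tetrad formulation and would not require the equivalence with BSSN; the paper's route trades that long computation for a reconstruction-plus-uniqueness argument. As it stands, though, your proof of (b) has a genuine gap at the step you yourself flag, and to close it you would either need to carry out that derivation in full or adopt the paper's indirect argument.
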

\begin{proof}$\;$
By construction, the system \eqref{eqn:EinsteinEuler_NotFuchsian} is symmetric hyperbolic. Hence, given initial data $\mathcal{U}_{0} \in H^{k}(\mathbb{T}^{3},\mathbb{R}^{45})$ with $k\in \mathbb{Z}_{>\frac{3}{2}+1}$,  an application of standard local-in-time existence and uniqueness theorems for symmetric hyperbolic systems, e.g. \cite{TaylorIII:1996}*{Ch.~16, Prop.~2.1}, guarantees, for any $T_1>0$, the existence of a $T_0\in (0,T_1)$ and a unique solution  
\begin{equation} \label{loc-sol}
    \mathcal{U}\in C^{0}\big((T_{0},T_{1}],H^{k}(\mathbb{T}^{3},\mathbb{R}^{45})\big)\cap C^{1}\big((T_{0},T_{1}],H^{k-1})(\mathbb{T}^{3},\mathbb{R}^{45})\big)
\end{equation}
of the reduced Einstein-Euler equations \eqref{eqn:EinsteinEuler_NotFuchsian} 
satisfying the initial condition $\mathcal{U}|_{t=T_1}=\mathcal{U}_0$. 
This establishes the local-in-time existence and uniqueness of solutions.

\subsubsection*{Proof of (a)}
The claim follows immediately from the continuation principle for symmetric hyperbolic systems, e.g. Propositions 1.5 and 2.1 from \cite{TaylorIII:1996}*{Ch.~16}.

\subsubsection*{Proof of (b)}
To prove this claim, we first verify that normalisation constraint \eqref{eqn:C5_nu_normalisation_constraint}  for the fluid velocity propagates. In terms of $w^{(\afrak)}_{Q}$, this constraint becomes
\begin{align*}
    \mathcal{C}^{(\afrak)}_{5} := w^{(\afrak)}_{Q}w_{(\afrak)}^{Q}+ 2t^{-\mu_{(\afrak)}}w^{(\afrak)}_{Q}\xi_{(\afrak)}^{Q} + t^{-2\mu_{(\afrak)}}\xi^{(\afrak)}_{Q}\xi_{(\afrak)}^{Q} -t^{-2\mu_{(\afrak)}}
\end{align*}
To simplify the presentation, we drop the fluid indices and observe that $e_{0}(\mathcal{C}_{5})$ is given by
\begin{align}
\label{eqn:Constraint_Prop1}
    e_{0}(\mathcal{C}_{5}) = 2(w^{Q}+t^{-\mu}\xi^{Q})e_{0}(w_{Q}) -\frac{2\mu}{\alpha}t^{-\mu-1}w_{Q}\xi^{Q} - \frac{2\mu}{\alpha}t^{-2\mu-1}\xi_{Q}\xi^{Q}+\frac{2\mu}{\alpha}t^{-2\mu-1}.
\end{align}
Then using \eqref{eqn:Source_ttt_defns}, \eqref{eqn:MatrixForm_Euler3}, and the identities
\begin{equation*}
    (w^{Q}+t^{-\mu}\xi^{Q})P^{P}_{Q} = -t^{2\mu}\mathcal{C}_{5}(w^{P}+t^{-\mu}\xi^{P}) \quad \text{and} \quad
    (w^{P}+t^{-\mu}\xi^{P})e_{F}(w_{P}) = \frac{1}{2}e_{F}(\mathcal{C}_{5}),
\end{equation*}
we can express \eqref{eqn:Constraint_Prop1} as
\begin{align*} 
    e_{0}(\mathcal{C}_{5}) + \frac{t^{2\mu}\ut}{\sqrt{t^{2\mu}+\ut^{2}}}\nuh^{F} e_{F}(\mathcal{C}_{5}) &= \Xi\mathcal{C}_{5}
\end{align*}
where 
\begin{align*}
   \Xi &=  \frac{-2K t^{3\mu}}{\ut\sqrt{t^{2\mu}+\ut^{2}}}(w^{F}+t^{-\mu}\xi^{F})e_{F}(\zeta) + \frac{2Kt^{3\mu}}{(t^{2\mu}+\ut^{2})^{\frac{3}{2}}}(w^{F}+t^{-\mu}\xi^{F})e_{F}(\ut) \nonumber \\
    &- \frac{2t^{2\mu}\ut}{\sqrt{t^{2\mu}+\ut^{2}}}(w^{P}+t^{-\mu}\xi^{P})\nuh^{F}e_{F}(w_{P})  +\frac{2t^{3\mu}}{\ut\sqrt{t^{2\mu}+\ut^{2}}}\wh^{P} \Big[-t^{-2\mu}(t^{2\mu}+\ut^{2})\dot{U}_{P} \nonumber \\
    &- t^{-2\mu}\ut^{2}A_{P} + t^{-\mu}\ut\sqrt{t^{2\mu}+\ut^{2}}\tensor{\Sigma}{^I_P}\wh_{I} + \ut^{2}\tensor{N}{_E^I}\tensor{\epsilon}{_P^{EF}}\wh_{I}\wh_{F}\Big] - \frac{2\mu}{\alpha t}.
\end{align*}
Clearly, this equation is linear and homogeneous in the constraint quantity $\mathcal{C}_{5}$. By uniqueness of solutions to transport equations, it follows immediately that if $\mathcal{C}_{5}$ vanishes initially at $t=T_1$, then it will vanish for all $t\in (T_0,T_1]$. This shows that the constraint $\mathcal{C}_{5}$ propagates. \newline \par 

It remains to verify that the constraints \eqref{eqn:C1_constraint}-\eqref{eqn:Hamiltonian_Constraint} propagate. Noting that the conformal Einstein-Euler equations, see \eqref{eqn:Euler_Physical}, \eqref{Tt-def}, \eqref{eqn:conformal_metric_def},  \eqref{eqn:ConformalEinstein} and \eqref{eqn:Conformal_Tmunu_def}, can be expressed as 
\begin{align*}
    G_{\mu\nu} &= \bar{T}_{\mu\nu},
    \\
    \nabla_{\mu}\bar{T}^{\mu\nu} &= 0, 
\end{align*}
where 
\begin{equation*}
   \bar{T}_{\mu\nu} = e^{2\Phi}T_{\mu\nu}-e^{2\Phi}\Lambda g_{\mu\nu} + 2\bigl(\nabla_\mu \nabla_\nu \Phi -\nabla_\mu \Phi \nabla_\nu \Phi\bigr)-\bigl(2\Box \Phi + |\nabla\Phi|_g^2\bigr)g_{\mu\nu}
\end{equation*}
and $G_{\mu\nu}=R_{\mu\nu}-\frac{1}{2}Rg_{\mu\nu}$ is the Einstein tensor of the conformal metric $g_{\mu\nu}$,
we can employ the BSSN formulation from \cite{BeyerSarbach:2004} to express these equations as a symmetric hyperbolic system of equations in the same gauge used in this article, that is, zero-shift and generalised harmonic slicing determined by   \eqref{eqn:Harmonic_TimeSlicing} and \eqref{eqn:gauge_source_f}. Additionally, as discussed in \cite{BeyerSarbach:2004}, the constraint propagation equations in the BSSN formalism can be expressed as a symmetric hyperbolic system of equations that is linear in the constraint quantities. In particular, this implies that the Momentum and Hamiltonian constraints, i.e. \eqref{eqn:MomentumConstraint} and \eqref{eqn:Hamiltonian_Constraint},  propagate, that is, if they vanish initial at $t=T_1$, then they vanish for all $t\in (T_1,T_0]$.

Now, given initial data
\begin{equation*}
\mathfrak{J}_{0}=\bigl(g_{\mu\nu},\rho^{(\afrak)},v_\mu^{(\afrak)}\bigr)\bigr|_{t=T_1} \in H^{k+1}(\mathbb{T}^3)\times H^k(\mathbb{T}^3)\times H^k(\mathbb{T}^3)
\end{equation*}
satisfying the Momentum and Hamiltonian constraints, which is \textit{the same physical initial data} as used to generate the solution $\mathcal{U}$,
the BSSN formulation of the conformal Einstein-Euler equations discussed above can be used to generate a solution \begin{equation*}
\mathfrak{J}= \bigl(g_{\mu\nu},\rho^{(\afrak)},v_\mu^{(\afrak)}\bigr)
\in \bigcap_{j=0}^1 C^{0}\big((T_{0},T_{1}], H^{k+1-j}(\mathbb{T}^3)\times H^{k-j}(\mathbb{T}^3)\times H^{k-j}(\mathbb{T}^3)\big)
\end{equation*}
of conformal Einstein-Euler equations satisfying the zero-shift and harmonic time slicing gauge employed in this article. Note that the Hamiltonian and Momentum constraints vanish everywhere, and if necessary, we have moved $T_0$ closer to $T_1$ in \eqref{loc-sol} so that $\mathcal{U}$ and $\mathfrak{J}$ are defined on the same spacetime region.  \newline \par

From $\mathfrak{J}$, we can construct a new solution $\mathcal{U}_*$ of \eqref{eqn:EinsteinEuler_NotFuchsian}. This starts by setting $e_{0} = \alpha^{-1}\del_{t}$, where now $\alpha$ is the lapse determined by the solution  $\mathfrak{J}$. We then solve the Fermi-Walker transport equations
\begin{align*}
    \nabla_{e_{0}}e_{A} = -\frac{g(\nabla_{e_{0}}e_{0},e_{A})}{g(e_{0},e_{0})}e_{0}
\end{align*}
to obtain an orthogonal spatial frame $e_A$ that completes $e_0$ to an orthonormal frame. Here, the initial data for the spatial frame $e_A$ is the same as the initial data used to generate the solution $\mathcal{U}$. \newline \par 

Once, we have the orthonormal frame $e_i$, then we can use it together with the solution $\mathfrak{J}$ to construct, following the calculation carried out in Sections \ref{sec:Frame_Formulation} to \ref{sec:Sym_Hyp_Euler}, a new solution $\mathcal{U}_*$ to \eqref{eqn:EinsteinEuler_NotFuchsian} of the same regularity as $\mathcal{U}$, i.e.~\eqref{loc-sol}. It is important to note here that $\mathcal{U}_*$ satisfies all of the constraints \eqref{eqn:C1_constraint}-\eqref{eqn:Hamiltonian_Constraint}. By design $\mathcal{U}|_{t=T_1}=\mathcal{U}_*|_{t=T_1}$, and so, since $\mathcal{U}$ and $\mathcal{U}_*$ satisfy the same symmetric hyperbolic equation, we deduce from the uniqueness of solutions to symmetric hyperbolic equations that $\mathcal{U}=\mathcal{U}_*$. This completes the proof of claim (b). 
\end{proof}

\section{Differentiated Einstein-Euler System}
\label{sec:Differentiated_Einstein_Euler_System}
While the Fuchsian formulation of the Einstein-Euler equations obtained from \eqref{eqn:SH_Einstein_Sec3} and \eqref{eqn:Euler_coord_deriv1}, that is,
\begin{align}
\label{eqn:SH_Einstein_NoDerivs}
    B^0 \del_{t}V &= \alpha B^{\Omega}\del_{\Omega}V + \frac{1}{t}\alpha\tilde{\mathcal{B}}V + \alpha\mathcal{F}, \\
\label{eqn:SH_Euler_Noderivs}
    \del_{t}W + \mathbf{C}^{\Sigma}\del_{\Sigma}W &= -\frac{\mu}{t}\Pi W + t^{2\mu-1}\mathbf{S} +  \mathbf{H},
\end{align}
is locally well-posed, it does not have the correct structure to be able to directly apply the global existence theory from \cite{BOOS:2021}.
As discussed in the introduction, to bring the system into the required form, we must augment equations \eqref{eqn:SH_Einstein_NoDerivs}-\eqref{eqn:SH_Euler_Noderivs} with their spatial derivatives up to second order. The goal of this section is to derive the differentiated Einstein-Euler system, which will be the focus of our subsequent analysis. Specifically, in Section \ref{sec:Differentiated_Einstein_deriv}, we derive the differentiated Einstein equations, followed by the derivation of the differentiated Euler equations in Section \ref{sec:Differentiated_Euler_deriv}.

\subsection{Differentiated Einstein Equations}
\label{sec:Differentiated_Einstein_deriv}
In this section, we derive a symmetric hyperbolic formulation of the first and second differentiated (reduced) Einstein equations. We begin by multiplying \eqref{eqn:SH_Einstein_NoDerivs} on the left by $(B^{0})^{-1}$ to get
\begin{align}
\label{eqn:Einstein_MatrixForm4}
 \del_{t}V &= \mathbf{B}^{\Sigma}\del_{\Sigma}V +\frac{1}{t}\mathbf{Z}V + \mathbf{F} 
\end{align}
where
\begin{equation}
\begin{aligned}
\label{eqn:Grav_bf_Variables_Defn}
   \mathbf{B}^{\Sigma} &:=  \alpha(B^{0})^{-1}B^{\Sigma}, \\ 
   \mathbf{Z} &:=  \alpha(B^{0})^{-1}\mathcal{B}\mathcal{P} =2(B^0)^{-1}\mathcal{P}, \\ 
   \mathbf{F} &:=  \alpha(B^{0})^{-1}\mathcal{F}.
\end{aligned}
\end{equation}

To proceed, we introduce first and second order gravitational variables by
\begin{align}
\label{eqn:V_derivative_Variable_defns}
    V_{\Sigma} = \del_{\Sigma}V \quad \text{and} \quad V_{\Omega\Sigma} = t^{\psi}\del_{\Omega}\del_{\Sigma}V,
\end{align}
where $\psi>0$ is a constant to be determined.
Taking the first and second spatial derivatives of \eqref{eqn:Einstein_MatrixForm4}, we obtain the equations
\begin{align*}
    \del_{t}\del_{\Omega}V = \mathbf{B}^{\Sigma}\del_{\Omega}\del_{\Sigma}V + \del_{\Omega}\mathbf{B}^{\Sigma}\del_{\Sigma}V + \frac{1}{t}\mathbf{Z}\del_{\Omega}V + \del_{\Omega}\mathbf{F},
\end{align*}
and
\begin{align*}
\del_{t}\del_{\Gamma}\del_{\Omega}V &= \mathbf{B}^{\Sigma}\del_{\Sigma}\del_{\Gamma}\del_{\Omega}V + \del_{\Gamma}\mathbf{B}^{\Sigma}\del_{\Omega}\del_{\Sigma}V +\del_{\Omega}\mathbf{B}^{\Sigma}\del_{\Gamma}\del_{\Sigma}V+ \del_{\Gamma}\del_{\Omega}\mathbf{B}^{\Sigma}\del_{\Sigma}V \nonumber \\
&+ \frac{1}{t}\mathbf{Z}\del_{\Gamma}\del_{\Omega}V  + \del_{\Gamma}\del_{\Omega}\mathbf{F}.
\end{align*}
Multiplying the second equation by $t^{\psi}$ and expressing both equations in terms of $V_{\Sigma}$ and $V_{\Omega\Sigma}$ yields
\begin{align}  
\label{eqn:Einstein_1stderiv_noB0}
\del_{t}V_{\Omega} = \mathbf{B}^{\Sigma}\del_{\Sigma}V_{\Omega}+ \del_{\Omega}\mathbf{B}^{\Sigma}V_{\Sigma} +\frac{1}{t}\mathbf{Z}V_{\Omega} + \del_{\Omega}\mathbf{F}
\end{align}
and 
\begin{align}
\label{eqn:Einstein_2ndderiv_noB0}
    \del_{t}V_{\Gamma\Omega} &= \mathbf{B}^{\Sigma}\del_{\Sigma}V_{\Gamma\Omega} + \del_{\Gamma}\mathbf{B}^{\Sigma}V_{\Omega\Sigma} + \del_{\Omega}\mathbf{B}^{\Sigma}V_{\Gamma\Sigma} + t^{\psi}\del_{\Gamma}\del_{\Omega}\mathbf{B}^{\Sigma}V_{\Sigma} + \frac{1}{t}\mathbf{Z}V_{\Gamma\Omega} \nonumber \\
    &+ t^{\psi}\del_{\Gamma}\del_{\Omega}\mathbf{F} +\frac{\psi}{t}V_{\Gamma\Omega}.
\end{align}
Finally, after multiplying \eqref{eqn:Einstein_1stderiv_noB0} and \eqref{eqn:Einstein_2ndderiv_noB0} by $B^{0}$, we arrive at the following symmetric hyperbolic system of equations for the first and second order gravitational variables:
\begin{align}
\label{eqn:Einstein_1stderiv}
B^{0}\del_{t}V_{\Omega} &= B^{0}\mathbf{B}^{\Sigma}\del_{\Sigma}V_{\Omega}+ B^{0}\del_{\Omega}\mathbf{B}^{\Sigma}V_{\Sigma} +\frac{1}{t}B^{0}\mathbf{Z}V_{\Omega} + B^{0}\del_{\Omega}\mathbf{F}, \\
\label{eqn:Einstein_2ndderiv}
B^{0}\del_{t}V_{\Gamma\Omega} &= B^{0}\mathbf{B}^{\Sigma}\del_{\Sigma}V_{\Gamma\Omega} + B^{0}\del_{\Gamma}\mathbf{B}^{\Sigma}V_{\Omega\Sigma} + B^{0}\del_{\Omega}\mathbf{B}^{\Sigma}V_{\Gamma\Sigma} \nonumber \\
&+ t^{\psi}B^{0}\del_{\Gamma}\del_{\Omega}\mathbf{B}^{\Sigma}V_{\Sigma} + \frac{2}{t}\mathcal{P}V_{\Gamma\Omega} + t^{\psi-1}B^{0}\del_{\Gamma}\del_{\Omega}\mathbf{F} + B^{0}\frac{\psi}{t}V_{\Gamma\Omega}.
\end{align}

\subsection{Differentiated Euler Equations}
\label{sec:Differentiated_Euler_deriv}
The Euler equations \eqref{eqn:SH_Euler_Noderivs} are given by
\begin{align}
\label{eqn:Euler_Fuchsian_1}
    \del_{t}W + \mathbf{C}^{\Sigma}\del_{\Sigma}W = -\frac{\mu}{t}\Pi W + t^{2\mu-1}\mathbf{S} +  \mathbf{H}.
\end{align}
Applying the projection operator $\Pi$ to the above equation yields
\begin{align}
\label{eqn:Euler_Pi}
    \del_{t}(\Pi W) + \Pi\mathbf{C}^{\Sigma}\del_{\Sigma}W = -\frac{\mu}{t}\Pi W  + \Pi \mathbf{H},
\end{align}
where in deriving this we have used the fact that $\Pi\mathbf{S}=0$. Equivalently, we can express this equation as
\begin{align*}
    \del_{t}(t^{\mu}\Pi W) + t^{\mu}\Pi\mathbf{C}^{\Sigma}\del_{\Sigma}W =   t^{\mu}\Pi \mathbf{H}.
\end{align*}

Similarly, by applying $\Pi^{\perp}$ to \eqref{eqn:Euler_Fuchsian_1}, we get
\begin{align}
\label{eqn:Euler_Pi_perp}
    \del_{t}(\Pi^{\perp}W) + \Pi^{\perp}\mathbf{C}^{\Sigma}\del_{\Sigma}W = t^{2\mu-1}\Pi^{\perp}\mathbf{S} + \Pi^{\perp}\mathbf{H}.
\end{align}
Adding \eqref{eqn:Euler_Pi} and \eqref{eqn:Euler_Pi_perp} together yields 
\begin{align}
\label{eqn:Euler_Fuchsian_2}
    \del_{t}(\overline{W}) + t^{\mu}\Pi\mathbf{C}^{\Sigma}\del_{\Sigma}W + \Pi^{\perp}\mathbf{C}^{\Sigma}\del_{\Sigma}W = t^{2\mu-1}\Pi^{\perp}\mathbf{S} + \Pi^{\perp}\mathbf{H} + t^{\mu}\Pi \mathbf{H},
\end{align}
where we have set
\begin{align*}
    \overline{W} := \Pi^{\perp}W + t^{\mu}\Pi W = (\zeta, \; \uh, \; t^{\mu}w_{P})^{T}.
\end{align*}

\subsubsection{First Derivative Sub-System}
Differentiating \eqref{eqn:Euler_Fuchsian_1} spatially gives
\begin{align}
\label{eqn:Euler_fuchsian_derivative}
    \del_{t}\del_{\Omega}W + \del_{\Omega}\mathbf{C}^{\Sigma}\del_{\Sigma}W +\mathbf{C}^{\Sigma}\del_{\Omega}\del_{\Sigma}W  = -\frac{\mu}{t}\Pi \del_{\Omega}W  + t^{2\mu-1}\del_{\Omega}\mathbf{S} +  \del_{\Omega}\mathbf{H}.
\end{align}
After projecting with $\Pi$ and multiplying by $t^{\mu}$, we have
\begin{align}
\label{eqn:Euler_fuchsian_derivative2}
    \del_{t}(t^{\mu}\Pi\del_{\Omega}W) + t^{\mu}\Pi\del_{\Omega}\mathbf{C}^{\Sigma}\del_{\Sigma}W +t^{\mu}\Pi\mathbf{C}^{\Sigma}\del_{\Omega}\del_{\Sigma}W  =  t^{\mu}\Pi\del_{\Omega}\mathbf{H}.
\end{align}
Multiplying \eqref{eqn:Euler_fuchsian_derivative} by $\Pi^{\perp}$ yields
\begin{align}
\label{eqn:Euler_fuchsian_derivative3}
\del_{t}(\Pi^{\perp}\del_{\Omega}W) + \Pi^{\perp}\del_{\Omega}\mathbf{C}^{\Sigma}\del_{\Sigma}W +\Pi^{\perp}\mathbf{C}^{\Sigma}\del_{\Omega}\del_{\Sigma}W  =   \Pi^{\perp}t^{2\mu-1}\del_{\Omega}\mathbf{S} +  \Pi^{\perp}\del_{\Omega}\mathbf{H}.
\end{align}
To proceed, we define 
\begin{align*}
    \overline{W}_{\Sigma} &:= \Pi^{\perp}\del_{\Sigma}W + t^{\mu}\Pi \del_{\Sigma}W = (\del_{\Sigma}\zeta, \; \del_{\Sigma}\uh, \; t^{\mu}\del_{\Sigma}w_{P})^{T},\\
    \overline{W}_{\Omega\Sigma}  &:= \ (t^{\mu}\del_{\Omega}\del_{\Sigma}\zeta, \; t^{\mu}\del_{\Omega}\del_{\Sigma}\uh, \; t^{\mu}\del_{\Omega}\del_{\Sigma}w_{P})^{T},
\end{align*}
and
observe that the term $\Pi^{\perp}\mathbf{C}^{\Sigma}\del_{\Omega}\del_{\Sigma}W $ in \eqref{eqn:Euler_fuchsian_derivative3} can be expressed as 
\begin{align*}
\Pi^{\perp}\mathbf{C}^{\Sigma}\del_{\Omega}\del_{\Sigma}W  &= \Pi^{\perp}\mathbf{C}^{\Sigma}\Pi\del_{\Omega}\del_{\Sigma}W + \Pi^{\perp}\mathbf{C}^{\Sigma}\Pi^{\perp}\del_{\Omega}\del_{\Sigma}W\nonumber \\
    &= t^{-\mu}\Pi^{\perp}\mathbf{C}^{\Sigma}\Pi\overline{W}_{\Omega\Sigma} + \Pi^{\perp}\mathbf{C}^{\Sigma}\Pi^{\perp}\del_{\Sigma}\overline{W}_{\Omega}.
\end{align*}
Substituting this into \eqref{eqn:Euler_fuchsian_derivative3} and multiplying the resulting expression by $\Pi^{\perp}C^{0}\Pi^{\perp}$ gives
\begin{align*}
    &\Pi^{\perp}C^{0}\Pi^{\perp}\del_{t}(\Pi^{\perp}\del_{\Omega}W) +  \Pi^{\perp}C^{0}\Pi^{\perp}\mathbf{C}^{\Sigma}\Pi^{\perp}\del_{\Sigma}\overline{W}_{\Omega}^{(\afrak)}  = -t^{-\mu}\Pi^{\perp}C^{0}\Pi^{\perp}\mathbf{C}^{\Sigma}\Pi\overline{W}_{\Omega\Sigma} \nonumber \\
    &-\Pi^{\perp}C^{0}\Pi^{\perp}\del_{\Omega}\mathbf{C}^{\Sigma}\del_{\Sigma}W  
    +t^{2\mu-1}\Pi^{\perp}C^{0}\Pi^{\perp}\del_{\Omega}\mathbf{S} +  \Pi^{\perp}C^{0}\Pi^{\perp}\del_{\Omega}\mathbf{H}.
\end{align*}
Adding this equation to \eqref{eqn:Euler_fuchsian_derivative2}, we obtain
\begin{align*}
    &\Pi^{\perp}C^{0}\Pi^{\perp}\del_{t}(\Pi^{\perp}\del_{\Omega}W) +\del_{t}(t^{\mu}\Pi\del_{\Omega}W) +  \Pi^{\perp}C^{0}\Pi^{\perp}\mathbf{C}^{\Sigma}\Pi^{\perp}\del_{\Sigma}\overline{W}_{\Omega}^{(\afrak)}    \nonumber \\
    &= - t^{\mu}\Pi\del_{\Omega}\mathbf{C}^{\Sigma}\del_{\Sigma}W -t^{\mu}\Pi\mathbf{C}^{\Sigma}\del_{\Omega}\del_{\Sigma}W -t^{-\mu}\Pi^{\perp}C^{0}\Pi^{\perp}\mathbf{C}^{\Sigma}\Pi\overline{W}_{\Omega\Sigma} -\Pi^{\perp}C^{0}\Pi^{\perp}\del_{\Omega}\mathbf{C}^{\Sigma}\del_{\Sigma}W \nonumber \\
    &+t^{2\mu-1}\Pi^{\perp}C^{0}\Pi^{\perp}\del_{\Omega}\mathbf{S} +  \Pi^{\perp}C^{0}\Pi^{\perp}\del_{\Omega}\mathbf{H} +t^{\mu}\Pi\del_{\Omega}\mathbf{H}.
\end{align*}
Recalling the definitions of $\overline{W}$, $\overline{W}_{\Sigma}$, and $\overline{W}_{\Omega\Sigma}$, it is straightforward to express the above equation as 
\begin{align}
\label{eqn:Fuchsian_1st_deriv}
    \mathcal{M}^{0}\del_{t}\overline{W}_{\Omega}^{(\afrak)} + \mathcal{M}^{\Sigma}\del_{\Sigma}\overline{W}_{\Omega}^{(\afrak)} &= -\Pi\del_{\Omega}\mathbf{C}^{\Sigma}\Pi\overline{W}_{\Sigma} -t^{\mu}\Pi\del_{\Omega}\mathbf{C}^{\Sigma}\Pi^{\perp}\overline{W}_{\Sigma} -\Pi\mathbf{C}^{\Sigma}\overline{W}_{\Omega\Sigma} \nonumber \\
    &- t^{-\mu}\Pi^{\perp}C^{0}\Pi^{\perp}\mathbf{C}^{\Sigma}\Pi\overline{W}_{\Omega\Sigma} - \Pi^{\perp}C^{0}\Pi^{\perp}\del_{\Omega}\mathbf{C}^{\Sigma}\Pi^{\perp}\overline{W}_{\Sigma} \nonumber \\
    &- t^{-\mu}\Pi^{\perp}C^{0}\Pi^{\perp}\del_{\Omega}\mathbf{C}^{\Sigma}\Pi\overline{W}_{\Sigma} +t^{2\mu-1}\Pi^{\perp}C^{0}\Pi^{\perp}\del_{\Omega}\mathbf{S} \nonumber \\
    &+  \Pi^{\perp}C^{0}\Pi^{\perp}\del_{\Omega}\mathbf{H} +t^{\mu}\Pi\del_{\Omega}\mathbf{H},
\end{align}
where
\begin{align*}
    \mathcal{M}^{0} = \Pi^{\perp}C^{0}\Pi^{\perp} + \Pi, \;\; \mathcal{M}^{\Sigma} = \Pi^{\perp}C^{0}\Pi^{\perp}\mathbf{C}^{\Sigma}\Pi^{\perp}.
\end{align*}
By substituting in the appropriate definitions, $\mathcal{M}^{0}$ and $\mathcal{M}^{\Sigma}$ can be expressed as\footnote{Note, here we have expressed all terms using the variables $\ut$ and $\nuh^{F}$ for brevity. It is straightforward to write this expression in terms of $\uh$ and $\wh^{F}$ using the definitions $\ut = e^{\uh+\frg(t)}$ and $\nuh^{F}=t^{\mu}\wh^{F}$.} 
\begin{align}
\label{eqn:Mcal0_defn}
    \mathcal{M}^{0} &= \begin{pmatrix}\frac{K}{\sqrt{t^{2\mu}+\ut^{2}}} &0 &0 \\
    0 & \frac{t^{2\mu}+(1-K)\ut^{2}}{(t^{2\mu}+\ut^{2})^{\frac{5}{2}}}\ut^{2} & 0 \\ 0 & 0 & \delta^{P}_{Q} \end{pmatrix}, \\
\label{eqn:McalSigma_defn}
    \mathcal{M}^{\Sigma} &= \begin{pmatrix} \frac{-K\ut}{t^{2\mu}+\ut^{2}}\nuh^{F} & \frac{-Kt^{2\mu}\ut}{(t^{2\mu}+\ut^{2})^{2}}\nuh^{F} & 0\\
    \frac{-Kt^{2\mu}\ut}{(t^{2\mu}+\ut^{2})^{2}}\nuh^{F} & \frac{(-1+2K)t^{2\mu}\ut^{3}+(K-1)\ut^{5}}{(t^{2\mu}+\ut^{2})^{3}}\nuh^{F} & 0 \\ 0 & 0 & 0
    \end{pmatrix}\alpha (f^{\Sigma}_{F}+\delta^{\Sigma}_{F}),
\end{align}
respectively. This completes the derivation of the first differentiated Euler equations. 

\subsubsection{Second Derivative Sub-System}
It remains to derive  the second differentiated Euler equations. We start by spatially differentiating \eqref{eqn:Euler_Fuchsian_1} twice to get
\begin{align*}
\del_{t}\del_{\Gamma}\del_{\Omega}W &+ \del_{\Gamma}\del_{\Omega}\mathbf{C}^{\Sigma}\del_{\Sigma}W + \del_{\Omega}\mathbf{C}^{\Sigma}\del_{\Gamma}\del_{\Sigma}W + \del_{\Gamma}\mathbf{C}^{\Sigma}\del_{\Omega}\del_{\Sigma}W + \mathbf{C}^{\Sigma}\del_{\Gamma}\del_{\Omega}\del_{\Sigma}W \nonumber \\
&= -\frac{\mu}{t}\Pi\del_{\Gamma}\del_{\Omega} W  + t^{2\mu-1}\del_{\Gamma}\del_{\Omega}\mathbf{S} +  \del_{\Gamma}\del_{\Omega}\mathbf{H}.
\end{align*}
Then using the variables $\overline{W}_{\Omega}^{(\afrak)}$ and $\overline{W}_{\Omega\Sigma}$, we can write this equation as 
\begin{align*}
    \del_{t}\overline{W}_{\Gamma\Omega} & + \mathbf{C}^{\Sigma}\del_{\Sigma}\overline{W}_{\Gamma\Omega} + \del_{\Gamma}\del_{\Omega}\mathbf{C}^{\Sigma}\Pi\overline{W}_{\Sigma} + t^{\mu}\del_{\Gamma}\del_{\Omega}\mathbf{C}^{\Sigma}\Pi^{\perp}\overline{W}_{\Sigma}+ \del_{\Omega}\mathbf{C}^{\Sigma}\overline{W}_{\Gamma\Sigma} + \del_{\Gamma}\mathbf{C}^{\Sigma}\overline{W}_{\Omega\Sigma} \nonumber \\
&= \frac{\mu}{t}\Pi^{\perp}\overline{W}_{\Gamma\Omega}  + t^{3\mu-1}\del_{\Gamma}\del_{\Omega}\mathbf{S} +  t^{\mu}\del_{\Gamma}\del_{\Omega}\mathbf{H}.
\end{align*}
Multiplying the above by $C^{0}$ \eqref{eqn:A0_final}, we then obtain the following symmetric hyperbolic formulation of the second differentiated Euler equations: 
\begin{align}
\label{eqn:Fuchsian_2nd_deriv}
    C^{0}\del_{t}\overline{W}_{\Gamma\Omega} + C^\Sigma\del_{\Sigma}\overline{W}_{\Gamma\Omega} &=- C^{0}\del_{\Gamma}\del_{\Omega}\mathbf{C}^{\Sigma}\Pi\overline{W}_{\Sigma} - C^{0}t^{\mu}\del_{\Gamma}\del_{\Omega}\mathbf{C}^{\Sigma}\Pi^{\perp}\overline{W}_{\Sigma} \nonumber \\
    &- C^{0}\del_{\Omega}\mathbf{C}^{\Sigma}\overline{W}_{\Gamma\Sigma} - C^{0}\del_{\Gamma}\mathbf{C}^{\Sigma}\overline{W}_{\Omega\Sigma} + \frac{\mu}{t}C^{0}\Pi^{\perp}\overline{W}_{\Gamma\Omega}  \nonumber \\
    &+ t^{3\mu-1}C^{0}\del_{\Gamma}\del_{\Omega}\mathbf{S} +  t^{\mu}C^{0}\del_{\Gamma}\del_{\Omega}\mathbf{H},
\end{align}
We further further note that \eqref{eqn:Euler_Fuchsian_2} can be viewed as a ODE when expressed in terms of $\overline{W}_{\Omega}^{(\afrak)}$:
\begin{align}
\label{eqn:Fuchsian_noderiv}
\del_{t}\overline{W} &+ \Pi\mathbf{C}^{\Sigma}\Pi\overline{W}_{\Sigma}+t^{\mu}\Pi\mathbf{C}^{\Sigma}\Pi^{\perp}\overline{W}_{\Sigma} + t^{-\mu}\Pi^{\perp}\mathbf{C}^{\Sigma}\Pi\overline{W}_{\Sigma} + \Pi^{\perp}\mathbf{C}^{\Sigma}\Pi^{\perp}\overline{W}_{\Sigma} \nonumber \\
&= t^{2\mu-1}\Pi^{\perp}\mathbf{S} + \Pi^{\perp}\mathbf{H} + t^{\mu}\Pi \mathbf{H}.
\end{align}

\section{The Complete Einstein-Euler System\label{sec:complete_system}}
The complete Einstein-Euler system, which we will use to establish global existence, is obtained by combining equations \eqref{eqn:SH_Einstein_NoDerivs}, \eqref{eqn:Einstein_1stderiv}, \eqref{eqn:Einstein_2ndderiv}, \eqref{eqn:Fuchsian_1st_deriv}, \eqref{eqn:Fuchsian_2nd_deriv}, and \eqref{eqn:Fuchsian_noderiv}. Upon reintroducing the fluid indices $(\afrak)$, with $\afrak = 1, 2$, the complete system can be expressed in Fuchsian form as follows:
\begin{align}
\label{eqn:EinsteinEuler_NoDeriv}
    \mathscr{A}^{0}  \del_{t}\begin{pmatrix} V \\ \overline{W}_{(1)} \\ \overline{W}_{(2)} \end{pmatrix} &+ \mathscr{A}^{\Sigma} \del_{\Sigma} \begin{pmatrix} V \\ \overline{W}_{(1)} \\ \overline{W}_{(2)}\end{pmatrix} = \frac{1}{t}\mathscr{D}\mathbb{P}_{1}\begin{pmatrix} V \\ \overline{W}_{(1)} \\ \overline{W}_{(2)}\end{pmatrix} +\mathbf{Q}, \\
\label{eqn:EinsteinEuler_1stDeriv}
    \mathscr{B}^{0} \del_{t} \begin{pmatrix} V_{\Omega} \\ \overline{W}^{(1)}_{\Omega} \\ \overline{W}^{(2)}_{\Omega}\end{pmatrix} &+ \mathscr{B}^{\Sigma} \del_{\Sigma} \begin{pmatrix} V_{\Omega} \\ \overline{W}^{(1)}_{\Omega} \\ \overline{W}^{(2)}_{\Omega}\end{pmatrix}  = \frac{1}{t}\mathscr{D}\mathbb{P}_{1}\begin{pmatrix} V_{\Omega} \\ \overline{W}^{(1)}_{\Omega} \\ \overline{W}^{(2)}_{\Omega}\end{pmatrix} + \mathbf{R}, \\
\label{eqn:EinsteinEuler_2ndDeriv}
   \mathscr{C}^{0} \del_{t} \begin{pmatrix} V_{\Gamma\Omega} \\ \overline{W}^{(1)}_{\Gamma\Omega} \\\overline{W}^{(2)}_{\Gamma\Omega} \end{pmatrix} &+ \mathscr{C}^{\Sigma} \del_{\Sigma}  \begin{pmatrix} V_{\Gamma\Omega} \\ \overline{W}^{(1)}_{\Gamma\Omega}\\\overline{W}^{(2)}_{\Gamma\Omega}  \end{pmatrix} = \frac{1}{t}\mathscr{E}\mathbb{P}_{2}\begin{pmatrix} V_{\Gamma\Omega} \\ \overline{W}^{(1)}_{\Gamma\Omega} \\ \overline{W}^{(2)}_{\Gamma\Omega} \end{pmatrix} + \mathbf{G},
\end{align}
where the matrix coefficients are defined by
\begin{align}
\label{eqn:A_scr_0}
\mathscr{A}^{0} &= \begin{pmatrix} B^{0} & 0  & 0\\ 0 & \mathbbm{1} & 0\\ 0 & 0 & \mathbbm{1} \end{pmatrix}, \\
\label{eqn:A_scr_Sigma}
\mathscr{A}^{\Sigma} &= \begin{pmatrix} -\alpha B^{\Sigma} & 0 & 0\\ 0 & 0 & 0 \\ 0&0&0\end{pmatrix}, \\
\label{eqn:D_scr}
\mathscr{D} &= \begin{pmatrix} \alpha \mathcal{B} & 0 & 0\\ 0 & \mathbbm{1} & 0 \\ 0 & 0 & \mathbbm{1}\end{pmatrix} , \\
\label{eqn:B_scr_0}
\mathscr{B}^{0} &= \begin{pmatrix} B^{0} & 0 & 0\\ 0 & \mathcal{M}_{(1)}^{0} & 0 \\0 & 0 & \mathcal{M}_{(2)}^{0}\end{pmatrix}, \\
\label{eqn:B_scr_Sigma}
\mathscr{B}^{\Sigma} &=\begin{pmatrix}  -\alpha B^{\Sigma} & 0 & 0 \\ 0 & \mathcal{M}^{\Sigma}_{(1)} & 0 \\ 0 & 0 & \mathcal{M}^{\Sigma}_{(2)}\end{pmatrix}, \\
\label{eqn:C_scr_0}
\mathscr{C}^{0} &=\begin{pmatrix} B^{0} & 0 & 0\\ 0 & C^{0}_{(1)} & 0 \\ 0 &0 &C^{0}_{(2)}\end{pmatrix} , \\
\label{eqn:C_scr_Sigma}
\mathscr{C}^{\Sigma} &=\begin{pmatrix}  -\alpha B^{\Sigma} & 0 & 0\\ 0 &  C_{(1)}^{\Sigma} & 0 \\ 0 & 0 &  C_{(2)}^{\Sigma} \end{pmatrix}, \\
\label{eqn:E_scr}
\mathscr{E} &= \begin{pmatrix} 2\mathcal{P} + \psi B^{0} & 0 & 0\\ 0 & \mu_{(1)} C^{0}_{(1)} & 0 \\ 0 & 0 & \mu_{(2)} C^{0}_{(2)} \end{pmatrix},
\end{align}
and the solution vectors are given by
\begin{equation}
\begin{aligned}
\label{eqn:Fuchsian_Solution_Vector_Defns}
    V &= (\Hct, \alphat, \dot{U}_{P}, A_{P}, N_{PI}, \Sigma_{PI}, f_{P}^{\Sigma})^{\text{T}}, \\
    V_{\Sigma} &= \del_{\Sigma}V, \\
    V_{\Omega\Sigma} &= t^{\psi}\del_{\Omega}\del_{\Sigma}V, \\
    \overline{W}_{(\afrak)} &= (\zeta_{(\afrak)}, \; \uh_{(\afrak)}, \; t^{\mu_{(\afrak)}}w^{(\afrak)}_{P})^{T}, \\
    \overline{W}^{(\afrak)}_{\Sigma} &= (\del_{\Sigma}\zeta_{(\afrak)}, \; \del_{\Sigma}\uh_{(\afrak)}, \; t^{\mu_{(\afrak)}}\del_{\Sigma}w^{(\afrak)}_{P})^{T}, \\
    \overline{W}^{(\afrak)}_{\Omega\Sigma} &= (t^{\mu_{(\afrak)}}\del_{\Omega}\del_{\Sigma}\zeta_{(\afrak)}, \; t^{\mu_{(\afrak)}}\del_{\Omega}\del_{\Sigma}\uh_{(\afrak)}, \; t^{\mu_{(\afrak)}}\del_{\Omega}\del_{\Sigma}w^{(\afrak)}_{P})^{T}.
\end{aligned}
\end{equation}
In the above, the matrices $ \mathbb{P}_{1}$, $\mathbb{P}_{1}^{\perp}$,  $\mathbb{P}_{1}$ and $\mathbb{P}_{1}^{\perp}$ are  
projection operators defined by
\begin{gather}
\label{eqn:Pbb1_Def}
   \mathbb{P}_{1} =  \begin{pmatrix}\mathcal{P} & 0 & 0 \\ 0 & 0 & 0 \\ 0 & 0 & 0 \end{pmatrix}, \quad \mathbb{P}_{1}^{\perp} = \mathbbm{1} - \mathbb{P}_{1},\\ 
\label{eqn:Pbb2_Def}
   \mathbb{P}_{2} = \begin{pmatrix}\mathbbm{1} & 0 & 0 \\ 0 & \Pi^{\perp} & 0 \\ 0 & 0 & \Pi^{\perp}  \end{pmatrix}, \quad \mathbb{P}_{2}^{\perp} = \mathbbm{1} - \mathbb{P}_{2},
\end{gather}
while the source terms $\mathbf{Q}$, $\mathbf{R}$, and $\mathbf{G}$ are defined by
\begin{align}
\label{eqn:Qbf_Source_Def}
   \mathbf{Q} &= \bigl(\mathbf{Q}_{0},\mathbf{Q}_{(1)},\mathbf{Q}_{(2)}\bigr)^T, \\
\label{eqn:Rbf_Source_Def}
   \mathbf{R} &= \bigl(\mathbf{R}_{0}, \mathbf{R}_{(1)}, \mathbf{R}_{(2)}\bigr)^T, \\
\label{eqn:Gbf_Source_Def}
   \mathbf{G} &= \bigl(\mathbf{G}_{0} , \mathbf{G}_{(1)} , \mathbf{G}_{(2)}\bigr)^T, 
\end{align}
where 
\begin{align}
\label{eqn:Qbf_0_defn}
\mathbf{Q}_{0} &= \alpha\mathcal{F}, \\
\label{eqn:Qbf_a_defn}
    \mathbf{Q}_{(\afrak)} &= - \Pi\mathbf{C}_{(\afrak)}^{\Sigma}\Pi\overline{W}^{(\afrak)}_{\Sigma}-t^{\mu_{(\afrak)}}\Pi\mathbf{C}_{(\afrak)}^{\Sigma}\Pi^{\perp}\overline{W}^{(\afrak)}_{\Sigma} - t^{-\mu_{(\afrak)}}\Pi^{\perp}\mathbf{C}_{(\afrak)}^{\Sigma}\Pi\overline{W}^{(\afrak)}_{\Sigma} \nonumber \\ &- \Pi^{\perp}\mathbf{C}_{(\afrak)}^{\Sigma}\Pi^{\perp}\overline{W}^{(\afrak)}_{\Sigma} + t^{2\mu_{(\afrak)}-1}\Pi^{\perp}\mathbf{S}_{(\afrak)} + \Pi^{\perp}\mathbf{H}_{(\afrak)} + t^{\mu_{(\afrak)}}\Pi \mathbf{H}_{(\afrak)}, \\
\label{eqn:Rbf_0_defn}
    \mathbf{R}_{0} &= B^{0}\del_{\Omega}\mathbf{B}^{\Sigma}V_{\Sigma} + B^{0}\del_{\Omega}\mathbf{F}, \\
\label{eqn:Rbf_a_defn}
    \mathbf{R}_{(\afrak)} &= -\Pi\del_{\Omega}\mathbf{C}_{(\afrak)}^{\Sigma}\Pi\overline{W}^{(\afrak)}_{\Sigma} -t^{\mu_{(\afrak)}}\Pi\del_{\Omega}\mathbf{C}_{(\afrak)}^{\Sigma}\Pi^{\perp}\overline{W}^{(\afrak)}_{\Sigma} -\Pi\mathbf{C}_{(\afrak)}^{\Sigma}\overline{W}^{(\afrak)}_{\Omega\Sigma} \nonumber \\
    &- t^{-\mu_{(\afrak)}}\Pi^{\perp}C_{(\afrak)}^{0}\Pi^{\perp}\mathbf{C}_{(\afrak)}^{\Sigma}\Pi\overline{W}^{(\afrak)}_{\Omega\Sigma} - \Pi^{\perp}C_{(\afrak)}^{0}\Pi^{\perp}\del_{\Omega}\mathbf{C}_{(\afrak)}^{\Sigma}\Pi^{\perp}\overline{W}^{(\afrak)}_{\Sigma} \nonumber \\
    &- t^{-\mu_{(\afrak)}}\Pi^{\perp}C_{(\afrak)}^{0}\Pi^{\perp}\del_{\Omega}\mathbf{C}_{(\afrak)}^{\Sigma}\Pi\overline{W}^{(\afrak)}_{\Sigma} +t^{2\mu_{(\afrak)}-1}\Pi^{\perp}C_{(\afrak)}^{0}\Pi^{\perp}\del_{\Omega}\mathbf{S}_{(\afrak)} 
    \nonumber \\
    &+  \Pi^{\perp}C_{(\afrak)}^{0}\Pi^{\perp}\del_{\Omega}\mathbf{H}_{(\afrak)} +t^{\mu_{(\afrak)}}\Pi\del_{\Omega}\mathbf{H}_{(\afrak)},\\
\label{eqn:Gbf_0_defn}
    \mathbf{G}_{0} &=  B^{0}\del_{\Gamma}\mathbf{B}^{\Sigma}V_{\Omega\Sigma} + B^{0}\del_{\Omega}\mathbf{B}^{\Sigma}V_{\Gamma\Sigma} + t^{\psi}B^{0}\del_{\Gamma}\del_{\Omega}\mathbf{B}^{\Sigma}V_{\Sigma} + t^{\psi}B^{0}\del_{\Gamma}\del_{\Omega}\mathbf{F}, \\
\label{eqn:Gbf_a_defn}
    \mathbf{G}_{(\afrak)} &=- C^{0}_{(\afrak)}\del_{\Gamma}\del_{\Omega}\mathbf{C}_{(\afrak)}^{\Sigma}\Pi\overline{W}^{(\afrak)}_{\Sigma} - C^{0}_{(\afrak)}t^{\mu_{(\afrak)}}\del_{\Gamma}\del_{\Omega}\mathbf{C}_{(\afrak)}^{\Sigma}\Pi^{\perp}\overline{W}^{(\afrak)}_{\Sigma}- C^{0}_{(\afrak)}\del_{\Omega}\mathbf{C}_{(\afrak)}^{\Sigma}\overline{W}^{(\afrak)}_{\Gamma\Sigma} \nonumber \\
    &- C^{0}_{(\afrak)}\del_{\Gamma}\mathbf{C}_{(\afrak)}^{\Sigma}\overline{W}^{(\afrak)}_{\Omega\Sigma}  + t^{3\mu_{(\afrak)}-1}C_{(\afrak)}^{0}\del_{\Gamma}\del_{\Omega}\mathbf{S}_{(\afrak)} +  t^{\mu_{(\afrak)}}C^{0}_{(\afrak)}\del_{\Gamma}\del_{\Omega}\mathbf{H}_{(\afrak)}.
\end{align}

For subsequent arguments, we find it advantageous to decompose $\mathbf{Q}$ as  
\begin{align}
\label{eqn:Qbfsource_Background_Error_Split}
    \mathbf{Q} &= \mathbf{Q}_{H} + \mathbf{Q}_{E},
\intertext{where}
\label{eqn:Qbf_H_defn}
    \mathbf{Q}_{H} &= \bigl(\bar{\mathbf{Q}},  \mathbf{Q}_{(1)}, \mathbf{Q}_{(2)}\bigr)^T,
    \intertext{and}
\label{eqn:Qbf_E_defn}
\mathbf{Q}_{E} &= \bigl(\hat{\mathbf{Q}} , 0 , 0\bigr).
\end{align}
Here, $\mathbf{Q}_{H}$ denotes the terms that vanish on the homogeneous background $(V,\overline{W}_{(\afrak)})=(0,0)$,  where $\hat{\mathbf{Q}}$ and $\bar{\mathbf{Q}}$ are defined by
\begin{align}
\label{eqn:Qbf_hat_defn}
    \hat{\mathbf{Q}} &= \mathbf{Q}_{0}|_{(V,\overline{W}_{(\afrak)})=(0,0)} = (\alpha\mathcal{F})|_{(V,\overline{W}_{(\afrak)})=(0,0)},
    \intertext{and}
\label{eqn:Qbf_bar_defn}
    \bar{\mathbf{Q}} &= \mathbf{Q}_{0} - \hat{\mathbf{Q}},
\end{align}
respectively.
Observe that $\bar{\mathbf{Q}}$ vanishes on the homogeneous background defined by $(V,\overline{W}_{(\afrak)})=(0,0)$. It is clear from  \eqref{eqn:Fcal_components_Final}  that the only terms of $\mathcal{F}$ that do not vanish on the homogeneous background are those that involve fluid stress-energy terms. Thus, $\hat{\mathbf{Q}}$ only contains terms involving the fluid variables and can be decomposed as follows
\begin{align*}
 \hat{\mathbf{Q}} = \hat{\mathbf{Q}}^{(1)} +\hat{\mathbf{Q}}^{(2)}.
\end{align*} 
The components of $\hat{\mathbf{Q}}^{(\afrak)}$ are given by
\begin{align}
\label{eqn:Qbf_hat_components}
\hat{\mathbf{Q}}^{(\afrak)}_{1} &= \sqrt{\frac{3}{2\Lambda}}\Bigg[-\frac{1}{3}(\frac{a}{2}+c+6c\theta)\frac{t^{2}(t^{2\mu_{(\afrak)}}+(1+K_{(\afrak)})e^{2\frg_{(\afrak)}})\rho_{0}}{(t^{2\mu_{(\afrak)}}+e^{2\frg_{(\afrak)}})^{\frac{K_{(\afrak)}+1}{2}}} \nonumber  \\
&- \frac{1}{3}(\frac{a}{2}+c)\Big(\frac{t^{2}(1+K_{(\afrak)})e^{2\frg_{(\afrak)}}\rho_{0}}{(t^{2\mu_{(\afrak)}}+e^{2\frg_{(\afrak)}})^{\frac{K_{(\afrak)}+1}{2}}}\xi^{(\afrak)}_{A}\xi_{(\afrak)}^{A} + \frac{3K_{(\afrak)}\rho_{0}t^{\frac{4K_{(\afrak)}}{1-K_{(\afrak)}}}}{(t^{2\mu_{(\afrak)}}+e^{2\frg_{(\afrak)}})^{\frac{K_{(\afrak)}+1}{2}}}\Big)\Bigg], \nonumber \\
\hat{\mathbf{Q}}^{(\afrak)}_{2} &= \sqrt{\frac{3}{2\Lambda}}\Bigg[-\frac{1}{3}(\frac{c}{2}+b+6b\theta)\frac{t^{2}(t^{2\mu_{(\afrak)}}+(1+K_{(\afrak)})e^{2\frg_{(\afrak)}})\rho_{0}}{(t^{2\mu_{(\afrak)}}+e^{2\frg_{(\afrak)}})^{\frac{K_{(\afrak)}+1}{2}}} \nonumber \\
&- \frac{1}{3}(\frac{c}{2}+b)\Big(\frac{t^{2}(1+K_{(\afrak)})e^{2\frg_{(\afrak)}}\rho_{0}}{(t^{2\mu_{(\afrak)}}+e^{2\frg_{(\afrak)}})^{\frac{K_{(\afrak)}+1}{2}}}\xi^{(\afrak)}_{A}\xi_{(\afrak)}^{A} + \frac{3K_{(\afrak)}\rho_{0}t^{\frac{4K_{(\afrak)}}{1-K_{(\afrak)}}}}{(t^{2\mu_{(\afrak)}}+e^{2\frg_{(\afrak)}})^{\frac{K_{(\afrak)}+1}{2}}}\Big)\Bigg], \\
\hat{\mathbf{Q}}^{(\afrak)}_{3} &= \sqrt{\frac{3}{2\Lambda}}\Bigg[-(d\lambda+g\gamma)\frac{t^{2}(1+K_{(\afrak)})e^{\frg_{(\afrak)}}\rho_{0})}{(t^{2\mu_{(\afrak)}}+e^{2\frg_{(\afrak)}})^{\frac{K_{(\afrak)}}{2}}}\xi^{(\afrak)}_{A}\Bigg], \nonumber \\
\hat{\mathbf{Q}}^{(\afrak)}_{4} &= \sqrt{\frac{3}{2\Lambda}}\Bigg[-(g\lambda+\gamma)\frac{t^{2}(1+K_{(\afrak)})e^{\frg_{(\afrak)}}\rho_{0})}{(t^{2\mu_{(\afrak)}}+e^{2\frg_{(\afrak)}})^{\frac{K_{(\afrak)}}{2}}}\xi^{(\afrak)}_{A}\Bigg], \nonumber \\
\hat{\mathbf{Q}}^{(\afrak)}_{5} &= 0, \nonumber \\
\hat{\mathbf{Q}}^{(\afrak)}_{6} &= \sqrt{\frac{3}{2\Lambda}}\Bigg[\frac{t^{2}(1+K_{(\afrak)})e^{2\frg_{(\afrak)}}\rho_{0}}{(t^{2\mu_{(\afrak)}}+e^{2\frg_{(\afrak)}})^{\frac{K_{(\afrak)}+1}{2}}}\xi^{(\afrak)}_{A}\xi^{(\afrak)}_{B} - \frac{t^{2}(1+K_{(\afrak)})e^{2\frg_{(\afrak)}}\rho_{0}}{3(t^{2\mu_{(\afrak)}}+e^{2\frg_{(\afrak)}})^{\frac{K_{(\afrak)}+1}{2}}}\xi^{(\afrak)}_{C}\xi_{(\afrak)}^{C}\eta_{AB}\Bigg], \nonumber \\
\hat{\mathbf{Q}}^{(\afrak)}_{7} &= 0, \nonumber
\end{align}
where in deriving these we have used the definitions of $\alpha$ \eqref{eqn:lapse_Hct_identity} and $\mathcal{F}$ \eqref{eqn:Fcal_components_Final} and the decomposition of the stress-energy tensor in terms of the variables $\overline{W}_{(\afrak)}$ from Appendix  \eqref{eqn:Tab_Decomp_Wbar_Vars_app}.

\section{Coefficient Analysis}
In order to apply the Fuchsian global existence theory from \cite{BOOS:2021} to the system \eqref{eqn:EinsteinEuler_NoDeriv}-\eqref{eqn:EinsteinEuler_2ndDeriv}, we
need to verify that its coefficients, e.g. $\mathscr{A}^{0}$, $\mathscr{A}^{\Sigma}$, $\mathscr{D}$, $\mathbf{Q}$, etc.,  satisfy the conditions from \cite{BOOS:2021}*{\S 3.4}. Once that is done, we can use Theorem 3.8 from  \cite{BOOS:2021}*{\S 3.4} to obtain a global existence and stability result.  

\subsection{Verifying Coefficient Assumptions}
\label{sec:Coefficient_Assumptions}

\subsubsection{Undifferentiated Sub-System}
\label{sec:CoefficientAnalysis_Undifferentiated}
We begin the verification of the coefficient conditions by first considering the undifferentiated sub-system \eqref{eqn:EinsteinEuler_NoDeriv}.  From \eqref{eqn:Bcal_Einstein} and \eqref{eqn:P_projector_def}, it is simple to verify that
\begin{align*}
    [\mathcal{P},\mathcal{B}] = 0.
\end{align*}
Using this, with \eqref{eqn:D_scr} and \eqref{eqn:Pbb1_Def}, we see that $\mathscr{D}$  satisfies 
\begin{align*}
    [\mathscr{D},\mathbb{P}_{1}] = 0.
\end{align*}

Next, by \eqref{eqn:lapse_Hct_identity}, \eqref{eqn:Bcal_Einstein}-\eqref{eqn:Pcal_perp_defn}, and \eqref{eqn:D_scr}, we notice that $\mathscr{D}$ is a matrix-valued map that depends on the gravitational variables \eqref{eqn:Fuchsian_Solution_Vector_Defns} as follows
\begin{align*}
    \mathscr{D} = \mathscr{D}(t,\mathcal{P}^{\perp}V)
\end{align*}
where for any fixed $R_0>0$ there exists a positive constant $r>0$ such that, for any $R\in (0,R_0]$, $\mathscr{D}$ is smooth on the domain 
\begin{align}
\label{eqn:Domain_Dscr}
    (-r,2) \times B_{R}(\mathbb{R}^{26})
\end{align}
and satisfies
\begin{align*}
    \mathscr{D} \gtrsim\mathbbm{1}.
\end{align*}
From \eqref{eqn:lapse_Hct_identity}, \eqref{eqn:pi_projection_tracefree}, \eqref{eqn:Scal_operator_defn}, \eqref{eqn:BSigma_Einstein}, \eqref{eqn:BSigma_Components}, \eqref{eqn:Symmetrisation_parameter_values}, \eqref{eqn:P_projector_def}-\eqref{eqn:Pcal_perp_defn}, \eqref{eqn:A_scr_0}-\eqref{eqn:A_scr_Sigma}, and \eqref{eqn:Fuchsian_Solution_Vector_Defns}, it is also clear the matrices
\begin{align*}
    \mathscr{A}^{\Sigma} = \mathscr{A}^{\Sigma}(t,\mathcal{P}^{\perp}V)
\end{align*}
are smooth and symmetric  on the domain \eqref{eqn:Domain_Dscr} and $\mathscr{A}^{0}$ is a constant, symmetric matrix satisfying
\begin{align*}
    \mathscr{A}^{0} \gtrsim \mathbbm{1}.
\end{align*} 
Additionally, from \eqref{eqn:B0_Einstein}, \eqref{eqn:P_projector_def}-\eqref{eqn:Pcal_perp_defn}, \eqref{eqn:A_scr_0}, and \eqref{eqn:Pbb1_Def}, it is straightforward to verify that
\begin{align*}
\mathbb{P}_{1}^{\perp}\mathscr{A}^{0}\mathbb{P}_{1} = \mathbb{P}_{1}\mathscr{A}^{0}\mathbb{P}_{1}^{\perp} = 0.
\end{align*}

Clearly, $\del_{t}(\mathscr{A}^{0}) = 0$
and differentiating $\mathscr{A}^{\Sigma}$ spatially yields 
\begin{align*}
    \del_{\Sigma}\big(\mathscr{A}^{\Sigma}(t,\mathcal{P}^{\perp}V)\big) = D_{2}\mathscr{A}^{\Sigma}(t,\mathcal{P}^{\perp}V)\cdot \del_{\Sigma}(\mathcal{P}^{\perp}V).
\end{align*}
Using these,  we find that the matrix divergence, as defined in \cite{BOOS:2021}*{\S 3.1}, is given by
\begin{align}
\label{eqn:Div_scrA}
\text{Div}\mathscr{A}(t,V,V_\Sigma) = D_{2}\mathscr{A}^{\Sigma}(t,\mathcal{P}^{\perp}V)\cdot \mathcal{P}^{\perp}V_\Sigma.
\end{align}
Moreover, from the smoothness of the matrices $\mathcal{A}^\Sigma$ and \eqref{eqn:Div_scrA}, we deduce  
\begin{align*}
  \bigl|\text{Div}\mathscr{A}\bigr|
  \lesssim 1
\end{align*}
for $(t,V,V_\Sigma)\in (0,1]\times B_R(\mathbb{R}^{35})\times B_R(\mathbb{R}^{35\times 3})$. \newline \par

Now, let us consider the source term $\mathbf{Q}_{E}$ given by \eqref{eqn:Qbf_E_defn}. 
Recalling that $\mu_{(\afrak)}>0$ and $4 K_{(\afrak)}/(1-K_{(\afrak)})>0$, we observe from \eqref{eqn:Qbf_hat_defn}, \eqref{eqn:Qbf_hat_components},  and Proposition \ref{prop:Asymptotic_ODE} that $\mathbf{Q}_{E} \in C^0\bigl([0,1],\mathbb{R}^{35}\times \mathbb{R}^{5}\times\mathbb{R}^{5}\bigr)$   
and 
\begin{align}
\label{eqn:Qbf_E_Bound}
\bigl|\mathbf{Q}_{E}(t) \bigr| \lesssim t^{2}
\end{align}
for $t\in (0,1]$. \newline \par 

It remains to analyse the source term $\mathbf{Q}_{H}$ given by \eqref{eqn:Qbf_H_defn}. We start by analysing the first component $\bar{\mathbf{Q}}$ defined above by \eqref{eqn:Qbf_bar_defn}. Setting
\begin{align}
\label{eqn:tbar_ttilde_defns}
    \bar{t}_{(\afrak)} = t^{\mu_{(\afrak)}}, \;\; \tilde{t}_{(\afrak)}  = t^{\frac{4K_{(\afrak)} }{1-K_{(\afrak)}}},
\end{align} 
we see from   \eqref{eqn:Hct_Alphat_defs}, \eqref{eqn:Hct_identities}-\eqref{eqn:lapse_Hct_identity}, \eqref{eqn:pi_projection_tracefree}, \eqref{eqn:Scal_operator_defn}, \eqref{eqn:Symmetrisation_parameter_values}, \eqref{eqn:f_modified_frame_defn}, \eqref{eqn:Fcal_components_Final}, \eqref{eqn:Fuchsian_Solution_Vector_Defns}, \eqref{eqn:uh_w_defs_app}-\eqref{eqn:nuh_w_identity_app}, 
and \eqref{eqn:Tab_Decomp_Wbar_Vars_app} that $\bar{\mathbf{Q}}$ can be viewed as a map 
\begin{align*}
    \bar{\mathbf{Q}} = \bar{\mathbf{Q}}\bigl(t,\bar{t}_{(\afrak)},\tilde{t}_{(\afrak)},g_{(\mathfrak{a})},V,\overline{W}_{(\afrak)}\bigr),
\end{align*}
which is smooth on the domain 
\begin{align*}
(-r,2) \times \prod_{\afrak = 1}^{2}\Bigl(-r,2^{\mu_{(\afrak)}}\Bigr) \times \prod_{\afrak = 1}^{2}\biggl(-r,2^{\frac{4K_{(\afrak)}}{1-K_{(\afrak)}}}\biggr)  
\times B_{R}(\mathbb{R}^{2}) \times B_{R}(\mathbb{R}^{35}) \times B_{R}(\mathbb{R}^{10})
\end{align*}
provided $R_0>0$ is chosen sufficiently small and $R\in (0,R_0]$. {Moreover, $\bar{\mathbf{Q}}$ vanishes for $(V, \overline{W}_{(\afrak)})=0$.
\begin{rem}
In analysis that follows, it will always be possible to choose $R_0>0$ small enough so that the stated smoothness properties of the maps consider below are valid for any choice of $R\in (0,R_0]$ and a $r>0$ that remains fixed throughout.
\end{rem}

Next, we consider source terms $\mathbf{Q}_{(\afrak)}$ defined by \eqref{eqn:Qbf_a_defn}. Using \eqref{eqn:uh_w_defs_app}-\eqref{eqn:nuh_w_identity_app}, it follows from the definitions  \eqref{eqn:Source_ttt_defns},  \eqref{eqn:Pi_def},  \eqref{eqn:Pi_perp_defn},  \eqref{eqn:Abf_def},  \eqref{eqn:Hbf_def}, and \eqref{eqn:Sbf_def} that the maps 
\begin{equation*}
\Pi^{\perp}C_{(\afrak)}^{0}\Pi^{\perp}\mathbf{S}_{(\afrak)}
\end{equation*}
and
\begin{equation*}
\mathbf{Q}_{(\afrak)}-t^{2\mu_{(\afrak)}-1}\Pi^{\perp}C_{(\afrak)}^{0}\Pi^{\perp}\mathbf{S}_{(\afrak)}
\end{equation*} 
are smooth in the the variables $(t,\bar{t}_{(\afrak)} ,\frg_{(\afrak)},V,\overline{W}_{(\afrak)})$ and $(t,\bar{t}_{(\afrak)} ,\frg_{(\afrak)},V,\overline{W}_{(\afrak)},\overline{W}^{(\afrak)}_{\Sigma})$ on the domains
\begin{align*}
(-r,2)  \times \prod_{\afrak = 1}^{2}\bigl(-r,2^{\mu_{(\afrak)}}\bigr) \times B_{R}(\mathbb{R}^{2}) 
\times 
B_{R}(\mathbb{R}^{35}) \times B_{R}(\mathbb{R}^{10}),
\end{align*}
and
\begin{align*}
(-r,2)  \times \prod_{\afrak = 1}^{2}\bigl(-r,2^{\mu_{(\afrak)}}\bigr) \times B_{R}(\mathbb{R}^{2}) 
\times B_{R}(\mathbb{R}^{35}) \times B_{R}(\mathbb{R}^{10}) \times B_{R}(\mathbb{R}^{10 \times 3}),
\end{align*}
respectively. Furthermore, both maps vanish for $\bigl(V, \overline{W}_{(\afrak)}, V_{\Sigma},\overline{W}_{\Sigma}^{(\afrak)}\bigr)=0$.
Hence, from \eqref{eqn:Mu1<Mu2} and the bounds on $\frg_{(\afrak)}$ established in Proposition \ref{prop:Asymptotic_ODE}, we see that $\mathbf{Q}_{H}$  satisfies 
\begin{align}
\label{eqn:Qbf_Bound}
\bigl|\mathbf{Q}_{H}\bigl(t,V,\overline{W}_{(\afrak)}, \overline{W}^{(\afrak)}_{\Sigma}\bigr)\bigr| \lesssim \bigl(t^{2\mu_{(1)}-1} + 1\bigr)\bigr| (V,\overline{W}_{(\afrak)},\overline{W}^{(\afrak)}_{\Sigma})\bigr|
\end{align}
for $\bigl(t,V,\overline{W}_{(\afrak)},\overline{W}^{(\afrak)}_{\Sigma}\bigr)\in (0,1]\times B_{R}(\mathbb{R}^{35}) \times B_{R}(\mathbb{R}^{10})\times B_{R}(\mathbb{R}^{10\times 3})$.

\subsubsection{First Derivative Sub-System}
\label{sec:CoefficientAnalysis_1stDeriv}
Next, we analyse the coefficients of the differentiated sub-system \eqref{eqn:EinsteinEuler_1stDeriv}. From  \eqref{eqn:B0_Einstein}, \eqref{eqn:Symmetrisation_parameter_values},\eqref{eqn:Mcal0_defn}, \eqref{eqn:tbar_ttilde_defns}, and \eqref{eqn:uh_w_defs_app}, it is clear that $\mathscr{B}^{0}$  can be viewed as matrix-valued map 
\begin{align*}
\mathscr{B}^{0} = \mathscr{B}^{0}\bigl(\bar{t}_{(\afrak)}, \frg_{(\afrak)},\overline{W}_{(\afrak)}\bigr)
\end{align*}
that is smooth and symmetric on the domain 
\begin{align*}
 \prod_{\afrak = 1}^{2}\bigl(-r,2^{\mu_{(\afrak)}}\bigr) \times B_{R}(\mathbb{R}^{2}) \times B_{R}(\mathbb{R}^{10})
\end{align*} 
and satisfies
\begin{align*}
    \mathscr{B}^{0}\gtrsim \mathbbm{1}.
\end{align*} 
Additionally, by \eqref{eqn:B0_Einstein}, \eqref{eqn:P_projector_def}-\eqref{eqn:Pcal_perp_defn}, \eqref{eqn:B_scr_0} and \eqref{eqn:Pbb1_Def}, we find that $\mathscr{B}^{0}$ satisfies
\begin{equation*}
\begin{aligned}
    \mathbb{P}_{1}\mathscr{B}^{0}\mathbb{P}_{1}^{\perp} &= \mathbb{P}_{1}^{\perp}\mathscr{B}^{0}\mathbb{P}_{1} = 0. \\
\end{aligned}
\end{equation*}
Furthermore, formally taking the time derivative of $\mathscr{B}^{0}$, we have
\begin{align}   
\label{eqn:Bscr0_time_derivative}
\del_{t}\big(\mathscr{B}^{0}\bigl(t^{\mu_{(\afrak)}},\frg_{(\afrak)}(t),\overline{W}_{(\afrak)}\bigr)\big) 
&= D\mathscr{B}^{0}\bigl(t^{\mu_{(\afrak)}},\frg_{(\afrak)}(t),\overline{W}_{(\afrak)}\bigr)\cdot\begin{pmatrix} \mu_{(\afrak)} t^{\mu_{(\afrak)}-1} \\ \frg^{\prime}_{(\afrak)}(t) \\ \del_{t}\overline{W}_{(\afrak)} \end{pmatrix}
\end{align}
where $\del_{t}\overline{W}_{(\afrak)}$ is determined using the evolution equation \eqref{eqn:EinsteinEuler_NoDeriv}, that is,
\begin{align}
\label{eqn:time_deriv_Wbar_subsystem2_analysis}
    \del_{t}\overline{W}_{(\afrak)} &= \mathbf{Q}_{(\afrak)},
\end{align}
where in deriving this we have used \eqref{eqn:A_scr_0}-\eqref{eqn:D_scr}, \eqref{eqn:Pbb1_Def} and \eqref{eqn:Qbf_Source_Def}. \newline \par

From \eqref{eqn:Hct_Alphat_defs}, \eqref{eqn:Hct_identities}-\eqref{eqn:lapse_Hct_identity},
\eqref{eqn:pi_projection_tracefree}, \eqref{eqn:Scal_operator_defn}, \eqref{eqn:BSigma_Einstein}, \eqref{eqn:BSigma_Components}, 
 \eqref{eqn:Symmetrisation_parameter_values}, \eqref{eqn:P_projector_def}-\eqref{eqn:Pcal_perp_defn}, \eqref{eqn:McalSigma_defn}, \eqref{eqn:Fuchsian_Solution_Vector_Defns}, \eqref{eqn:tbar_ttilde_defns} and \eqref{eqn:uh_w_defs_app}-\eqref{eqn:nuh_w_identity_app}, we observe that the matrices \eqref{eqn:B_scr_Sigma} can be viewed as matrix valued-maps
\begin{align*}
    \mathscr{B}^{\Sigma} = \mathscr{B}^{\Sigma}\bigl(t,\bar{t}_{(\afrak)},\frg_{(\afrak)}, \mathcal{P}^{\perp}V,\overline{W}_{(\afrak)}\bigr)
\end{align*} 
that are smooth and symmetric on the domain
\begin{align*} 
(-r,2)\times  \prod_{\afrak = 1}^{2}(-r,2^{\mu_{(\afrak)}}) \times B_{R}(\mathbb{R}^{2}) \times B_{R}(\mathbb{R}^{26}) \times B_{R}(\mathbb{R}^{10}).
 \end{align*} 
Differentiating  $\mathscr{B}^{\Sigma}$ spatially gives
\begin{align}
\label{eqn:BscrSigma_spatial_derivative}
    \del_{\Sigma}\big(\mathscr{B}^{\Sigma}(t,\bar{t}_{(\afrak)},\frg_{(\afrak)}, \mathcal{P}^{\perp}V,\overline{W}_{(\afrak)})\big) = D\mathscr{B}^{\Sigma}(t,\bar{t}_{(\afrak)},\frg_{(\afrak)}, \mathcal{P}^{\perp}V,\overline{W}_{(\afrak)}) \cdot \begin{pmatrix}0 \\ 0 \\ 0 \\ \mathcal{P}^{\perp}\del_{\Sigma}V \\ \del_{\Sigma}\overline{W}_{(\afrak)} \end{pmatrix}.
\end{align}

Now, the matrix divergence $\text{Div}\mathscr{B}$ is,  cf.~ \eqref{eqn:Bscr0_time_derivative}, \eqref{eqn:time_deriv_Wbar_subsystem2_analysis} and \eqref{eqn:BscrSigma_spatial_derivative}, the matrix-valued map defined by
\begin{align*}
\text{Div}\mathscr{B}\bigl(t,V,\overline{W}_{(\afrak)},V_\Sigma,\overline{W}^{(\afrak)}_{\Sigma})\bigr)=&
D\mathscr{B}^{0}\bigl(t^{\mu_{(\afrak)}},\frg_{(\afrak)}(t),\overline{W}_{(\afrak)}\bigr)\cdot\begin{pmatrix} \mu_{(\afrak)} t^{\mu_{(\afrak)}-1} \\ \frg^{\prime}_{(\afrak)}(t) \\ \mathbf{Q}_{(\afrak)}\bigl(t,t^{\mu_{(\afrak)}},g_{(\mathfrak{a})}(t),V,\overline{W}_{(\afrak)}\bigr) \end{pmatrix} \notag \\
&
+ D\mathscr{B}^{\Sigma}(t,t^{\mu_{(\afrak)}},\frg_{(\afrak)}(t), \mathcal{P}^{\perp}V,\overline{W}_{(\afrak)}) \cdot \begin{pmatrix}0 \\ 0 \\ 0 \\ \mathcal{P}^{\perp}V_{\Sigma} \\ \overline{W}^{(\afrak)}_\Sigma \end{pmatrix}.
\end{align*}
From \eqref{eqn:Mu1<Mu2}, the smoothness of the maps  $\mathscr{B}^{0}$,  $\mathscr{B}^{\Sigma}$, and $\mathbf{Q}_{(\afrak)}$, the bound on the map $\mathbf{Q}_{(\afrak)}$ from \eqref{eqn:Qbf_H_defn} and $\eqref{eqn:Qbf_Bound}$, and the bounds on $\frg_{(\afrak)}(t)$ and $\frg_{(\afrak)}'(t)$ from Proposition \ref{prop:Asymptotic_ODE}, we deduce that $\text{Div}\mathscr{B}$ is bounded by 
\begin{align}
\label{eqn:DivBscr_Bound}
    \bigl|\text{Div}\mathscr{B}\bigl(t, V,\overline{W}_{(\afrak)},V_\Sigma,\overline{W}^{(\afrak)}_{\Sigma}\bigr)\bigr| \lesssim \bigl(t^{\mu_{(1)}-1}+1\bigr)\bigl|\bigl(V,\overline{W}_{(\afrak)},V_{\Omega},\overline{W}^{(\afrak)}_{\Omega}\bigr)\bigr|
\end{align}
for $\bigl(t,V,\overline{W}_{(\afrak)},V_\Sigma,\overline{W}^{(\afrak)}_{\Sigma}\bigr)\in (0,1]\times B_R(\mathbb{R}^{35})\times B_R(\mathbb{R}^{10})\times B_R(\mathbb{R}^{35\times 3})\times B_R(\mathbb{R}^{10\times 3})$. \newline \par

Now, we consider the source term $\mathbf{R}$ from the evolution equation \eqref{eqn:EinsteinEuler_1stDeriv}. First, by \eqref{eqn:Hct_Alphat_defs}, \eqref{eqn:Hct_identities}-\eqref{eqn:lapse_Hct_identity}, \eqref{eqn:pi_projection_tracefree}, \eqref{eqn:f_modified_frame_defn}, \eqref{eqn:Scal_operator_defn}, \eqref{eqn:B0_Einstein}-\eqref{eqn:BSigma_Einstein}, \eqref{eqn:Symmetrisation_parameter_values}, \eqref{eqn:Fcal_components_Final},  \eqref{eqn:Grav_bf_Variables_Defn}, \eqref{eqn:Fuchsian_Solution_Vector_Defns}, \eqref{eqn:tbar_ttilde_defns}, and \eqref{eqn:Tab_Decomp_Wbar_Vars_app}, it is straightforward to verify that the first component \eqref{eqn:Rbf_0_defn} of $\mathbf{R}$ can be viewed as a map
\begin{align*}
\mathbf{R}_{0} = \mathbf{R}_{0}\bigl(t,\bar{t}_{(\afrak)},\tilde{t}_{(\afrak)},g_{(\mathfrak{a})},V,\overline{W}_{(\afrak)}, V_{\Omega}, \overline{W}^{(\afrak)}_{\Omega}\bigr)
\end{align*}
that is smooth on the domain
\begin{align*}
 (-r,2)& \times \prod_{\afrak = 1}^{2}\bigl(-r,2^{\mu_{(\afrak)}}\bigr) \times \prod_{\afrak = 1}^{2}\biggl(-r,2^{\frac{4K_{(\afrak)}}{1-K_{(\afrak)}}}\biggr) \times B_{R}(\mathbb{R}^{2}) \nonumber \\
&\times B_{R}(\mathbb{R}^{35}) 
\times B_{R}(\mathbb{R}^{10}) \times B_{R}(\mathbb{R}^{35\times 3}) \times B_{R}(\mathbb{R}^{10\times 3}),
\end{align*}
and vanishes for $ \bigl(V, \overline{W}_{(\afrak)}, V_{\Sigma},\overline{W}_{\Sigma}^{(\afrak)}\bigr)=0$.
Similarly, from \eqref{eqn:Hct_Alphat_defs},\eqref{eqn:Hct_identities}-\eqref{eqn:lapse_Hct_identity}, \eqref{eqn:Source_ttt_defns}, \eqref{eqn:A0_final}, \eqref{eqn:Pi_def}, \eqref{eqn:Pi_perp_defn},  \eqref{eqn:Abf_def}, \eqref{eqn:Hbf_def}, \eqref{eqn:Sbf_def}, \eqref{eqn:Fuchsian_Solution_Vector_Defns}, \eqref{eqn:Rbf_a_defn} and \eqref{eqn:uh_w_defs_app}-\eqref{eqn:nuh_w_identity_app}, a careful inspection shows that the maps
\begin{equation*}
\Pi^{\perp}C_{(\afrak)}^{0}\Pi^{\perp}\del_{\Omega}\mathbf{S}_{(\afrak)}
\end{equation*}
and
\begin{equation*}
\mathbf{R}_{(\afrak)}-t^{2\mu_{(\afrak)}-1}\Pi^{\perp}C_{(\afrak)}^{0}\Pi^{\perp}\del_{\Omega}\mathbf{S}_{(\afrak)}
\end{equation*}
are smooth in the variables $\bigl(t,\bar{t}_{(\afrak)},g_{(\mathfrak{a})},V,\overline{W}_{(\afrak)}, V_{\Omega}, \overline{W}^{(\afrak)}_{\Omega}\bigr)$ and $\bigl(t,\bar{t}_{(\afrak)},g_{(\mathfrak{a})},V,\overline{W}_{(\afrak)}, V_{\Omega}, \overline{W}^{(\afrak)}_{\Omega}, \overline{W}^{(\afrak)}_{\Omega\Sigma}\bigr)$, respectively, 
on the domains
\begin{align*}
(-r,2) \times \prod_{\afrak = 1}^{2}\bigl(-r,2^{\mu_{(\afrak)}}\bigr) \times B_{R}(\mathbb{R}^{2}) \times B_{R}(\mathbb{R}^{35}) 
\times B_{R}(\mathbb{R}^{10}) \times B_{R}(\mathbb{R}^{35\times 3}) \times B_{R}(\mathbb{R}^{10\times 3}),
\end{align*}
and
\begin{align*}
(-r,2)& \times \prod_{\afrak = 1}^{2}\bigl(-r,2^{\mu_{(\afrak)}}\bigr) \times B_{R}(\mathbb{R}^{2}) \times B_{R}(\mathbb{R}^{35}) \nonumber \\
&\times B_{R}(\mathbb{R}^{10}) \times B_{R}(\mathbb{R}^{35\times 3}) \times B_{R}(\mathbb{R}^{10\times 3})\times B_{R}(\mathbb{R}^{10 \times 9}),
\end{align*}
and both maps vanish for $ \bigl(V, \overline{W}_{(\afrak)}, V_{\Sigma},\overline{W}_{\Sigma}^{(\afrak)}, \overline{W}_{\Omega\Sigma}^{(\afrak)}\bigr)=0$.
Thus, from  \eqref{eqn:Mu1<Mu2} and the bounds on $\frg_{(\afrak)}$ established in Proposition \ref{prop:Asymptotic_ODE}, we deduce that
\begin{align}
\label{eqn:Rbf_Bound}
\bigr|\mathbf{R}\bigr(t,\overline{W}_{(\afrak)},V_{\Omega}, \overline{W}^{(\afrak)}_{\Omega},\overline{W}^{(\afrak)}_{\Omega\Sigma}\bigr)\bigr|
    \lesssim \bigl(t^{2\mu_{(1)}-1} + 1\bigr)\bigr|\bigl(V,\overline{W}_{(\afrak)},V_{\Omega},\overline{W}^{(\afrak)}_{\Omega},\overline{W}^{(\afrak)}_{\Omega\Sigma}\bigr)\bigr|
\end{align}
for 
\begin{align*}
\bigl(t,V,\overline{W}_{(\afrak)},V_{\Omega},\overline{W}^{(\afrak)}_{\Omega},\overline{W}^{(\afrak)}_{\Omega\Sigma}\bigr)\in &(0,1]\times B_{R}(\mathbb{R}^{35}) \times B_{R}(\mathbb{R}^{10})\nonumber \\
&\times B_{R}(\mathbb{R}^{35 \times 3}) \times B_{R}(\mathbb{R}^{10 \times 3}) \times B_{R}(\mathbb{R}^{10 \times 9}).    
\end{align*}

\subsubsection{Second Derivative Sub-System}
\label{sec:CoefficientAnalysis_2ndDeriv}
Finally, we consider the second differentiated sub-system \eqref{eqn:EinsteinEuler_2ndDeriv}. Here, we must be careful with the analysis due to the renormalisation of the second derivatives of the gravitational and fluid variables by $t^{-\psi}$ and $t^{-\mu_{(\afrak)}}$, respectively, i.e. 
\begin{align*}
    \del_{\Gamma}\del_{\Omega}V = t^{-\psi}V_{\Gamma\Omega}, \;\; \del_{\Gamma}\del_{\Omega}W^{(\afrak)} = t^{-\mu_{(\afrak)}}\overline{W}^{(\afrak)}_{\Gamma\Omega}.
\end{align*}

To start the analysis, we observe, with the help of 
\eqref{eqn:A0_final}, \eqref{eqn:Pi_def}, \eqref{eqn:Pi_perp_defn} and
\eqref{eqn:Pbb2_Def}, that the matrix $\mathscr{E}$, defined above by \eqref{eqn:E_scr}, satisfies
\begin{align*}
    [\mathscr{E},\mathbb{P}_{2}] = 0.
\end{align*}
Using
\eqref{eqn:lapse_Hct_identity}, \eqref{eqn:B0_Einstein}, 
\eqref{eqn:Pcal_perp_defn}, \eqref{eqn:A0_final} , \eqref{eqn:Mu1<Mu2}, \eqref{eqn:Fuchsian_Solution_Vector_Defns},  \eqref{eqn:tbar_ttilde_defns} and \eqref{eqn:uh_w_defs_app}-\eqref{eqn:nuh_w_identity_app},
it is not difficult to verify that
$\mathscr{E}$ defines a matrix-valued map 
\begin{align*}
  \mathscr{E} = \mathscr{E}\bigl(t,\bar{t}_{(\afrak)},\frg_{(\afrak)},V,\overline{W}_{(\afrak)},V_{\Omega},\overline{W}^{(\afrak)}_{\Omega},V_{\Omega\Sigma},\overline{W}^{(\afrak)}_{\Omega\Sigma}\bigr)
\end{align*} 
that is smooth on the domain
\begin{align*}
(-r,2)& \times \prod_{\afrak = 1}^{2}\bigl(-r,2^{\mu_{(\afrak)}}\bigr) \times B_{R}(\mathbb{R}^{2}) \times B_{R}(\mathbb{R}^{35})  \times B_{R}(\mathbb{R}^{10})  \nonumber \\ 
& \times B_{R}(\mathbb{R}^{35\times 3})  \times B_{R}(\mathbb{R}^{10\times 3}) \times B_{R}(\mathbb{R}^{35\times 9}) \times B_{R}(\mathbb{R}^{10\times 9})
\end{align*}
and satisfies
\begin{align*}
    \mathscr{E} \gtrsim \mathbbm{1}.
\end{align*}
From similar considerations, it is straightforward to verify that $\mathscr{C}^{0}$, defined in \eqref{eqn:C_scr_0}, 
defines a matrix-valued map
\begin{align*}
\mathscr{C}^{0} = \mathscr{C}^{0}\bigl(t,\bar{t}_{(\afrak)},\frg_{(\afrak)},\mathcal{P}^{\perp}V,\overline{W}_{{\afrak}}\bigr)
\end{align*}
that is smooth and symmetric on the domain 
\begin{align*}
  (-r,2)\times\prod_{\afrak = 1}^{2}\bigl(-r,2^{\mu_{(\afrak)}}\bigr)\times B_{R}(\mathbb{R}^{2})\times B_{R}(\mathbb{R}^{26})\times B_{R}(\mathbb{R}^{10})
\end{align*} 
and satisfies
\begin{align*}
    \mathscr{C}^{0} \gtrsim \mathbbm{1}.
\end{align*}
We note also from \eqref{eqn:A0_final}, \eqref{eqn:Pi_def}, \eqref{eqn:Pi_perp_defn} and \eqref{eqn:Pbb2_Def} that
\begin{align*}
\mathbb{P}_{2}^{\perp}\mathscr{C}^{0}\mathbb{P}_{2} = \mathbb{P}_{2}\mathscr{C}^{0}\mathbb{P}_{2}^{\perp} = 0.
\end{align*}
Furthermore, taking the formal time derivative of $\mathscr{C}^{0}$ yields 
\begin{align}
\label{eqn:C_scr_0_timederivative}
    \del_{t}\big(\mathscr{C}^{0}(t,\bar{t}_{(\afrak)},\frg_{(\afrak)}(t),\mathcal{P}^{\perp}V,\overline{W}_{{\afrak}})\big) = D\mathscr{C}^{0}(t,\bar{t}_{(\afrak)},\frg_{(\afrak)}(t),\mathcal{P}^{\perp}V,\overline{W}_{{\afrak}})\cdot\begin{pmatrix}1 \\ \mu_{(\afrak)}t^{\mu_{(\afrak)}-1} \\ \frg^{\prime}_{(\afrak)}(t) \\ \del_{t}\mathcal{P}^{\perp}V \\ \del_{t}\overline{W}_{(\afrak)}  \end{pmatrix},
\end{align}
where $\del_{t}\mathcal{P}^{\perp}V$ and $\del_{t}\overline{W}_{(\afrak)}$ are determined using \eqref{eqn:EinsteinEuler_NoDeriv} and \eqref{eqn:Qbfsource_Background_Error_Split}-\eqref{eqn:Qbf_bar_defn}, that is,
\begin{equation*}
\begin{aligned}
   \del_{t}\mathcal{P}^{\perp}V &= (B^{0})^{-1}\Big(\alpha \mathcal{P}^{\perp}B^{\Omega}\del_{\Omega}V + \mathcal{P}^{\perp}\bar{\mathbf{Q}} + \mathcal{P}^{\perp}\hat{\mathbf{Q}} \Big), \\
   \del_{t}\overline{W}_{(\afrak)} &= \mathbf{Q}_{(\afrak)}.
\end{aligned}
\end{equation*}

Next, we consider the matrices $\mathscr{C}^{\Sigma}$ defined by \eqref{eqn:C_scr_Sigma}. By
\eqref{eqn:Hct_Alphat_defs}, \eqref{eqn:Hct_identities}-\eqref{eqn:lapse_Hct_identity}, \eqref{eqn:BSigma_Einstein}, \eqref{eqn:BSigma_Components}, \eqref{eqn:Symmetrisation_parameter_values}, \eqref{eqn:AF_final}, and \eqref{eqn:uh_w_defs_app}-\eqref{eqn:nuh_w_identity_app}, we can view the $\mathscr{C}^{\Sigma}$ as matrix-valued maps  
\begin{align*}
\mathscr{C}^{\Sigma} = \mathscr{C}^{\Sigma}(t,\bar{t}_{(\afrak)},\frg_{(\afrak)},\mathcal{P}^{\perp}V,\overline{W}_{(\afrak)})
\end{align*}
that are smooth and symmetric on the domain 
\begin{align*}
    (-r,2)\times\prod_{\afrak = 1}^{2}(-r,2^{\mu_{(\afrak)}})\times B_{R}(\mathbb{R}^{2})\times B_{R}(\mathbb{R}^{26})\times B_{R}(\mathbb{R}^{10}).
\end{align*}
From this we see that differentiating $\mathscr{C}^{\Sigma}$ spatially gives
\begin{align}
\label{eqn:C_scr_Sigma_spatialderivative}
    \del_{\Sigma}\big(\mathscr{C}^{\Sigma}(t,\bar{t}_{(\afrak)},\frg_{(\afrak)}(t),\mathcal{P}^{\perp}V,\overline{W}_{(\afrak)})\big) = D\mathscr{C}^{\Sigma}(t,\bar{t}_{(\afrak)},\frg_{(\afrak)}(t),\mathcal{P}^{\perp}V,\overline{W}_{(\afrak)})\cdot\begin{pmatrix}0 \\ 0 \\ 0\\ \del_{\Sigma}\mathcal{P}^{\perp}V \\ \del_{\Sigma}\overline{W}^{(\afrak)} \end{pmatrix}.
\end{align}

Now, the matrix divergence $\text{Div}\mathscr{C}$ is,  cf.~ \eqref{eqn:C_scr_0_timederivative}, \eqref{eqn:C_scr_0_timederivative} and \eqref{eqn:C_scr_Sigma_spatialderivative}, the matrix-valued map defined by
\begin{align*}
\text{Div}\mathscr{C}\bigl(t,V,\overline{W}_{(\afrak)},V_\Sigma,\overline{W}^{(\afrak)}_{\Sigma}\bigr) &=D\mathscr{C}^{\Sigma}\bigl(t,t^{\mu_{(\afrak)}},\frg_{(\afrak)}(t),\mathcal{P}^{\perp}V,\overline{W}_{(\afrak)}\bigr)\cdot\begin{pmatrix}0 \\ 0 \\ 0\\ \mathcal{P}^{\perp}V_{\Sigma} \\ \overline{W}^{(\afrak)}_\Sigma \end{pmatrix}\nonumber \\
 +D\mathscr{C}^{0}\bigl(t,t^{\mu_{(\afrak)}},\frg_{(\afrak)}(t),&\mathcal{P}^{\perp}V,\overline{W}_{{(\afrak)}}\bigr)\cdot\begin{pmatrix}1 \\ \mu_{(\afrak)}t^{\mu_{(\afrak)}-1} \\ \frg^{\prime}_{(\afrak)}(t) \\ (B^{0})^{-1}\bigl(\alpha \mathcal{P}^{\perp}B^{\Omega}V_\Omega + \mathcal{P}^{\perp}\bar{\mathbf{Q}} + \mathcal{P}^{\perp}\hat{\mathbf{Q}} \bigr) \\ \mathbf{Q}_{(\afrak)} \end{pmatrix}.
\end{align*}
From \eqref{eqn:Mu1<Mu2}, the bounds on $\frg_{(\afrak)}(t)$ and $\frg^{\prime}_{(\afrak)}(t)$ in Proposition \ref{prop:Asymptotic_ODE}, the bounds on $\mathbf{Q}_{E}$ and $\mathbf{Q}_{H}$  from \eqref{eqn:Qbf_E_Bound} and \eqref{eqn:Qbf_Bound} respectively, and the smoothness properties of $\mathbf{Q}_{E}$, $\mathbf{Q}_{H}$, $\mathscr{C}^{0}$, and $\mathscr{C}^{\Sigma}$, we deduce that the matrix divergence $\text{Div}\mathscr{C}$ is bounded by
\begin{align}
\label{eqn:DivCscr_Bound}
    \bigl|\text{Div}\mathscr{C}\bigl(t, V,\overline{W}_{(\afrak)},V_\Sigma,\overline{W}^{(\afrak)}_{\Sigma}\bigr)\bigr| \lesssim \bigl(t^{\mu_{(1)}-1}+1\bigr)\bigr|(V,\overline{W}_{(\afrak)},V_{\Omega},\overline{W}^{(\afrak)}_{\Omega})\bigr|
\end{align}
for  $\bigl(t,V,\overline{W}_{(\afrak)},V_{\Omega},\overline{W}^{(\afrak)}_{\Omega}\bigr)\in (0,1]\times B_{R}(\mathbb{R}^{35}) \times B_{R}(\mathbb{R}^{10})\times B_{R}(\mathbb{R}^{35 \times 3}) \times B_{R}(\mathbb{R}^{10 \times 3}).$ \newline \par

Finally, we must analyse the source term $\mathbf{G}$ defined above by \eqref{eqn:Gbf_Source_Def}. We begin by considering the term $\mathbf{G}_{0}$ defined by \eqref{eqn:Gbf_0_defn}. Setting 
\begin{align*}
    \mathring{t}_{(\afrak)} = t^{\psi-\mu_{(\afrak)}+2}, \quad \hat{t} = t^{\psi}, \quad \text{and} \quad \check{t} = t_{(\afrak)}^{\psi-\mu_{(\afrak)}+\frac{4K_{(\afrak)}}{1-K_{(\afrak)}}}
\end{align*}
we observe, with the help of \eqref{eqn:Hct_Alphat_defs}, \eqref{eqn:Hct_identities}-\eqref{eqn:lapse_Hct_identity}, \eqref{eqn:B0_Einstein},  \eqref{eqn:Fcal_components_Final}, \eqref{eqn:Grav_bf_Variables_Defn}, \eqref{eqn:Fuchsian_Solution_Vector_Defns}, and 
\eqref{eqn:uh_w_defs_app}-\eqref{eqn:Tab_Decomp_Wbar_Vars_app}, that $\mathbf{G}_{0}$ can be viewed as a matrix-valued map
\begin{align*}
   \mathbf{G}_{0} = \mathbf{G}_{0}\bigl(t,\bar{t}_{(\afrak)},\tilde{t}_{(\afrak)},\mathring{t}_{(\afrak)},\hat{t},\check{t}_{(\afrak)},\frg_{(\afrak)}, V, \overline{W}_{(\afrak)}, V_{\Sigma},\overline{W}_{\Sigma}^{(\afrak)}, V_{\Omega\Sigma}, \overline{W}_{\Omega\Sigma}^{(\afrak)}\bigr)
\end{align*}
that is smooth on the domain 
\begin{align*}
     (-r&,2) \times \prod_{\afrak = 1}^{2}\bigl(-r,2^{\mu_{(\afrak)}}\bigr)\times \prod_{\afrak = 1}^{2}\Bigl(-r,2^{\frac{4K_{(\afrak)}}{1-K_{(\afrak)}}}\Bigr) \times \prod_{\afrak = 1}^{2}\bigl(-r,2^{\psi-\mu_{(\afrak)}+2}\bigr) \times \bigl(-r,2^{\psi}\bigr)  \nonumber \\
     &\times \prod_{\afrak = 1}^{2}\bigl(-r,2^{\psi-\mu_{(\afrak)}+\frac{4K_{(\afrak)}}{1-K_{(\afrak)}}}\bigr) \times B_{R}(\mathbb{R}^{2})  \times B_{R}(\mathbb{R}^{35}) 
\times B_{R}(\mathbb{R}^{10}) \times B_{R}(\mathbb{R}^{35\times 3}) \nonumber \\
&\times B_{R}(\mathbb{R}^{10\times 3}) \times B_{R}(\mathbb{R}^{35\times 9}) \times B_{R}(\mathbb{R}^{10\times 9})
\end{align*}
and vanishes for $ \bigl(V, \overline{W}_{(\afrak)}, V_{\Sigma},\overline{W}_{\Sigma}^{(\afrak)}, V_{\Omega\Sigma}, \overline{W}_{\Omega\Sigma}^{(\afrak)}\bigr)=0$. 
It is worth noting that the $\mathring{t}_{(\afrak)}$ and $\check{t}_{(\afrak)}$ dependence is due to the appearance of $t^{\psi-\mu_{(\afrak)}+2}$ and $t_{(\afrak)}^{\psi-\mu_{(\afrak)}+\frac{4K_{(\afrak)}}{1-K_{(\afrak)}}}$ terms that arise from expressing the stress-energy terms in $t^{\psi}B^{0}\del_{\Gamma}\del_{\Omega}\mathbf{F}$ in terms of the renormalised second derivative variables $\overline{W}_{\Omega\Sigma}^{(\afrak)}$, see \eqref{eqn:Fuchsian_Solution_Vector_Defns}.
Due to \eqref{eqn:Mu1<Mu2},  $\psi>0$, and the bound on $\frg_{(\afrak)}(t)$ from Proposition \ref{prop:Asymptotic_ODE},  we conclude that
\begin{align}
\label{eqn:Gbf_0_bound}
    \bigl|\mathbf{G}_{0}\bigl(t,V,\overline{W}_{(\afrak)},V_{\Omega},\overline{W}^{(\afrak)}_{\Omega},V_{\Sigma\Omega},\overline{W}^{(\afrak)}_{\Sigma\Omega}\bigr)\bigr| \lesssim \bigl(t^{\psi-\mu_{(2)}+2} +1\bigr)\bigr|\bigl(V,\overline{W}_{(\afrak)},V_{\Omega},\overline{W}^{(\afrak)}_{\Omega},V_{\Sigma\Omega},\overline{W}^{(\afrak)}_{\Sigma\Omega}\bigr)\bigr|
\end{align}
for 
\begin{align*}
\bigl(t,V,\overline{W}_{(\afrak)},V_{\Omega},\overline{W}^{(\afrak)}_{\Omega},V_{\Sigma\Omega},\overline{W}^{(\afrak)}_{\Sigma\Omega}\bigr)&\in  (0,1]\times B_{R}(\mathbb{R}^{35}) \times B_{R}(\mathbb{R}^{10}) \nonumber \\
&\quad \times B_{R}(\mathbb{R}^{35 \times 3}) \times B_{R}(\mathbb{R}^{10 \times 3})\times B_{R}(\mathbb{R}^{10 \times 9}).
\end{align*}
It remains to analyse the source term $\mathbf{G}_{(\afrak)}$ defined in \eqref{eqn:Gbf_a_defn}. Similar to $\mathbf{G}_{0}$, the most singular terms arise from expressing second derivatives in terms of the renomalised variables $V_{\Gamma\Omega}$ and $\overline{W}_{\Gamma\Omega}$. Setting 
\begin{align*}
    t_{(\afrak)}^{\#} = t^{\mu_{(\afrak)}-\psi} \quad \text{and} \quad \breve{t} = t^{-\psi},
\end{align*}
it is not difficult using \eqref{eqn:Hct_Alphat_defs}, \eqref{eqn:Hct_identities}-\eqref{eqn:lapse_Hct_identity}, \eqref{eqn:A0_final}-\eqref{eqn:Pi_def}, \eqref{eqn:Pi_perp_defn}, \eqref{eqn:Abf_def},  \eqref{eqn:Hbf_def}, \eqref{eqn:Sbf_def}, \eqref{eqn:Fuchsian_Solution_Vector_Defns}, \eqref{eqn:Gbf_0_bound}, \eqref{eqn:Gbf_a_bound}, and \eqref{eqn:uh_w_defs_app}-\eqref{eqn:nuh_w_identity_app} to verify that the $\mathbf{G}_{(\afrak)}$ can be viewed as matrix-valued maps
\begin{align*}
   \mathbf{G}_{(\afrak)} = \mathbf{G}_{(\afrak)}\bigl(t,\bar{t}_{(\afrak)},t^{\#}_{(\afrak)},\breve{t},\frg_{(\afrak)}, V, \overline{W}_{(\afrak)}, V_{\Sigma},\overline{W}_{\Sigma}^{(\afrak)}, V_{\Omega\Sigma}, \overline{W}_{\Omega\Sigma}^{(\afrak)}\bigr)
\end{align*}
that are smooth on the domain 
\begin{align*}
& (-r,2) \times \prod_{\afrak = 1}^{2}\big(-r,2^{\mu_{(\afrak)}}\big) \times \prod_{\afrak = 1}^{2}\big(-r,2^{\mu_{(\afrak)}-\psi}\big) \times \prod_{\afrak = 1}^{2}\big(-r,2^{-\psi}\big) \times B_{R}(\mathbb{R}^{2}) \nonumber \\ &\quad \times B_{R}(\mathbb{R}^{35}) \times B_{R}(\mathbb{R}^{10}) \times B_{R}(\mathbb{R}^{35\times 3}) \times B_{R}(\mathbb{R}^{10\times 3}) \times B_{R}(\mathbb{R}^{35\times 9}) \times B_{R}(\mathbb{R}^{10\times 9})
\end{align*}
and vanish for $\bigl( \overline{W}_{(\afrak)}, V_{\Sigma},\overline{W}_{\Sigma}^{(\afrak)}, V_{\Omega\Sigma}, \overline{W}_{\Omega\Sigma}^{(\afrak)}\bigr)=0$.
Due to \eqref{eqn:Mu1<Mu2}, $\psi>0$, and the bound on $\frg_{(\afrak)}(t)$ from Proposition \ref{prop:Asymptotic_ODE}, we conclude that
\begin{align}
\label{eqn:Gbf_a_bound}  
\bigl|\mathbf{G}_{(\afrak)}\bigl(t,V,\overline{W}_{(\afrak)},V_{\Omega},\overline{W}^{(\afrak)}_{\Omega},V_{\Sigma\Omega},\overline{W}^{(\afrak)}_{\Sigma\Omega}\bigr)\bigr| \lesssim (t^{\mu_{(1)}-\psi}+t^{-\psi} +1)\bigr|\bigl(V,\overline{W}_{(\afrak)},V_{\Omega},\overline{W}^{(\afrak)}_{\Omega},V_{\Sigma\Omega},\overline{W}^{(\afrak)}_{\Sigma\Omega}\bigr)\bigr|
\end{align}
for 
\begin{align*}
\bigl(t,V,\overline{W}_{(\afrak)},V_{\Omega},\overline{W}^{(\afrak)}_{\Omega},V_{\Sigma\Omega},\overline{W}^{(\afrak)}_{\Sigma\Omega}\bigr)&\in  (0,1]\times B_{R}(\mathbb{R}^{35}) \times B_{R}(\mathbb{R}^{10}) \nonumber \\
&\quad \times B_{R}(\mathbb{R}^{35 \times 3}) \times B_{R}(\mathbb{R}^{10 \times 3})\times B_{R}(\mathbb{R}^{10 \times 9}).
\end{align*}
Together, \eqref{eqn:Gbf_0_bound} and \eqref{eqn:Gbf_a_bound} imply the source term $\mathbf{G}$ satisfies
\begin{align*}
&\bigl|\mathbf{G}\bigl(t,V,\overline{W}_{(\afrak)},V_{\Omega}, \overline{W}^{(\afrak)}_{\Omega},V_{\Sigma\Omega},\overline{W}^{(\afrak)}_{\Sigma\Omega}\bigr)\bigr| \nonumber \\
&\qquad \lesssim \bigl(t^{-\psi} + t^{\psi-\mu_{(2)}+2} + t^{\mu_{(1)}-\psi} + 1\bigr)\bigr|\bigl(V,\overline{W}_{(\afrak)},V_{\Omega},\overline{W}^{(\afrak)}_{\Omega},V_{\Sigma\Omega},\overline{W}^{(\afrak)}_{\Sigma\Omega}\bigr)\bigr|
\end{align*}
for 
\begin{align*}
\bigl(t,V,\overline{W}_{(\afrak)},V_{\Omega},\overline{W}^{(\afrak)}_{\Omega},V_{\Sigma\Omega},\overline{W}^{(\afrak)}_{\Sigma\Omega}\bigr)&\in  (0,1]\times B_{R}(\mathbb{R}^{35}) \times B_{R}(\mathbb{R}^{10}) \nonumber \\
&\quad \times B_{R}(\mathbb{R}^{35 \times 3}) \times B_{R}(\mathbb{R}^{10 \times 3})\times B_{R}(\mathbb{R}^{10 \times 9}).
\end{align*}

\subsubsection{Fuchsian Constants}
\label{sec:Fuchsian_Constants}
For the coefficients, e.g. $\mathscr{A}^{0}$, $\mathscr{A}^{\Sigma}$, $\mathscr{D}$, $\mathbf{Q}$, etc., of the Fuchsian system \eqref{eqn:EinsteinEuler_NoDeriv}-\eqref{eqn:EinsteinEuler_2ndDeriv} to
satisfy the conditions from \cite{BOOS:2021}*{\S 3.4}, it will be enough if we can arrange that the only non-integrable singular terms that appear in the system are in the terms $t^{-1}\mathscr{D}\mathbb{P}_1$ and
$t^{-1}\mathscr{E}\mathbb{P}_2$.
To accomplish this, we will ensure that the other singular terms appearing in the Einstein-Euler system are integrable by demanding that the power of $t$ appearing in the bounds 
\eqref{eqn:Qbf_Bound}, \eqref{eqn:DivBscr_Bound}, \eqref{eqn:Rbf_Bound}, \eqref{eqn:DivCscr_Bound}, \eqref{eqn:Gbf_0_bound}, and \eqref{eqn:Gbf_a_bound} are of the form
\begin{align}
\label{eqn:integrable_t_example}
    t^{\varphi} \quad\text{with} \quad \varphi > -1.
\end{align}
This demand leads to the system of inequalities
\begin{align}
\label{eqn:Mu_integrable_ineqs}
    2\mu_{(\afrak)}-1>-1,& \;\; \mu_{(\afrak)}>-1, \\
\label{eqn:Psi_integrable_ineqs}
    \psi-\mu_{(\afrak)}+2>-1, \;\; -\psi&>-1, \;\; \mu_{(\afrak)}-\psi > -1.
\end{align}
Clearly, the conditions \eqref{eqn:Mu_integrable_ineqs} are satisfied since $\mu_{(\afrak)}>0$ for $K \in (\frac{1}{3},1)$. On the other hand, since $\psi>0$, it is straightforward to see that \eqref{eqn:Psi_integrable_ineqs} implies that
\begin{equation*}
0<\mu_1 <\mu_2 < 3+\psi \quad \text{and} \quad 0<\psi < 1.
\end{equation*}
Recalling that $\mu_{(\afrak)}=\frac{3K_{(\afrak)}-1}{1-K_{(\afrak)}}$, these inequalities can be expressed in terms the parameters $K_{(\afrak)}$ as follows:
\begin{enumerate}[(i)]
\label{eqn:Psi_inequalities}
\item $\frac{1}{3} < K_{(1)} \leq K_{(2)} < \frac{2}{3} \quad \text{and} \quad   0 < \psi < 1,$
\item $\frac{1}{3} < K_{(1)} < \frac{2}{3}, \quad \frac{2}{3}\leq K_{(2)} < \frac{5}{7}, \quad \text{and} \quad \frac{4-6K_{(2)}}{K_{(2)}-1} < \psi < 1,$ 
\item $\frac{2}{3} \leq K_{(1)} \leq K_{(2)} < \frac{5}{7} \quad \text{and} \quad \frac{4-6K_{(2)}}{K_{(2)}-1} < \psi < 1. $
\end{enumerate}
This shows that we can satisfy the conditions \eqref{eqn:Mu_integrable_ineqs}-\eqref{eqn:Psi_integrable_ineqs} for any choice of $K_{(1)}$ and $K_{(2)}$ satisfying $\frac{1}{3}< K_{(1)} \leq K_{(2)} <\frac{5}{7}$. \newline \par

As discussed in \cite{BOOS:2021}*{\S3.4}, singular terms of the form \eqref{eqn:integrable_t_example} are handled by employing a time transformation of the form $t\mapsto t^{p}$ for a suitable choice of $p\in (0,1)$. From \eqref{K-range}, \eqref{mu-def}, \eqref{eqn:Mu1<Mu2} and $\psi>0$, we find that $p$ must be chosen as
\begin{align}
\label{eqn:p_time_transformation_defn}
p &= \begin{cases}
        \mu_{(1)} &\text{if $\mu_{(1)}\leq 1$ \& $\mu_{(2)}< 2+\psi$} \\
        1 + (\psi-\mu_{(2)}+2) &\text{if $\mu_{(1)}> 1$ \& $\mu_{(2)}\geq 2+\psi$} \\
        \min\bigl\{\mu_{(1)},1 + (\psi-\mu_{(2)}+2)\bigr\} &\text{if $\mu_{(1)}\leq 1$ \& $\mu_{(2)}\geq 2+\psi$}
    \end{cases}.
\end{align}
Thus, the bounds established in the previous sections imply that for $K_{(\afrak)} \in (\frac{1}{3},\frac{5}{7})$, $\psi$ chosen according to \eqref{eqn:Psi_inequalities}, and $R_{0}>0$ chosen sufficiently small there exist positive constants $\theta$, $\gamma_{1}=\tilde{\gamma}_{1}$, $\gamma_{2}=\tilde{\gamma}_{2}$, and $\kappa = \tilde{\kappa}$ such that the Fuchsian system \eqref{eqn:EinsteinEuler_NoDeriv}-\eqref{eqn:EinsteinEuler_2ndDeriv} satisfies the coefficient assumptions from \cite{BOOS:2021}*{\S3.4} for $p$ defined by \eqref{eqn:p_time_transformation_defn} and the following choice of constants: 
$\lambda_{i} = \beta_{a} = \alpha = 0$ where  $1\leq i\leq3$ and $1\leq a \leq7$.
Furthermore, since the matrices $\mathscr{A}^{\Sigma}$, $\mathscr{B}^{\Sigma}$, and $\mathscr{C}^{\Sigma}$ all have regular limits as $t\searrow0$, the constants $\texttt{b}$ and $\tilde{\texttt{b}}$, defined in \cite{BOOS:2021}*{Thm 3.8}, both vanish.

\section{Future Stability of Tilted Bianchi I Spacetimes}
\label{sec:MainTheoremProof}

In this section, we establish the nonlinear stability of the zero solution to the Fuchsian system \eqref{eqn:EinsteinEuler_NoDeriv}-\eqref{eqn:EinsteinEuler_2ndDeriv} by applying the global existence theory from \cite{BOOS:2021}. As a consequence, using the continuation principle established in Section \ref{sec:Local_Existence_ConstraintProp}, we prove the future stability of \textit{tilted Bianchi I} solutions to the Einstein-Euler equations. The precise statement of our stability result is given in the following theorem.

\begin{thm}[Future Stability of Tilted Two-Fluid Bianchi I Spacetimes] \label{mainthm}
    Suppose $0<T_1\leq 1$, $\Lambda>0$, $K_{(\afrak)} \in (\frac{1}{3},\frac{5}{7})$, $k \in \mathbb{Z}_{>\frac{3}{2}+1}$,  $\frg^{(\afrak)}_{0} \in \mathbb{R}$, $\frg_{(\afrak)}(t)$ are the unique solutions to the IVP \eqref{eqn:Homogeneous_ODE_1}-\eqref{eqn:Homogeneous_ODE_2}, and $\mathcal{U}_{0} \in H^{k+2}(\mathbb{T}^{3},\mathbb{R}^{45})$. Then, there exists a $\delta >0$ independent of $T_1 \in (0,1]$ such that if
    \begin{align*}
        \|\mathcal{U}_{0}\|^{2}_{H^{k+2}} + \|\mathbf{Q}_{E}\|^{2}_{L^{\infty}(0,T_{1}]} \leq \delta,
    \end{align*}
    there exists a unique solution
    \begin{align*}
        \mathcal{U}  \in C^{0}\big((0,T_{1}],H^{k+2}(\mathbb{T}^{3},\mathbb{R}^{45})\big)\cap C^{1}\big((0,T_{1}],H^{k+1})(\mathbb{T}^{3},\mathbb{R}^{45})\big)
    \end{align*}
    to \eqref{eqn:EinsteinEuler_NotFuchsian} satisfying $\mathcal{U}_{t=T_1}=\mathcal{U}_0$.
    Moreover, the following hold:
    \begin{enumerate}[(a)]
    \item $\mathcal{U}$ satisfies the energy estimate 
        \begin{align*}
    \mathcal{E}(t) + \|\mathbf{Q}_{E}\|^{2}_{L^{\infty}(0,T_{1}]} + \int^{T_{1}}_{t}\frac{1}{\tau}\mathcal{R}(\tau) d\tau \lesssim \|\mathcal{U}_{0}\|^{2}_{H^{k}}+ \|\mathbf{Q}_{E}\|^{2}_{L^{\infty}(0,T_{1}]}
\end{align*}
for all $t \in (0,T_{1}]$, where $\mathcal{U}$ is defined by \eqref{eqn:ufrak_defn},
\begin{align*}
    \mathcal{E}(t) &= \|f^{\Sigma}_{P}(t)\|_{H^{k+1}}^{2} +\|\tilde{\mathcal{H}}(t)\|_{H^{k+1}}^{2}+\|\tilde{\alpha}(t)\|_{H^{k+1}}^{2}+\|\dot{U}_{P}(t)\|_{H^{k+1}}^{2}+\|A_{P}(t)\|_{H^{k+1}}^{2} \nonumber \\
    &+\|N_{IP}(t)\|_{H^{k+1}}^{2}+\|\Sigma_{IP}(t)\|_{H^{k+1}}^{2}   + \|\zeta_{(2)}(t)\|_{H^{k+1}}^{2} +\|\uh_{(2)}(t)\|_{H^{k+1}}^{2} + \|\zeta_{(1)}(t)\|_{H^{k+1}}^{2} \nonumber \\
    &+\|\uh_{(1)}(t)\|_{H^{k+1}}^{2} +t^{2\psi}\Big(\|D^{2}f^{\Sigma}_{P}(t)\|_{H^{k}}^{2} +\|D^{2}\tilde{\mathcal{H}}(t)\|_{H^{k}}^{2}+\|D^{2}\tilde{\alpha}(t)\|_{H^{k}}^{2}\nonumber \\
    &+\|D^{2}\dot{U}_{P}(t)\|_{H^{k}}^{2}+\|D^{2}A_{P}(t)\|_{H^{k}}^{2} +\|D^{2}N_{IP}(t)\|_{H^{k}}^{2}+\|D^{2}\Sigma_{IP}(t)\|_{H^{k}}^{2}\Big) \nonumber \\
    &+t^{2\mu_{(2)}}\Big(\|D^{2}\zeta_{(2)}(t)\|_{H^{k}}^{2} +\|D^{2}\uh_{(2)}(t)\|_{H^{k}}^{2}+ \|w^{(2)}_{P}(t)\|_{H^{k+2}}^{2}\Big) \nonumber \\
    &+ t^{2\mu_{(1)}}\Big(\|D^{2}\zeta_{(1)}(t)\|_{H^{k}}^{2} +\|D^{2}\uh_{(1)}(t)\|_{H^{k}}^{2} + \|w^{(1)}_{P}(t)\|_{H^{k+2}}^{2}\Big), \\
    \intertext{and}
    \mathcal{R}(t) &= \tau^{2\mu_{(1)}}\Big(\|D^{2}\uh_{(1)}(t)\|_{H^{k}}^{2}+\|D^{2}\zeta_{(1)}(t)\|_{H^{k}}^{2}\Big) + \tau^{2\mu_{(2)}}\Big(\|D^{2}\uh_{(2)}(t)\|_{H^{k}}^{2}+\|D^{2}\zeta_{(2)}(t)\|_{H^{k}}^{2}\Big)  \nonumber \\
    &+\|\Sigma_{IP}(t)\|_{H^{k+1}}^{2} +\tau^{2\psi}\Big(\|D^{2}f^{\Sigma}_{P}(t)\|_{H^{k}}^{2} +\|D^{2}\tilde{\mathcal{H}}(t)\|_{H^{k}}^{2}+\|D^{2}\tilde{\alpha}(t)\|_{H^{k}}^{2}+\|D^{2}\dot{U}_{P}(t)\|_{H^{k}}^{2} \nonumber \\
    &+\|D^{2}A_{P}(t)\|_{H^{k}}^{2} +\|D^{2}N_{IP}(t)\|_{H^{k}}^{2} +\|D^{2}\Sigma_{IP}(t)\|_{H^{k}}^{2}\Big).
\end{align*}
    \item There exists a constant $\mathfrak{z}>0$ and functions
${}^{*}\bar{e}^{\Sigma}_{P}$, $\bar{\mathcal{H}}^{*}$, $\bar{\alpha}^{*}$,  $\bar{A}_{P}^{*}$, $\bar{N}_{IP}^{*}\in H^{k-1}(\mathbb{T}^{3})$,
    ${}^{*}f^{\Sigma}_{P}$, $\tilde{\mathcal{H}}^{*}$, $\tilde{\alpha}^{*}$, $A_{P}^{*}$, $N_{IP}^{*}$, $\zeta_{(\afrak)}^{*}$,  $\uh_{(\afrak)}^{*}\in H^{k}(\mathbb{T}^{3})$, and
    ${}^{*}\bar{w}^{(\afrak)}_{P}\in H^{k+1}(\mathbb{T}^{3})$
such that 
\begin{align*}
    \bar{\mathcal{E}}(t) \lesssim t^{\mathfrak{z}-\sigma}
\end{align*}
for all $t \in(0,T_{1}]$ and $\sigma\in (0,\mathfrak{z})$ where 
\begin{align*}
\bar{\mathcal{E}}(t) &= \|f^{\Sigma}_{P}(t)-{}^{*}f^{\Sigma}_{P}\|_{H^{k}}+\|\tilde{\mathcal{H}}(t)-\tilde{\mathcal{H}}^{*}\|_{H^{k}}
+\|\tilde{\alpha}(t)-\tilde{\alpha}^{*}\|_{H^{k}}\nonumber \\
&+\|A_{P}(t)-A_{P}^{*}\|_{H^{k}} +\|N_{IP}(t)-N_{IP}^{*}\|_{H^{k}}+\|\zeta_{(1)}(t)-\zeta_{(1)}^{*}\|_{H^{k}}
\nonumber \\
&+\|\uh_{(1)}(t)-\uh_{(1)}^{*}\|_{H^{k}} +\|\zeta_{(2)}(t)-\zeta_{(2)}^{*}\|_{H^{k}}
+\|\uh_{(2)}(t)-\uh_{(2)}^{*}\|_{H^{k}} \nonumber \\
&+\|t^{\psi}D^{2}f^{\Sigma}_{P}(t)-{}^{*}\bar{e}^{\Sigma}_{P}\|_{H^{k-1}}
+\|t^{\psi}D^{2}\tilde{\mathcal{H}}(t)-\bar{\mathcal{H}}^{*}\|_{H^{k-1}}+\|t^{\psi}D^{2}\alpha(t)-\bar{\alpha}^{*}\|_{H^{k-1}} \nonumber \\
&+\|t^{\psi}D^{2}A_{P}(t)-\bar{A}_{P}^{*}\|_{H^{k-1}}+\|t^{\psi}D^{2}N_{IP}(t)-\bar{N}_{IP}^{*}\|_{H^{k-1}}
+\|t^{\mu_{(1)}}w^{(1)}_{P}(t)-{}^{*}\bar{w}^{(1)}_{P}\|_{H^{k+1}}\nonumber \\
&+\|t^{\mu_{(2)}}w^{(2)}_{P}(t)-{}^{*}\bar{w}^{(2)}_{P}\|_{H^{k+1}}.
\end{align*}
\item $\mathcal{U}$, from the calculations in Sections \ref{sec:Sym_Hyp_Einstein}-\ref{sec:Sym_Hyp_Euler}, determines a unique solution of the Einstein-Euler equations \eqref{eqn:Einstein_Physical}-\eqref{eqn:Euler_Physical} on the spacetime region $(0,T_{1}]\times \mathbb{T}^{3}$.
\end{enumerate}
\noindent The implicit constants in the estimates above and $\mathfrak{z}$ are independent of the choice of $T_1\in (0,1]$.
\end{thm}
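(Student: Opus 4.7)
The plan is to deduce Theorem \ref{mainthm} by applying the Fuchsian global existence and stability theorem \cite{BOOS:2021}*{Thm.~3.8} to the complete Einstein--Euler system \eqref{eqn:EinsteinEuler_NoDeriv}--\eqref{eqn:EinsteinEuler_2ndDeriv} derived in Section \ref{sec:complete_system}, using the coefficient properties verified in Section \ref{sec:Coefficient_Assumptions}, and then to transfer the resulting estimates back to the original Einstein--Euler system via the continuation and constraint-propagation statements of Proposition \ref{prop:Local_ExistenceUniqueness_Prop}. First, I would fix $K_{(\afrak)}\in(\tfrac{1}{3},\tfrac{5}{7})$, choose $\psi\in(0,1)$ satisfying the inequalities (i)--(iii) of Section \ref{sec:Fuchsian_Constants}, and choose $p\in(0,1)$ as in \eqref{eqn:p_time_transformation_defn}. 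After the standard time rescaling $t\mapsto t^{p}$, the bounds \eqref{eqn:Qbf_Bound}, \eqref{eqn:DivBscr_Bound}, \eqref{eqn:Rbf_Bound}, \eqref{eqn:DivCscr_Bound}, \eqref{eqn:Gbf_0_bound} and \eqref{eqn:Gbf_a_bound} show that every singular contribution apart from $t^{-1}\mathscr{D}\mathbb{P}_{1}$ and $t^{-1}\mathscr{E}\mathbb{P}_{2}$ is $L^{1}_{t}$-integrable on $(0,T_{1}]$, while the commutation and positivity properties verified in Section \ref{sec:Coefficient_Assumptions} produce precisely the structure required for the Fuchsian constants $\lambda_{i}=\beta_{a}=\alpha=\texttt{b}=\tilde{\texttt{b}}=0$, with all decay constants $\gamma_{i}=\tilde{\gamma}_{i},\kappa=\tilde{\kappa}$ positive.

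Second, I would treat the ``error source'' $\mathbf{Q}_{E}$ from \eqref{eqn:Qbf_E_defn}--\eqref{eqn:Qbf_hat_components} as an inhomogeneity that is bounded by the initial-data smallness parameter $\delta$. By the estimate $|\mathbf{Q}_{E}|\lesssim t^{2}$ (see \eqref{eqn:Qbf_E_Bound}) this term is $L^{1}$ in time with a small prefactor, so it fits naturally into the a priori estimate machinery of \cite{BOOS:2021}*{Thm.~3.8} and simply contributes an additive $\|\mathbf{Q}_{E}\|_{L^{\infty}(0,T_{1}]}^{2}$ to the energy bound. Applying that theorem to data $\mathcal{U}_{0}\in H^{k+2}$ with $\|\mathcal{U}_{0}\|_{H^{k+2}}^{2}+\|\mathbf{Q}_{E}\|_{L^{\infty}}^{2}\leq\delta$ yields, for $\delta$ small enough and independently of $T_{1}\in(0,1]$, a unique solution of \eqref{eqn:EinsteinEuler_NoDeriv}--\eqref{eqn:EinsteinEuler_2ndDeriv} on $(0,T_{1}]$ together with the energy estimate in part (a), after identifying the components of $V,V_{\Omega},V_{\Omega\Sigma},\overline{W}_{(\afrak)},\overline{W}^{(\afrak)}_{\Omega},\overline{W}^{(\afrak)}_{\Omega\Sigma}$ from \eqref{eqn:Fuchsian_Solution_Vector_Defns} with the physical unknowns appearing in $\mathcal{E}(t)$ and $\mathcal{R}(t)$.

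Third, to recover the original first-order system \eqref{eqn:EinsteinEuler_NotFuchsian} and hence a genuine Einstein--Euler solution, I would use part (a) of Proposition \ref{prop:Local_ExistenceUniqueness_Prop} to produce a local-in-time solution $\mathcal{U}\in C^{0}((T_{0},T_{1}],H^{k+2})\cap C^{1}((T_{0},T_{1}],H^{k+1})$ whose first and second spatial derivatives must coincide, by uniqueness, with $V_{\Omega},V_{\Omega\Sigma},\overline{W}^{(\afrak)}_{\Omega},\overline{W}^{(\afrak)}_{\Omega\Sigma}$ obtained from the Fuchsian solution. The a priori $\mathcal{E}$-bound then controls $\|\mathcal{U}\|_{W^{1,\infty}}$ via Sobolev embedding for $k>\tfrac{3}{2}+1$, so the continuation principle forces $T_{0}=0$; part (b) of that proposition, combined with the constraint-preserving initial data supplied by Section \ref{sec:PerturbedInitialData} (which is what lets us invoke the BSSN argument), then gives conclusion (c). The asymptotic statement (b) of the theorem will follow directly from the Fuchsian limit statements in \cite{BOOS:2021}*{Thm.~3.8}, which provide the existence of $H^{k-1}$ (or $H^{k}$, or $H^{k+1}$, according to which projection of the variable is considered) asymptotic profiles for the unknowns on which $\mathbb{P}_{2}^{\perp}$, respectively $\mathbb{P}_{1}^{\perp}$, act, with the convergence rate $t^{\mathfrak{z}-\sigma}$ for any $\sigma\in(0,\mathfrak{z})$ where $\mathfrak{z}$ is the minimum of the Fuchsian decay exponents $\gamma_{i},\tilde{\gamma}_{i},\kappa,\tilde{\kappa}$.

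The main obstacle is purely bookkeeping: carefully matching the coefficient constants of Section \ref{sec:Fuchsian_Constants} with the hypotheses of \cite{BOOS:2021}*{Thm.~3.8} after the time change $t\mapsto t^{p}$, so that the integrable powers of $t$ coming from \eqref{eqn:Qbf_Bound}--\eqref{eqn:Gbf_a_bound} are genuinely absorbed and the strictly positive decay exponent $\mathfrak{z}$ can be extracted uniformly in $T_{1}\in(0,1]$; and, simultaneously, reading off from the block structure of $\mathbb{P}_{1}$, $\mathbb{P}_{2}$ in \eqref{eqn:Pbb1_Def}--\eqref{eqn:Pbb2_Def} precisely which components of $V,\overline{W}_{(\afrak)}$ inherit a pointwise decay and which only admit an $L^{2}$-in-$t$ decay, so that the norms appearing in $\mathcal{E}$, $\mathcal{R}$ and $\bar{\mathcal{E}}$ match the output of the Fuchsian theorem. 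Once these identifications are made, no further analytic input is needed beyond what has already been established in Sections \ref{sec:Sym_Hyp_Einstein}--\ref{sec:Coefficient_Assumptions} and Appendix \ref{appendix:approximate_background}.
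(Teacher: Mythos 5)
Your proposal is correct and follows essentially the same route as the paper: apply the time-transformed Fuchsian theorem of \cite{BOOS:2021} to the twice-differentiated system \eqref{eqn:EinsteinEuler_NoDeriv}--\eqref{eqn:EinsteinEuler_2ndDeriv} using the coefficient verification of Section \ref{sec:Coefficient_Assumptions}, identify the resulting global solution with the local solution from Proposition \ref{prop:Local_ExistenceUniqueness_Prop} by uniqueness for symmetric hyperbolic systems, and use the $W^{1,\infty}$ bound from the energy estimate together with the continuation principle to push $T_0$ to $0$, with (b) and (c) read off from the Fuchsian decay statements and constraint propagation. The only cosmetic difference is the order (the paper starts from the local solution and then invokes the Fuchsian theorem, whereas you reverse this), which does not affect the argument.
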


\begin{rem}
Since the constant $\delta>0$ from Theorem \ref{mainthm} is independent of the the choice of $T_1\in (0,1]$, it follows from \eqref{eqn:Qbf_E_Bound} that we can satisfy
$\|\mathbf{Q}_{E}\|^{2}_{L^{\infty}(0,T_{1}]} \leq \frac{\delta}{2}$
 by choosing $T_1>0$ sufficiently small. For such a choice of $T_1$, we can then satisfy the smallness condition
$\|\mathcal{U}_{0}\|^{2}_{H^{k+2}} + \|\mathbf{Q}_{E}\|^{2}_{L^{\infty}(0,T_{1}]} \leq \delta$
from Theorem \ref{mainthm}
by choosing initial data $\mathcal{U}_0$ satisfying 
$\|\mathcal{U}_{0}\|^{2}_{H^{k+2}} \leq \frac{\delta}{2}$.
\end{rem}

\begin{rem} As discussed in Appendix \ref{Appendix:two-fluid}, two-fluid Bianchi I spacetimes with a positive cosmological asymptotically approach tilted solutions of the relativistic Euler equations on a fixed de Sitter spacetime. In particular, this means tilted Bianchi I initial data, in terms of our frame variables, can be made as small as we like by taking the initial time $T_{1}>0$ to be sufficiently small. In this sense, tilted Bianchi I models (and any small deviations from them) can be regarded as small perturbations of the trivial solution to \eqref{eqn:EinsteinEuler_NoDeriv}-\eqref{eqn:EinsteinEuler_2ndDeriv}. Moreover, there is no loss of generality by taking $T_{1}$ small. This is because Bianchi I models are global solutions of the Einstein-Euler equations. Thus, by Cauchy stability \cite{Kroon:2017}, we can evolve a perturbation of a reference Bianchi I solution at any time $t_{1}>0$ to obtain a perturbation of this reference solution at time $T_{1}\in(0,t_{1})$. Moreover, the perturbation at $T_{1}$ can be made arbitrarily small by taking the initial perturbation at $t_{1}$ suitably small.
\end{rem}

\begin{rem}[Extreme Tilt]
Using the 3+1 decomposition \eqref{eqn:Tab_3+1_lorentz} and the identities in Appendix \ref{sec:Appendix_Stress-Energy-Expansions}, it is straightforward to verify that the conformal fluid four-velocity is defined by
\begin{align*}
    v^{(\afrak)}_{a} = \Gamma^{(\afrak)} (1, \, \nu_{1}^{(\afrak)}, \, \nu_{2}^{(\afrak)}, \, \nu_{3}^{(\afrak)}),
\end{align*}
where 
\begin{align*}
    \Gamma^{(\afrak)} &= t^{-\mu_{(\afrak)}}\sqrt{t^{2\mu_{(\afrak)}} + e^{2(\uh_{(\afrak)} + \fru_{(\afrak)})}}, \\
    \nu_{A}^{(\afrak)} &= \frac{e^{\uh_{(\afrak)}+\fru_{(\afrak)}}}{\sqrt{t^{2\mu} + e^{2(\uh_{(\afrak)}+\fru_{(\afrak)})}}}\big(t^{\mu_{(\afrak)}}w^{(\afrak)}_{A} + \xi^{(\afrak)}_{A}\big).
\end{align*}
The norm $\nu_{A}^{(\afrak)}\nu^{A}_{(\afrak)}$ is therefore,
\begin{align*}
   \nu_{A}^{(\afrak)}\nu^{A}_{(\afrak)} &=  \frac{e^{2(\uh_{(\afrak)}+\fru_{(\afrak)})}}{t^{2\mu_{(\afrak)}} + e^{2(\uh_{(\afrak)}+\fru_{(\afrak)})}}\big(t^{2\mu_{(\afrak)}}w^{(\afrak)}_{A}w^{A}_{(\afrak)} + 2t^{\mu_{(\afrak)}}w_{A}^{(\afrak)}\xi_{(\afrak)}^{A} + \xi^{A}\xi_{A}^{(\afrak)}\big), \\
   &=  \frac{e^{2(\uh_{(\afrak)}+\fru_{(\afrak)})}}{t^{2\mu_{(\afrak)}} + e^{2(\uh_{(\afrak)}+\fru_{(\afrak)})}}
\end{align*}
where we have used the normalisation constraint for the fluid velocity $\mathcal{C}_{5}$ (cf. Part (c) of Proposition \ref{prop:Local_ExistenceUniqueness_Prop}) to obtain the second equality. Taking the limit as $t \searrow 0$ and using the decay estimates in the statement of Theorem \ref{mainthm}, we find
\begin{align*}
    \lim_{t \searrow 0} \nu_{A}^{(\afrak)}\nu^{A}_{(\afrak)} = 1.
\end{align*}
This, in combination with the definition of the conformal fluid four-velocity \eqref{eqn:Conformal_Tmunu_def}, then implies 
\begin{align*}
   \lim_{t \searrow 0} \tilde{v}^{(\afrak)}_{a}\tilde{v}^{a}_{(\afrak)} = \lim_{t \searrow 0} v^{(\afrak)}_{a}v^{a}_{(\afrak)} = 0.
\end{align*}
Hence, the physical fluid four-velocities develop extreme tilts.
\end{rem}

\begin{proof}[Proof of Theorem \ref{mainthm}.]
Suppose $\delta>0$ and we have initial data $\mathcal{U}_{0}\in H^{k+2}(\mathbb{T}^{3},\mathbb{R}^{45})$, $k \in \mathbb{Z}_{>\frac{3}{2}+1}$ which satisfies the constraint equations \eqref{eqn:C1_constraint}-\eqref{eqn:Hamiltonian_Constraint} and \eqref{eqn:C5_nu_normalisation_constraint} and the bound
 \begin{equation*}
        \|\mathcal{U}_{0}\|^{2}_{H^{k+2}} + \|\mathbf{Q}_{E}\|^{2}_{L^{\infty}(0,T_{1}]} \leq \delta.
\end{equation*}
By Proposition \ref{prop:Local_ExistenceUniqueness_Prop}, there exists a minimal time $T_0\in [0,T_1]$ and a unique solution 
\begin{equation*}
        \mathcal{U}\in C^{0}\big((T_{0},T_{1}],H^{k+2}(\mathbb{T}^{3},\mathbb{R}^{45})\big)\cap C^{1}\big((T_{0},T_{1}],H^{k+1})(\mathbb{T}^{3},\mathbb{R}^{45})\big)
\end{equation*}
of the reduced Einstein-Euler equations \eqref{eqn:EinsteinEuler_NotFuchsian}
satisfying $\mathcal{U}|_{t=T_1}=\mathcal{U}_0$. Moreover, the constraints  \eqref{eqn:C1_constraint}-\eqref{eqn:Hamiltonian_Constraint} and \eqref{eqn:C5_nu_normalisation_constraint} vanish everywhere in $(T_0,T_1]\times \mathbb{T}^3$ and $\mathcal{U}$ determines a solution of 
the Einstein-Euler equations \eqref{eqn:Einstein_Physical}-\eqref{eqn:Euler_Physical} on $(T_{0},T_{1}] \times \mathbb{T}^{3}$. \newline \par  

From the computations carried out in Section \ref{sec:Differentiated_Einstein_Euler_System}, we know that $\mathcal{U}$ determines a solution 
\begin{align}
\label{eqn:Wscr_def}
    \mathscr{W} = (V,\overline{W}_{(\afrak)},V_{\Omega},\overline{W}^{(\afrak)}_{\Omega}, &V_{\Gamma\Omega},\overline{W}^{(\afrak)}_{\Gamma\Omega}) \nonumber \\
    & \in C^{0}\big((T_{0},T_{1}],H^{k+2}(\mathbb{T}^{3},\mathbb{R}^{585})\big)\cap C^{1}\big((T_{0},T_{1}],H^{k}(\mathbb{T}^{3},\mathbb{R}^{585})\big)
\end{align}
of \eqref{eqn:EinsteinEuler_NoDeriv}-\eqref{eqn:EinsteinEuler_2ndDeriv} on $(T_0,T_1]\times \mathbb{T}^3$ satisfying
$\mathscr{W}|_{t=T_1}= \mathscr{W}_{0}$
where
\begin{align*}
    \mathscr{W}_{0} &:= (V,\overline{W}_{(\afrak)},V_{\Omega},\overline{W}^{(\afrak)}_{\Omega}, V_{\Gamma\Omega},\overline{W}^{(\afrak)}_{\Gamma\Omega})|_{t=T_{1}} \in H^{k}(\mathbb{T}^{3},\mathbb{R}^{585})
\end{align*}
and
\begin{align}
\label{eqn:Initial_Data_Bound1}
    \|\mathscr{W}_{0}\|_{H^{k}} + \|\mathbf{Q}_{E}\|^{2}_{L^{\infty}(0,T_{1}]} \lesssim \|\mathcal{U}_{0}\|_{H^{k+2}} + \|\mathbf{Q}_{E}\|^{2}_{L^{\infty}(0,T_{1}]} \leq \delta.
\end{align}

On the other hand, from Section \ref{sec:Coefficient_Assumptions}, we know that the Fuchsian system \eqref{eqn:EinsteinEuler_NoDeriv}-\eqref{eqn:EinsteinEuler_2ndDeriv} satisfies the coefficient assumptions in \cite{BOOS:2021}*{\S 3.4}. Thus, due to the bound \eqref{eqn:Initial_Data_Bound1}, we can apply a time transformed version, see \cite{BOOS:2021}*{\S 3.4}, of Theorem 3.8 from \cite{BOOS:2021} to conclude the existence of a $\delta>0$ such that there exists a unique solution
\begin{align*}
     \mathscr{W}^{*} \in C^{0}_b\big((0,T_{1}],H^{k}(\mathbb{T}^{3},\mathbb{R}^{585})\big) \cap C^{1}\big((0,T_{1}],H^{k-1})(\mathbb{T}^{3},\mathbb{R}^{585})\big)
\end{align*}
of \eqref{eqn:EinsteinEuler_NoDeriv}-\eqref{eqn:EinsteinEuler_2ndDeriv} satisfying $\mathscr{W}^{*}|_{t=T_1}=\mathscr{W}_0$ and the following properties:
\begin{enumerate}
    \item The limit of $\mathbb{P}^{\perp}\mathscr{W}^{*}(t)$ as $t \searrow 0$, denoted $\mathbb{P}^{\perp}\mathscr{W}^{*}(0)$, exists in $H^{k-1}(\mathbb{T}^{3},\mathbb{R}^{585})$.
    \item The solution $\mathscr{W}^{*}$ is bounded by the energy estimate 
    \begin{align}
    \label{eqn:Fuchsian_Thm_EnergyEstimate}
        \|\mathscr{W}^{*}\|^{2}_{H^{k}} + \|\mathbf{Q}_{E}\|^{2}_{L^{\infty}(0,T_{1}]} + \int^{T_{1}}_{t}\frac{1}{\tau}\|\mathbb{P}\mathscr{W}^{*}\|_{H^{k}}^{2}\dd\tau \lesssim \|\mathscr{W}_{0}\|^{2}_{H^{k}} + \|\mathbf{Q}_{E}\|^{2}_{L^{\infty}(0,T_{1}]}.
    \end{align}
    \item There exists a constant $\mathfrak{z}>0$ such that the  solution $\mathscr{W}^{*}$ satisfies the decay estimate
    \begin{align}
    \label{eqn:Fuchsian_Thm_DecayRates}
        \|\mathbb{P}\mathscr{W}^{*}(t)\|^{2}_{H^{k}} \lesssim t^{\mathfrak{z}-\sigma} \;\; \text{and} \;\; \|\mathbb{P}^{\perp}\mathscr{W}^{*}(t)-\mathbb{P}^{\perp}\mathscr{W}^{*}(0)\|_{H^{k-1}}\lesssim t^{\mathfrak{z}-\sigma}.
    \end{align}
    for all $t \in(0,T_{1}]$ and $0<\sigma<\mathfrak{z}$. \newline \par
\end{enumerate}

Recalling that the system \eqref{eqn:EinsteinEuler_NoDeriv}-\eqref{eqn:EinsteinEuler_2ndDeriv} is symmetric hyperbolic, we know from the uniqueness property of solutions to symmetric hyperbolic equations that solutions to \eqref{eqn:EinsteinEuler_NoDeriv}-\eqref{eqn:EinsteinEuler_2ndDeriv} that are generated from the same initial data must be the same. Thus $\mathscr{W}$ and $\mathscr{W}^*$ must agree on their common domain of definition, that is, 
\begin{equation*}
  \mathscr{W}(t,x)=\mathscr{W}^{*}(t,x), \quad \forall\; (t,x) \in (T_{0},T_{1}]\times \mathbb{T}^3.  
\end{equation*} 
This, in turn, implies by the energy estimate \eqref{eqn:Fuchsian_Thm_EnergyEstimate}, and Sobolev's inequality \cite{Rauch:2012}*{Thm. 6.2.1} that 
\begin{align*}
    \|\mathcal{U}(t)\|_{W^{1,\infty}} \lesssim \|\mathcal{U}(t)\|_{H^{k+2}} \leq \|\mathscr{W}^*(t)\|_{H^{k}} \lesssim \|\mathscr{W}_{0}\|_{H^{k}}, \;\; T_{0}< t \leq T_{1}.
\end{align*}
Thus, from the continuation principle from Proposition \ref{prop:Local_ExistenceUniqueness_Prop} and the minimality of $T_{0}$, we conclude that $T_{0}=0$ and $\mathscr{W}(t)= \mathscr{W}^{*}(t)$ for all $t \in (0,T_{1}]$. In particular, it follows that $\mathcal{U}$ determines a solution of 
the Einstein-Euler equations \eqref{eqn:Einstein_Physical}-\eqref{eqn:Euler_Physical} on $(0,T_{1}] \times \mathbb{T}^{3}$. \newline \par

Now, from the energy estimate \eqref{eqn:Fuchsian_Thm_EnergyEstimate} and \eqref{eqn:Fuchsian_Solution_Vector_Defns}, \eqref{eqn:ufrak_defn}, and \eqref{eqn:Wscr_def},  we observe that
\begin{align*}
    \mathcal{E}(t) + \|\mathbf{Q}_{E}\|^{2}_{L^{\infty}(0,T_{1}]} + \int^{T_{1}}_{t}\frac{1}{\tau}\mathcal{R}(\tau) d\tau \lesssim \|\mathcal{U}_{0}\|^{2}_{H^{k}} + \|\mathbf{Q}_{E}\|^{2}_{L^{\infty}(0,T_{1}]} , \;\; 0<t\leq T_{1}.
\end{align*}
where
\begin{align*}
    \mathcal{E}(t) &= \|f^{\Sigma}_{P}(t)\|_{H^{k+1}}^{2} +\|\tilde{\mathcal{H}}(t)\|_{H^{k+1}}^{2}+\|\tilde{\alpha}(t)\|_{H^{k+1}}^{2}+\|\dot{U}_{P}(t)\|_{H^{k+1}}^{2}+\|A_{P}(t)\|_{H^{k+1}}^{2} \nonumber \\
    &+\|N_{IP}(t)\|_{H^{k+1}}^{2}+\|\Sigma_{IP}(t)\|_{H^{k+1}}^{2}   + \|\zeta_{(2)}(t)\|_{H^{k+1}}^{2} +\|\uh_{(2)}(t)\|_{H^{k+1}}^{2} + \|\zeta_{(1)}(t)\|_{H^{k+1}}^{2} \nonumber \\
    &+\|\uh_{(1)}(t)\|_{H^{k+1}}^{2} +t^{2\psi}\Big(\|D^{2}f^{\Sigma}_{P}(t)\|_{H^{k}}^{2} +\|D^{2}\tilde{\mathcal{H}}(t)\|_{H^{k}}^{2}+\|D^{2}\tilde{\alpha}(t)\|_{H^{k}}^{2}\nonumber \\
    &+\|D^{2}\dot{U}_{P}(t)\|_{H^{k}}^{2}+\|D^{2}A_{P}(t)\|_{H^{k}}^{2} +\|D^{2}N_{IP}(t)\|_{H^{k}}^{2}+\|D^{2}\Sigma_{IP}(t)\|_{H^{k}}^{2}\Big) \nonumber \\
    &+t^{2\mu_{(2)}}\Big(\|D^{2}\zeta_{(2)}(t)\|_{H^{k}}^{2} +\|D^{2}\uh_{(2)}(t)\|_{H^{k}}^{2}+ \|w^{(2)}_{P}(t)\|_{H^{k+2}}^{2}\Big) \nonumber \\
    &+ t^{2\mu_{(1)}}\Big(\|D^{2}\zeta_{(1)}(t)\|_{H^{k}}^{2} +\|D^{2}\uh_{(1)}(t)\|_{H^{k}}^{2} + \|w^{(1)}_{P}(t)\|_{H^{k+2}}^{2}\Big), \\
    \intertext{and}
    \mathcal{R}(t) &= \tau^{2\mu_{(1)}}\Big(\|D^{2}\uh_{(1)}(t)\|_{H^{k}}^{2}+\|D^{2}\zeta_{(1)}(t)\|_{H^{k}}^{2}\Big) + \tau^{2\mu_{(2)}}\Big(\|D^{2}\uh_{(2)}(t)\|_{H^{k}}^{2}+\|D^{2}\zeta_{(2)}(t)\|_{H^{k}}^{2}\Big)  \nonumber \\
    &+\|\Sigma_{IP}(t)\|_{H^{k+1}}^{2} +\tau^{2\psi}\Big(\|D^{2}f^{\Sigma}_{P}(t)\|_{H^{k}}^{2} +\|D^{2}\tilde{\mathcal{H}}(t)\|_{H^{k}}^{2}+\|D^{2}\tilde{\alpha}(t)\|_{H^{k}}^{2}+\|D^{2}\dot{U}_{P}(t)\|_{H^{k}}^{2} \nonumber \\
    &+\|D^{2}A_{P}(t)\|_{H^{k}}^{2} +\|D^{2}N_{IP}(t)\|_{H^{k}}^{2} +\|D^{2}\Sigma_{IP}(t)\|_{H^{k}}^{2}\Big).
\end{align*}
Additionally, from the decay estimates \eqref{eqn:Fuchsian_Thm_DecayRates}, we obtain the existence of 
functions
${}^{*}\bar{e}^{\Sigma}_{P}$, $\bar{\mathcal{H}}^{*}$, $\bar{\alpha}^{*}$,  $\bar{A}_{P}^{*}$, $\bar{N}_{IP}^{*}\in H^{k-1}(\mathbb{T}^{3})$,
    ${}^{*}f^{\Sigma}_{P}$, $\tilde{\mathcal{H}}^{*}$, $\tilde{\alpha}^{*}$, $A_{P}^{*}$, $N_{IP}^{*}$, $\zeta_{(\afrak)}^{*}$,  $\uh_{(\afrak)}^{*}\in H^{k}(\mathbb{T}^{3})$, and
    ${}^{*}\bar{w}^{(\afrak)}_{P}\in H^{k+1}(\mathbb{T}^{3})$
such that 
\begin{align*}
    \bar{\mathcal{E}}(t) \lesssim t^{\mathfrak{z}-\sigma}
\end{align*}
for all $t \in(0,T_0]$, where
\begin{align*}
\bar{\mathcal{E}}(t) &:= \|f^{\Sigma}_{P}(t)-{}^{*}f^{\Sigma}_{P}\|_{H^{k}}+\|\tilde{\mathcal{H}}(t)-\tilde{\mathcal{H}}^{*}\|_{H^{k}}
+\|\tilde{\alpha}(t)-\tilde{\alpha}^{*}\|_{H^{k}}\nonumber \\
&+\|A_{P}(t)-A_{P}^{*}\|_{H^{k}} +\|N_{IP}(t)-N_{IP}^{*}\|_{H^{k}}+\|\zeta_{(1)}(t)-\zeta_{(1)}^{*}\|_{H^{k}}
\nonumber \\
&+\|\uh_{(1)}(t)-\uh_{(1)}^{*}\|_{H^{k}} +\|\zeta_{(2)}(t)-\zeta_{(2)}^{*}\|_{H^{k}}
+\|\uh_{(2)}(t)-\uh_{(2)}^{*}\|_{H^{k}} \nonumber \\
&+\|t^{\psi}D^{2}f^{\Sigma}_{P}(t)-{}^{*}\bar{e}^{\Sigma}_{P}\|_{H^{k-1}}
+\|t^{\psi}D^{2}\tilde{\mathcal{H}}(t)-\bar{\mathcal{H}}^{*}\|_{H^{k-1}}+\|t^{\psi}D^{2}\alpha(t)-\bar{\alpha}^{*}\|_{H^{k-1}} \nonumber \\
&+\|t^{\psi}D^{2}A_{P}(t)-\bar{A}_{P}^{*}\|_{H^{k-1}}+\|t^{\psi}D^{2}N_{IP}(t)-\bar{N}_{IP}^{*}\|_{H^{k-1}}
+\|t^{\mu_{(1)}}w^{(1)}_{P}(t)-{}^{*}\bar{w}^{(1)}_{P}\|_{H^{k+1}}\nonumber \\
&+\|t^{\mu_{(2)}}w^{(2)}_{P}(t)-{}^{*}\bar{w}^{(2)}_{P}\|_{H^{k+1}},
\end{align*}
which completes the proof.
\end{proof}

\section{Existence of Perturbed Initial Data}
\label{sec:PerturbedInitialData}
In this section, we prove the existence of an open set of perturbed initial data around spatially homogeneous and tilted solutions of the constraint equations. The argument presented here follows a standard implicit function theorem argument, see for example \cites{BruhatDeser:1973,D'eath:1976,Brauer:1991}. In particular, our proof follows using a variation of the arguments from \cites{Oliynyk:CMP_2009,Oliynyk:CMP_2010,Oliynyk:JHDE_2010}. \newline \par 

To begin, we fix a time $T_1>0$ and consider the initial hypersurface 
\begin{equation*}
\Sigma=\{T_1\}\times \mathbb{T}^3\cong \mathbb{T}^3.
\end{equation*} 
On this hypersurface, we consider the $3+1$ decomposition of the constraints (see \cite{Gourgoulhon:2012}*{Ch.~9}) for the \textit{physical} Einstein equations given by
\eqref{eqn:Einstein_Physical}
\begin{align*}
    R[h] + K^{2} - K_{\Gamma\Omega}K^{\Gamma\Omega} = 2E,\\
    D_{\Omega}\tensor{K}{^\Omega_\Gamma} - D_{\Gamma}K = J_{\Gamma}, 
\end{align*}
where $R[h]$ is the Ricci scalar of the spatial metric $h_{\Omega\Gamma}$, $K_{\Omega\Gamma}$ is the extrinsic curvature, $D_{\Omega}$ is the covariant derivative of the spatial metric, and
\begin{align}
  E &= n^{\mu}n^{\nu}T_{\mu\nu} = \big(\Gamma_{(1)}^{2}(K_{(1)}+1)- K_{(1)}\big)\rho_{(1)} +\big(\Gamma_{(2)}^{2}(K_{(2)}+1)- K_{(2)}\big)\rho_{(2)} +\Lambda, \label{E-def} \\
  J_{\Gamma} &= n^{\mu}h^{\nu}_{\Gamma}T_{\mu\nu} = (K_{(1)}+1)\Gamma_{(1)}^{2}\rho_{(1)}w^{(1)}_{\Gamma} + (K_{(2)}+1)\Gamma_{(2)}^{2}\rho_{(2)}w^{(2)}_{\Gamma}\label{J-def},
\end{align} 
where 
\begin{align*}
    v_{(\afrak)}^{\mu} &= \Gamma(n^{\mu} + w_{(\afrak)}^{\mu}), \quad \Gamma_{(\afrak)}^{2} = \frac{1}{1-|w_{(\afrak)}|_h^2}, \nonumber \\ |w_{(\afrak)}|_h^2 &= h_{\Omega\Gamma}w_{(\afrak)}^{\Omega}w_{(\afrak)}^{\Gamma} < 1, \quad
    n_{\mu} = (-N,0,0,0),
\end{align*}
and $N$ is the lapse of the physical metric.
To obtain the existence of perturbed initial data, we decompose the constraints into an appropriate form. To start, we decompose the extrinsic curvature into its trace and trace-free parts
\begin{align*}
    K_{\Omega\Gamma} = \hat{K}_{\Omega\Gamma} + \frac{1}{3}Kh_{\Omega\Gamma}.
\end{align*}

Next, following \cite{D'eath:1976}, we introduce the conformal decomposition 
\begin{align}
\label{eqn:Constraints_conformal_decompositions}
    h^{\Omega\Gamma} &= \Phi^{-4}j^{\Omega\Gamma}, \nonumber \\
    \hat{K}_{\Omega\Gamma} &= \Phi^{4}H_{\Omega\Gamma}, \nonumber \\
    \nu^{\Omega} &= \Phi^{2}w^{\Omega}, \\
    H_{\Omega\Gamma} &= \frac{1}{2}(LX)_{\Omega\Gamma}  + A_{\Omega\Gamma}, \nonumber \\ 
    (LX)_{\Omega\Gamma} &= \Dt_{\Omega}X_{\Gamma} + \Dt_{\Gamma}X_{\Omega} - \frac{2}{3}j_{\Gamma\Omega}\Dt_{\Sigma}X^{\Sigma}, \nonumber
\end{align}
where $\Dt$ is the covariant derivative associated with the conformal metric $j_{\Omega\Gamma}$, $A_{\Gamma\Omega}$ is a trace-free symmetric two-tensor, and conformal objects have their indices raised and lowered with $j_{\Omega\Gamma}$.
We now express the constraints in terms of $\Phi$, $X_{\Omega}$, $j_{\Omega\Gamma}$, $H_{\Omega\Gamma}$, and $A_{\Gamma\Omega}$ and $K$ by using the identities
\begin{align*}
    D_{\Omega}\tensor{\hat{K}}{^{\Gamma}_{\Sigma}} &= \Dt\tensor{\hat{K}}{^{\Gamma}_{\Sigma}} + \tensor{C}{^\Gamma_{\Omega\Theta}}\tensor{\hat{K}}{^{\Theta}_{\Sigma}} - \tensor{C}{^\Theta_{\Omega\Sigma}}\tensor{\hat{K}}{^{\Gamma}_{\Theta}}, \\
    \tensor{C}{^\Sigma_{\Omega\Gamma}} &= \delta^{\Sigma}_{\Omega}\Dt_{\Gamma}\log(\Phi^{2}) + \delta^{\Sigma}_{\Gamma}\Dt_{\Omega}\log(\Phi^{2}) - j_{\Omega\Gamma}j^{\Sigma\Theta}\Dt_{\Theta}\log(\Phi^{2}),
\end{align*}
to write the constraints as 
\begin{align}
\label{eqn:Hamiltonian_Elliptic}
&8\Dt_{\Gamma}\Dt^{\Gamma}\Phi  - \Phi R[j] + (H_{\Omega\Gamma}H^{\Omega\Gamma} - \frac{2}{3}K^{2} +2E)\Phi^{5} = 0, \\
\label{eqn:Momentum_Elliptic}
&\frac{1}{2}\Dt_{\Omega}\Dt^{\Omega}X_{\Gamma} + \frac{1}{6}\Dt_{\Gamma}\Dt^{\Omega}X_{\Omega} +\frac{1}{2}R[j]_{\Omega\Gamma}X^{\Omega} + \Dt_{\Omega}\tensor{A}{^\Omega_{\Gamma}} \nonumber \\ &-\frac{2}{3}\Dt_{\Gamma}K 
+ 6\Phi^{-1}\tensor{H}{^\Omega_\Gamma}\Dt_{\Omega}\Phi -\Phi^{-2}J_{\Gamma} = 0.
\end{align}
Observe that the operators $\Dt_{\Gamma}\Dt^{\Gamma}\Phi$ and $\Dt_{\Omega}\Dt^{\Omega}X_{\Gamma} + \frac{1}{3}\Dt_{\Gamma}\Dt^{\Omega}X_{\Omega}$ reduce to the flat Laplacian and conformal vector Laplacian, respectively, when $j_{\Omega\Gamma}=\delta_{\Omega\Gamma}$.

\subsection{Spatially Homogeneous Solution}
\label{sec:SH_Background_Constraint_Soln}
We now consider the spatially homogeneous constraint equations by setting
\begin{align} \label{hom-fix}
X_{\Omega} = 0, \;\; \Phi = 1, \;\;  j_{\Omega\Gamma} = \delta_{\Omega\Gamma}, \;\;  A_{\Omega\Gamma} = 0.
\end{align}
With these choices, \eqref{eqn:Hamiltonian_Elliptic}-\eqref{eqn:Momentum_Elliptic} reduce to
\begin{align*}
    J_{\Gamma} &= 0, \\
    2E - \frac{2}{3}K^{2} &= 0.
\end{align*}
Using \eqref{E-def}-\eqref{J-def} and \eqref{eqn:Constraints_conformal_decompositions}, these become
\begin{align}
\label{eqn:Homogeneous_ID_Condition1}
\Big(\Gamma^{2}_{(1)}(K_{(1)}+1)\rho_{(1)}\nu^{\Omega}_{(1)} + \Gamma^{2}_{(2)}(K_{(2)}+1)\rho_{(2)}\nu^{\Omega}_{(2)}\Big) &= 0, \\
\label{eqn:Homogeneous_ID_Condition2}
2\Big((\Gamma^{2}_{(1)}(K_{(1)}+1)-K_{(1)})\rho_{(1)} +(\Gamma^{2}_{(2)}(K_{(2)}+1)-K_{(2)})\rho_{(2)} +\Lambda \Big) - \frac{2}{3}K^{2} &= 0.
\end{align}
To solve these equations, we first choose a positive constant $\rho_{(2)}>0$ and a vector $\nu_{(2)}=(\nu^{\Omega}_{(2)})\in\Rbb^3$ satisfying 
$0<|\nu_{(2)}|<1$ where $|\nu_{(\afrak)}|^2 = \delta_{\Omega\Gamma}\nu_{(\afrak)}^\Omega\nu_{(\afrak)}^\Gamma$.
Then from \eqref{eqn:Homogeneous_ID_Condition1}-\eqref{eqn:Homogeneous_ID_Condition2}, we find that
\begin{align}
\label{eqn:nu1_background_value}
    \nu_{(1)}^{\Omega} = -c\nu_{(2)}^{\Omega}
\end{align}
where 
\begin{align}
    c = \frac{-(1+K_{(1)})\rho_{(1)}(1-|\nu_{(2)}|^2) + \sqrt{(1+K_{(1)})^{2}\rho_{(1)}^{2}(1-|\nu_{(2)}|^2)^{2} + 4(1+K_{(2)})^{2}\rho_{(2)}^{2}|\nu_{(2)}|^{2}}}{2(1+K_{(2)})\rho_{(2)}|\nu_{(2)}|^2} \label{c-def}
\end{align}
and 
\begin{align}
\label{eqn:rho1_background_value}
    \rho_{(1)} &= \frac{\frac{1}{3}K^{2} - (\Gamma^{2}_{(2)}(K_{(2)}+1)-K_{(2)})\rho_{(2)} - \Lambda}{(\Gamma^{2}_{(1)}(K_{(1)}+1)-K_{(1)})}.  
\end{align}
To ensure that $\rho_{(1)}>0$, we require that the constant $K$ is chosen so that
\begin{align}
\label{eqn:MeanCurvature_Background_Constraint}
     K &> 3\sqrt{(\Gamma^{2}_{(2)}(K_{(2)}+1)-K_{(2)})\rho_{(2)} + \Lambda}.
\end{align}}
We further observe from \eqref{c-def} that $0<c\leq1$, which by \eqref{eqn:nu1_background_value} implies $0<|\nu_{(1)}|<1$. \newline \par 

To summarise, for each choice of
\begin{equation*}
(\rho_{(2)},\nu_{(2)}^\Omega,K) \in \Rbb \times \Rbb^3\times \Rbb
\end{equation*}
satisfying $\rho_{(2)}>0$, $0<|\nu_{(2)}|<1$, and the inequality \eqref{eqn:MeanCurvature_Background_Constraint}, the formulas  \eqref{hom-fix}, \eqref{eqn:nu1_background_value}, \eqref{c-def} and
\eqref{eqn:rho1_background_value} determine a homogeneous solution of the constraint equations \eqref{eqn:Homogeneous_ID_Condition1}-\eqref{eqn:Homogeneous_ID_Condition2}.

\subsection{Constraint Map}
To find inhomogeneous solutions to the constraint equation, we split the initial data into their values on the homogeneous background solution and inhomogeneous perturbations as follows:
\begin{align}
\label{eqn:Phi_constraintmap_decomp}
\Phi &= 1+ \tilde{\Phi}, \\
j^{\Omega\Gamma} &= \delta^{\Omega\Gamma} + \gamma^{\Omega\Gamma}, \\
K &= \check{K} + \tilde{K}, \\
\rho_{(1)} &= \check{\rho}_{(1)} + \tilde{\rho}_{(1)}, \\
\rho_{(2)} &= \check{\rho}_{(2)} + \bar{\rho}_{(2)}, \\
\nu^{\Omega}_{(1)} &= \check{\nu}^{\Omega}_{(1)} 
 + \tilde{\nu}^{\Omega}_{(1)}, \\
 \label{eqn:nu_(2)_constraintmap_decomp}
\nu^{\Omega}_{(2)} &= \check{\nu}^{\Omega}_{(2)} + \bar{\nu}^{I}_{(2)},
\end{align}
where the background values $\check{\rho}_{(2)} >0$ and $\check{\nu}_{(2)}^{\Omega}$ are a constant and a non-zero constant vector, respectively, $\check{\nu}_{(2)}$, $\check{\rho}_{(1)}$ are defined by \eqref{eqn:nu1_background_value}-\eqref{eqn:rho1_background_value}, and $\check{K}$ is a positive constant which satisfies \eqref{eqn:MeanCurvature_Background_Constraint}. The remaining variables denote perturbations of the homogeneous background.  We can further decompose the inhomogeneous perturbations $\bar{\rho}_{(2)}$ and $\bar{\nu}_{(2)}^{\Omega}$ as follows
\begin{equation}
\begin{aligned}
 \label{eqn:bar_rho2_nu2_constraintmap_decomp}
\bar{\rho}_{(2)} &= \Pi^{\perp}\bar{\rho}_{(2)} + \Pi\bar{\rho}_{(2)} = \mathring{\rho}_{(2)} +  \tilde{\rho}_{(2)}, \\
\bar{\nu}^{I}_{(2)} &= \Pi^{\perp}\bar{\nu}^{I}_{(2)} + \Pi\bar{\nu}^{I}_{(2)} = \mathring{\nu}^{\Omega}_{(2)} + \tilde{\nu}^{\Omega}_{(2)},
\end{aligned}
\end{equation}
where, following \cite{Oliynyk:JHDE_2010}, we have introduced the projection operators\footnote{Here $\langle u,v\rangle = \frac{1}{(2\pi)^3}\int_{\Tbb^3} u(x)v(x)d^3 x$ where $\Tbb^3 = [0,2\pi]^3/ \sim$ with $\sim$ the identification of the opposite sides of the box $[0,2 \pi]^3$.}
\begin{align*}
\Pi u = u - \langle1,u\rangle_{L^{2}}1, \quad 
\Pi^{\perp} u  = (1-\Pi)u = \langle1, u\rangle_{L^{2}}1,
\end{align*}
and set 
\begin{equation}
\begin{aligned}
\label{eqn:constraintperturbation_decomposition}
    \mathring{\rho}_{(2)} &= \Pi^{\perp} \bar{\rho}_{(2)}, \;\; \tilde{\rho}_{(2)} = \Pi \bar{\rho}_{(2)}, \;\;
   \mathring{\nu}^{\Omega}_{(2)} = \Pi^{\perp} \bar{\nu}^{I}_{(2)}, \;\; \tilde{\nu}^{\Omega}_{(2)} = \Pi \bar{\nu}^{I}_{(2)}.
\end{aligned}
\end{equation}

The projection operator $\Pi$ can be used to orthogonally decompose the Sobolev space $H^s(\Tbb^3)$ as
\begin{equation*}
H^s(\Tbb^3) = \bar{H}^{s}(\Tbb^3)\oplus \Pi^\perp H^s(\Tbb^3) \cong \bar{H}^{s}(\Tbb^3)\times \Rbb
\end{equation*}
where
\begin{align*}
    \bar{H}^{s}(\Tbb^3) := \Pi H^{s}(\Tbb^3) = \{ \, u\in H^s(\Tbb^3)\, : \, \Pi^\perp u=0 \, \} 
\end{align*}
is the zero average subspace. Expanding $u\in H^s(\Tbb^3)$ in a Fourier series
\begin{equation} \label{Fourier-exp-u}
u = \sum_{k\in\Zbb^3} \widehat{u}(k) e^{i k_\Omega x^\Omega},
\end{equation}
it is clear that we can express the zero average space as
\begin{align*}
    \bar{H}^{s}(\Tbb^3) = \bigl\{\, u\in H^{s}(\Tbb^3) \, : \, \widehat{u}(0) = 0\,\bigr\}.
\end{align*}

After substituting the decompositions \eqref{eqn:Phi_constraintmap_decomp}-\eqref{eqn:constraintperturbation_decomposition} into \eqref{eqn:Hamiltonian_Elliptic}-\eqref{eqn:Momentum_Elliptic}, the constraint equations become
\begin{align}
\label{eqn:perturbed_hamiltonian_constraint}
\mathcal{C}_{H}:=& 8\Dt_{\Gamma}\Dt^{\Gamma}\tilde{\Phi}  - (1+\tilde{\Phi})R[\delta + \gamma] + (H_{\Omega\Gamma}H^{\Omega\Gamma} - \frac{2}{3}(\check{K}+\tilde{K})^{2} +2E)(1+\tilde{\Phi})^{5} = 0, \\
\label{eqn:perturbed_momentum_constraint}
\mathcal{C}_{M}:=&\frac{1}{2}\Dt_{\Omega}\Dt^{\Omega}X_{\Gamma} + \frac{1}{6}\Dt_{\Gamma}\Dt^{\Omega}X_{\Omega} +\frac{1}{2}R[\delta+\gamma]_{\Omega\Gamma}X^{\Omega} + \Dt^{\Omega}\tensor{A}{_{\Omega\Gamma}} -\frac{2}{3}\Dt_{\Gamma}\tilde{K} \nonumber \\
&+ 6(1+\tilde{\Phi})^{-1}\tensor{H}{_\Omega_\Gamma}\Dt^{\Omega}\tilde{\Phi}  -(1+\tilde{\Phi})^{-2}J_{\Gamma} = 0.
\end{align}
We then define the \textit{constraint map} $\mathcal{F}$ by
\begin{equation*}
    \mathcal{F}(Y,Z) = ( \mathcal{C}_{H}, \mathcal{C}_{M} )
\end{equation*}
where 
\begin{equation*}
Y = (\tilde{\rho}_{(1)},\tilde{\rho}_{(2)},\tilde{\nu}^{\Omega}_{(1)},\tilde{\nu}^{\Omega}_{(2)}, A_{\Omega\Gamma}, \gamma_{\Omega\Gamma}, \tilde{K}) \quad \text{and} \quad 
Z = (\tilde{\Phi}, X^{\Omega},\mathring{\rho}_{(2)}, \mathring{\nu}^{\Omega}_{(2)}).
\end{equation*}
Here $Y$ and $Z$ denote the \textit{free} and \textit{determined} data, respectively. \newline \par 

Clearly when $\mathcal{F}(Y,Z)=0$ we have a solution to the gravitational constraints \eqref{eqn:Hamiltonian_Elliptic}-\eqref{eqn:Momentum_Elliptic}. In particular, the trivial solution
\begin{equation*}
\mathcal{F}(0,0) = 0 
\end{equation*}
corresponds to the homogeneous background solution.

\subsection{Existence of Solutions to the Constraint Equations}
We are now in a position to use an implicit function theorem argument to establish the existence of an open neighbourhood of the trivial free data $Y=0$ such that for every choice of free data $Y$ in this neighbourhood there exists corresponding determined data $Z=f(Y)$ such that $(Y,Z)=(Y,f(Y))$ solves the constraint equations \eqref{eqn:perturbed_hamiltonian_constraint}-\eqref{eqn:perturbed_momentum_constraint}, i.e.~$\mathcal{F}(Y,f(Y))=0$. The precise statement of this result is given in the following theorem. 

\begin{thm}[Existence of Perturbed Initial Data]
Suppose $s\in \Zbb_{\geq 1}$ and let
\begin{align*}
    E &= H^{s}(\mathbb{T}^{3})\times  \bar{H}^{s}(\mathbb{T}^{3}) \times  B_{R}\big(H^{s+1}(\mathbb{T}^{3},\mathbb{R}^{3})\big) \times  B_{R}\big(\bar{H}^{s+1}(\mathbb{T}^{3},\mathbb{R}^{3})\big) \nonumber \\
    &\quad \; \times H^{s+1}(\mathbb{T}^{3},\mathbb{S}^{\text{TF}}_{3}) \times B_{R}\big(H^{s+2}(\mathbb{T}^{3},\mathbb{S}_{3})\big) \times H^{s+1}(\mathbb{T}^{3}), \\
    F &= B_{R}\big(\bar{H}^{s+2}(\mathbb{T}^{3},\mathbb{R})\big) \times\bar{H}^{s+2}(\mathbb{T}^{3},\mathbb{R}^{3})\times\mathbb{R}\times B_{R}(\mathbb{R}^{3}), \\
    G &= H^{s}(\mathbb{T}^{3},\mathbb{R}^{3})\times H^{s}(\mathbb{T}^{3},\mathbb{R}^{3}), 
\end{align*}
where $\mathbb{S}_{3}$ and $\mathbb{S}^{TF}_{3}$ denote the sets of symmetric and symmetric trace-free $3\times3$ real matrices, respectively.
Then there exists a $R>0$ such that constraint map
\begin{align*}
    \mathcal{F}\: :\: E \times F \longrightarrow G \: :\: (Y,Z)\longmapsto \mathcal{F}(Y,Z) 
\end{align*}
is well-defined and analytic. Moreover,  
there exists open neighbourhoods $U \subset E$ and $V\subset F$ where $(0,0)\in U\times V$, and an analytic map $\mathcal{G}\, :\, U \longrightarrow V$ such that $\mathcal{F}(Y,\mathcal{G}(Y))= 0$ for all $Y \in U$. 
\end{thm}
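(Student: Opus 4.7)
The plan is to establish the theorem by an analytic implicit function theorem argument applied to the constraint map $\mathcal{F}$ at the trivial perturbation point $(Y,Z)=(0,0)$, which corresponds to the homogeneous background constructed in Section \ref{sec:SH_Background_Constraint_Soln}. First I would verify that $\mathcal{F}$ is well-defined and analytic on $E\times F$. This requires showing that all of the nonlinear composite terms in \eqref{eqn:perturbed_hamiltonian_constraint}--\eqref{eqn:perturbed_momentum_constraint}---in particular the rational expressions $(1+\tilde{\Phi})^{-1}$, $(1+\tilde{\Phi})^{\pm 5}$, the factors $\Gamma_{(\afrak)}^2 = (1-|w_{(\afrak)}|^2_h)^{-1}$, and the conformal scalings arising through $\Dt$, $R[\delta+\gamma]$, and $H_{\Omega\Gamma}$---remain analytic maps between the relevant Sobolev spaces when $R>0$ is chosen small enough to guarantee pointwise bounds $1+\tilde{\Phi}>0$ and $|w_{(\afrak)}|_h<1$. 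The standard Sobolev multiplication properties and Moser-type composition estimates for $H^{s+2}(\Tbb^3)\hookrightarrow C^0(\Tbb^3)$ (which holds for $s\geq 0$) will do the job, with each variable chosen in $E$ and $F$ sitting in the precise regularity needed so that the output lands in $G$.

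The key step is computing $D_Z\mathcal{F}(0,0)$ and showing it is a Banach space isomorphism from $F$ onto $G$. At the background, the conformal metric is flat, so the top-order part of the Hamiltonian linearization acting on $\tilde{\Phi}$ reduces to $8\Delta$, while the top-order momentum linearization acting on $X^\Omega$ reduces to the flat conformal vector Laplacian $\tfrac{1}{2}\Delta X_\Gamma + \tfrac{1}{6}\partial_\Gamma\partial^\Omega X_\Omega$. To invert, I would exploit the orthogonal splittings $H^s(\Tbb^3)=\bar H^s(\Tbb^3)\oplus \Pi^\perp H^s(\Tbb^3)$ on the target: the zero-average components of the two constraints are solved uniquely for $\tilde{\Phi}\in\bar H^{s+2}$ and $X^\Omega\in\bar H^{s+2}$ by standard elliptic theory on $\Tbb^3$ (noting that the constant vectors, which are the kernel of the conformal vector Laplacian, have been removed by working in $\bar H^{s+2}$). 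The average components of the Hamiltonian and momentum constraints then reduce to a finite-dimensional linear system for $(\mathring\rho_{(2)},\mathring\nu_{(2)}^\Omega)\in\Rbb\times\Rbb^3$, whose invertibility follows from differentiating the background relations \eqref{eqn:Homogeneous_ID_Condition1}--\eqref{eqn:Homogeneous_ID_Condition2} in $(\rho_{(2)},\nu_{(2)}^\Omega)$; the non-degenerate coupling is guaranteed by the choice $|\check\nu_{(2)}|\in (0,1)$ and the strict inequality \eqref{eqn:MeanCurvature_Background_Constraint}, which ensure the background densities and Lorentz factors are strictly positive.

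With analyticity of $\mathcal{F}$ and invertibility of $D_Z\mathcal{F}(0,0)$ in hand, the analytic implicit function theorem (for example, \cite{Deimling:1985}-style or the Banach-space version in standard references) produces the open neighbourhoods $U\subset E$, $V\subset F$ and the analytic solution map $\mathcal{G}\:U\to V$ satisfying $\mathcal{F}(Y,\mathcal{G}(Y))=0$. The main obstacle I anticipate is the bookkeeping required for the finite-dimensional block of $D_Z\mathcal{F}(0,0)$: checking that the perturbation of $(\mathring\rho_{(2)},\mathring\nu_{(2)}^\Omega)$ produces a non-singular $4\times 4$ block in the average constraints demands carefully tracking how $\rho_{(1)}$ and $\nu_{(1)}^\Omega$ depend on $(\rho_{(2)},\nu_{(2)}^\Omega)$ through \eqref{eqn:nu1_background_value}--\eqref{eqn:rho1_background_value} and verifying that the resulting Jacobian, evaluated at the background, has nonzero determinant---a direct but somewhat tedious computation that ultimately rests on the strict positivity of $\check\rho_{(1)},\check\rho_{(2)}$ and on $|\check\nu_{(2)}|\neq 0$.
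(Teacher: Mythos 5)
Your proposal matches the paper's proof in all essentials: analyticity of $\mathcal{F}$ via Sobolev multiplication and composition estimates for small $R$, invertibility of $D_Z\mathcal{F}(0,0)$ by splitting into the flat Laplacian and flat conformal vector Laplacian acting on the zero-average spaces $\bar H^{s+2}\to\bar H^{s}$ plus a finite-dimensional block for the averaged constraints, and then the analytic implicit function theorem. One correction to your anticipated obstacle: the $4\times 4$ block does \emph{not} require tracking how $\rho_{(1)},\nu_{(1)}^{\Omega}$ depend on $(\rho_{(2)},\nu_{(2)}^{\Omega})$ through \eqref{eqn:nu1_background_value}--\eqref{eqn:rho1_background_value}, since the first fluid's perturbations belong to the free data $Y$ and those relations only fix the background constants; the block is simply the derivative of $(E,J^{\Omega})$ with respect to $(\mathring\rho_{(2)},\mathring\nu_{(2)}^{\Omega})$ evaluated at the background, whose determinant $-\tfrac{(1+K_{(2)})^{3}\check\rho_{(2)}^{3}\,(K_{(2)}|\check\nu_{(2)}|^{2}-1)}{(|\check\nu_{(2)}|^{2}-1)^{4}}$ is nonzero precisely because $\check\rho_{(2)}>0$, $0<|\check\nu_{(2)}|<1$ and $K_{(2)}<1$.
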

\begin{proof}$\;$
To prove the theorem, we will show, for some $R>0$, that the map $\mathcal{F}\,:\, E\times F\longrightarrow G$ is analytic, and that that the linear operator $D_{2}\mathcal{F}(0,0)$ is an isomorphism. Once that is accomplished, the proof will directly follow from an application of the analytic version of the implicit function theorem, e.g. see \cite{Deimling:1985}*{Thm.~15.3}. 

\subsubsection*{Analyticity of the Constraint Map}
To prove that the constraint map $\mathcal{F}$ is analytic, it is enough to show each of the individual terms that define it are analytic. Since the proof of analyticity for each term is essentially the same, we only provide the details for the the representative term
$\tilde{D}^{\Omega}\tensor{A}{_\Omega_\Gamma}$
from \eqref{eqn:perturbed_momentum_constraint}.
Now, expanding the covariant derivative in terms of partial derivatives and Christoffel symbols allows us to express $\tilde{D}^{\Omega}\tensor{A}{_\Omega_\Gamma}$ as
\begin{align}
\label{eqn:cd_A_expanded}
\tilde{D}^{\Omega}\tensor{A}{_\Omega_\Gamma}=j^{\Omega\Theta}\Big(\del_{\Theta}A_{\Omega\Gamma} - \tensor{\Gamma}{_\Theta^\Sigma_\Omega}A_{\Sigma\Gamma} -  \tensor{\Gamma}{_\Theta^\Sigma_\Gamma}A_{\Theta\Sigma}\Big).
\end{align}

We begin the analysis of \eqref{eqn:cd_A_expanded} by observing, for suitably small $R>0$, that the matrix inversion map
\begin{align}\label{mat-inv-map-A}
B_{R}(\mathbb{S}^{3})   \ni \gamma_{\Omega\Theta} \longmapsto j^{\Omega\Theta}:=(\delta_{\Omega\Theta} + \gamma_{\Omega\Theta})^{-1} \in \mathbb{S}^{3}
\end{align}
is analytic. Recalling that pointwise multiplication 
\begin{align*}
    H^{s_{1}}(\Tbb^3)\times H^{s_{2}}(\Tbb^3)\ni (u,v) \longmapsto uv \in H^{s_{3}}(\Tbb^3)
\end{align*}
is bilinear and continuous, and hence analytic, provided $s_{1}, \; s_{2} \geq s_{3}$ and $s_{1}+s_{2}-s_{3}>\frac{3}{2}$, e.g.~see \cite{Choquet_et_al:2000}*{\S VI.3}, it follows, in particular, that matrix multiplication 
\begin{align*}
    H^{s+2}(\Tbb^3,\mathbb{S}^{3})\times H^{s+2}(\Tbb^3,\mathbb{S}^{3})\ni (u,v) \longmapsto uv \in H^{s+2}(\Tbb^3,\mathbb{S}^{3})
\end{align*}
is continuous since $s\in \Zbb_{\geq 1}$ by assumption. From this and the analyticity of the matrix inversion map \eqref{mat-inv-map-A}, we conclude from \cite{Heilig:1995}*{Prop.~3.6}, shrinking $R>0$ if necessary, that the inversion map
\begin{align}\label{mat-inv-map-B}
     B_{R}(H^{s+1}(\mathbb{T}^{3},\mathbb{S}^{3})) \ni \gamma_{\Omega\Theta} \longmapsto j^{\Omega\Theta} \in H^{s+1}(\mathbb{T}^{3},\mathbb{S}^{3})
\end{align}
is analytic. \newline \par 

Next, we note that partial differentiation
\begin{equation} \label{dA-map}
\partial_{\Omega}\: : \: H^{s_1+1}(\Tbb^3)\longrightarrow H^{s_1}(\Tbb^3)
\end{equation}
is linear and bounded, and hence analytic. Then expanding the Christoffel symbols $\Gamma_{\Theta}{}^\Sigma{}_\Omega$ in
\begin{equation*}
\Gamma_{\Theta}{}^\Sigma{}_\Omega =  \frac{1}{2}j^{\Sigma\Delta}(\del_{\Theta}\gamma_{\Delta\Omega} + \del_{\Omega}\gamma_{\Theta\Delta} - \del_{\Delta}\gamma_{\Theta\Omega}),
\end{equation*}
it follows easily from the analyticity of the differentiation map \eqref{dA-map} ($s_1=s+1$), the analyticity of the inversion map \eqref{mat-inv-map-B}, the analyticity of the multiplication map ($s_1=s+2$, $s_2=s+1$ $s_3=s+1$), and the analyticity of the addition map, that the map 
\begin{equation*}
B_R\bigl(H^{s+2}(\Tbb^3,\mathbb{S}^{3})\bigr)\ni  \gamma_{\Omega\Theta} \longmapsto \Gamma_{\Theta}{}^\Sigma{}_\Omega \in H^{s+1}(\Tbb^3)
\end{equation*}
is analytic since it can be expressed as a composition of these analytic maps. Using this and similar arguments, it then follows from \eqref{eqn:cd_A_expanded} that the map 
\begin{equation*}
B_R\bigl(H^{s+2}(\Tbb^3,\mathbb{S}^{3})\bigr)\times H^{s+1}(\Tbb^3,\mathbb{S}^{3})\ni  \gamma_{\Omega\Theta} \longmapsto \tilde{D}^{\Omega}\tensor{A}{_\Omega_\Gamma} \in H^{s}(\Tbb^3,\Rbb^3)
\end{equation*}
is analytic. \newline \par 

By analysing the rest of the terms in the constraint map, see \eqref{eqn:perturbed_hamiltonian_constraint}-\eqref{eqn:perturbed_momentum_constraint}, it is then not difficult to verify, for $R>0$ chosen sufficiently small, that the constraint map
\begin{align*}
    \mathcal{F}\: :\: E \times F \longrightarrow G \: :\: (Y,Z)\longmapsto \mathcal{F}(Y,Z) 
\end{align*}
is well-defined and analytic, where the spaces $E$, $F$, and $G$ are as defined in the statement of the theorem. 

\subsubsection*{Invertibility of the Flat Laplacian and the Flat Conformal Vector Laplacian}
In order to show that $D_{2}\mathcal{F}(0,0)$ is an isomorphism, it will be necessary to establish that the flat Laplacian and flat conformal vector Laplacian define linear isomorphisms between the zero average spaces $\bar{H}^{s+2}(\Tbb^3)$ to $\bar{H}^{s}(\Tbb^3)$. \newline \par   

To this end, let us first consider the equation
\begin{align}\label{Poisson}
    \Delta u  = f.
\end{align}
where $\Delta u =\delta^{\Omega\Gamma}\del_{\Omega}\del_{\Gamma}$
is the flat Laplacian. Expanding $u$ in a Fourier series as \eqref{Fourier-exp-u} and $f$ as
\begin{equation*} 
f = \sum_{k\in\Zbb^3} \widehat{f}(k) e^{i k_\Omega x^\Omega},
\end{equation*}
and substituting these into \eqref{Poisson} implies that $-|k|^{2}\widehat{u}(k) = \widehat{f}(k)$.
Now, if $f\in \bar{H}^{s}(\Tbb)$, then we can solve this equation to get 
\begin{equation}\label{Poisson-sol}
\widehat{u}(k) = -\frac{1}{|k|^2}\widehat{f}(k), \quad k\in \Zbb^3\setminus\{0\}.
\end{equation}
Using the fact that we can express the $H^s(\Tbb^3)$ norm as
\begin{align*}
    \|f\|_{\bar{H}^{s}}^2 = \sum_{k\in\mathbb{Z}^{3}\setminus  \{0\}}|k|^{2s}|\widehat{f}(k)|^{2},
\end{align*}
we get from \eqref{Poisson-sol} that
$\|u\|_{\bar{H}^{s+2}} = \|f\|_{\bar{H}^{s}}$.
From this equality, we deduce that the flat Laplacian $\Delta : \bar{H}^{s+2}(\Tbb^3)\rightarrow\bar{H}^{s}(\Tbb^3)$ is surjective and has trivial kernel, and hence an isomorphism. \newline \par 

Next, consider the equation 
\begin{align}
\label{eqn:vector_poisson}
    \tDLd X^{\Omega} = f^{\Omega}
\end{align}
where 
\begin{align*}
\tDLd X^{\Omega} 
= \Bigl(\delta^{\Gamma \Theta}\delta^{\Omega}_{\Sigma}+\frac{1}{3}\delta^{\Omega \Gamma}\delta^{\Theta}_{\Sigma}\Bigr)\del_{\Gamma}\del_{\Theta}X^{\Sigma}
\end{align*}
is the flat conformal vector Laplacian. Proceeding in the same fashion as for the Laplace equation \eqref{Poisson}, we expand 
$X^I$ and $f^I$ in a Fourier series and substitute them into \eqref{eqn:vector_poisson}
to obtain the following equations for the Fourier coefficients:
\begin{equation} \label{conf-Lap-Feqn}
Q^\Omega_\Theta(k)\widehat{X}^{\Theta}(k)= \widehat{f}^\Omega(k)
\end{equation}
where
\begin{equation*}
Q^\Omega_\Theta(k)= -|k|^2\delta^{\Omega}_{\Theta}-\frac{1}{3}k^\Omega k_\Theta.
\end{equation*}
Assuming now that $f^\Omega\in \bar{H}^s(\Tbb^3)$, then $\widehat{f}^\Omega(0)=0$, and we need to solve \eqref{conf-Lap-Feqn} for all $k\in \Zbb^3\setminus \{0\}$. Noting the that the matrix $Q^\Omega_\Theta(k)$ is invertible for $k\neq 0$ with the inverse given by
\begin{equation*}
\check{Q}_\Omega^\Theta(k)= -\frac{1}{|k|^2}\Bigl(\delta^\Theta_\Omega -\frac{1}{4|k|^2} k^\Theta k_\Omega\Bigr),
\end{equation*}
which is easily seen to satisfy $\check{Q}_\Omega^\Theta(k)Q^\Omega_\Gamma(k)=\delta^\Theta_\Gamma$, 
we can solve \eqref{conf-Lap-Feqn} to get
\begin{equation*}
\widehat{X}^{\Theta}(k)=-\frac{1}{|k|^2}\Bigl(\delta^\Theta_\Omega -\frac{1}{4|k|^2} k^\Theta k_\Omega\Bigr) \widehat{f}^\Omega(k), \quad k\in \Zbb^3\setminus \{0\}.
\end{equation*}
From this we see that
\begin{align*}
    \|X^{\Theta}\|_{\bar{H}^{s+2}}^2 = \sum_{k \in \mathbb{Z}^{3}\setminus  \{0\}} |k|^{2(s+2)}|\widehat{X}^{\Theta}(k)|^{2} &= \sum_{k \in \mathbb{Z}^{3}\setminus  \{0\}} |k|^{2s}\biggl|\Bigl(\delta^\Theta_\Omega -\frac{1}{4|k|^2} k^\Theta k_\Omega\Bigr)\widehat{f}^{\Omega}(k)\biggr|^{2}  \\
    &\lesssim \sum_{\Omega=1}^3 \sum_{k \in \mathbb{Z}^{3}\setminus  \{0\}} |k|^{2s}|\widehat{f}^{\Omega}(k)|^{2} \\
    &= \sum_{\Omega=1}^3 \|f^{\Omega}\|^2_{\bar{H}^{s}}
\end{align*}
for $\Theta=1,2,3$. 
From this inequality, we conclude that the flat conformal vector Laplacian  $\tDLd : \bar{H}^{s+2}(\Tbb^3)\rightarrow\bar{H}^{s}(\Tbb^3)$ is surjective and has trivial kernel, and hence an isomorphism. 

\subsubsection*{Linearisation of Constraint Map}
Using the results of the previous section, we now show that the linearised constraint map $D_{2}\mathcal{F}(0,0):F \rightarrow G$ is an isomorphism. \newline \par

To begin, we observe that the linearisation of $\Gamma^{2}_{(2)}$ at  $(Y,Z)=(0,0)$ is 
\begin{align*}
D_{Z}\Gamma^{2}_{(2)}|_{(Y,Z)=(0,0)}\cdot \delta Z  = D_{2}(1-j_{\Gamma\Sigma}\nu^{\Gamma}_{(2)}\nu^{\Sigma}_{(2)})^{-1}|_{(Y_{0},Z_{0})}\cdot \delta Z = \frac{2\delta_{\Gamma\Sigma}\check{\nu}_{(2)}^{\Gamma}\delta\mathring{\nu}^{\Sigma}_{(2)}}{(1-\delta_{\Theta\Delta}\check{\nu}_{(2)}^{\Theta}\check{\nu}_{(2)}^{\Delta})^{2}}.
\end{align*}
With the help of this formula, we can then compute the linearisations of $E$, $J^{\Omega}$ to obtain 
\begin{align*}
D_{Z}E|_{(Y,Z)=(0,0)}\cdot \delta Z &=  2\check{\rho}_{(2)}(K_{(2)}+1)\check{\Gamma}^{4}_{(2)}\delta_{\Sigma\Gamma}\check{\nu}_{(2)}^{\Sigma}\delta\mathring{\nu}_{(2)}^{\Gamma} + (\check{\Gamma}^{2}_{(2)}(K_{(2)}+1)-K_{(2)})\delta\mathring{\rho}_{(2)}, \\
D_{Z}J^{\Omega}|_{(Y,Z)=(0,0)}\cdot \delta Z &=\check{\Gamma}^{2}_{(2)}(K_{(2)}+1)\check{\rho}_{(2)}\delta\mathring{\nu}^{\Omega}_{(2)} + 2(K_{(2)}+1)\check{\rho}_{(2)}\check{\Gamma}^{4}_{(2)}\check{\nu}^{\Omega}_{(2)}\delta_{\Gamma\Sigma}\check{\nu}^{\Gamma}_{(2)}\delta\mathring{\nu}^{\Sigma}_{(2)} \nonumber \\
&+ \check{\Gamma}^{2}_{(2)}\check{\nu}^{\Omega}_{(2)}(K_{(2)}+1)\delta\mathring{\rho}_{(2)}, 
\end{align*}
where $\check{\Gamma}^{2}_{(2)} = (1-\delta_{\Delta\Theta}\check{\nu}^{\Delta}_{(2)}\check{\nu}^{\Theta}_{(2)})^{-1}$.
Using the above expressions, we see after a short calculation that the linearisation of the Hamiltonian constraint \eqref{eqn:perturbed_hamiltonian_constraint} can be expressed as
\begin{align}
\label{eqn:ConstraintLinearisation2a}
D_Z C_H|_{(Y,Z)=(0,0)}\cdot \delta Z= & 8\Delta \delta\tilde{\Phi} + 4\check{\rho}_{(2)}(K_{(2)}+1)\check{\Gamma}^{4}_{(2)}\delta_{\Sigma\Gamma}\check{\nu}_{(2)}^{\Sigma}\delta\mathring{\nu}_{(2)}^{\Gamma} \nonumber \\
&+ 2(\check{\Gamma}^{2}_{(2)}(K_{(2)}+1)-K_{(2)})\delta\mathring{\rho}_{(2)},
\end{align}
where in deriving this we used the fact that the background quantities $\check{\rho}_{(\afrak)}$, $\check{\nu}_{(\afrak)}^\Omega$ and $\check{K}$ satisfy the homogeneous constraint equations \eqref{eqn:Homogeneous_ID_Condition1}-\eqref{eqn:Homogeneous_ID_Condition2}.
By a similar calculation, we find that the linearisation of the Momentum constraint \eqref{eqn:perturbed_momentum_constraint} is given by
\begin{align}
  D_Z C_M|_{(Y,Z)=(0,0)}\cdot \delta Z=& \frac{1}{2}\tDLd \delta X^{\Omega} - \check{\Gamma}^{2}_{(2)}(K_{(2)}+1)\check{\rho}_{(2)}\delta\mathring{\nu}^{\Omega}_{(2)} - \check{\Gamma}^{2}_{(2)}\check{\nu}^{\Omega}_{(2)}(K_{(2)}+1)\delta\mathring{\rho}_{(2)}\nonumber\\
  &- 2(K_{(2)}+1)\check{\rho}_{(2)}\check{\Gamma}^{4}_{(2)}\check{\nu}^{\Omega}_{(2)}\delta_{\Gamma\Sigma}\check{\nu}^{\Gamma}_{(2)}\delta\mathring{\nu}^{\Sigma}_{(2)} \label{eqn:ConstraintLinearisation2b}
.
\end{align}

Now,
\begin{equation*}
\delta Z =  (\delta\tilde{\Phi}, \delta X^{\Omega},\delta \mathring{\rho}_{(2)}, \delta \mathring{\nu}^{\Omega}_{(2)})
\end{equation*}
and since $\delta\tilde{\Phi}$, $ \delta X^{\Omega}$ $\in \bar{H}^{s+2}(\Tbb^3)$ by assumption, we have that
\begin{equation*}
\Pi \delta\tilde{\Phi} = \delta\tilde{\Phi}, \quad \Pi^\perp \delta\tilde{\Phi} = 0, \quad \Pi \delta X^{\Omega} = \delta X^{\Omega} \quad \text{and} \quad \Pi^\perp \delta X^{\Omega} = 0.
\end{equation*}
Also we note that
\begin{equation*}
[\Pi,\Delta]=[\Pi^\perp,\Delta]=0 \quad \text{and} \quad [\Pi,\tDLd]=[\Pi^\perp,\tDLd]=0.
\end{equation*}
Using these relations along with the definitions \eqref{eqn:constraintperturbation_decomposition}, we see, after applying $\mathbbm{1}=\Pi+\Pi^\perp$ to \eqref{eqn:ConstraintLinearisation2a}- \eqref{eqn:ConstraintLinearisation2b}, that the linearised constraint operator
\begin{equation*}
D_2\Fc(0,0)\cdot \delta Z= \bigl(D_Z C_H|_{(Y,Z)=(0,0)}\cdot \delta Z,D_Z C_M|_{(Y,Z)=(0,0)}\cdot \delta Z\bigr)
\end{equation*}
can be written matrix form as
\begin{equation*}
D_2\Fc(0,0)\cdot \delta Z= \begin{pmatrix}B_{1} & 0 \\ 0 & B_{2}  \end{pmatrix} \begin{pmatrix}\delta\tilde{\Phi} \\ \delta X^{\Omega} \\ \delta\mathring{\nu}^{\Gamma}_{(2)} \\ \delta\mathring{\rho}_{(2)}       
    \end{pmatrix}
\end{equation*} 
where 
\begin{align*}
    B_{1} &=  \begin{pmatrix} 8\Delta & 0 & \\ 0 & \frac{1}{2}\tDLd  \end{pmatrix}
    \intertext{and}
    B_{2} &= \begin{pmatrix}  4\check{\rho}_{(2)}(K_{(2)}+1)\check{\Gamma}^{4}_{(2)}\delta_{\Sigma\Gamma}\check{\nu}_{(2)}^{\Sigma} & 2(\check{\Gamma}^{2}_{(2)}(K_{(2)}+1)-K_{(2)}) \\  -\check{\Gamma}^{2}_{(2)}(K_{(2)}+1)\check{\rho}_{(2)}\delta^{\Omega}_{\Gamma} - 2(K_{(2)}+1)\check{\rho}_{(2)}\check{\Gamma}^{4}_{(2)}\check{\nu}^{\Omega}_{(2)}\delta_{\Sigma\Gamma}\check{\nu}^{\Sigma}_{(2)} & - \check{\Gamma}^{2}_{(2)}\check{\nu}^{\Omega}_{(2)}(K_{(2)}+1) \end{pmatrix} .
\end{align*}

From the previous section, we know that the linear map
\begin{equation*}
B_1 \: : \: \bar{H}^{s+2}(\Tbb^3)\times H^{s+2}(\Tbb^3,\Rbb^3)\longmapsto \bar{H}^{s}(\Tbb^3)\times H^{s}(\Tbb^3,\Rbb^3)
\end{equation*}
is an isomorphism. Thus, the linearised operator
\begin{equation*}
D_2\Fc(0,0) \: :\: \bar{H}^{s+2}(\Tbb^3)\times H^{s+2}(\Tbb^3,\Rbb^3)\times \Rbb^3\times \Rbb \longmapsto \bar{H}^{s}(\Tbb^3)\times H^{s}(\Tbb^3,\Rbb^3)\times \Rbb^3\times \Rbb 
\end{equation*}
will be an isomorphism if and only if the $4\times 4$-matrix $B_2$ is invertible. Computing the determinant of $B_2$, we get 
\begin{align*}
    \det(B_{2}) = -\frac{(1+K_{(2)})^{3}\check{\rho}^{3}_{(2)}\big(-1+K_{(2)}|\check{\nu}_{(2)}|^2\big)}{\big(-1+|\check{\nu}_{(2)}|^2\big)^{4}}.
\end{align*}
But $\check{\rho}_{(2)}>0$, $|\check{\nu}_{(2)}|<1$, and $\frac{1}{3}<K_{(2)}<1$ by assumption, and so $\det(B_2)\neq 0$, and we conclude that
$D_2\Fc(0,0)$ is a linear isomorphism, which completes the proof.
\end{proof}

\appendix

\section{Tilted Two-Fluid Bianchi I Spacetimes\label{Appendix:two-fluid}}

\subsection{Asymptotics}
In the following analysis of two-fluid Bianchi I spacetimes, we assume, as above, a positive cosmological constant $\Lambda>0$ and linear equations of state \eqref{eos} for each fluid with sound speeds satisfying \eqref{K-range}. Then relative to a foliation by homogeneous spatial hypersurfaces determined by level sets of a time function $\tau$, the metric of a two-fluid Bianchi I solution
$(\gt,\rho_{(\afrak)},\vt_{(\afrak)})$ generated from \textit{tilted initial data} at some time $\tau=\tau_0$, i.e.~$\rho_{(\afrak)}|_{\tau=\tau_0}>0$ and $\htl(\vt_{(\afrak)},\vt_{(\afrak)})|_{\tau=\tau_0}>0$, can be expressed
in a unit lapse and zero shift gauge as
\begin{equation} \label{g-Bianchi}
    \gt = -d\tau\otimes d\tau + \htl,
\end{equation}
where
\begin{equation*}
\htl = \delta_{AB}\thetat^{A}\otimes\thetat^{B}
\end{equation*}
is the spatial metric. 
Here,  $(x^\Lambda)$ denote spatial coordinates on the 
the $\tau=\text{constant}$ hypersurfaces,  the timelike unit normal vector is given by $\et_{0} = \del_{\tau}$,  and  $\thetat^A=\thetat^{A}_\Lambda dx^\Lambda$ is a spatial coframe dual to an orthonormal spatial frame $\et_A=\et_A^\Lambda\del_{\Lambda}$, which we take to be determined via Fermi-Walker transport. 
Furthermore, the fluid four-velocities  $\tilde{v}_{(\afrak)}= \tilde{v}^{a}_{(\afrak)}\et_a$ can be expressed as 
\begin{equation}\label{v-Bianchi}   
\vt_{(\afrak)} = \frac{1}{(1-|\nu_{(\afrak)}|^2)^{\frac{1}{2}}}\bigl(\del_{\tau} + \nu^{A}_{(\afrak)}\et_A\bigr), \quad |\nu_{(\afrak)}|^2= \delta_{AB}\nu_{(\afrak)}^A\nu_{(\afrak)}^B.
\end{equation}

\begin{prop}\label{prop:two-fluid}
Suppose $\frac{1}{3}< K_{(1)} \leq K_{(2)} <1$ and let $\mu_{(\afrak)} = \frac{3K_{(\afrak)}-1}{1-K_{(\afrak)}}$, $(\afrak)=1,2$.
Then every two-fluid Bianchi I solution
$(\gt,\rho_{(\afrak)},\vt_{(\afrak)})$   of 
the Einstein-Euler equations \eqref{eqn:Einstein_Physical}-\eqref{eqn:Euler_Physical} exists globally to the future, and there exist coordinates $(t,x^\Omega)$, where $t>0$ and future timelike infinity is located at $t=0$, and constants $\breve{\frv}_{(\afrak)},\breve{\rho}_{\afrak},\breve{\nu}^{(\afrak)}_{\Omega}\in\Rbb$ with $|\breve{\nu}^{(\afrak)}|=1$, such that  
\begin{align*}
  \gt &= \frac{1}{t^{2}}\Big(-\frac{3}{\Lambda}dt\otimes dt + \bigl(\delta_{\Omega\Gamma}+\mathcal{O}(t)\bigr)dx^{\Omega}\otimes dx^{\Gamma}\Big),\\
  \Kt &= \frac{1}{t^{2}}\biggl(\frac{\sqrt{\Lambda}}{\sqrt{3}}\delta_{\Omega\Gamma} + \mathcal{O}(t)\biggr) dx^{\Omega}\otimes dx^{\Gamma}, \\
  \rho_{(\afrak)}&=  e^{(K_{(\afrak)}+1)(\breve{\mathfrak{v}}_{(\afrak)}+\breve{\rho}_{(\afrak)})}t^{\frac{2(1+K_{(\afrak)})}{1-K_{(\afrak)}}} + \mathcal{O}(t) ,\\
  \vt_{(\afrak)}&= t^{-\mu_{(\afrak)}}e^{-\breve{\mathfrak{v}}_{(\afrak)}}\bigl(1+\mathcal{O}(t)\bigr) dt + t^{-\mu_{(\afrak)}}\sqrt{e^{-2\breve{\mathfrak{v}}_{(\afrak)}} - t^{2\mu_{(\afrak)}}(1+\mathcal{O}(t))}\big(\breve{\nu}_{\Omega}^{(\afrak)}+\mathcal{O}(t)\big) dx^{\Omega}
\end{align*}
where $\Kt$ is the extrinsic curvature of the $t=\text{constant}$ hypersurfaces. 
\end{prop}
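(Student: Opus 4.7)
The plan is to exploit spatial homogeneity to reduce the Einstein-Euler system to an ODE system in proper time, analyze the future attractor using Hubble-normalized variables together with a Fuchsian linearization at extreme tilt, and translate the resulting asymptotic rates into the conformal time coordinate $t$ of the statement.

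First, under the Bianchi I assumption, equations \eqref{eqn:Einstein_Physical}-\eqref{eqn:Euler_Physical} reduce, after setting $N_{AB} = A_A = 0$ (forced by spatial flatness in the homogeneous setting) and using the Fermi-Walker transported frame, to a system of ODEs in proper time $\tau$ for the frame coefficients $\tilde{e}^{\Lambda}_A$, the Hubble scalar $H = \tfrac{1}{3}\mathrm{tr}\,\tilde{K}$, the shear $\Sigma_{AB}$, the densities $\rho_{(\afrak)}$ and the spatial velocities $\nu_{(\afrak)}^A$. Following \cite{SandinUggla:2008}, I would pass to the Hubble-normalized variables $(\Sigma_{AB}/H,\ \Omega_{(\afrak)} = \rho_{(\afrak)}/(3H^2),\ \Omega_\Lambda = \Lambda/(3H^2),\ \nu_{(\afrak)}^A)$ and to the dimensionless time $dN = H\,d\tau$. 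The Hamiltonian constraint together with $\Omega_\Lambda \geq 0$ yields $H^2 \geq \Lambda/3$ and traps the normalized state in a bounded set, so local well-posedness immediately gives global existence to the future in $\tau$. The momentum constraint, which in Bianchi I reads $\sum_{(\afrak)}(1+K_{(\afrak)})\Gamma_{(\afrak)}^2\rho_{(\afrak)}\nu_{(\afrak)}^A = 0$, provides the algebraic coupling between the two tilt vectors that must be preserved along the evolution.

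Next, for $1/3 < K_{(\afrak)} < 1$ the dynamical systems analysis of \cite{SandinUggla:2008, EllisWainwright:1997} identifies the unique future attractor of the Hubble-normalized flow with $\Omega_\Lambda > 0$ as the de Sitter equilibrium $\Omega_\Lambda = 1$, $\Sigma_{AB} = 0$, $\Omega_{(\afrak)} = 0$, with each fluid reaching the boundary $|\nu_{(\afrak)}| = 1$, i.e.\ extreme tilt. To upgrade this convergence to quantitative rates, I would perform the radial decomposition $\nu_{(\afrak)}^A = |\nu_{(\afrak)}|\hat{\nu}_{(\afrak)}^A$ together with the Fuchsian change of variables $u_{(\afrak)} = |\nu_{(\afrak)}|/\sqrt{1-|\nu_{(\afrak)}|^2}$ from Section \ref{sec:Fluid_2+1_Decomp}, and apply the Fuchsian ODE theory of \cite{BOOS:2021}*{\S 3.4} to the resulting spatially homogeneous (hence $x$-independent) system. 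The linearization at the fixed point then reads off the principal rates $\Sigma_{AB} \sim e^{-3N}$, $\Omega_{(\afrak)} \sim e^{-2\mu_{(\afrak)}N}$ and $1 - |\nu_{(\afrak)}|^2 \sim e^{-2\mu_{(\afrak)}N}$, and simultaneously produces the asymptotic constants $\breve{\mathfrak{v}}_{(\afrak)}$, $\breve{\rho}_{(\afrak)}$ and unit vectors $\breve{\nu}^{(\afrak)}$ appearing in the statement.

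Finally, I convert the $N$-asymptotics to $t$-asymptotics. Defining $t$ by $-\ln t = \sqrt{\Lambda/3}\,\tau + \mathrm{const}$ makes the conformal metric $g = e^{-2\Phi}\tilde{g}$ with $\Phi = -\ln t$ match flat de Sitter at leading order; since $H \to \sqrt{\Lambda/3}$ with exponential rate, $N - \sqrt{\Lambda/3}\,\tau = \mathcal{O}(t)$ and hence $e^{-\lambda N} \sim t^{\lambda}$ up to an $\mathcal{O}(t)$ error. Substituting the rates from the previous paragraph back into \eqref{g-Bianchi}-\eqref{v-Bianchi} and into the extrinsic curvature formula yields the expansions of the proposition; the exponent $2(1+K_{(\afrak)})/(1-K_{(\afrak)}) = 2 + 2\mu_{(\afrak)}$ in $\rho_{(\afrak)}$ appears because the density is amplified by $\Gamma_{(\afrak)}^2 \sim t^{-2\mu_{(\afrak)}}$ on top of the isotropic $a^{-2(1+K_{(\afrak)})/(1-K_{(\afrak)})}$ scaling. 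The main obstacle is the degeneracy of the fixed point on the boundary $|\nu|=1$, where the standard Hubble-normalized vector field is singular and a naive linearization fails; this is precisely the obstruction resolved by the $(u,\hat{\nu})$ reformulation of Section \ref{sec:Fluid_2+1_Decomp}, after which the linearized system becomes Fuchsian and the rates above follow from the existing ODE theory.
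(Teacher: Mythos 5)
Your overall strategy (Hubble-normalised dynamical systems following \cite{SandinUggla:2008}, then a linearisation at the extreme-tilt equilibrium upgraded through the $(u,\hat\nu)$ variables) is a genuinely different route from the one taken in Appendix \ref{Appendix:two-fluid}. The paper does not linearise at the attractor at all: it imports Wald's cosmic no-hair theorem, which gives $\Ht=\sqrt{\Lambda/3}+\mathcal{O}(e^{-2\sqrt{\Lambda/3}\,\tau})$ and $\sigmat_{AB}=\mathcal{O}(e^{-\sqrt{\Lambda/3}\,\tau})$ unconditionally for matter satisfying SEC and DEC, and then integrates the exact homogeneous fluid equations directly, identifying quantities such as $\frv_{(\afrak)}=\tfrac12\ln\bigl((1-|\nu_{(\afrak)}|^2)/|\nu_{(\afrak)}|^{2/(1-K_{(\afrak)})}\bigr)+\sqrt{\Lambda/3}\,\mu_{(\afrak)}\tau$ whose $\tau$-derivatives are $\mathcal{O}(e^{-\sqrt{\Lambda/3}\,\tau})$ and hence converge. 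This matters because your scheme has a latent circularity: the anisotropic stress sourcing $\sigmat_{AB}$ is controlled by $\Gamma^2_{(\afrak)}\rho_{(\afrak)}$, so the shear rate you want to ``read off'' depends on the fluid asymptotics, which in turn depend on the shear rate. Wald's theorem breaks this loop a priori; in your approach you would need to first establish decay of $\Ht-\sqrt{\Lambda/3}$ and $\sigmat_{AB}$ from the hyperbolicity of the de Sitter point in the $(\Sigma,\Omega_\Lambda)$ sector alone, before touching the degenerate tilt variables, and you have not done so. Relatedly, several of your quoted rates are off: $\Omega_{(\afrak)}\sim e^{-2\mu_{(\afrak)}N}$ is not correct for either the rest-frame density ($\sim e^{-(2+2\mu_{(\afrak)})N}$) or the normal-frame energy density ($\sim e^{-2N}$), and $\Sigma_{AB}\sim e^{-3N}$ is the vacuum rate, which is not what the sourced equation delivers. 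Only $\mathcal{O}(t)=\mathcal{O}(e^{-\sqrt{\Lambda/3}\,\tau})$ is needed for the proposition, so these errors are not fatal, but they indicate the linearisation has been asserted rather than performed.

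The second concrete gap is that you never address the frame components. The statement $\gt=t^{-2}(-\tfrac{3}{\Lambda}dt\otimes dt+(\delta_{\Omega\Gamma}+\mathcal{O}(t))dx^\Omega\otimes dx^\Gamma)$ is not obtained by ``substituting the rates into \eqref{g-Bianchi}'': one must integrate the Fermi--Walker transport equation $\del_\tau\et_A^\Omega=-(\Ht\delta_A^B+\sigmat_A^B)\et_B^\Omega$, show via Gr\"onwall that the rescaled frame $e^{\sqrt{\Lambda/3}\,\tau}\et_A^\Omega$ is Cauchy and converges, verify via the determinant evolution that the limit is nondegenerate, and then perform a linear change of spatial coordinates to normalise the limit to $\delta_A^\Omega$. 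This step also produces the $\mathcal{O}(t)$ error in $\breve{\nu}_\Omega^{(\afrak)}$ when the covector components of $\vt_{(\afrak)}$ are re-expressed in the coordinate coframe. Without it, your argument yields convergence of frame-component quantities but not the coordinate expansions asserted in the proposition.
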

\begin{proof}
By \cite{SandinUggla:2008}, we know that the two-fluid Bianchi I solutions exist for all $\tau \geq \tau_0$, that is, globally to the future, satisfy  
$[\et_A,\et_B]=0$, $\rho_{(a)}>0$, and $0<|\nu_{(\afrak)}|<1$ for all $\tau\geq \tau_0$, and the limits 
\begin{equation}\label{nu-Bianchi-limits-A}
\breve{\nu}_{(\afrak)}^A=\lim_{\tau \rightarrow \infty} \nu_{(\afrak)}^A(\tau)
\end{equation}
exist and satisfy $\breve{\nu}_{(\afrak)}^2 =1$.
Furthermore, Wald showed that in the presence of a positive cosmological constant all Bianchi models\footnote{For matter satisfying the strong and dominant energy conditions. In particular, the fluids studied in this article satisfy both of these conditions.} (except type IX)
satisfy \cite{Wald:1983}:
\begin{align}
  \Ht&= \frac{\sqrt{\Lambda}}{\sqrt{3}}  + \mathcal{O}\Big(e^{\frac{-2\sqrt{\Lambda}}{\sqrt{3}}\tau}\Big) \quad \text{and}\quad
   \sigmat_{AB}= \mathcal{O}\Big(e^{\frac{-\sqrt{\Lambda}}{\sqrt{3}}\tau}\Big) \label{Wald.1}
\end{align}
as $\tau \rightarrow \infty$, where $3\Ht$ and $\sigmat_{AB}$ are the trace and trace-free symmetric parts of the extrinsic curvature
relative to $\et_A$.

Now, Fermi-Walker transport implies that the spatial frame vectors evolve according to
\begin{equation}\label{FW-transport-Wald}
\del_{\tau} \et_A^\Omega = -\bigl(\Ht\delta_A^B +\sigmat_A^B\bigr)\et^\Omega_B.
\end{equation}
From this evolution equation, we find, with the help of the Cauchy-Schwartz inequality, that $|\et|^2=\delta_{\Omega\Gamma}\delta^{AB} \et_A^\Omega \et_B^\Gamma$ satisfies
\begin{equation*}
\del_\tau|\et|^2 \leq 2(- \Ht + |\sigmat|)|\et|^2.
\end{equation*}
By \eqref{Wald.1} and Gr\"{o}wall's inequality, we obtain
\begin{equation*}
|\et(\tau)|^2\leq |\et(\tau_0)|^2\exp\biggl(\int_{\tau_0}^\tau -2 \frac{\sqrt{\Lambda}}{\sqrt{3}} + \mathcal{O}\Big(e^{\frac{-\sqrt{\Lambda}}{\sqrt{3}}s}\Big)\, ds\biggr) \lesssim e^{-2 \frac{\sqrt{\Lambda}}{\sqrt{3}}\tau},
\end{equation*}
which, in particular, shows that the rescaled frame components
\begin{equation} \label{e-def-Wald}
e_A^\Omega=e^{\frac{\sqrt{\Lambda}}{\sqrt{3}}\tau}\et_A^\Omega 
\end{equation}
remain bounded, that is,
$|e_A^\Omega| \lesssim |e(\tau_0)|$, $A,\Omega =1,2,3$.
From this bound along with \eqref{Wald.1}, we observe the the frame evolution equation \eqref{FW-transport-Wald} can be expressed as $\del_\tau e_A^\Omega= \mathcal{O}\big(e^{\frac{-\sqrt{\Lambda}}{\sqrt{3}}\tau}\big)$.
Integrating this from $\tau_1$ to $\tau_2$ gives
\begin{equation}\label{spatial-frame-Wald-bnd}
\Bigl|e^{\Omega}_A(\tau_2)-e^{\Omega}_A(\tau_1)\Bigr| \lesssim e^{\frac{-\sqrt{\Lambda}}{\sqrt{3}}\tau_1}\Bigl(e^{\frac{-\sqrt{\Lambda}}{\sqrt{3}}(\tau_2-\tau_1)}-1\Bigr)
\end{equation}
for all $0<\tau_1 <\tau_2$, from which we deduce
that the limit
\begin{equation} \label{eh-def-Wald}
\breve{e}^{\Omega}_A = \lim_{\tau\rightarrow \infty} e_{\Omega\Gamma}(\tau)
\end{equation}
exists. Then letting $\tau_2\rightarrow \infty$ in 
\eqref{spatial-frame-Wald-bnd} yields
\begin{equation}\label{spatial-frame-Wald-asymp-A}
e^{\Omega}_A(\tau)=\breve{e}^{\Omega}_A +  \mathcal{O}\Big(e^{\frac{-\sqrt{\Lambda}}{\sqrt{3}}\tau}\Big).
\end{equation}

Next, from \eqref{FW-transport-Wald}, we observe that
$\del_{\tau}\det(\et)= -3\Ht \det(\et)$,
where in deriving this we used the fact that $\sigmat_A^B$ is trace free, i.e. $\sigmat_A^A=0$. Using \eqref{Wald.1}, we can write this as $\del_{\tau}\ln(\det(\et)) = -\sqrt{3\Lambda} + \mathcal{O}\big(e^{\frac{-2\sqrt{\Lambda}}{\sqrt{3}}\tau}\big)$. Then
integrating this in time yields
\begin{equation*}
\det(\et(\tau_2))\approx \det(\et(\tau_1))e^{-\sqrt{3\Lambda}(\tau_2-\tau_1)}, \quad 0<\tau_1<\tau_2,
\end{equation*}
which, by \eqref{e-def-Wald}, is equivalent to
\begin{equation*}
\det(e(\tau_2))\approx \det(e(\tau_1)), \quad 0<\tau_1<\tau_2.
\end{equation*}
Letting $\tau_2\rightarrow\infty$ in this expression, it then follows from \eqref{eh-def-Wald} that
$\det(\breve{e}) \approx 1$,
and hence, that the asymptotic frame matrix $\breve{e}^\Omega_A$ is non-degenerate. But, Bianchi I spacetimes are spatially flat, and so we can introduce a change of spatial coordinates so that in this new coordinate system \eqref{spatial-frame-Wald-asymp-A} becomes
\begin{equation}\label{spatial-frame-Wald-asymp-B}
e^{\Omega}_A=\delta_A^\Omega +  \mathcal{O}\Big(e^{\frac{-\sqrt{\Lambda}}{\sqrt{3}}\tau}\Big).
\end{equation}
Using this, we can express the spatial metric as 
\begin{align}
\label{eqn:BianchiI_h_asymptotics}
\tilde{h}_{\Omega\Gamma} = e^{\frac{2\sqrt{\Lambda}}{\sqrt{3}}\tau}\biggl(\delta_{\Omega\Gamma} + \mathcal{O}\Big(e^{\frac{-\sqrt{\Lambda}}{\sqrt{3}}\tau}\Big)\biggr),
\end{align}
which shows that the Bianchi I metric \eqref{g-Bianchi} converges exponentially to a de Sitter metric in line with the results of  \cite{Wald:1983}. 
Similarly, from the decay rates \eqref{Wald.1} for $\tilde{H}$ and $\tilde{\sigma}_{AB}$, we observe that the extrinsic curvature satisfies 
\begin{align} \label{eqn:BianchiI_K_asymptotics}
   \tilde{K}_{\Omega\Gamma} = e^{\frac{2\sqrt{\Lambda}}{\sqrt{3}}\tau}\bigg( \frac{\sqrt{\Lambda}}{\sqrt{3}}\delta_{\Omega\Gamma} + \mathcal{O}\Big(e^{\frac{-\sqrt{\Lambda}}{\sqrt{3}}\tau}\Big)\bigg).
\end{align}
Now, a straightforward calculation using the fluid four-velocity representations \eqref{v-Bianchi} and the fact $[\et_A,\et_B]=0$, which holds in Bianchi I spacetimes, and equations (2.48)-(2.49) and (2.58)-(2.59) from \cite{ElstUggla:1997}, shows that the spatial fluid variables $\nu_{(\afrak)}^A$ evolve according to
\begin{align*}
    \del_\tau\nu^{A}_{(\afrak)} &= \frac{1}{ 1 - K_{(\afrak)}|\nu_{(\afrak)}|^2}\Bigl(\tilde{H}\nu_{(\afrak)}^{A}(1-|\nu_{(\afrak)}|^2)(3K_{(\afrak)}-1) + \tilde{\sigma}_{BC}\Big(-\big(1-K_{(\afrak)}|\nu_{(\afrak)}|^2\big)\delta^{AB}\nu_{(\afrak)}^{C} \nonumber \\
    &+ (1-K_{(\afrak)}) \nu_{(\afrak)}^{A}\nu_{(\afrak)}^{\langle B}\nu_{(\afrak)}^{C \rangle}\Big)\Bigr).
\end{align*}
From this, we find that the norm $|\nu_{(\afrak)}|^2$ and normalised velocity 
\begin{equation*}
\hat{\nu}_{(\afrak)}^A= \frac{1}{|\nu_{(\afrak)}|} \nu^A_{(\afrak)}
\end{equation*}
evolve according to
\begin{align}\label{nunorm-ev-A}
    \del_{\tau}|\nu_{(\afrak)}|^2 = -\frac{2}{1 - K_{(\afrak)}|\nu_{(\afrak)}|^2}\Big(\bigl(3K_{(\afrak)}-1\bigr)\tilde{H}|\nu_{(\afrak)}|^2 - \tilde{\sigma}_{BC}\nu_{(\afrak)}^{B}\nu_{(\afrak)}^{C}\Big)\bigl(|\nu_{(\afrak)}|^2-1\bigr)
\end{align}
and 
\begin{align} \label{nuhat-ev-A}
    \del_{\tau}\hat{\nu}^{B}_{(\afrak)} &= \frac{1}{ 1 - K_{(\afrak)}|\nu_{(\afrak)}|^2}\tilde{\sigma}_{BC}\Bigl(-\big(1-K_{(\afrak)}|\nu_{(\afrak)}|^2\big)(\delta^{DB}-\nuh_{(\afrak)}^{D}\nuh_{(\afrak)}^{B})\nuh_{(\afrak)}^{C} \Bigr),
\end{align}
respectively. 
Similarly, after a short calculation using equations (2.48) and (2.58) from \cite{ElstUggla:1997} and \eqref{nunorm-ev-A}, we obtain an evolution equation for the density 
\begin{align} \label{rhot-ev-A}
    \del_{\tau}\tilde{\rho}_{(\afrak)} = \frac{(1+K_{(\afrak)})\big(\tilde{\sigma}_{AB}\nu^{A}_{(\afrak)}\nu^{B}_{(\afrak)} + \tilde{H}(-3+|\nu_{(\afrak)}|^{2})\big)\tilde{\rho}_{(\afrak)}}{1-K_{(\afrak)}|\nu_{(\afrak)}|^{2}}.
\end{align}

Using \eqref{nu-Bianchi-limits-A} and \eqref{Wald.1}, we can express \eqref{nuhat-ev-A} as
$\del_{\tau}\hat{\nu}^{B}_{(\afrak)} = \mathcal{O}\big(e^{\frac{-\sqrt{\Lambda}}{\sqrt{3}}\tau}\big)$.
Integrating this from $\tau$ to $\infty$, it follows from \eqref{nu-Bianchi-limits-A} that
\begin{equation} \label{eqn:BianchiI_hatnu_asymptotics}
\hat{\nu}^{B}_{(\afrak)} = \breve{\nu}^B_{(\afrak)} + \mathcal{O}\Big(e^{\frac{-\sqrt{\Lambda}}{\sqrt{3}}\tau}\Big).
\end{equation}
In a similar fashion, we can, with the help of \eqref{nu-Bianchi-limits-A} and \eqref{Wald.1}, express \eqref{nunorm-ev-A}
as
\begin{align*}
  \frac{\sqrt{3}(1-K_{\afrak}|\nu_{(\afrak)}|^2)}{2\sqrt{\Lambda}(3K_{\afrak}-1)(1-|\nu_{(\afrak)}|^2)|\nu_{(\afrak)}|^2}  \del_{\tau}|\nu_{(\afrak)}|^2 =  1+ \mathcal{O}\Big(e^{\frac{-\sqrt{\Lambda}}{\sqrt{3}}\tau}\Big),
\end{align*}
or equivalently
$\del_\tau \frv_{(\afrak)} = \mathcal{O}\big(e^{\frac{-\sqrt{\Lambda}}{\sqrt{3}}\tau}\big)$
where
\begin{equation*}
\frv_{(\afrak)}  = \frac{1}{2}\ln\left(\frac{(1-|\nu_{(\afrak)}|^2)}{|\nu_{(\afrak)}|^\frac{2}{1-K_{(\afrak)}}}\right)+\frac{\sqrt{\Lambda}\mu_{(\afrak)}}{\sqrt{3}}\tau
\end{equation*}
and $\mu_{(\afrak)}$ is defined as defined above.
Then the same argument used above to derive \eqref{spatial-frame-Wald-asymp-B} shows that there exist constants $\breve{\frv}_{(\afrak)}\in \Rbb$ such that
$\frv_{(\afrak)}= \breve{\frv}_{(\afrak)}+  \mathcal{O}\Big(e^{\frac{-\sqrt{\Lambda}}{\sqrt{3}}\tau}\Big)$,
from which we obtain
\begin{equation} \label{eqn:BianchiI_|nu|_asymptotics}
|\nu_{(\afrak)}|^2=1-e^{-\frac{2\sqrt{\Lambda}\mu_{(\afrak)}}{\sqrt{3}}\tau}e^{2\breve{\frv}_{(\afrak)}}\Bigl(1+  \mathcal{O}\Big(e^{\frac{-\sqrt{\Lambda}}{\sqrt{3}}\tau}\Big)\Bigr).
\end{equation}
Next, a short calculation using \eqref{nunorm-ev-A} and \eqref{rhot-ev-A} shows 
\begin{align*}
    \del_{\tau}\Big(\frac{1}{K_{(\afrak)}+1}\ln(\tilde{\rho}_{(\afrak)}) + \ln(\Gamma_{(\afrak)})\Big) = -3\tilde{H}.
\end{align*}
Combining the above with \eqref{Wald.1} gives 
\begin{align*}
   \del_{\tau}\biggl(\frac{1}{K_{(\afrak)}+1}\ln(\tilde{\rho}_{(\afrak)}) + \ln(\Gamma_{(\afrak)}) +\sqrt{3\Lambda}\tau \biggr)  = \mathcal{O}(e^{-\frac{\sqrt{\Lambda}}{\sqrt{3}}}).
\end{align*}
As above, this implies the existence of a constant $\breve{\rho}\in\mathbb{R}$ such that
\begin{align*}
    \frac{1}{K_{(\afrak)}+1}\ln(\tilde{\rho}_{(\afrak)}) + \ln(\Gamma_{(\afrak)}) +\sqrt{3\Lambda}\tau = \breve{\rho}_{(\afrak)} + \mathcal{O}(e^{-\frac{\sqrt{\Lambda}}{\sqrt{3}}}),
\end{align*}
which, in turn, implies that
\begin{align}
    \tilde{\rho}_{(\afrak)} &= e^{(K_{(\afrak)}+1)(\breve{\mathfrak{v}}_{(\afrak)}+\breve{\rho}_{(\afrak)})}e^{\frac{-\sqrt{\Lambda}}{\sqrt{3}}\tau \frac{2(1+K_{(\afrak)})}{1-K_{(\afrak)}}} + \mathcal{O}(e^{-\frac{\sqrt{\Lambda}}{\sqrt{3}}}).  \label{eqn:BianchiI_rho_asymptotics}
\end{align}

Finally, introducing a new time coordinate $t$ via $t = e^{-\sqrt{\frac{\Lambda}{3}}\tau}$,
it is then straightforward to verify from \eqref{g-Bianchi}, \eqref{v-Bianchi}, \eqref{spatial-frame-Wald-asymp-B}, \eqref{eqn:BianchiI_K_asymptotics}, \eqref{eqn:BianchiI_hatnu_asymptotics}, \eqref{eqn:BianchiI_|nu|_asymptotics} and
\eqref{eqn:BianchiI_rho_asymptotics} that asymptotic formulas from the statement of the theorem hold where we have set
$\breve{\nu}^{(\afrak)}_\Omega = \delta_{\Omega A}\breve{\nu}^A_{(\afrak)}$.
\end{proof}

\subsection{Approximate Background Solutions\label{appendix:approximate_background}}
Rather than directly perturbing two-fluid Bianchi I solutions, we instead perturb a simpler class of solutions that we refer to as \textit{approximate background solutions}, which were previously studied in \cites{Oliynyk:2021,MarshallOliynyk:2022}. These are homogeneous tilted fluid solutions to the relativistic Euler equations with linear equations of state $p=K\rho$, $\frac{1}{3}<K<1$, on a de Sitter spacetime where the spacetime metric $\gt=e^{2\Phi}g$ is determined by the conformal factor $\Phi=\ln(t)$ and the conformal metric
\begin{equation}
\label{eqn:Einstein_Euler_BG_Solution1}
    g = -\frac{3}{\Lambda}dt\otimes dt +\delta_{\Omega\Sigma}dx^{\Omega}\otimes dx^{\Sigma}.
\end{equation}
As shown in \cites{Oliynyk:2021,MarshallOliynyk:2022}, the fluid density and velocity of the solutions are determined by
\begin{align}
\label{eqn:Einstein_Euler_BG_Solution2}
    (\rho_{(\afrak)}, \vt_{(\afrak)}) &= \Bigg(\frac{ \breve{\frd}_{(\afrak)} t^{\frac{2(1+K_{(\afrak)})}{1-K_{(\afrak)}}}}{(t^{2\mu_{(\afrak)}}+e^{2\frg_{(\afrak)}})^{\frac{1+K_{(\afrak)}}{2}}}, -t^{-\mu_{(\afrak)}}\sqrt{e^{2\frg_{(\afrak)}}+t^{2\mu_{(\afrak)}}}\del_t + t^{-\mu_{(\afrak)}}e^{\frg_{(\afrak)}}\xi_{(\afrak)}^\Omega\del_{\Omega} \Bigg), 
\end{align}
where $|\xi_{(\afrak)}|=1$ and  $\frg_{(\afrak)} = \frg_{(\afrak)}(t)$ solves the singular initial value problem\footnote{It is worth noting that in \cites{Oliynyk:2021,MarshallOliynyk:2022} the regular, i.e.~ initial data is specified at $t=t_0>0$, rather than singular value problem was solved to determine solutions of the relativistic Euler equations.}
\begin{align}
\label{eqn:Homogeneous_ODE_1}
    \frg_{(\afrak)}^{\prime}(t) &= \frac{t^{2\mu_{(\afrak)}-1}(\mu_{(\afrak)}-3K_{(\afrak)}+1)}{t^{2\mu_{(\afrak)}}+(1-K_{(\afrak)})e^{2\frg_{(\afrak)}}}, \;\; 0 < t < T_1, \\
\label{eqn:Homogeneous_ODE_2}
    \frg_{(\afrak)}(0) &= \breve{\frg}_{(\afrak)}.
\end{align}

\begin{prop} \label{prop:Asymptotic_ODE}
Suppose $T_1>0$, $1/3<K_{(\afrak)}<1$, $\mu_{(\afrak)} = \frac{3K_{(\afrak)}-1}{1-K_{(\afrak)}}$ and $\breve{\frg}_{(\afrak)} \in \mathbb{R}$. Then there exists a unique solution
$\frg_{(\afrak)} \in C^\infty((0,T_1)) \cap C^0([0,T_{1}))$
to the singular initial value problem \eqref{eqn:Homogeneous_ODE_1}-\eqref{eqn:Homogeneous_ODE_2} satisfying 
\begin{equation*}
|\frg_{(\afrak)}(t)-\breve{\frg}_{(\afrak)}| \lesssim t^{2\mu} \quad \text{and} \quad |\frg_{(\afrak)}'\!(t)| \lesssim t^{2\mu-1}
\end{equation*}
for all $t\in (0,T_1)$. Moreover, for each $\breve{\frd}_{(\afrak)}\in (0,\infty)$ and $\xi_{(\afrak)}^\Omega\in \Rbb$, $\Omega=1,2,3$, satisfying $|\xi_{(\afrak)}|=1$, the solution $\frg$ determines via \eqref{eqn:Einstein_Euler_BG_Solution2} a homogeneous solution of the relativistic Euler 
equations \eqref{eqn:Euler_Physical}  on the de Sitter background 
 \eqref{eqn:Einstein_Euler_BG_Solution1}.
\end{prop}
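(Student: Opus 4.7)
My plan is to convert the singular initial value problem \eqref{eqn:Homogeneous_ODE_1}--\eqref{eqn:Homogeneous_ODE_2} into an integral equation and apply a contraction mapping argument on a small initial interval. The essential observation is that since $K > 1/3$ one has $\mu > 0$, so the factor $s^{2\mu-1}$ appearing on the right-hand side of \eqref{eqn:Homogeneous_ODE_1} is Lebesgue-integrable near $s=0$, while the denominator $s^{2\mu} + (1-K)e^{2\varphi(s)}$ stays bounded below by $(1-K)e^{-2(M+|\breve{\frg}|)}$ on any ball $\|\varphi - \breve{\frg}\|_\infty \leq M$. I would therefore introduce the operator
\[
\mathcal{T}[\varphi](t) := \breve{\frg} + \int_0^t \frac{s^{2\mu-1}(\mu - 3K + 1)}{s^{2\mu} + (1-K)e^{2\varphi(s)}}\, ds,
\]
work in the complete metric space $\mathcal{X}_{T'} := \{\varphi \in C^0([0,T']) : \|\varphi - \breve{\frg}\|_\infty \leq M\}$, and show, via the elementary bound $|e^{2a} - e^{2b}| \leq 2 e^{2(M+|\breve{\frg}|)}|a-b|$, that $\mathcal{T}$ maps $\mathcal{X}_{T'}$ into itself and is a contraction provided $T'$ is chosen sufficiently small. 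Banach's fixed-point theorem then produces a unique $\frg \in C^0([0,T'])$ satisfying the integral equation and, by differentiation, the ODE on $(0,T']$ with $\frg(0)=\breve{\frg}$.

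The pointwise estimates follow immediately: the integral representation gives $|\frg(t)-\breve{\frg}| \lesssim t^{2\mu}$, and substituting back into \eqref{eqn:Homogeneous_ODE_1} yields $|\frg'(t)|\lesssim t^{2\mu-1}$. For smoothness and extension to all of $(0,T_1)$ I would invoke standard ODE theory applied to \eqref{eqn:Homogeneous_ODE_1}, whose right-hand side is $C^\infty$ in $(t,\frg)$ on $(0,\infty)\times \mathbb{R}$. A continuation argument, combined with the fact that the denominator is strictly positive and the right-hand side remains bounded on compact subintervals of $(0,T_1)$, rules out finite-time blow-up, so the solution extends uniquely and smoothly to $(0,T_1)$. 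Uniqueness on $[0,T_1)$ is enforced by the fixed-point argument near $t=0$ and classical ODE uniqueness away from the singular point.

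For the second assertion — that \eqref{eqn:Einstein_Euler_BG_Solution2} defines a solution of the Euler equations on the de Sitter background \eqref{eqn:Einstein_Euler_BG_Solution1} — the work has essentially already been done in Section \ref{sec:Euler_Subtract_Background}: substituting $V=W=0$ into the symmetric hyperbolic reformulation of the Euler system reduces it to
\[
t^{2\mu}(1-3K+\mu) - e^{2\frg}(-1-\mu+K(3+\mu)) - t\frg'(t)\bigl(t^{2\mu}+(1-K)e^{2\frg}\bigr) = 0,
\]
which, upon using the identity $-1-\mu+K(3+\mu) = 0$ (valid precisely for $\mu = (3K-1)/(1-K)$) and $1-3K+\mu = K\mu$, collapses to \eqref{eqn:Homogeneous_ODE_1}. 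Hence any solution of \eqref{eqn:Homogeneous_ODE_1}--\eqref{eqn:Homogeneous_ODE_2} automatically yields the claimed Euler solution, for every choice of constants $\breve{\frd} > 0$ and unit vector $\xi^\Omega$. The only genuine subtlety in this proof is the singular behaviour of the ODE at $t=0$, which, thanks to $\mu>0$, is handled completely by the integrable weight $t^{2\mu-1}$; I do not expect any further obstacle.
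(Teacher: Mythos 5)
Your argument is correct, but it takes a genuinely different route from the paper. The paper does not set up a fixed-point iteration: it observes that in the variable $\tau=t^{2\mu_{(\afrak)}}$ the ODE \eqref{eqn:Homogeneous_ODE_1} admits the first integral $\mathfrak{p}_{(\afrak)}=\frac{K_{(\afrak)}}{2}\ln\bigl(\tau+e^{2\frg_{(\afrak)}}\bigr)-\frg_{(\afrak)}$, so the solution is defined implicitly by $\mathfrak{p}_{(\afrak)}=\mathrm{const}$, and the implicit function theorem at $(\tau,\frg_{(\afrak)})=(0,\breve{\frg}_{(\afrak)})$ (where the linearisation is multiplication by $K_{(\afrak)}-1\neq 0$) yields a smooth $h$ with $\frg_{(\afrak)}(t)=h(t^{2\mu_{(\afrak)}})$; the quantitative bounds and the continuation to arbitrary $T_1$ are then obtained, just as in your argument, by integrating the ODE. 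Your contraction-mapping proof is more elementary --- it does not require spotting the conserved quantity --- and it isolates exactly the right structural fact, namely that $\mu_{(\afrak)}>0$ makes the weight $s^{2\mu_{(\afrak)}-1}$ integrable while the denominator stays bounded below on the ball. What the paper's route buys instead is the stronger statement that $\frg_{(\afrak)}$ is a smooth function of $t^{2\mu_{(\afrak)}}$ near the origin, in the spirit of how powers of $t$ are treated as independent smooth arguments in Section \ref{sec:Coefficient_Assumptions}; since only the stated bounds on $\frg_{(\afrak)}$ and $\frg_{(\afrak)}'$ are actually invoked there, your version suffices for the paper's purposes. Two points worth making explicit when writing this up: to propagate the estimates from $[0,T']$ to all of $(0,T_1)$, note that the right-hand side of \eqref{eqn:Homogeneous_ODE_1} is positive, so $\frg_{(\afrak)}$ is increasing and the denominator is bounded below by $(1-K_{(\afrak)})e^{2\breve{\frg}_{(\afrak)}}$ uniformly in $t$; and Banach gives uniqueness only within the ball $\mathcal{X}_{T'}$, so uniqueness in $C^0([0,T_1))$ requires the (standard) remark that any continuous solution with the correct value at $t=0$ enters the ball on a possibly smaller interval, after which classical uniqueness away from $t=0$ takes over --- which is what you indicate. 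Your verification of the second assertion, via the reduction of \eqref{eqn:MatrixForm_Euler4} at $V=W=0$ from Section \ref{sec:Euler_Subtract_Background} together with the identities $-1-\mu+K(3+\mu)=0$ and $1-3K+\mu=K\mu$, is exactly the ``straightforward calculation'' the paper alludes to.
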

\begin{proof}
First, we observe that \eqref{eqn:Homogeneous_ODE_1} can be re-expressed as 
    \begin{align} \label{asymptotic-ODE}
        \del_{\tau}(\mathfrak{p}_{(\afrak)}) = 0
    \end{align}
    where
    \begin{align*}
        \tau &= t^{2\mu_{(\afrak)}}\quad \text{and} \quad
        \mathfrak{p}_{(\afrak)} = \frac{K_{(\afrak)}}{2}\ln\big(\tau + e^{2\fru_{(\afrak)}}\big) -\fru_{(\afrak)}.
    \end{align*}
    Now, integrating \eqref{asymptotic-ODE} with respect to $\tau$ yields the implicit solution
    \begin{align*} 
    \frac{K_{(\afrak)}}{2}\ln\big(\tau + e^{2\fru_{(\afrak)}}\big) -\fru_{(\afrak)} = \breve{\mathfrak{p}}_{(\afrak)}
    \end{align*}
    where $\breve{\mathfrak{p}}_{(\afrak)}\in \Rbb$ is an arbitrary integration constant. Setting $\breve{\mathfrak{p}}_{(\afrak)} = (K-1)\breve{\fru}_{(\afrak)}$, we consider the smooth map
    \begin{align*}
        f(\tau,\fru_{(\afrak)}) = \frac{K_{(\afrak)}}{2}\ln\big(\tau + e^{2\fru_{(\afrak)}}\big) -\fru_{(\afrak)} -\breve{\mathfrak{p}}_{(\afrak)},
    \end{align*}
    which clearly satisfies $f(0,\breve{\fru}_{(\afrak)})=0$.  The linearisation of $f$ about $(0,\breve{\fru}_{(\afrak)})$ is given by
    \begin{align*}
        D_{2}f(0,\breve{\fru}_{(\afrak)})\cdot \delta\fru_{(\afrak)} = (K_{(\afrak)}-1)\delta\fru_{(\afrak)}
    \end{align*}
    and is clearly invertible. Thus, by the implicit function theorem \cite{Abraham_et_al:1988}*{\S 2.5}, there exists a unique function $h(\tau) \in C^{\infty}(\tau_{-},\tau_{+})$, $\tau_{-}<0<\tau_{+}$, such that $h(0) = \breve{\fru}_{(\afrak)}$
    and $f(\tau,h(\tau)\big) = 0$ for $\tau \in (\tau_{-},\tau_{+})$. Setting $\fru_{(\afrak)}(t) = h(t^{2\mu})$, it then follows that $\fru_{(\afrak)}$ is the unique solution of \eqref{eqn:Homogeneous_ODE_1}-\eqref{eqn:Homogeneous_ODE_2} with $T_{1} = \tau_{+}^{2\mu}$. Next, integrating \eqref{eqn:Homogeneous_ODE_1}, we find that
    \begin{align} \label{asymptotic:ODE-bnds}
        |\fru_{(\afrak)}(t)-\breve{\fru}_{(\afrak)}| \lesssim \int_{0}^{t}s^{2\mu_{(\afrak)}-1} ds \lesssim t^{2\mu_{(\afrak)}}, \quad 0<t<T_{1},
    \end{align}
    while the bound
    \begin{align*}
        |\fru_{(\afrak)}^{\prime}(t)| \lesssim t^{2\mu_{(\afrak)}-1}, \quad 0<t<T_{1},
    \end{align*}
    follows immediately from \eqref{eqn:Homogeneous_ODE_1}. In fact, \eqref{asymptotic:ODE-bnds} shows that the solution remains bounded at $T_1$ and hence can be continued past this time, and consequently, we can take $T_1>0$ as large as we like. Also, it can be verified via a straightforward calculation that \eqref{eqn:Einstein_Euler_BG_Solution2} determines a homogeneous solution of the relativistic Euler 
equations \eqref{eqn:Euler_Physical} on the de Sitter background 
 \eqref{eqn:Einstein_Euler_BG_Solution1}.
\end{proof}

\begin{rem}\label{rem:Bianchi-approx}
Together, Propositions \ref{prop:two-fluid} and \ref{prop:Asymptotic_ODE} imply that the approximate background solution \eqref{eqn:Einstein_Euler_BG_Solution1}-\eqref{eqn:Einstein_Euler_BG_Solution2} 
will asymptotically approach a tilted two-fluid Bianchi I solution provided the asymptotic data $\breve{\fru}_{(\afrak)}$, $\breve{\mathfrak{d}}_{(\afrak)}$ and $\xi_{(\afrak)}^\Omega$ is chosen to satisfy
    \begin{align*}
        \breve{\fru}_{(\afrak)} = -\breve{\mathfrak{v}}_{(\afrak)}, \quad \breve{\mathfrak{d}}_{(\afrak)} = e^{(K_{(\afrak)}+1)\breve{\rho}} \quad \text{and}\quad
        \xi_{(\afrak)}^\Omega = \delta^{\Omega\Sigma}\breve{\nu}^{(\afrak)}_{\Sigma},
    \end{align*}
respectively.
\end{rem}

\section{Explicit Expressions for the Fluid Stress-Energy Tensor}
\label{sec:Appendix_Stress-Energy-Expansions}

Using the definitions \eqref{eqn:Tab_3+1_lorentz}-\eqref{eqn:Gamma_def}, \eqref{eqn:zeta_hat_defn}, \eqref{eqn:nuh_defn}, \eqref{eqn:u_zeta_defns}, \eqref{eqn:Rescaled_FluidVars}, and the identities
\begin{gather}
\label{eqn:uh_w_defs_app}
\ut = e^{\uh+\frg(t)}, \;\; \wh_{Q} = w_{Q} + t^{-\mu}\xi_{Q}, \\ 
\label{eqn:nuh_w_identity_app}
\nuh_{Q} = t^{\mu}w_{Q} + \xi_{Q},
\end{gather}
we can express the original fluid variables $(\rho_{(\afrak)}, \Gamma^{2}_{(\afrak)},\nu^{(\afrak)}_{A})$ in terms of our modified variables $(\zeta_{(\afrak)},\uh_{(\afrak)},w_{P})$ as follows 
\begin{align}
\label{eqn:rho_appendix_def}
    \rho_{(\afrak)} &= \frac{\rho_{0}t^{\frac{2(1+K)}{1-K}}e^{(K+1)\zeta_{(\afrak)}}}{(t^{2\mu_{(\afrak)}}+e^{2(\uh_{(\afrak)}+\frg_{(\afrak)})})^\frac{K+1}{2}}, \\
\label{eqn:Gamma2_appendix_def}
    \Gamma^{2}_{(\afrak)} & = t^{-2\mu_{(\afrak)}}(t^{2\mu_{(\afrak)}}+e^{2(\uh_{(\afrak)}+\frg_{(\afrak)})}), \\ 
\label{eqn:nu_A_appendix_def}
    \nu^{(\afrak)}_{A} & = \frac{e^{\uh_{(\afrak)}+\frg_{(\afrak)}}}{\sqrt{t^{2\mu_{(\afrak)}}+e^{2(\uh_{(\afrak)}+\frg_{(\afrak)})}}}(t^{\mu_{(\afrak)}}w^{(\afrak)}_{A}+\xi^{(\afrak)}_{A}). 
\end{align}
Hence, in terms of $\zeta_{(\afrak)}$, $\ut_{(\afrak)}$, and $w^{(\afrak)}_{A}$, the stress-energy tensor can be decomposed as 
\begin{align}
\frac{T^{(\afrak)}_{00}}{t^{2}} &= \frac{t^{2}(t^{2\mu_{(\afrak)}}+(1+K_{(\afrak)})e^{2(\uh_{(\afrak)}+\frg_{(\afrak)})})\rho_{0}e^{(K_{(\afrak)}+1)\zeta_{(\afrak)}}}{(t^{2\mu_{(\afrak)}}+e^{2(\uh_{(\afrak)}+\frg_{(\afrak)})})^\frac{K_{(\afrak)}+1}{2}}, \\
\frac{T^{(\afrak)}_{0A}}{t^{2}} &= \frac{t^{2}(1+K_{(\afrak)})e^{\uh_{(\afrak)}+\frg_{(\afrak)}}\rho_{0} e^{(K_{(\afrak)}+1)\zeta}}{(t^{2\mu_{(\afrak)}}+e^{2(\uh_{(\afrak)}+\frg_{(\afrak)})})^{\frac{K_{(\afrak)}}{2}}}(t^{\mu_{(\afrak)}}w^{(\afrak)}_{A}+\xi^{(\afrak)}_{A}), \\
\label{eqn:Tab_Decomp_Wbar_Vars_app}
\frac{T^{(\afrak)}_{AB}}{t^{2}} &= \frac{t^{2}(1+K_{(\afrak)})e^{2(\uh_{(\afrak)}+\frg_{(\afrak)})}\rho_{0} e^{(K_{(\afrak)}+1)\zeta_{(\afrak)}}}{(t^{2\mu_{(\afrak)}}+e^{2(\uh_{(\afrak)}+\frg_{(\afrak)})})^{\frac{K_{(\afrak)}+1}{2}}}(t^{\mu_{(\afrak)}}w^{(\afrak)}_{A}+\xi^{(\afrak)}_{A})(t^{\mu_{(\afrak)}}w^{(\afrak)}_{B}+\xi^{(\afrak)}_{B}) \nonumber \\
&+ \frac{K_{(\afrak)}\rho_{0}t^{\frac{4K_{(\afrak)}}{1-K_{(\afrak)}}}e^{(K_{(\afrak)}+1)\zeta_{(\afrak)}}}{(t^{2\mu_{(\afrak)}}+e^{2(\uh_{(\afrak)}+\frg_{(\afrak)})})^\frac{K_{(\afrak)}+1}{2}}\eta_{AB}.
\end{align}

\section{Explicit Expressions for  $U_{ab}$}
\label{sec:ConformalFactor_Appendix}
By expanding \eqref{eqn:U_conf_def}, $U_{ab}$ can be written in terms of $r_{a}$ and the connection coefficients as
\begin{align*}
 U_{ab} = e_{a}(r_{b}) + e_{b}(r_{a}) - \tensor{\omega}{_a^c_b}r_{c} - \tensor{\omega}{_b^c_a}r_{c} +\eta_{ab}\Big(e_{c}(r^{c})+ \tensor{\omega}{_c^c_i}r^{i} + 2r_{c}r^{c}\Big) -2r_{a}r_{b}.
\end{align*}
First, we consider the case $a=b=0$:
\begin{align*}
    U_{00} = 2e_{0}(r_{0}) - 2\tensor{\omega}{_0^0_0}r_{0} -\Big(-e_{0}(r_{0})-\tensor{\omega}{_c^c_0}r_{0} -2(r_{0})^{2}\Big) - 2(r_{0})^{2}.
\end{align*}
Using \eqref{eqn:connection_identities} and \eqref{ra-def}, we observe that this expression simplifies to
\begin{align*}
    U_{00} &= 3e_{0}(r_{0})+ \tensor{\omega}{_{CB0}}\eta^{CB}r_{0}= 3e_{0}(r_{0}) + 3\mathcal{H}r_{0}.
\end{align*}
Similarly, setting $a=A$ and $b=0$ gives
\begin{align*}
    U_{A0} &= e_{A}(r_{0}) + \tensor{\omega}{_{00A}}r_{0} = \frac{1}{\alpha^{2}t}e_{A}(\alpha) +\frac{\dot{U}_{A}}{\alpha t}.
\end{align*}
Using the constraint \eqref{eqn:C3_constraint}, we can write this as
\begin{align*}
    U_{A0} = \frac{1}{\alpha t}\dot{U}_{A} +\frac{\dot{U}_{A}}{\alpha t} = \frac{2}{\alpha t}\dot{U}_{A}.
\end{align*}
Finally, setting $a=A$ and $b=B$, we find that
\begin{align*}
    U_{AB} &= - \tensor{\omega}{_A^0_B}r_{0} - \tensor{\omega}{_B^0_A}r_{0} + \delta_{AB}\Big(e_{c}(r^{c})+ \tensor{\omega}{_c^c_i}r^{i} + 2r_{c}r^{c}\Big) \nonumber \\
    &= 2S_{AB}r_{0}-\delta_{AB}e_{0}(r_{0}) + \tensor{S}{_I^I}r_{0}\delta_{AB}-2(r_{0})^{2}\delta_{AB}, \nonumber \\
\end{align*}
where 
\begin{align*}
    S_{AB} = - (\mathcal{H}\delta_{AB} + \tensor{\Sigma}{_A_B}).
\end{align*}
Thus, we have
\begin{align*}
    \tensor{U}{_A^A} &= -3e_{0}(r_{0}) -15\mathcal{H}r_{0} -6(r_{0})^{2},
\end{align*}
and hence that
\begin{align*}
    U_{00} +\tensor{U}{_A^A} = -6(2\mathcal{H}+r_{0})r_{0}.
\end{align*}

Now, the symmetric trace-free part of $U_{AB}$ is defined by
\begin{align*}
    U_{\langle AB \rangle} = U_{AB} - \frac{1}{3}\tensor{U}{_D^D}\delta_{AB}.
\end{align*}
With the help of the above expressions for $U_{AB}$ and $\tensor{U}{_D^D}$, we conclude that
\begin{align*}
    U_{\langle AB \rangle} &= 2S_{AB}r_{0} + \tensor{S}{_I^I}r_{0}\delta_{AB} + 5\mathcal{H}r_{0}\delta_{AB} = \frac{2}{\alpha t}\Sigma_{AB}.
\end{align*}

\bibliography{refs.bib}

\end{document}